\DeclarePairedDelimiterX\braket[2]{\langle}{\rangle}{#1 \delimsize\vert #2}
\newenvironment{bprooftree}
  {\leavevmode\hbox\bgroup}
  {\DisplayProof\egroup}
\newcommand{\Ob}{\text{Ob}}%
\newcommand{\id}{\mathrm{id}}
\newcommand{\Id}{\mathrm{Id}}
\newcommand{\op}{\ensuremath{\mathrm{op}}}
\newcommand{\up}{\ensuremath{{\uparrow\,}}}
\newcommand{\down}{\ensuremath{\mathop{\downarrow}}}
\newcommand{\Da}{\mathord{\mbox{\makebox[0pt][l]{\raisebox{-.4ex}
                           {$\downarrow$}}$\downarrow$}}\,}
\newcommand{\AAA}{\ensuremath{\mathbf{A}}}
\newcommand{\BB}{\ensuremath{\mathbf{B}}}
\newcommand{\CC}{\ensuremath{\mathbf{C}}}
\newcommand{\DCPO}{\ensuremath{\mathbf{DCPO}}}
\newcommand{\dcpo}{\DCPO}
\newcommand{\dcpobs}{\ensuremath{\dcpo_{\perp!}}}
\newcommand{\RR}{\ensuremath{\mathbf{R}}}
\newcommand{\KL}{\ensuremath{\dcpo}_{\MM}}
\newcommand{\DOM}{\ensuremath{\mathbf{DOM}}}
\newcommand{\TD}{\ensuremath{\mathbf{TD}}}
\newcommand{\PD}{\ensuremath{\mathbf{PD}}}
\newcommand{\PDe}{\ensuremath{\PD_e}}
\newcommand{\FF}{\ensuremath{\mathcal{F}}}
\newcommand{\JJ}{\ensuremath{\mathcal{J}}}
\newcommand{\MM}{\ensuremath{\mathcal{M}}}
\newcommand{\VV}{\ensuremath{\mathcal{V}}}
\newcommand{\UU}{\ensuremath{\mathcal{U}}}
\newcommand{\LL}{\ensuremath{\mathcal{L}}}
\newcommand{\TTT}{\ensuremath{\mathcal{T}}}
\newcommand{\SSS}{\ensuremath{\mathcal{S}}}
\newcommand{\sfold}{\ensuremath{\mathrm{fold}}}
\newcommand{\sunfold}{\ensuremath{\mathrm{unfold}}}
\newcommand{\probto}[1]{\ensuremath{\xrightarrow{#1}}}
\newcommand{\Halt}{\mathrm{Halt}}
\newcommand{\Val}{\mathrm{Val}}
\newcommand{\Prog}{\mathrm{Prog}}
\newcommand{\ValRel}{\mathrm{ValRel}}
\newcommand{\ValReld}{\ValRel(X,A,e)}
\newcommand{\Paths}{\mathrm{Paths}}
\newcommand{\TPaths}{\mathrm{TPaths}}
\newcommand{\orp}[3]{#1\ \mathtt{or}_{#2}\ #3}
\newcommand{\FOLD}[1]{\mathbb{I}^{#1}}
\newcommand{\UNFOLD}[1]{\mathbb{E}^{#1}}
\newcommand{\lrb}[1]{{\llbracket #1 \rrbracket}}
\newcommand{\elrb}[2]{{\lVert #1 \rVert^{#2}}}
\newcommand{\elrbc}[1]{{\elrb{#1}{\vec C}}}
\newcommand{\elrbs}[1]{{\lVert #1 \rVert}}
\newcommand{\tleq}{\ensuremath{\vartriangleleft}}
\newcommand{\tleqd}{\ensuremath{\tleq_{X,A}^e}}
\newcommand{\qtleqd}{\ensuremath{\ \overline{\tleq_{X,A}^e}\ } }
\newcommand{\tleqone}{\ensuremath{\tleq_{X_1,A_1}^{e_1}}}
\newcommand{\tleqtwo}{\ensuremath{\tleq_{X_2,A_2}^{e_2}}}
\newcommand{\qtleqtwo}{\ensuremath{\ \overline{\tleq_{X_2,A_2}^{e_2}}\ }}
\newcommand{\btleq}{\ensuremath{\blacktriangleleft}}
\newcommand{\ol}[1]{\ensuremath{\ \overline{#1}\ }}
\newcommand{\defeq}{\stackrel{\textrm{{\scriptsize def}}}{=}}
\newcommand{\eqdef}{\defeq}
\newcommand{\naturalto}{\ensuremath{\Rightarrow}}
\newcommand{\ktimes}{%
  \mathbin{\vbox{\offinterlineskip
    \mathsurround=0pt
    \ialign{\hfil##\hfil\cr
      \normalfont\scalebox{1}{.}\cr
      \noalign{\kern-.05ex}
      $\times$\cr}
}}%
}
\newcommand{\kplus}{%
  \mathbin{\vbox{\offinterlineskip
    \mathsurround=0pt
    \ialign{\hfil##\hfil\cr
      \normalfont\scalebox{1}{.}\cr
      \noalign{\kern+.2ex}
      $+$\cr}
}}%
}
\newcommand{\kstar}{%
  \mathbin{\vbox{\offinterlineskip
    \mathsurround=0pt
    \ialign{\hfil##\hfil\cr
      \normalfont\scalebox{1}{.}\cr
      \noalign{\kern.1ex}
      $\star$\cr}
}}%
}
\newcommand{\kto}{%
  \mathbin{\vbox{\offinterlineskip
    \mathsurround=0pt
    \ialign{\hfil##\hfil\cr
      \normalfont\scalebox{2}{.}\cr
      \noalign{\kern-.5ex}
      $\to$\cr}
}}%
}
\newcommand{\ktwo}{%
  \mathbin{\vbox{\offinterlineskip
    \mathsurround=0pt
    \ialign{\hfil##\hfil\cr
      \normalfont\scalebox{1}{.}\cr
      \noalign{\kern.1ex}
      $\to$\cr}
}}%
}
\newcommand{\kcirc}{%
  \ensuremath{
    \mathbin{\raisebox{0.9pt}{\scalebox{0.7}{$\varodot$}}}
  }%
}%
\newcommand{\kid}{%
\ensuremath{\textbf{id}}
}
\newsavebox{\@brx}
\newcommand{\llangle}[1][]{\savebox{\@brx}{\(\m@th{#1\langle}\)}%
\mathopen{\copy\@brx\kern-0.5\wd\@brx\usebox{\@brx}}}
\newcommand{\rrangle}[1][]{\savebox{\@brx}{\(\m@th{#1\rangle}\)}%
\mathclose{\copy\@brx\kern-0.5\wd\@brx\usebox{\@brx}}}
\newcommand{\pto}{\ensuremath{\rightharpoonup}}
\def\moverlay{\mathpalette\mov@rlay}
\def\mov@rlay#1#2{\leavevmode\vtop{%
   \baselineskip\z@skip \lineskiplimit-\maxdimen
   \ialign{\hfil$\m@th#1##$\hfil\cr#2\crcr}}}
\newcommand{\charfusion}[3][\mathord]{
    #1{\ifx#1\mathop\vphantom{#2}\fi
        \mathpalette\mov@rlay{#2\cr#3}
      }
    \ifx#1\mathop\expandafter\displaylimits\fi}
\theoremstyle{plain}
\newtheorem{theorem}{Theorem}
\newtheorem{lemma}[theorem]{Lemma}
\newtheorem{proposition}[theorem]{Proposition}
\newtheorem{corollary}[theorem]{Corollary}
\theoremstyle{definition}
\newtheorem{definition}[theorem]{Definition}
\newtheorem{assumption}[theorem]{Assumption}
\newtheorem{example}[theorem]{Example}
\newtheorem{notation}[theorem]{Notation}
\theoremstyle{remark}
\newtheorem{remark}[theorem]{Remark}
\tikzstyle{braceedge}=[decorate,decoration={brace,amplitude=10pt}]
\tikzstyle{square box}=[rectangle,fill=white,draw=black,minimum height=6mm,minimum width=6mm,yshift=0.7mm]
\tikzstyle{wire label}=[font=\footnotesize, auto,swap]
\tikzstyle{none}=[inner sep=0pt]
\tikzstyle{empty}=[rectangle,fill=none,draw=none]
\tikzstyle{scaled}=[rectangle,fill=none,draw=none, font=\small]
\tikzstyle{to}=[->,draw=black]
\tikzstyle{naturalto}=[-{Implies},double distance=1.5pt]
\tikzstyle{hook}=[right hook->, draw=black]
\tikzstyle{equal-arrow}=[double equal sign distance]
\tikzstyle{every picture}=[baseline=-0.25em]
\tikzset{every picture/.append
 style=baseline={([yshift=-.5ex]current bounding box.center)}}
\newcommand{
\InputIfFileExists{}{}{\input{./tikz/}}
}[1]{
\InputIfFileExists{#1}{}{\input{./tikz/#1}}
}
\newcommand{\InputIfFileExists{}{}{\input{./tikz/}}}[1]{\InputIfFileExists{#1}{}{\input{./tikz/#1}}}
\newcommand{\stikz}[2][1]{\scalebox{#1}{
\InputIfFileExists{#2}{}{\input{./tikz/#2}}
}}
\newcommand{\cstikz}[2][1]{\begin{center}\stikz[#1]{#2}\end{center}}
\begin{document}
\onecolumn
\title{Commutative Monads for Probabilistic Programming Languages}
\author{
  \IEEEauthorblockN{
    Xiaodong Jia\IEEEauthorrefmark{1}\IEEEauthorrefmark{3},
    Bert Lindenhovius\IEEEauthorrefmark{2},
    Michael Mislove\IEEEauthorrefmark{3} and
    Vladimir Zamdzhiev\IEEEauthorrefmark{4}
  }
  \IEEEauthorblockA{\IEEEauthorrefmark{1}
    School of Mathematics, Hunan University, Changsha, 410082, China
  }
  \IEEEauthorblockA{\IEEEauthorrefmark{2}
    Department of Knowledge-Based Mathematical Systems, Johannes Kepler Universität, Linz, Austria
  }  
  \IEEEauthorblockA{\IEEEauthorrefmark{3}
    Department of Computer Science, Tulane University, New Orleans, LA, USA
  }
  \IEEEauthorblockA{\IEEEauthorrefmark{4}
    Universit\'e de Lorraine, CNRS, Inria, LORIA, F 54000 Nancy, France
  }
}
\maketitle

\begin{abstract}
A long-standing open problem in the semantics of programming languages supporting
probabilistic choice is to find a commutative monad for probability on the
category DCPO. In this paper we present three such monads and a general
construction for finding even more. We show how to use these monads to provide
a sound and adequate denotational semantics for the Probabilistic FixPoint
Calculus (PFPC) -- a call-by-value simply-typed lambda calculus with
mixed-variance recursive types, term recursion and probabilistic choice. We
also show that in the special case where we consider continuous dcpo's, then
all three monads coincide with the valuations monad of Jones and we fully
characterise the induced Eilenberg-Moore categories by showing that they are
all isomorphic to the category of continuous Kegelspitzen of Keimel and
Plotkin. 
\end{abstract}

\section{Introduction}
\label{sec:intro}
Probabilistic methods now are  a staple of computation. The initial discovery
of randomized algorithms~\cite{rabin} was quickly followed by the definition of
Probabilistic Turing machines and related complexity classes~\cite{gill}.
There followed advances in a number of areas,
including, e.g., process calculi, probabilistic model checking and
verification~\cite{Baeten,LarsenSkou,Morgan96}, right through to the recent
development of  statistical probabilistic programming languages
(cf.~\cite{PPS,statonetal,vakaretal}),  not to mention the crucial role
probability plays in quantum programming languages~\cite{quant-semantics,cho}. 

\emph{Domain theory}, a staple of denotational semantics, has struggled to keep
up with these advances.  Domain theory encompasses two broad classes of
objects: \emph{directed complete partial orders (dcpo's)}, based on an
order-theoretic view of computation, and the smaller class of
\emph{(continuous) domains}, those dcpo's that also come equipped with a notion
of approximation.
However, adding probabilistic choice to the domain-theoretic approach has been
a challenge. The canonical model of (sub)probability measures in domain theory
is the family of \emph{valuations} -- certain maps from the lattice of open
subsets of a dcpo to the unit interval. It is well-known that these valuations
form a monad $\VV$ on  $\mathbf{DCPO}$ (the category of dcpo's and
Scott-continuous functions) and on $\mathbf{DOM}$ (the full
subcategory of $\dcpo$ consisting of domains) \cite{JonesP89,jones90}. 

In fact, the monad $\VV$ on $\mathbf{DOM}$ is commutative~\cite{jones90}, which
is important for two reasons: (1) its commutativity is
equivalent to the Fubini Theorem~\cite{jones90}, a cornerstone of integration
theory and (2) computationally, commutativity of a monad together with adequacy can
be used to establish contextual equivalences for effectful programs.
However, in order to do so, one typically needs a Cartesian closed category for the semantic model,
and $\mathbf{DOM}$ is not closed; in fact,  despite repeated attempts, 
it remains unknown whether there is \emph{any}  Cartesian closed category of
\emph{domains} on which $\VV$ is an endofunctor; this is the well-known
\emph{Jung-Tix Problem}~\cite{jungtix}.  On the other hand, it also is unknown
if the monad $\VV$ is commutative on the larger Cartesian closed category
$\DCPO$. In this paper, we offer a solution to this conundrum. 

\subsection{Our contributions}
We use \emph{topological methods} to construct a commutative valuations monad $\MM$ on $\DCPO$, 
as follows: it is straightforward to show the family $\SSS D$ of \emph{simple valuations on $D$} can be equipped with the structure of a commutative monad,
but $\SSS D$ is not a dcpo, in general. So, we complete $\SSS D$ by taking the smallest subdcpo $\MM D\subseteq \VV D$ that contains $\SSS D$.
This defines the object-mapping of a monad $\MM$ on $\DCPO$.
The unit, multiplication and strength of the monad $\MM$ at $D$ are given by the restrictions of the same operations of $\VV$ to $\MM D.$
Topological arguments then imply that $\MM$ is  a commutative valuations monad on $\dcpo$.

In fact, there are several completions of $\SSS D$ that give rise to commutative valuations monads on $\DCPO$. These completions are determined by so-called \texttt{K}-categories, introduced by Keimel and Lawson~\cite{KeimelL09}. This observation allows us to define two additional commutative valuations monads, $\mathcal W$ and $\mathcal P$, on $\DCPO$ simply by specifying their corresponding  \texttt{K}-categories. Finally, while we have identified three such \texttt{K}-categories, there likely are more that meet our requirements, each of which would define yet another commutative monad of valuations on $\DCPO$ containing $\SSS$.

With this background, we now summarise our main results.

\paragraph*{\textbf{Commutative monads}} A \texttt{K}-category is a full subcategory of the category $\mathbf{T_0}$ of $T_0$-spaces satisfying properties that imply it determines a \emph{completion} of each $T_0$-space among the objects of the \texttt{K}-category. For example, each \texttt{K}-category defines a completion of a poset endowed with its Scott topology, among the dcpo's in the \texttt{K}-category. In particular, each \texttt{K}-category determines a completion of the family $\SSS D$ when considered as a subset of $\VV D$, for each dcpo $D$. 

By specifying an additional constraint on \texttt{K}-categories, we can show the corresponding completions 
of $\SSS$ define commutative monads on $\DCPO$. 
We identify three commutative monads concretely: $\MM$, $\mathcal W$ and
$\mathcal P$, corresponding to the \texttt{K}-categories of d-spaces, that of
well-filtered spaces and that of sober spaces, respectively (see Theorem \ref{theorem:M is commutative} and Theorem \ref{theorem:kcofsiscommu}). As part of our
construction, we also prove the most general Fubini Theorem for dcpo's yet
available (see~Theorem~\ref{theorem:VPFubini}).

\paragraph*{\textbf{Eilenberg-Moore Algebras}}
All three of $\MM, \mathcal{W}$ and $\mathcal P$ restrict to monads on $\mathbf{DOM}$, where they coincide with $\VV$.
We characterize their Eilenberg-Moore categories over $\mathbf{DOM}$ by showing they are isomorphic to the category of continuous Kegelspitzen and Scott-continuous linear maps~\cite{keimelplotkin17}; 
this corrects an error in~\cite{jones90} (see~Remark~\ref{rem:kegelspitz} below).

On the larger category $\DCPO$, we show the Eilenberg-Moore algebras of our
monads $\mathcal M, W$ and $\mathcal P$ are Kegelspitzen
(see~Subsection~\ref{sub:M-V-relationship}). It is unknown if every Kegelspitze
is an $\mathcal M$-algebra, but we believe this to be the case.

\paragraph*{\textbf{Semantics}} 
We consider the \emph{Probabilistic FixPoint Calculus} ($\mathrm{PFPC}$) -- a
call-by-value simply-typed lambda calculus with mixed-variance recursive types,
term recursion and probabilistic choice (see Section \ref{sec:syntax}). We show
that each of the Kleisli categories of our three commutative monads is a sound
and computationally adequate model of $\mathtt{PFPC}$ (see Section
\ref{sec:semantics}). Moreover, we show that adequacy holds in a strong sense
(Theorem \ref{thm:strong-adequacy}), i.e., the interpretation of each
term is a (potentially infinite) convex sum of the values it reduces to.

\subsection{Related work}

The first dcpo model for probabilistic choice was given in~\cite{saheb}, but
this preceded Moggi's seminal work using Kleisli categories to model
computational effects~\cite{moggi-monads}. The work closest to ours is Jones' thesis \cite{jones90} (see also \cite{JonesP89}),
which considers the same language $\mathtt{PFPC}$,
but with a slightly different syntax. This work is based on an early version of
$\mathtt{FPC}$, and uses the Kleisli category of $\VV$ over $\mathbf{DCPO}$ as
the semantic model. While soundness and adequacy theorems are included in
\cite{jones90}, the proof of adequacy does not identify a
semantic space on which $\VV$ is commutative, instead offering arguments based
on the commutativity of $\SSS$, and on realizing  the valuations needed to
interpret the language as directed suprema of simple valuations. Our semantic
results improve those of Jones, because the commutativity of our monads
together with adequacy allows us to establish a larger class of contextual
equivalences.

Another related paper is \cite{stochastic-meet-pcf}, where the authors describe
a different construction for a commutative monad for probability.
The construction in \cite{stochastic-meet-pcf} is based on functional-analytic techniques similar to those
in~\cite{tix,Gou-csl07}, whereas ours is based on the topological
and categorical methods in~\cite{jiamis}.
Furthermore, the two constructions  yield distinct monads.
With our construction, we identify three probabilistic commutative
monads, study the structure of the induced Eilenberg-Moore and
Kleisli categories and then prove semantic results such as soundness and adequacy
for $\mathtt{PFPC}$. The work in \cite{stochastic-meet-pcf} constructs yet another
commutative monad that is used to study a different language (a real
$\mathtt{PCF}$-like language with sampling and conditioning) with a semantics
that reflects a concern for implementability and computability.

Other related work includes ~\cite{ehrhardtasson}, where the authors use
\emph{probabilistic coherence spaces} to provide a fully abstract model of a
probabilistic call-by-push-value language with recursive types. This work
builds on previous work \cite{PPCF} which describes a fully abstract model of
probabilistic $\mathtt{PCF}$ also based on probabilistic coherence spaces.
Recently, \emph{quasi-Borel spaces} were introduced in \cite{quasi-borel} and
they were later used to provide a sound and adequate model of $\mathtt{SFPC}$
(a statistical probabilistic programming language with recursive types,
sampling and conditioning) in \cite{vakaretal}. Compared to probabilistic
coherence spaces and quasi-Borel spaces, our methods are based on the
traditional domain-theoretic approach and its well-established connections to probability
theory~\cite{edalatdomint,edalatweak,alvman}; we hope to exploit these connections in future work.

The paper~\cite{rennelaprobFPC} uses Kegelsptizen to provide a sound and
adequate model for probabilistic $\mathtt{PCF}$. The author then discusses a
possible interpretation of a version of linear $\mathtt{PFPC}$ without contraction or a
!-modality (which means the system is strongly normalising), but \cite{rennelaprobFPC} does not
state any soundness, nor adequacy results for it.

\section{Syntax and Operational Semantics}
\label{sec:syntax}

In this section we describe the syntax and operational semantics of our language.
The language we consider is the \emph{Probabilistic FixPoint Calculus} ($\mathtt{PFPC}$).
The presentation we choose for $\mathtt{PFPC}$ is exactly the same as $\mathtt{FPC}$
\cite{fpc-syntax,fiore-thesis,fiore-plotkin} together with the addition of one
extra term ($M\ \mathtt{or}_p\ N$) for probabilistic choice. The same language is
also considered by Jones \cite{jones90}, but with a slightly different syntax.

\begin{figure*}
{\centering
\begin{tabular}{l l l l l l l l}
  Type Variables           & $X,Y$ &                & \multicolumn{4}{l}{ Term Variables \qquad $x,y$     }    \\ 
  Type Contexts            & $\Theta$               & ::= & $\cdot\ |\ \Theta, X$ & \multicolumn{4}{l}{  } \\
	Types                    & $A, B$                 & ::= & \multicolumn{5}{l}{$X$  | $A+B$ | $A \times B$ | $A \to B$ | $\mu X.A$} \\
  Term Contexts            & $\Gamma$               & ::= & $\cdot\ |\ \Gamma, x : A $ & \multicolumn{4}{l}{} \\
  Terms                    & $M,N$                  & ::= & \multicolumn{5}{l}{ $x\ |\ (M,N)\ |\ \pi_1 M\ |\ \pi_2 M\ |\ \mathtt{in}_1 M\ |\ \mathtt{in}_2 M\ |\  (\mathtt{case}\ M\ \mathtt{of}\ \mathtt{in}_1 x \Rightarrow N_1\ |\ \mathtt{in}_2 y \Rightarrow N_2)\ | $ } \\
                           &                        &     & \multicolumn{5}{l}{ $\lambda x. M\ |\ MN   \ |\ \mathtt{fold}\ M\ |\ \mathtt{unfold}\ M\ |\  M\ \mathtt{or}_p\ N $} \\
  Values                   & $V,W$                  & ::= & \multicolumn{5}{l}{ $x\  |\ (V,W)\ |\ \mathtt{in}_1 V\ |\ \mathtt{in}_2 V\ |\ \mathtt{fold}\ V\ |\ \lambda x. M  $ } 
\end{tabular}
\caption{Grammars for types, contexts and terms.}
\label{fig:syntax-grammars}
\[
    \begin{bprooftree}
    \AxiomC{$\Theta \vdash $}
    \UnaryInfC{$\Theta \vdash \Theta_i$}
    \end{bprooftree}
    \quad
    \begin{bprooftree}
    \AxiomC{$\Theta \vdash A$}
    \AxiomC{$\Theta \vdash B$}
    \RightLabel{$\star \in \{+, \times, \to \}$}
    \BinaryInfC{$\Theta \vdash A \star B$}
    \end{bprooftree}
    \quad
    \begin{bprooftree}
    \AxiomC{$\Theta, X \vdash A$}
    \UnaryInfC{$\Theta \vdash \mu X. A$}
    \end{bprooftree}
\]
\caption{Formation rules for types.}
\label{fig:type-syntax}
\[
  \begin{bprooftree}
  \AxiomC{\phantom{$\Gamma \vdash $}}
  \UnaryInfC{$\Gamma, x:A \vdash x: A$}
  \end{bprooftree}
  \begin{bprooftree}
  \def\ScoreOverhang{0.5pt}
  \AxiomC{$ \Gamma \vdash M : A$}
  \AxiomC{$ \Gamma \vdash N : A$}
  \BinaryInfC{$ \Gamma \vdash M\ \mathtt{or}_p\ N : A$}
  \end{bprooftree}
  \begin{bprooftree}
  \def\ScoreOverhang{0.5pt}
  \AxiomC{$ \Gamma \vdash M : A$}
  \AxiomC{$ \Gamma \vdash N : B$}
  \BinaryInfC{$ \Gamma \vdash (M, N) : A \times B$}
  \end{bprooftree}
  \ 
  \begin{bprooftree}
  \def\ScoreOverhang{0.5pt}
  \AxiomC{$ \Gamma \vdash M : A_1 \times A_2$}
  \RightLabel{$i \in \{1,2\}$}
  \UnaryInfC{$ \Gamma \vdash \pi_i M : A_i$}
  \end{bprooftree}
\]

\[
  \begin{bprooftree}
  \def\ScoreOverhang{0.5pt}
  \AxiomC{$ \Gamma \vdash M : A_i$}
  \RightLabel{$i \in \{1,2\}$}
  \UnaryInfC{$ \Gamma \vdash \mathtt{in}_{i} M : A_1+A_2$}
  \end{bprooftree}
  \begin{bprooftree}
  \def\ScoreOverhang{0.5pt}
  \AxiomC{$ \Gamma \vdash M : A_1+A_2$}
  \AxiomC{$ \Gamma, x : A_1 \vdash N_1 : B$}
  \AxiomC{$ \Gamma, y : A_2 \vdash N_2 : B$}
  \TrinaryInfC{$ \Gamma \vdash ( \mathtt{case}\ M\ \mathtt{of}\ \mathtt{in}_1 x \Rightarrow N_1\ |\ \mathtt{in}_2 y \Rightarrow N_2 ) : B$}
  \end{bprooftree}
\]

\[
  \begin{bprooftree}
  \def\ScoreOverhang{0.5pt}
  \AxiomC{$ \Gamma, x: A \vdash M : B$}
  \UnaryInfC{$ \Gamma \vdash \lambda x^A . M : A \to B$}
  \end{bprooftree}
  \begin{bprooftree}
  \def\ScoreOverhang{0.5pt}
  \AxiomC{$ \Gamma \vdash M : A \to B$}
  \AxiomC{$ \Gamma \vdash N : A$}
  \BinaryInfC{$ \Gamma \vdash MN : B$}
  \end{bprooftree}
  \begin{bprooftree}
  \def\ScoreOverhang{0.5pt}
  \AxiomC{$ \Gamma \vdash M : A[\mu X. A / X]$}
  \UnaryInfC{$ \Gamma \vdash \mathtt{fold}\ M: \mu X. A$}
  \end{bprooftree}
  \ 
  \begin{bprooftree}
  \def\ScoreOverhang{0.5pt}
  \AxiomC{$ \Gamma \vdash M : \mu X. A$}
  \UnaryInfC{$ \Gamma \vdash \mathtt{unfold}\ M : A[\mu X. A / X]$}
  \end{bprooftree}
\]
\caption{Formation rules for terms. All types are assumed to be closed and well-formed.}
\label{fig:term-syntax}}
\end{figure*}


\begin{figure*}
{
\begin{minipage}{0.5\textwidth}
\begin{align*}
\pi_1 (V,W) \probto 1 V \qquad \qquad \pi_2 (V,W) &\probto 1 W \\
(\mathtt{case}\ \mathtt{in}_1 V\ \mathtt{of}\ \mathtt{in}_1 x \Rightarrow N_1\ |\ \mathtt{in}_2 y \Rightarrow N_2) &\probto 1 N_1[V/x] \\
(\mathtt{case}\ \mathtt{in}_2 V\ \mathtt{of}\ \mathtt{in}_1 x \Rightarrow N_1\ |\ \mathtt{in}_2 y \Rightarrow N_2) &\probto 1 N_2[V/y] \\
\mathtt{unfold}\ \mathtt{fold}\ V &\probto 1 V \\
(\lambda x. M)V &\probto 1 M[V/x]  \\
M\ \mathtt{or}_p\ N \probto p M \qquad \qquad M\ \mathtt{or}_p\ N &\probto{1-p} N 
\end{align*}
\end{minipage}
\begin{minipage}{0.5\textwidth}
\begin{align*}
E & ::= [\cdot]\ |\ (E, M)\ |\ (V, E)\ |\ \pi_i E\ |\  EM\ |\ VE\ | \\
  & \qquad \mathtt{in}_i E\ |\ (\mathtt{case}\ E\ \mathtt{of}\ \mathtt{in}_1 x \Rightarrow N_1\ |\ \mathtt{in}_2 y \Rightarrow N_2)\ | \\  
  & \qquad \mathtt{fold}\ E\  |\ \mathtt{unfold}\ E \\
\end{align*}
\[
  \begin{bprooftree}
  \def\ScoreOverhang{0.5pt}
  \AxiomC{$M \probto{p} M' $}
  \UnaryInfC{$E[M] \probto{p} E[M'] $}
  \end{bprooftree}
\]
\end{minipage}
\caption{Reduction rules for $\mathtt{PFPC}$. The grammar for $E$ defines our call-by-value evaluation contexts.}
\label{fig:operational}}
\end{figure*}

\subsection{The Types of $\mathtt{PFPC}$}

Recursive types in $\mathtt{PFPC}$ are formed in the same way as in $\mathtt{FPC}$. We use
$X,Y$ to range over \emph{type variables} and we use $\Theta$ to range over
\emph{type contexts}. A type context $\Theta = X_1, \ldots, X_n$ is
\emph{well-formed}, written $\Theta \vdash$, if all type variables within it
are distinct. We use $A,B$ to range over the \emph{types} of our language
which are defined in Figure \ref{fig:syntax-grammars}. We write
$\Theta \vdash A$ to indicate that type $A$ is \emph{well-formed} in type
context $\Theta$ whenever the judgement is derivable via the rules in Figure
\ref{fig:type-syntax}. A type $A$ is \emph{closed} when $\cdot \vdash A$.
We remark that there are no restrictions on the admissible logical polarities
of our type expressions, even when forming recursive types.

\begin{example}
\label{ex:basic-types}
Some important (closed) types may be defined in the following way. The \emph{empty type}
is defined as $0 \eqdef \mu X.X$ and the \emph{unit type} as $1 \eqdef 0 \to 0.$
We may also define:
\begin{itemize}
\item \emph{Booleans} as $\mathtt{Bool} \eqdef 1 + 1$;
\item \emph{Natural numbers} as $\mathtt{Nat} \eqdef \mu X. 1 + X$;
\item \emph{Lists of type} $A$ as $\mathtt{List}(A) \eqdef \mu X. 1 + A \times X$;
\item \emph{Streams of type} $A$ as $\mathtt{Stream}(A) \eqdef \mu X. 1 \to A \times X$;
\end{itemize}
and many others.
\end{example}

\subsection{The Terms of $\mathtt{PFPC}$}

We now explain the syntax we use for terms. When forming terms and term contexts, we implicitly assume that all types within are closed and well-formed.
We use $x,y$ to range over \emph{term variables} and we use $\Gamma$ to range over \emph{term contexts}. A (well-formed) term context $\Gamma = x_1 : A_1, \ldots, x_n : A_n$ is a list of (distinct) variables with their types.
The terms (ranged over by $M, N$) and the values (ranged over by $V,W$) of $\mathtt{PFPC}$ are specified in Figure \ref{fig:syntax-grammars} and their formation rules in Figure \ref{fig:term-syntax}. They are completely standard.
In Figure \ref{fig:term-syntax}, the notation $A[\mu X. A / X]$ indicates type substitution which is defined in the standard way. The term $\orp M p N$ represents probabilistic choice.
A term $M$ of type $A$ is \emph{closed} when $\cdot \vdash M : A$ and in this case we also simply write $M : A.$

\begin{example}
Important closed values in $\mathtt{PFPC}$ include:
the \emph{unit value} $() \eqdef \lambda x^0. x : 1;$ 
the \emph{false} and \emph{true} values given by $\mathtt{ff} \eqdef \mathtt{in}_1 () : \mathtt{Bool}$ and $\mathtt{tt} \eqdef \mathtt{in}_2 () : \mathtt{Bool};$ the \emph{zero natural number} $\mathtt{zero} \eqdef \mathtt{fold}\ \mathtt{in}_1 () : \mathtt{Nat}$ and the \emph{successor function}
$\mathtt{succ}\ \eqdef \lambda n^{\mathtt{Nat}}. \mathtt{fold}\ \mathtt{in}_2 n : \mathtt{Nat} \to \mathtt{Nat};$ among many others.
\end{example}

\subsection{The Reduction Rules of $\mathtt{PFPC}$}

To describe execution of programs in $\mathtt{PFPC}$, we use a small-step call-by-value operational semantics which is described in Figure \ref{fig:operational}.
The reduction relation $M \probto{p} N$ should be understood as specifying that term $M$ reduces to term $N$ with probability $p \in [0,1]$ in exactly one step. Our reduction rules are simply the standard rules for small-step reduction in
$\mathtt{FPC}$ \cite[{\S 20}]{harper-book}
and small-step reduction for probabilistic choice \cite{lambda-calculus-probabilistic-computation}.
Of course, it is well-known this system is type-safe.

\begin{theorem}
If $\Gamma \vdash M \colon A$ and $M \probto{p} N,$ then $\Gamma \vdash N \colon A.$
In this situation, if $p < 1$, then there exists a term $N'$, such that $M \probto{1-p} N'.$
Furthermore, if $\cdot \vdash M \colon A$, then either $M$ is a value or there exists $N,$ such that $M \probto{p} N$ for some $p \in [0,1]$. 
\end{theorem}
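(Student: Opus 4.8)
The statement bundles the three standard components of type safety, so the plan is to prove each in turn, viewing the reduction relation as generated by the rules of Figure~\ref{fig:operational}. Since every reduction is either an axiom schema fired at the top level or an instance of the congruence rule $M \probto{p} M' \Rightarrow E[M] \probto{p} E[M']$, each part proceeds either by induction on the derivation of $M \probto{p} N$ or by structural induction guided by a canonical-forms analysis of values.

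For the first claim (subject reduction) I would first establish two auxiliary lemmas. The \emph{substitution lemma} states that if $\Gamma, x : A \vdash M : B$ and $\Gamma \vdash V : A$ then $\Gamma \vdash M[V/x] : B$, proved by a routine induction on the typing derivation of $M$. The \emph{context replacement lemma} states that whenever $\Gamma \vdash E[M] : A$ there is a type $B$ with $\Gamma \vdash M : B$ such that $\Gamma \vdash E[N] : A$ holds for every $N$ with $\Gamma \vdash N : B$; this follows by induction on the grammar of evaluation contexts $E$. With these in hand, preservation is a case analysis on the axiom used. The rules $\pi_i(V_1,V_2) \probto{1} V_i$, the two $\mathtt{case}$ rules, and $(\lambda x. M)V \probto{1} M[V/x]$ are handled by inversion on the typing of the redex followed by the substitution lemma; the rule $\mathtt{unfold}\ \mathtt{fold}\ V \probto{1} V$ uses inversion on the $\mathtt{fold}$ and $\mathtt{unfold}$ typing rules, which exactly cancel the type substitution $A[\mu X. A / X]$; and both $\mathtt{or}_p$ axioms visibly preserve the common type $A$ of the two branches. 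The congruence rule is discharged directly by the replacement lemma together with the inductive hypothesis.

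The second claim records that a label $\neq 1$ arises only from probabilistic choice and always comes with a complementary alternative. I would prove it by induction on the derivation of $M \probto{p} N$. Every axiom other than the two $\mathtt{or}_p$ rules carries label $1$, so the hypothesis $p < 1$ renders those cases vacuous; the axiom $\orp M p N \probto{p} M$ is paired with $\orp M p N \probto{1-p} N$ (and symmetrically), which supplies the required $N'$. For the congruence rule, if $E[M] \probto{p} E[M']$ arises from $M \probto{p} M'$ with $p < 1$, the inductive hypothesis yields $M''$ with $M \probto{1-p} M''$, and then $E[M] \probto{1-p} E[M'']$ by a further application of congruence.

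The third claim (progress) is proved by induction on the typing derivation of a closed term $\cdot \vdash M : A$, using a \emph{canonical forms} lemma that determines the shape of a closed value from its type: a value of type $A \times B$ is a pair, of type $A + B$ an injection, of type $A \to B$ a $\lambda$-abstraction, and of type $\mu X. A$ a $\mathtt{fold}$ of a value. The variable case is vacuous in the empty context. For each elimination form ($\pi_i$, $\mathtt{case}$, application, $\mathtt{unfold}$) I split on whether the principal argument is a value: if it is, the canonical forms lemma exposes the matching redex and an axiom fires; if not, the inductive hypothesis provides a step propagated through the appropriate evaluation context. The introduction forms ($(M,N)$, $\mathtt{in}_i$, $\lambda x. M$, $\mathtt{fold}$) are either already values or step in a subterm, and $\orp M p N$ always steps. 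I expect no genuine obstacle here; the only point demanding care is the interaction of recursive types with substitution, namely checking that inversion on $\mathtt{fold}$ and $\mathtt{unfold}$ correctly tracks $A[\mu X. A / X]$. The mixed-variance nature of $\mu X. A$, while fatal for strong normalisation, is irrelevant to type safety.
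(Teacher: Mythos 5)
The paper gives no proof of this theorem at all—it simply remarks that the system is ``well-known'' to be type-safe, deferring to the standard references for $\mathtt{FPC}$ and probabilistic reduction. Your argument is exactly the standard preservation-plus-progress proof those references supply (substitution and context-replacement lemmas for subject reduction, pairing of the two $\mathtt{or}_p$ axioms and congruence for the complementary-probability claim, canonical forms for progress), and it is correct as stated.
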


\begin{assumption}
Throughout the rest of the paper, we implicitly assume that all types, terms and contexts are well-formed.
\end{assumption}

\subsection{Recursion and Asymptotic Behaviour of Reduction}
\label{sub:probability-reduction}

It is well-known that type recursion in $\mathtt{FPC}$ induces term recursion \cite{fiore-thesis,fpc-syntax,harper-book} and the same is true for $\mathtt{PFPC}$.
This allows us to \emph{derive} the call-by-value fixpoint operator
\[\cdot \vdash \mathtt{fix}_{A \to B} \colon \left( \left( A \to B \right) \to A \to B \right) \to A \to B \]
at any function type $A \to B$ (see \cite{fpc-syntax} and \cite[\S 8]{fiore-thesis} for more details).
Using $\mathtt{fix}_{A \to B}$, we may write recursive functions.

\begin{example}
\label{ex:infinite-coin-toss}
Consider the following program:
\begin{align*}
\mathtt{coins} &\eqdef \mathtt{fix}_{1 \to 1} \lambda f^{1 \to 1}. \lambda x^1.\ \mathtt{case} ( \mathtt{ff}\ \mathtt{or}_{0.5}\ \mathtt{tt}  )\ \mathtt{of} \\
& \mathtt{in_1}z \Rightarrow ()\ |\ \mathtt{in_2}z \Rightarrow fx .
\end{align*}
It follows $\cdot \vdash \mathtt{coins} : 1 \to 1$. Evaluating at $()$ shows that $\mathtt{coins}()$ performs a fair coin toss and depending on the outcome, either terminates to $()$ or repeats the process again.
We see that there is no upper bound on the number of coin tosses this program would perform. On the other hand, it is easy to see that the probability $\mathtt{coins}()$ terminates to $()$ is precisely
$\sum_{i=1}^\infty 2^{-i} = 1.$
\end{example}

The above simple example shows that a rigorous operational analysis of $\mathtt{PFPC}$ has to consider the \emph{asymptotic behaviour} of terms under reduction. 
We do this by showing how to determine the probability that a term reduces to a value in any number of steps. We will later see that this is crucial for proving our adequacy result (Theorem \ref{thm:strong-adequacy}).

We may determine the overall probability that a term $M$ reduces to a value $V$ in the same way as in \cite{quant-semantics}.
The \emph{probability weight} of a reduction path $\pi = \left( M_1  \probto{p_1} \cdots \probto{p_n} M_n \right)$ is $P(\pi) \eqdef \prod_{i=1}^n p_i.$
The probability that term $M$ reduces to the value $V$ in \emph{at most} $n$ steps is
\[ P(M \probto{}_{\leq n} V) \defeq \sum_{\pi \in \Paths_{\leq n}(M,V) } P(\pi) , \]
where $\Paths_{\leq n}(M,V)$ is the set of all reduction paths from $M$ to $V$ of length at most $n$.
The probability that term $M$ reduces to value $V$ (in \emph{any finite number} of steps) is $P(M \probto{}_{*} V) \defeq \sup_i P(M \probto{}_{\leq i} V) . $

Finally, the probability that term $M$ \emph{terminates} is denoted $\Halt(M)$ and it is determined in the following way:
\begin{align}
\Val(M) &\defeq \{ V \ |\ V \text{ is a value and } P(M \probto{}_* V) > 0 \} \label{eq:val(M)} \\
\Halt(M) &\eqdef \sum_{V \in \Val(M)} P(M \probto{}_* V). \label{eq:halt} 
\end{align}
Note that the sum in \eqref{eq:halt} is countably infinite, in general.

\section{Commutative Monads for Probability}
\label{sec:probabilistic}

In this section we present a novel and general construction for probabilistic
commutative monads on $\dcpo$ and we use it to identify three such monads.

\subsection{Domain-theoretic and Topological Preliminaries}
\label{sub:domain-preliminaries}

A nonempty subset $A$ of a partially ordered set (\emph{poset}) $D$ is 
\emph{directed} if each pair of elements in $A$ has an upper bound in $A$. 
A \emph{directed-complete partial order,} (\emph{dcpo}, for short) is a poset 
in which every directed subset $A$ has a supremum $\sup A$. For example, 
the unit interval $[0, 1]$ is a dcpo in the usual ordering. A function $f \colon D \to E$ between two (posets) dcpo's
is \emph{Scott-continuous} if it is monotone and preserves (existing) suprema
of directed subsets. 

The category $\dcpo$  of dcpo's and Scott-continuous functions is complete, cocomplete and 
cartesian closed~\cite{abramskyjung:domaintheory}. We denote with $A_1 \times A_2$ ($A_1$ +
$A_2$) the categorical (co)product of the dcpo's $A_1$ and $A_2$ and with $\pi_1,
\pi_2$ ($\emph{in}_1, \emph{in}_2$) the associated (co)projections. We denote with
$\varnothing$ and $1$ the initial and terminal objects of $\dcpo$; these are the empty dcpo and the singleton dcpo, respectively. $\dcpo$ is Cartesian closed, where the internal hom of $A$ and $B$ is $[A \to B]$, 
the Scott-continuous functions $f: A \to B$ ordered pointwise.

The category $\dcpobs$ 
 of \emph{pointed} dcpo's and \emph{strict} Scott-continuous functions also is important. $\dcpobs$ is symmetric monoidal closed when equipped with the smash product and strict Scott-continuous function space, and it is also complete and cocomplete \cite{abramskyjung:domaintheory}.

The \emph{Scott topology}  $\sigma D$ on a dcpo $D$ consists of 
the upper subsets $U = \up U = \{x\in D\mid (\exists u\in U)\, u\leq x\}$ 
that are \emph{inaccessible by directed suprema:} i.e., if  $A\subseteq D$ is directed and 
$\sup A\in U$, then $A\cap U\not=\emptyset$. The space $(D, \sigma D)$ is also written as $\Sigma D$. Scott-continuous functions between dcpo's $D$ and $E$ are exactly the continuous 
functions between $\Sigma D$ and $\Sigma E$~\cite[Proposition II-2.1]{gierzetal:domains}. We always equip $[0,1]$ with the Scott topology unless stated otherwise.

A subset $B$ of a dcpo $D$ is  a \emph{sub-dcpo}
if  every directed subset $A\subseteq B$ satisfies $\sup_D A\in B$. 
In this case, $B$  is a dcpo in the induced order from $D$. 
The \emph{d-topology} on $D$ is the  topology whose closed subsets consist of sub-dcpo's of~$D$.
Open (closed) sets in the d-topology will be called \emph{d-open (d-closed)}. The \emph{d-closure} of $C\subseteq D$ 
is the topological closure of $C$ with respect to the d-topology on $D$, which is the 
intersection of all sub-dcpo's of $D$ containing $C$. 

The family of open sets of a topological space $X$, denoted $\mathcal O X$, is a complete lattice in the inclusion order. 
The \emph{specialization order}~$\leq_{X}$ on $X$ is defined as  $x\leq_{X} y$ if and only if $x$ is in the closure of $\{y\}$, for $x, y\in X$. We write $\Omega X$ to denote $X$ equipped with the specialization order. It is well-known that $X$ is $T_{0}$ if and only if $\Omega X$ is a poset. A subset of $X$ is called \emph{saturated} if it is an upper set in $\Omega X$. 
A space $X$ is called a \emph{d-space} or a \emph{monotone-convergence space} if $\Omega X$ is a dcpo and each open set of $X$ is Scott open in $\Omega X$. As an example, $\Sigma D$ is always a d-space for each dcpo $D$. The full subcategory of $\mathbf{T_{0}}$ consisting of d-spaces is denoted by $\mathbf D$.
There is a functor $\Sigma\colon \DCPO\to \mathbf D$ that assigns the space $\Sigma D$ to each dcpo $D$, and the map $f\colon \Sigma D \to \Sigma E$ to the Scott-continuous map $f\colon D\to E$. Dually, the functor $\Omega\colon {\mathbf D}\to \DCPO$ assigns $\Omega X$ to each d-space $X$ and the map $f\colon \Omega X\to \Omega Y$ to each continuous map $f\colon X\to Y$. In fact, $\Sigma \dashv \Omega$, i.e., $\Sigma$ is left adjoint to $\Omega$~\cite{HoGJX18}.

A $T_{0}$ space $X$ is called \emph{sober} if every nonempty closed irreducible subset of $X$ is the closure of some (unique) singleton set, where  $A\subseteq X$ is  \emph{irreducible} if $A\subseteq B\cup C$ with $B$ and $C$ nonempty closed subsets implies $A\subseteq B$ or $A\subseteq C$. The category of sober spaces and continuous functions is denoted by $\mathbf{SOB}$. Sober spaces are d-spaces, hence $\mathbf{SOB} \subseteq \mathbf D$ \cite{KeimelL09}.

\subsection{A Commutative Monad for Probability}
\label{sub:monad}

To begin, a \emph{subprobability valuation} on a topological space $X$ is a Scott-continuous function $\nu \colon \mathcal O X \to [0,1]$ that is 
strict ($\nu(\emptyset) = 0$),  and modular $(\nu(U) + \nu(V) = \nu(U\cup V) + \nu(U\cap V) )$.  
The set of  subprobability valuations on $X$ is denoted by $\VV X$.
The \emph{stochastic order} on $\VV X$ is defined pointwise: $\nu_{1}\leq \nu_{2}$ if and only if $\nu_{1}(U)\leq \nu_{2}(U)$ for all $U\in \mathcal OX$.
$\VV X$ is a pointed dcpo in the stochastic order, with least element given by the constantly zero valuation~${\mathbf 0}_{X}$
and where the supremum of a directed family $\{\nu_{i}\}_{i\in I}$ is $\sup_{i\in I}\nu_i \defeq\lambda U. \sup_{i\in I} \nu_{i}(U)$.

The canonical examples of subprobability valuations are the \emph{Dirac valuations} $\delta_{x}$
for $x\in X$, defined by $\delta_{x}(U) = 1$ if $x\in U$ and $\delta_{x} (U) =0$ otherwise. 
$\VV X$ enjoys a convex structure: if $\nu_{i}\in \VV X$ and $r_{i}\geq 0, $ with $\sum_{i=1}^{n}r_{i}\leq 1$, then the convex sum $\sum_{i=1}^{n}r_{i}\nu_{i} \defeq \lambda U. \sum_{i=1}^{n}r_{i}\nu_{i}(U)$ also is in $\VV X$. 
The \emph{simple valuations} on $D$ are those of the form $\sum_{i=1}^{n}r_{i}\delta_{x_{i}}$, where $x_{i}\in X$, $r_{i}>0, i= 1, \ldots, n$ and $\sum_{i=1}^{n}r_{i}\leq 1$. 
The set of simple valuations on $X$ is denoted by $\SSS X$. Clearly, $\mathcal SX\subseteq \VV X$. Unlike $\VV X$, $\SSS X$ is not directed-complete in the stochastic order in general.

Given $\nu\in \VV X$ and $f\colon X\to [0,1]$ continuous,  we can define the 
integral of $f$ against $\nu$ by the Choquet formula
\[
\int_{x\in X} f(x) d \nu \defeq \int_{0}^{1} \nu(f^{-1}((t, 1]))dt,
\]
where the right side  is a Riemann integral of the bounded antitone function $\lambda t. \nu(f^{-1}((t, 1]))$. If no confusion occurs, we simply write $\int_{x\in X} f(x) d\nu$ as $\int fd\nu$. Basic properties of this integral can be found in~\cite{jones90}. Here we note that the map $\nu\mapsto \int fd\nu \colon \VV X\to [0, 1]$, for a fixed $f$, is Scott-continuous, and 
\begin{equation}
\label{eq:nestedintwithsim}
\int fd \sum_{i}^{n}r_{i}\delta_{x_{i}} = \sum_{i=1}^{n}r_{i}f(x_{i}) 
\end{equation}
for $\sum_{i=1}^{n}r_{i}\delta_{x_{i}}\in \VV X$.

For a dcpo~$D$, $\VV D$ is defined as $\VV(D, \sigma D)$.
Using Manes' description of monads (Kleisli triples)~\cite{manes76}, Jones proved in her PhD thesis~\cite{jones90} that $\VV$ is a monad on $\DCPO$:
\begin{itemize}
\item The \emph{unit} of $\VV$ at $D$ is $\eta^{\VV}_{D}\colon D\to \VV D\colon x\mapsto \delta_{x}$. 
\item The \emph{Kleisli extension} $f^{\dagger}$ of a Scott-continuous map $f\colon D\to \VV E$ maps $\nu\in \VV D$ to 
$f^{\dagger}(\nu)\in \VV E$   by\\[1ex]
 \centerline{$f^{\dagger}(\nu) \defeq \lambda U \in \sigma E. \int_{x\in D}f(x)(U)d\nu.$}
\end{itemize}
Then the \emph{multiplication} $\mu^{\VV}_{D}\colon \VV\VV D\to \VV D$ is given by $\id_{\VV D}^{\dagger}$; it maps $\varpi \in \VV\VV D$ to  $\lambda U \in \sigma D. \int_{\nu\in \VV D}\nu(U)d\varpi\in \VV D$. 
Thus, $\VV$ defines an endofunctor on $\DCPO$ that sends a dcpo $D$ to $\VV D$, and a Scott-continuous map $h\colon D\to E$ to $\VV(h) \defeq(\eta_{E}\circ h)^{\dagger}$; concretely, $\VV(h)$ maps $\nu\in \VV D$ to $\lambda U \in\sigma E. \nu(h^{-1}(U))$. 

Jones~\cite{jones90} also showed that $\VV$ is a strong monad over $\DCPO$: its strength at $(D, E)$ is given by 
\[\tau^{\VV}_{DE}\colon D \times \VV E \to \VV(D\times E) \colon (x, \nu) \mapsto \lambda U. \int_{y\in E} \chi_{U}(x, y)d\nu,\]
where $\chi_{U}$ is the characteristic function of $U\in \sigma(D\times E)$. 
Whether $\VV$ is a commutative monad on $\DCPO$ has remained an open problem for decades. Proving this to be true requires showing the following Fubini-type equation holds: 
\begin{equation}\label{eqn:Fub}
\int_{x\in D}\int_{y\in E}\chi_{U}(x, y)d\xi d\nu = \int_{y\in E}\int_{x\in D}\chi_{U}(x, y)d\nu d\xi, 
\end{equation}
for dcpo's $D$ and $E$, for $U\in \sigma(D\times E)$ and for $\nu\in \VV D, \xi\in \VV E$~\cite[Section~6]{JonesP89}.
The difficulty lies in the well-known fact that a Scott open set $U\in \sigma(D\times E)$ might not be open in the product topology $\sigma D\times \sigma E$ in general~\cite[Exercise II-4.26]{gierzetal:domains}.  

However, if either $\nu$ or $\xi$ is a simple valuation,   then Equation~\eqref{eqn:Fub} 
holds. For example, if  $\nu=\sum_{i=1}^{n}r_{i}\delta_{x_{i}}\in \mathcal SD$, then by \eqref{eq:nestedintwithsim} both sides of~\eqref{eqn:Fub} are equal to $\sum_{i=1}^{n}r_{i}\int_{y\in E}\chi_{U}(x_{i}, y)d\xi$. The Scott continuity of the integral in~$\nu$ then implies Equation~\eqref{eqn:Fub} holds for valuations that are directed suprema of simple valuations. 
This is why, for example, $\VV$ is a commutative monad on the category of domains and Scott-continuous maps
, as we now explain. 

If $D$ is a dcpo and $x, y\in D$, we say $x$ is \emph{way-below} $y$ (in symbols, $x\ll y$)  if and only 
if for every directed set $A$ with $y\leq \sup A$, there is some $a\in A$ such
that $x\leq a$. We write $\Da y = \{x\in D\mid x\ll y\}$. 
A \emph{basis} for a dcpo $D$ is subset $B$ satisfying $\Da x\cap B$ is directed and $x = \sup \Da x\cap B$, for each $x\in D$. $D$ is  \emph{continuous} if it has a basis.
 Continuous dcpo's are also called \emph{domains}, and the category of domains
and Scott-continuous maps is denoted by $\DOM$.

Applying the reasoning  above about simple valuations, we obtain a commutative monad of valuations on $\DCPO$ by restricting
to a suitable completion of $\SSS D$ inside $\VV D$.  There are several possibilities (cf.~\cite{jiamis}), and we choose the smallest and simplest  -- the d-closure of $\SSS D$ in $\VV D$. 

\begin{definition}
For each dcpo $D$, we define $\MM D$ to be the intersection of all sub-dcpo's of $\VV D$ that contain $\SSS D$. \footnote{The same definition applies in the case of topological spaces.}
\end{definition}

Since $\VV D$ itself is a  dcpo containing $\SSS D$, it is immediate from the definition of sub-dcpo's that $\MM D$ is a well-defined dcpo in the stochastic order with  $\SSS D\subseteq \MM D\subseteq \VV D$. Analogous to $\VV D$, $\MM D$ also enjoys a convex structure.

\begin{lemma}
\label{lemma:MDisconvexclosed}
For $\nu_{i}\in \MM D$ and $r_{i}\geq 0, i=1,\ldots, n$ with $\sum_{i=1}^{n}r_{i}\leq 1$, the convex sum $\sum_{i=1}^{n}r_{i}\nu_{i}$ is still in $\MM D$. 
\end{lemma}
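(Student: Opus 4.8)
The plan is to exploit the defining property of $\MM D$ as the \emph{smallest} sub-dcpo of $\VV D$ containing $\SSS D$: to prove that a subset $A\subseteq \MM D$ equals $\MM D$, it suffices to check that $A$ is itself a sub-dcpo of $\VV D$ and that $\SSS D\subseteq A$. First I would reduce the $n$-ary statement to two special cases by induction on $n$. Since each $\nu_i$ is a subprobability valuation we have $\nu_i(D)\le 1$, so every partial sum $\sum_{i=1}^{k}r_i\nu_i$ has total mass at most $\sum_{i=1}^{k}r_i\le 1$ and hence already lies in $\VV D$; thus the inductive step needs only (i) that $\MM D$ is closed under scaling $\nu\mapsto r\nu$ for $r\in[0,1]$, and (ii) that $\nu+\xi\in\MM D$ whenever $\nu,\xi\in\MM D$ and $\nu+\xi\in\VV D$. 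Throughout I use that both scaling and addition are computed pointwise and are Scott-continuous, and that (by a routine check of strictness, modularity, and the mass bound) they send subprobability valuations to subprobability valuations; I also note $\mathbf{0}_D\in\SSS D\subseteq\MM D$.

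For the scaling case (i), fix $r\in[0,1]$ and set $A\defeq\{\nu\in\MM D : r\nu\in\MM D\}$. If $\nu\in\SSS D$ then $r\nu$ is again a simple valuation (or $\mathbf{0}_D$ when $r=0$), so $\SSS D\subseteq A$. If $\{\nu_j\}$ is directed in $A$ with supremum $\nu$, then $\nu\in\MM D$, and by Scott-continuity of scaling $r\nu=\sup_j r\nu_j$ is a directed supremum of elements of $\MM D$, hence lies in the sub-dcpo $\MM D$; thus $\nu\in A$ and $A$ is a sub-dcpo. Therefore $A=\MM D$.

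The additive case (ii) is where I expect the main obstacle, because addition is only \emph{partially} defined on $\VV D$ (the sum must retain total mass $\le 1$), so I would carry it out in two stages, peeling off one argument at a time. In the first stage I fix a simple valuation $\sigma\in\SSS D$ and consider
\[ B_\sigma \defeq \{\xi\in\MM D : \sigma+\xi\in\VV D \ \text{implies}\ \sigma+\xi\in\MM D\}. \]
For $\tau\in\SSS D$ with $\sigma+\tau\in\VV D$ the sum $\sigma+\tau$ is again simple, so $\SSS D\subseteq B_\sigma$. To see $B_\sigma$ is a sub-dcpo, take a directed $\{\xi_j\}\subseteq B_\sigma$ with supremum $\xi\in\MM D$; if $\sigma+\xi\in\VV D$, then since $\xi_j\le\xi$ forces $\sigma(D)+\xi_j(D)\le\sigma(D)+\xi(D)\le 1$ and hence $\sigma+\xi_j\in\VV D$, each $\sigma+\xi_j$ lies in $\MM D$, so the directed supremum $\sup_j(\sigma+\xi_j)=\sigma+\xi$ lies in $\MM D$. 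The point that makes the implication-form predicate work is precisely that the mass condition is inherited by every $\xi_j\le\xi$ and passes to the supremum, so no directed family escapes the domain of definition of addition. Thus $B_\sigma=\MM D$.

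In the second stage I fix $\xi\in\MM D$ and set $C_\xi\defeq\{\nu\in\MM D : \nu+\xi\in\VV D\ \text{implies}\ \nu+\xi\in\MM D\}$. The identical sub-dcpo argument applies, and $\SSS D\subseteq C_\xi$ now follows from the first stage (for $\sigma\in\SSS D$ with $\sigma+\xi\in\VV D$, we have $\sigma+\xi\in\MM D$ by $B_\sigma=\MM D$). Hence $C_\xi=\MM D$, which is exactly (ii). Chaining (i) and (ii) through the induction on $n$ then yields the claim; the only genuinely delicate points are the correct formulation of the closure predicate to accommodate the partiality of addition and the verification that the stochastic-order mass bound is preserved under directed suprema.
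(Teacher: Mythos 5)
Your proof is correct. It rests on the same mechanism as the paper's: freeze one argument of the convex combination, observe that the set of admissible second arguments is a sub-dcpo of $\VV D$ containing $\SSS D$, and conclude by minimality that it contains $\MM D$ (the paper phrases this equivalently as: the Scott-continuous, hence d-continuous, map $\nu\mapsto r_{1}\nu+r_{2}s$ carries the d-closure $\MM D$ of $\SSS D$ into the d-closed set $\MM D$), and both proofs run this two-stage bootstrap, first over a fixed simple valuation and then over a fixed element of $\MM D$. Where you differ is in the decomposition: the paper keeps the scalars attached and works throughout with the \emph{total} binary operation $(\nu,\xi)\mapsto r_{1}\nu+r_{2}\xi$ on $\VV D\times\VV D$ (total because $r_{1}+r_{2}\le 1$ bounds the mass automatically), whereas you split the operation into scaling and unscaled addition, which is only partially defined on $\VV D$. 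That forces you to introduce the implication-form predicate $B_\sigma$ and to verify that the mass constraint is inherited downward along the order and preserved by directed suprema --- all of which you do correctly, but it is extra bookkeeping the paper's formulation avoids. Your explicit induction on $n$ also fills in a step the paper dismisses with ``the general case can be proved similarly.''
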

\begin{proof}
In Appendix~\ref{section:commu}.
\end{proof}

For the proofs of the following results, we repeatedly use the fact that Scott-continuous maps between dcpo's $D$ and $E$ are \emph{d-continuous}, i.e.,  
continuous when $D$ and $E$ are equipped with the d-topology~\cite[Lemma 5]{zhao10}.

\begin{theorem}
\label{theorem:M is commutative}
$\MM$ is a commutative monad on $\DCPO$. 
\end{theorem}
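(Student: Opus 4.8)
The plan is to realise $\MM$ as a \emph{submonad} of $\VV$: every piece of structure of $\VV$ (functorial action, unit, Kleisli extension/multiplication, strength) restricts to $\MM$, and all required equations are then inherited from $\VV$. The single device driving everything is the following \emph{sub-dcpo closure principle}. Since $\MM D$ is by definition the smallest sub-dcpo of $\VV D$ containing $\SSS D$ (its d-closure), and since every Scott-continuous map is d-continuous, any Scott-continuous map $g$ out of $\VV D$ that sends $\SSS D$ into a sub-dcpo $C$ of its codomain must send all of $\MM D$ into $C$, because $g^{-1}(C)$ is then a sub-dcpo of $\VV D$ containing $\SSS D$. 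Thus, to check that a structure map of $\VV$ restricts to $\MM$, it suffices to check its behaviour on simple valuations, which is always an elementary computation via \eqref{eq:nestedintwithsim}.

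First I would verify $\MM$ is a functor. For $h\colon D\to E$, the map $\VV(h)$ sends $\sum_i r_i\delta_{x_i}$ to $\sum_i r_i\delta_{h(x_i)}\in\SSS E\subseteq\MM E$, so the closure principle gives $\VV(h)(\MM D)\subseteq\MM E$; define $\MM(h)$ as the restriction. The unit needs no work, since $\eta^{\VV}_D(x)=\delta_x\in\SSS D\subseteq\MM D$. For the Kleisli extension of $f\colon D\to\MM E$, formula \eqref{eq:nestedintwithsim} gives $f^{\dagger}(\sum_i r_i\delta_{x_i})=\sum_i r_i f(x_i)$, which lies in $\MM E$ because $\MM E$ is closed under convex sums (Lemma~\ref{lemma:MDisconvexclosed}); the closure principle again yields $f^{\dagger}(\MM D)\subseteq\MM E$ (equivalently, $\mu^{\VV}_D$ restricts to $\MM\MM D\to\MM D$, since on $\SSS(\MM D)$ it returns convex sums of elements of $\MM D$). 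The three Kleisli-triple identities hold for these restrictions because they hold for $\VV$, so $\MM$ is a monad. The strength is handled identically: for fixed $x$, the map $\tau^{\VV}_{DE}(x,-)$ is Scott-continuous and sends $\sum_i r_i\delta_{y_i}$ to $\sum_i r_i\delta_{(x,y_i)}\in\SSS(D\times E)$, so it carries $\MM E$ into $\MM(D\times E)$; the strength axioms descend from $\VV$.

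The heart of the theorem is commutativity, i.e.\ the Fubini identity \eqref{eqn:Fub} for all $\nu\in\MM D$ and $\xi\in\MM E$, and here the two-variable nature is the point to treat with care: the idea is to freeze one argument and close over the other. Fix any $\nu\in\VV D$ and consider the equalizer $\{\xi\in\VV E : \int_{x}\int_{y}\chi_U\,d\xi\,d\nu=\int_{y}\int_{x}\chi_U\,d\nu\,d\xi \text{ for all } U\}$ of the two maps $\VV E\to\VV(D\times E)$ given by the two sides of \eqref{eqn:Fub}. Both maps are Scott-continuous in $\xi$, and the equalizer of Scott-continuous maps is a sub-dcpo of the domain (directed suprema are computed pointwise, so agreement is preserved). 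By the ``either-argument-simple'' case recalled just before the definition of $\MM$, this equalizer contains every simple valuation $\xi\in\SSS E$; hence it contains $\MM E$. This shows \eqref{eqn:Fub} holds for every $\nu\in\VV D$ and every $\xi\in\MM E$, in particular on $\MM D\times\MM E$, which is exactly the commutativity of the restricted strength.

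The step I expect to be the main obstacle is this commutativity argument, but the difficulty is entirely absorbed by the simple-valuation Fubini case (already in place) together with the observation that an equalizer of Scott-continuous maps is a sub-dcpo: this lets a \emph{one}-variable closure argument suffice, avoiding any need for a joint result in the two valuations (and correctly stopping short of implying commutativity of $\VV$ itself, since one only closes over $\MM E$, not over all of $\VV E$). The remaining points requiring care are the routine verifications that the relevant maps are genuinely Scott-continuous in the varying valuation, so that the equalizers are sub-dcpo's and the iterated integrals are well-defined; these are precisely the continuity properties of the integral already established for $\VV$ in \cite{jones90}.
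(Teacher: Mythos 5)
Your proposal is correct and follows essentially the same route as the paper: restrict the unit, Kleisli extension and strength of $\VV$ to $\MM$ by checking them on simple valuations and invoking d-continuity (your ``sub-dcpo closure principle'' is exactly the paper's observation that preimages of sub-dcpo's under Scott-continuous maps are sub-dcpo's), inherit the monad and strength axioms from $\VV$, and then prove the Fubini identity by a one-variable closure argument from the simple-valuation case. The only cosmetic differences are that you freeze $\nu\in\VV D$ and close over $\xi\in\MM E$ whereas the paper freezes $\xi\in\VV E$ and closes over $\nu\in\MM D$, and you phrase the closure step as ``the equalizer of Scott-continuous maps is a sub-dcpo'' where the paper says ``$[0,1]$ is Hausdorff in the d-topology'' --- these are the same argument.
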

\begin{proof}
We sketch the key steps in showing $\MM$ is commutative:
\paragraph*{\textbf{Unit}}
The unit of $\MM$ at $D$ is $\eta^{\MM}_{D}\colon D\to \MM D\colon x\mapsto \delta_{x}$, the co-restriction of $\eta^{\VV}_{D}$ to $\MM D$. Obviously, it is  a well-defined Scott-continuous map. 

\paragraph*{\textbf{Extension}}
Since a Scott-continuous map $f\colon D\to \MM E$ is also Scott-continuous from $D$ to $\VV E$, the Kleisli extension $f^{\ddagger}\colon \MM D\to \MM E$ of $f$ can be defined as the restriction and co-restriction of $f^{\dagger}\colon \VV D\to \VV E$ to $\MM D$ and $\MM E$, respectively. The validity of this definition requires $f^{\dagger}(\MM D)\subseteq \MM E$, which boils down to $f^{\dagger}(\SSS D)\subseteq \MM E$ by d-continuity of $f^{\dagger}$, since $f^{\dagger}$ is Scott-continuous. Hence we only need to check that $f^{\dagger}( \sum_{i=1}^{n}r_{i}\delta_{x_{i}}) \in \MM E$ for each $ \sum_{i=1}^{n}r_{i}\delta_{x_{i}} \in \SSS D$. However,  $f^{\dagger}( \sum_{i=1}^{n}r_{i}\delta_{x_{i}}) = \sum_{i=1}^{n} r_{i}f(x_{i})$, which is indeed in $\MM E$ by Lemma~\ref{lemma:MDisconvexclosed}.  

\paragraph*{\textbf{Strength}} The strength $\tau^{\MM}_{DE}$ of $\MM$ at $(D, E)$ is given by $\tau^{\VV}_{DE}$ restricted to $D\times \MM E$ and co-restricted to $\MM(D\times E)$. This is well-defined provided that $\tau^{\VV}_{DE}$ maps $D\times \MM E$ into $\MM(D\times E)$. Again, we only need to prove that  $\tau^{\VV}_{DE}$ maps $D\times \SSS E$ into $\MM(D\times E)$ and conclude the proof with the d-continuity of $\tau^{\VV}_{DE}$ in its second component. Towards this end, we pick $(a , \sum_{i=1}^{n}r_{i}\delta_{y_{i}}) \in  D\times \SSS E$, and see 
\begin{align*}
\tau^{\VV}_{DE}(a, \sum_{i=1}^{n}r_{i}\delta_{y_{i}})  &= \lambda U. \int \chi_{U}(a, y) d \sum_{i=1}^{n}r_{i}\delta_{y_{i}} \\
											    &\stackrel{\eqref{eq:nestedintwithsim}}{=} \lambda U.  \sum_{i=1}^{n}r_{i}\chi_{U}(a, y_{i}) &\\
										&	   = 
											    \lambda U. \sum_{i=1}^{n} r_{i} \delta_{(a, y_{i})}(U) 
											    \defeq \sum_{i=1}^{n} r_{i} \delta_{(a, y_{i})}
\end{align*}
is indeed in $\MM(D\times E)$. 

With $f^{\ddagger}$ and $\tau^{\MM}$ well-defined, the same arguments used to prove $(\VV, \eta^{\VV}, \_^{\dagger}, \tau^{\VV})$ is a strong monad in~\cite{jones90} prove $(\MM, \eta^{\MM}, \_^{\ddagger}, \tau^{\MM})$ is a strong monad on $\DCPO$.

\paragraph*{\textbf{Commutativity}} Finally, we show $\MM$ is commutative by proving the Equation~\eqref{eqn:Fub} holds for any dcpo's $D$ and $E$ and $\nu\in \MM D, \xi\in \MM E$. As commented above, this holds if $\nu$ is simple, and then the Scott-continuity of the integral in the $\nu$-component implies Equation~\eqref{eqn:Fub} also holds for directed suprema of simple valuations, directed suprema of directed suprema of simple valuations and so forth, transfinitely. But these are exactly the valuations $\MM D$. 

Formally, we consider for each fixed $\xi \in \MM E$ (even for $\xi \in \VV E$) the functions
\[
F\colon \nu\mapsto \int_{x\in D}\int_{y\in E}\chi_{U}(x, y)d\xi d\nu \colon \MM D\to [0, 1]
\]
and 
\[
G\colon \nu\mapsto  \int_{y\in E}\int_{x\in D}\chi_{U}(x, y)d\nu d\xi \colon \MM D\to [0, 1]. 
\]
Note that both $F$ and $G$ are Scott-continuous functions hence d-continuous, and they are equal on $\SSS D$ by Equation~\eqref{eq:nestedintwithsim}. Since $[0, 1]$ is Hausdorff in the d-topology, $F$ and $G$ are then equal on the d-closure of $\SSS D$ which is, by construction, $\MM D$. 
\end{proof}

\begin{remark}
\label{remark:multi}
The \emph{multiplication} $\mu^{\MM}_{D}$ of $\MM$ at $D$ is given by $(\id_{\MM D})^{\ddagger}$. Concretely, $\mu^{\MM}_{D}$ maps each valuation $\varpi \in \MM(\MM D)$ to $\lambda U\in \sigma D. \int_{\nu\in \MM D} \nu(U) d \varpi$. In particular,  $\mu^{\MM}_{D}$ maps each simple valuation $\sum_{i=1}^{n}r_{i}\delta_{\nu_{i}}\in \MM(\MM D)$ to $\sum_{i=1}^{n}r_{i}{\nu_{i}}$, where $\nu_{i}\in \MM D, i=1,\ldots, n$, and $\sum_{i=1}^{n}r_{i}\leq 1$. 
\end{remark}

\begin{remark}
\label{remark:tensorofvaluations}
The \emph{double strength} of $\MM$ at $(D, E)$ is given by the Scott-continuous map  $(\nu, \xi) \mapsto \nu\otimes \xi \colon \MM(D) \times \MM(E) \to \MM(D\times E)$, where $\nu \otimes \xi$ is defined as  $\lambda U\in \sigma(D\times E). \int_{y\in E}\int_{x\in D} \chi_{U}(x, y) d\nu d\xi$.
\end{remark}

\begin{remark}
\label{remark:sameval}
We note that  $\MM D$ is the first example of a commutative valuations monad on $\dcpo$ that contains the simple valuations. And, since every valuation on a domain $D$ is a directed supremum of simple valuations~\cite[Theorem 5.2]{jones90},  it follows that $\MM = \VV$ on the category $\DOM$.
\end{remark}

\subsection{Dcpo-completion versus $\mathbf D$-completion}
Recall that a \emph{dcpo-completion} of a poset $P$ is a pair $(D, e)$, where $D$ is a dcpo and $e\colon P\to D$ is an injective Scott-continuous map, such that for any dcpo $E$ and Scott-continuous map $f\colon P\to E$, there exists a unique Scott-continuous map $f'\colon D\to E$ satisfying $f = f'\circ e$. The dcpo-completion of posets always exists~\cite[Theorem 1]{zhao10}. 

As we have seen, for each dcpo~$D$, $\MM D$ is  the smallest sub-dcpo in $\VV D$ containing $\SSS D$, one may wonder whether $\MM D$, together with the inclusion map from $\SSS D$ into $\MM D$,  is a dcpo-completion of $\SSS D$. The answer is ``no''  in general. The reason is that the inclusion of $\SSS D$ into $\MM D$ may not be Scott-continuous, even when $D$ is a domain (see \cite[Section 6]{jiamis}). The construction $\MM D$ is actually more in a topological flavour, as we now explain. For simplicity, we assume all spaces considered in the sequel are in $\mathbf {T_{0}}$, the category of $T_{0}$ spaces and continuous maps. 

\begin{definition}
Let $X$ be a topological space. The \emph{weak topology} on $\VV X$ is generated by the sets  $$[U>r] \defeq \{\nu \in \VV X \ | \ \nu(U)>r \},$$ which form a subbasis, where $U$ is  open in $X$ and $r\in [0, 1]$. 
\end{definition}
\begin{remark}
\label{remark:fgeqr}
For each continuous map $f\colon X\to [0, 1]$ and $r\in [0, 1]$, the set $[f>r] \defeq \{\nu \in \VV X \ | \ \int fd\nu >r \}$ is open in the weak topology. 
\end{remark}

We use $\VV_{w}X$ to denote the space $\VV X$ equipped with the weak topology.  We will use the fact that $\VV_{w} X$ is a sober space, which follows from~\cite[Proposition 5.1]{heckmann96}. It is easy to see that the specialization order on $\VV_{w} X$ is just the stochastic order. Hence $\VV X = \Omega(\VV_{w}X)$. 

We also use $\SSS_{w} X$ ($\MM_{w} X$) to denote the space $\SSS X$ ($\MM X$) endowed with the relative topology from $\VV_{w} X$. Accordingly, $\MM X = \Omega(\MM_{w} X)$, and $\SSS X = \Omega(\SSS_{w} X)$.  Although $\MM X$ is not the dcpo-completion of $\SSS X$ in general, we do have the following:

\begin{proposition}
\label{pro:Mwisd-compofSw}
For each space $X$, $\MM_{w} X$ is a \emph{$\mathbf D$-completion} of $\SSS_{w} X$. That is, $\MM_{w} X$ itself is a d-space, an object in $\mathbf D$; the inclusion map $i \colon \SSS_{w} X\to \MM_{w} X$ is continuous; and for any d-space $Y$ and continuous map $f\colon \SSS_{w} X\to Y$, there exists a unique continuous map $f'\colon \MM_{w} X \to Y$ such that $f= f'\circ i$. 
\end{proposition}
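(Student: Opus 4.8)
The plan is to recognize Proposition~\ref{pro:Mwisd-compofSw} as an instance of the general theory of $\mathbf{D}$-completions of Keimel and Lawson~\cite{KeimelL09}: their main construction shows that the d-closure of a subspace inside a \emph{sober} space is a $\mathbf{D}$-completion of that subspace. Since the excerpt already records that $\VV_{w} X$ is sober (via~\cite{heckmann96}), the whole argument reduces to identifying $\MM_{w} X$ with the d-closure of $\SSS_{w} X$ inside $\VV_{w} X$ and then transporting the three required properties from the cited theorem.

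First I would reconcile the order-theoretic definition of $\MM$ with the topological d-closure. Because $\VV_{w} X$ is sober, it is in particular a d-space, and by construction $\Omega(\VV_{w} X) = \VV X$. Hence the d-closed subsets of the space $\VV_{w} X$ are exactly the sub-dcpo's of the dcpo $\VV X$. Consequently the d-closure of $\SSS X$ in $\VV_{w} X$, being the smallest d-closed set containing $\SSS X$, coincides with the intersection of all sub-dcpo's of $\VV X$ containing $\SSS X$, which is by definition $\MM X$. Equipped with the relative weak topology this is precisely $\MM_{w} X$, so $\MM_{w} X$ is the d-closure of the subspace $\SSS_{w} X$ inside the sober space $\VV_{w} X$.

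With this identification the three clauses follow. That $\MM_{w} X$ is a d-space is the statement that a d-closed subspace of a d-space is again a d-space, which is part of the Keimel--Lawson machinery. Continuity of $i$ is immediate, since $\SSS_{w} X$ and $\MM_{w} X$ both carry the subspace topology inherited from $\VV_{w} X$, so $i$ is merely a subspace inclusion. For the universal property, uniqueness is the easy direction: given two continuous extensions $f_1', f_2' \colon \MM_{w} X \to Y$ into a d-space $Y$, functoriality of $\Omega$ on $\mathbf{D}$ makes them Scott-continuous as maps $\MM X \to \Omega Y$, so their coincidence set is closed under directed suprema taken in $\VV X$, i.e.\ a sub-dcpo of $\VV X$ containing $\SSS X$; by the minimality defining $\MM X$ it must equal all of $\MM X$, whence $f_1' = f_2'$.

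The crux is the \emph{existence} of the extension $f'$, and this is exactly where the sobriety of $\VV_{w} X$ is indispensable: it is what lets the transfinite process of adjoining directed suprema that builds $\MM X$ out of $\SSS X$ be mirrored by directed suprema in the d-space $Y$, producing a well-defined, continuous $f'$ with $f = f' \circ i$. I expect this to be the only nontrivial step in any self-contained treatment; rather than reproduce the transfinite argument I would invoke the Keimel--Lawson theorem that the d-closure of a subspace of a sober space is its $\mathbf{D}$-completion, applied to $\SSS_{w} X \subseteq \VV_{w} X$. All remaining verifications are routine consequences of the definitions together with the soberness of $\VV_{w} X$.
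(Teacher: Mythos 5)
Your proposal is correct and follows essentially the same route as the paper: both arguments invoke Keimel and Lawson's completion theorem for the \texttt{K}-category $\mathbf D$, embed $\SSS_{w}X$ into the sober space $\VV_{w}X$, and identify the d-subspaces (equivalently, d-closed subsets) of $\VV_{w}X$ with the sub-dcpo's of $\VV X = \Omega(\VV_{w}X)$, so that the intersection defining $\MM X$ is exactly the Keimel--Lawson $\mathbf D$-completion. The additional verifications you spell out (continuity of the inclusion, uniqueness via the coincidence set being a sub-dcpo) are consistent with, and slightly more explicit than, the paper's one-line appeal to the cited theorem.
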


The above proposition is a straightforward  application of Keimel and Lawson's \texttt{K}-category theory~\cite{KeimelL09} to the category~$\mathbf D$. 

\begin{definition}\label{def:Kcat}
A \texttt{K}-category $\mathbf K$ is a full subcategory of $\mathbf{T_{0}}$, whose objects will be called k-spaces,  satisfying:
\begin{enumerate}
\item  Homeomorphic copies of k-spaces are k-spaces;
\item All sober spaces are k-spaces, i.e., $\mathbf{SOB} \subseteq \mathbf K$;
\item \label{item:insectionofkspaces} In a sober space $S$, the intersection of any family of k-subspaces, equipped with the relative topology from~$S$,  is a k-space;
\item \label{def:Kcat4} For any continuous map $f\colon S\to T$ between sober spaces $S$ and $T$, and any k-subspace $K$ of $T$, $f^{-1}(K)$ is k-subspace of $S$. 
\end{enumerate}
\end{definition}

If $\mathbf K$ is a \texttt{K}-category, then the $\mathbf K$-completion\footnote{The definition of $\mathbf K$-completion is similar to that of $\mathbf D$-completion and can be found in~\cite{KeimelL09}.} of any $T_{0}$-space $X$ always exists, and one possible completion process goes as follows~\cite[Theorem 4.4]{KeimelL09}: 
First, pick any $j\colon X\to Y$ such that $Y$ is sober and $j$ is a topological embedding. For example, one can take $j$ as the embedding of $X$ into its standard sobrification. Second, let $\tilde{X}$ be the intersection of all k-subspaces of $Y$ containing $j(X)$ and equip it with the relative topology from $Y$. Then $\tilde{X}$, together with the  co-restriction $i\colon X\to \tilde{X}$ of $j$, is a $\mathbf K$-completion of $X$.

Now we apply this procedure to prove Proposition~\ref{pro:Mwisd-compofSw}. First, note that $\mathbf D$ is indeed a \texttt{K}-category as proved in~\cite[Lemma 6.4]{KeimelL09}.  We embed $\SSS_{w}X$ into the sober space $\VV_{w}X$, and notice that all d-subspaces of $\VV_{w}X$ are precisely sub-dcpo's of $\VV X$. Hence $\MM_{w}X$, which is the intersection of sub-dcpo's $\VV X$ containing $\SSS X$ equipped with the relative topology from $\VV_{w}X$, is a $\mathbf D$-completion of $\SSS_{w}X$.

\subsection{A uniform construction}
Proposition~\ref{pro:Mwisd-compofSw} motivates the next definition.

\begin{definition}
Let $\mathbf K$ be a \texttt{K}-category. For each space $X$, we define $\VV_{\mathbf K} X$ to be the intersection of all k-subspaces of $\VV_{w} X$ containing $\SSS_{w} X$, equipped with the relative topology from~$\VV_{w} X$. 
\end{definition}

As discussed above, $\VV_{\mathbf K} X$ is a  \texttt{K}-completion of $\SSS_{w} X$. 
It was proved in~\cite[Theorem 3.5]{jiamis}\footnote{The authors allow valuations to take values in $[0, \infty]$. However, the theorem is also true for valuations with values in~$[0, 1]$ }
that $\VV_{\mathbf K} \colon \mathbf{T_{0}} \to \mathbf{T_{0}} $ is a monad for each \texttt{K}-category  
$\mathbf K$: The unit of $\VV_{\mathbf K}$ at $X$ maps $x\in X$ to $\delta_{x}$, and for any continuous map $f\colon X\to \VV_{\mathbf K} Y$, the Kleisli extension $f^{\dagger} \colon \VV_{\mathbf K} X \to \VV_{\mathbf K} Y$ maps $\nu$ to $\lambda U\in \mathcal OY. \int_{x\in X}f(x)(U)d\nu$. 
Therefore, if $\mathbf K$ is a full subcategory of $\mathbf D$, then according to the construction $\VV_{\mathbf K}X$ is always a d-space for each $X$, hence the monad $\VV_{\mathbf K} \colon \mathbf{T_{0}} \to \mathbf{T_{0}} $
can be restricted to a monad on $\mathbf D$. 

\begin{theorem}
Let $\mathbf K$ be a \emph{\texttt{K}}-category with $\mathbf K\subseteq \mathbf D$. Then $\VV_{\mathbf K, \leq} \defeq \Omega\circ \VV_{\mathbf K}\circ \Sigma$ is a monad on $\DCPO$.
\end{theorem}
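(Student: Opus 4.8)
The plan is to recognise the statement as an instance of a standard categorical principle: \emph{interposing a monad between a pair of adjoint functors produces a monad on the domain of the left adjoint}. Here the adjunction is $\Sigma \dashv \Omega$, recalled in Subsection~\ref{sub:domain-preliminaries}, with left adjoint $\Sigma\colon \DCPO \to \mathbf D$ and right adjoint $\Omega\colon \mathbf D \to \DCPO$, and the interposed monad is $\VV_{\mathbf K}$.

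The first step is to check that $\VV_{\mathbf K}$ is genuinely a monad on $\mathbf D$, not merely on $\mathbf{T_0}$. Since $\mathbf K \subseteq \mathbf D$, the $\mathbf K$-completion $\VV_{\mathbf K} X$ is a d-space for every space $X$, so $\VV_{\mathbf K}$ carries $\mathbf D$ into $\mathbf D$; its unit $x \mapsto \delta_x$ and its Kleisli extension restrict to $\mathbf D$ as well. This yields a monad $(T,\eta^T,\mu^T)$ on $\mathbf D$ with $T \defeq \VV_{\mathbf K}|_{\mathbf D}$ — exactly the restriction already asserted in the paragraph preceding the theorem.

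The second step avoids verifying the monad laws by hand and instead passes through adjunctions. I would present $T$ via its Kleisli adjunction $F_T \dashv U_T$ between $\mathbf D$ and the Kleisli category $\mathbf D_T$, so that $U_T F_T = T$ and the whiskered Kleisli counit recovers $\mu^T$. Because a composite of adjoint pairs is again an adjoint pair, composing with $\Sigma \dashv \Omega$ gives an adjunction $F_T\Sigma \dashv \Omega U_T$ between $\DCPO$ and $\mathbf D_T$. The monad on $\DCPO$ induced by this composite adjunction has underlying endofunctor $(\Omega U_T)(F_T\Sigma) = \Omega\,(U_T F_T)\,\Sigma = \Omega T \Sigma = \VV_{\mathbf K,\leq}$; and since every adjunction induces a monad on the domain of its left adjoint, $\VV_{\mathbf K,\leq}$ is a monad, which is the claim. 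Unwinding the construction, the unit of $\VV_{\mathbf K,\leq}$ is $\Omega\eta^T\Sigma \circ \eta$ and the multiplication is $\Omega\mu^T\Sigma \circ \Omega T\epsilon T\Sigma$, where $\eta$ and $\epsilon$ are the unit and counit of $\Sigma\dashv\Omega$; since $\Sigma$ is fully faithful with $\Omega\Sigma = \Id_{\DCPO}$, the unit $\eta$ is the identity and the unit of $\VV_{\mathbf K,\leq}$ simplifies to $x\mapsto\delta_x$.

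I do not expect a serious obstacle. The two facts invoked — that a composite of adjunctions is an adjunction and that an adjunction induces a monad — are standard, and the only content-bearing point, the restriction of $\VV_{\mathbf K}$ to $\mathbf D$, is handed to us by the hypothesis $\mathbf K\subseteq \mathbf D$. The one place where a direct approach would require real work is associativity: checking it by hand for $\Omega\VV_{\mathbf K}\Sigma$ with the explicit multiplication above amounts to commuting the counit $\epsilon$ of $\Sigma\dashv\Omega$ past $\mu^T$, which is exactly the coherence that the composite-adjunction argument packages away. This is why I would favour the abstract route.
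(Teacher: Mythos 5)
Your argument is correct and is essentially the paper's own proof: the paper likewise factors the monad on $\mathbf D$ through an adjunction (it uses the Eilenberg--Moore adjunction $\mathcal F \dashv \mathcal U$ where you use the Kleisli one, an immaterial choice) and then invokes the fact that adjoints compose with $\Sigma \dashv \Omega$ to read off $\VV_{\mathbf K,\leq} = \Omega\circ\VV_{\mathbf K}\circ\Sigma$ as the monad of the composite adjunction. The preliminary restriction of $\VV_{\mathbf K}$ to $\mathbf D$ via $\mathbf K\subseteq\mathbf D$ is also handled exactly as in the paragraph preceding the theorem.
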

\vspace{-.2in}\[
\begin{tikzcd}
\mathbf{DCPO}
\arrow[rr, "\Sigma"{name=F}, bend left=10] 
&&
\mathbf D \ar[rr, "\mathcal F" {name=H}, bend left = 10] 
\arrow[loop, "\mathcal V_\mathbf K"', distance=2.5em, start anchor={[xshift=1ex]north}, end anchor={[xshift=-1ex]north}]{}{}
\arrow[ll, "\Omega"{name=G}, bend left=10]
\arrow[phantom, from=F, to=G, "\dashv" rotate=-90]
&& \mathbf D^{\mathcal V_{\mathbf K}} \ar[ll, "\mathcal U" {name=K}, bend left = 10] 
\arrow[phantom, from=H, to=K, "\dashv" rotate=-90]
\end{tikzcd}
\]
\begin{proof}
Let $ \mathbf D^{\mathcal V_{\mathbf K}} $ be the Eilenberg-Moore category of $\VV_{\mathbf K}$ over $\mathbf D$ and $\mathcal F\dashv \mathcal U$ be the adjunction that recovers $\VV_{\mathbf K}$, then $\VV_{\mathbf K, \leq} = \Omega\circ \mathcal U\circ \mathcal F\circ \Sigma$. The statements follow from the standard categorical fact that adjoints compose: $ \mathcal F\circ \Sigma \dashv \Omega\circ \mathcal U$.  
\end{proof}

\begin{remark}\label{remark:thesameunitandextension}
The unit of $\VV_{\mathbf K, \leq}$ at dcpo $D$ sends $x\in D$ to $\delta_{x}$, and for dcpo's $D$ and $E$, the Kleisli extension $f^{\dagger} \colon \VV_{\mathbf K, \leq} D\to \VV_{\mathbf K, \leq} E$ of $f\colon D\to  \VV_{\mathbf K, \leq} E$ maps $\nu$ to $\lambda U\in \mathcal \sigma E. \int_{x\in D}f(x)(U)d\nu$.  
\end{remark}

\begin{remark}
$\MM_{w} = \VV_{\mathbf D}$ and $\MM = \VV_{\mathbf D, \leq}$.
\end{remark}

Note that the category $\mathbf{SOB}$ of sober spaces is the smallest {\texttt{K}-category~\cite[Remark 4.1]{KeimelL09}. We denote 
$\VV_{\mathbf{SOB}}$ by $\mathcal P_{w}$  and $\VV_{\mathbf{SOB}, \leq}$ by $\mathcal P$. The following statement is then obvious.

\begin{proposition}
\label{pro:submonad}
Let $\mathbf K$ be a \emph{\texttt{K}}-category with $\mathbf K\subseteq \mathbf D$. Then for each dcpo $D$, we have 
$\SSS D\subseteq \MM D \subseteq \VV_{\mathbf K, \leq} D\subseteq \mathcal PD\subseteq \VV D$. 
\end{proposition}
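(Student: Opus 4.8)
The plan is to reduce the entire chain to set-theoretic containments among subsets of the sober space $\VV_{w} X$, with $X = \Sigma D$, and then read off the dcpo inclusions by applying $\Omega$, whose action on a subspace of $\VV_{w} X$ merely equips the same underlying set with the restricted stochastic order. The observation driving all the middle inclusions is that $\mathbf K \mapsto \VV_{\mathbf K} X$ is \emph{antitone} in $\mathbf K$: since $\VV_{\mathbf K} X$ is the intersection of all k-subspaces of $\VV_{w} X$ containing $\SSS_{w} X$, enlarging $\mathbf K$ enlarges this family of subspaces and hence shrinks their intersection.

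First I would make the antitonicity precise. If $\mathbf K_{1} \subseteq \mathbf K_{2}$ are two \texttt{K}-categories, then every $\mathbf K_{1}$-subspace of $\VV_{w} X$ containing $\SSS_{w} X$ is in particular a $\mathbf K_{2}$-subspace containing $\SSS_{w} X$ (a subspace $S$ is a $\mathbf K_{i}$-subspace precisely when $S \in \mathbf K_{i}$, and $\mathbf K_{1} \subseteq \mathbf K_{2}$). Thus the family defining $\VV_{\mathbf K_{2}} X$ contains the one defining $\VV_{\mathbf K_{1}} X$, and taking intersections reverses the inclusion, giving $\VV_{\mathbf K_{2}} X \subseteq \VV_{\mathbf K_{1}} X$ as subsets of $\VV_{w} X$.

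Next I would instantiate this with the sandwich $\mathbf{SOB} \subseteq \mathbf K \subseteq \mathbf D$, which is valid because $\mathbf{SOB}$ is the smallest \texttt{K}-category and $\mathbf K \subseteq \mathbf D$ by hypothesis. Antitonicity yields $\VV_{\mathbf D} X \subseteq \VV_{\mathbf K} X \subseteq \VV_{\mathbf{SOB}} X$, that is, $\MM_{w} X \subseteq \VV_{\mathbf K} X \subseteq \mathcal P_{w} X$, using the identifications $\MM_{w} = \VV_{\mathbf D}$ and $\mathcal P_{w} = \VV_{\mathbf{SOB}}$. Applying $\Omega$ with $X = \Sigma D$ preserves these underlying-set inclusions and endows each with the stochastic order, giving $\MM D \subseteq \VV_{\mathbf K, \leq} D \subseteq \mathcal P D$. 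The two outer inclusions are then immediate: $\SSS D \subseteq \MM D$ holds because $\MM D$ is by definition a sub-dcpo of $\VV D$ containing $\SSS D$, and $\mathcal P D = \Omega(\VV_{\mathbf{SOB}}(\Sigma D)) \subseteq \Omega(\VV_{w}(\Sigma D)) = \VV D$ because $\VV_{\mathbf{SOB}} X$ is a subspace of $\VV_{w} X$ by construction.

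I do not expect a genuine obstacle here; the only points demanding care are the bookkeeping of the direction of inclusion under intersection (a larger \texttt{K}-category yields a smaller completion) and the verification that the specialization order of each subspace of $\VV_{w} X$ is the restriction of the stochastic order, so that $\Omega$ transports the set inclusions faithfully to inclusions of dcpo's. Both are routine given the facts already established, which is why the statement can be regarded as a direct corollary of the antitone dependence of $\VV_{\mathbf K}$ on $\mathbf K$.
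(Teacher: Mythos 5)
Your proposal is correct and is exactly the argument the paper has in mind: the paper declares the proposition ``obvious'' immediately after noting that $\mathbf{SOB}$ is the smallest \texttt{K}-category, and the intended reasoning is precisely your antitonicity observation (a larger \texttt{K}-category yields more k-subspaces of $\VV_{w}X$ containing $\SSS_{w}X$, hence a smaller intersection), followed by applying $\Omega$ and the two trivial outer inclusions. You have simply written out the details the authors left implicit.
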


Heckmann~\cite[Theorem~5.5]{heckmann96} proved that $\mathcal PD$ consists of the so-called \emph{point-continuous} valuations on $D$. We claim that the Equation~\ref{eqn:Fub} holds when either $\nu$ or $\xi$ is point-continuous:
\begin{theorem}
\label{theorem:VPFubini}
Let $D$ and $E$ be dcpo's, and $U\in \sigma(D\times E)$. Then the equation 
\begin{equation*}\label{eqn:Fubpv}
\int_{x\in D}\int_{y\in E}\chi_{U}(x, y)d\xi d\nu = \int_{y\in E}\int_{x\in D}\chi_{U}(x, y)d\nu d\xi, 
\end{equation*}
holds for $(\nu, \xi) \in \mathcal P D\times \VV E$ (equivalently, $(\nu, \xi) \in \VV D\times \mathcal P E$). 
\end{theorem}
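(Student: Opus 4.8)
The plan is to mirror the proof of Theorem~\ref{theorem:M is commutative}, but to replace the d-topology by the weak topology and the Hausdorff target by a \emph{sober} one. Fix $\xi \in \VV E$ and $U \in \sigma(D\times E)$, and regard the two sides of the Fubini equation as functions of $\nu$: put $F(\nu) \defeq \int_{x\in D}\int_{y\in E}\chi_U(x,y)\,d\xi\,d\nu$ and $G(\nu) \defeq \int_{y\in E}\int_{x\in D}\chi_U(x,y)\,d\nu\,d\xi$. Since $\mathcal P = \VV_{\mathbf{SOB},\leq}$, the space $\mathcal{P}_w D = \VV_{\mathbf{SOB}}D$ is the sober completion of $\SSS_w D$ along an embedding $i\colon \SSS_w D \to \mathcal{P}_w D$, exactly as in Proposition~\ref{pro:Mwisd-compofSw}. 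The target $\Sigma[0,1]$ is a continuous lattice, hence sober, so the universal property of the $\mathbf{SOB}$-completion guarantees that \emph{any} continuous map $\SSS_w D \to \Sigma[0,1]$ extends \emph{uniquely} to $\mathcal{P}_w D$. I would therefore reduce the theorem to two claims: (i) $F$ and $G$ agree on simple valuations; and (ii) both $F$ and $G$ are continuous from $\mathcal{P}_w D$ to $\Sigma[0,1]$. Uniqueness of the extension of their common restriction $F\circ i = G\circ i$ then forces $F = G$ on all of $\mathcal{P}_w D = \mathcal P D$.

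Claim (i) is immediate: for $\nu = \sum_i r_i \delta_{x_i} \in \SSS D$ both $F(\nu)$ and $G(\nu)$ reduce, via~\eqref{eq:nestedintwithsim}, to $\sum_i r_i \int_{y\in E}\chi_U(x_i,y)\,d\xi$. For the continuity of $F$, observe that its inner integral is the fixed function $g(x) \defeq \int_{y\in E}\chi_U(x,y)\,d\xi = \xi(U_x)$, where $U_x = \{y : (x,y)\in U\}$ is Scott-open and $x\mapsto \xi(U_x)$ is Scott-continuous (the sections $U_x$ are monotone in $x$ and convert directed suprema of $x$ into directed unions). Hence $F(\nu) = \int_D g\,d\nu$ with $g$ \emph{fixed}, so by Remark~\ref{remark:fgeqr} the set $F^{-1}((r,1]) = [g>r]$ is weakly open; thus $F$ is weakly continuous on all of $\VV_w D$, in particular on $\mathcal{P}_w D$.

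The crux is the weak continuity of $G$, and this is where the point-continuity hypothesis on $\nu$ must be used. Unlike $F$, the map $G(\nu) = \int_{y\in E}\nu(U^y)\,d\xi$, with $U^y = \{x : (x,y)\in U\}$, carries the variable valuation $\nu$ \emph{inside} the integral, so it is not of the form $\int(\text{fixed }g)\,d\nu$ and Remark~\ref{remark:fgeqr} does not apply. One checks readily that $G$ is monotone and Scott-continuous for the stochastic order (monotone convergence in $\nu$), but this delivers continuity only for the finer Scott/d-topology, which recovers agreement merely on the d-closure $\MM D$ and not on the strictly larger $\mathcal P D$. To upgrade to weak continuity on $\mathcal{P}_w D$ I would invoke Heckmann's description of $\mathcal P D$ as the point-continuous valuations~\cite{heckmann96}: every point-continuous $\nu_0$ is the directed supremum of the simple valuations $\sigma$ way-below it in $\VV_w D$, and the sets $\{\rho : \sigma \ll \rho\}$ form a base of weak neighborhoods of $\nu_0$. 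Given $\nu_0 \in \mathcal P D$ with $G(\nu_0) > r$, Scott-continuity of $G$ in the stochastic order yields a simple $\sigma \ll \nu_0$ with $G(\sigma) > r$; since $\sigma \ll \rho$ implies $\sigma \leq \rho$ and $G$ is monotone, the weak neighborhood $\{\rho : \sigma \ll \rho\}$ of $\nu_0$ is contained in $G^{-1}((r,1])$, proving $G$ weakly continuous on $\mathcal{P}_w D$. I expect this approximation step — controlling an arbitrary Scott-open $U$ and an arbitrary $\xi$ purely through the point-continuity of $\nu$ — to be the main technical obstacle.

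Finally, combining (i) and (ii), the uniqueness clause of the $\mathbf{SOB}$-completion forces $F = G$ throughout $\mathcal{P}_w D = \mathcal P D$, for every fixed $\xi \in \VV E$; this is precisely the asserted equation on $\mathcal P D \times \VV E$. The parenthetical equivalent form on $\VV D \times \mathcal P E$ is the same statement after exchanging the roles of $(D,\nu)$ and $(E,\xi)$ and relabelling $U$, so it needs no separate argument.
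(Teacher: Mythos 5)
Your overall architecture (fix $\xi$, view both sides of the equation as functions of $\nu$, check agreement on $\SSS D$, check continuity for the weak topology, conclude by a uniqueness principle for the completion) has the right shape, and your treatment of claim (i) and of $F$ is correct. But the argument breaks exactly where you predicted it would: the weak continuity of $G$ on $\mathcal{P}_{w} D$. It rests on two assertions that are not available here: (1) that every point-continuous valuation is the directed supremum of the simple valuations way-below it, and (2) that the sets $\{\rho : \sigma \ll \rho\}$ are weakly open. Assertion (1) would place every point-continuous valuation inside the smallest sub-dcpo of $\VV D$ containing $\SSS D$, i.e.\ it would force $\mathcal{P}D \subseteq \MM D$ and hence collapse the chain of Proposition~\ref{pro:submonad}; nothing in Heckmann's work supports it. What Heckmann actually provides --- and what the paper uses --- is only a density statement: every weakly open neighbourhood of a point-continuous valuation contains a simple valuation \emph{below} it. That is strictly weaker than a way-below approximation. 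Assertion (2) presupposes that $\VV D$ (or $\mathcal{P}D$) is a continuous domain whose Scott topology coincides with the weak topology, which fails for general dcpo's.

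The paper sidesteps this by proving the \emph{other} equivalent form of the statement: it fixes $\xi \in \mathcal{P}E$ (not $\xi \in \VV E$) and lets $\nu$ range over all of $\VV_{w} D$. Point-continuity is then exploited on the $\xi$ side: given $G(\nu) > r$, the map $g(y) = \int_{x\in D}\chi_U(x,y)\,d\nu$ is Scott-continuous, so $[g>r]\cap \mathcal{P}E$ is a weakly open neighbourhood of $\xi$, and Heckmann's density result replaces $\xi$ by a simple valuation $\sum_i r_i\delta_{y_i}\le\xi$ lying in $[g>r]$. This turns $G$, locally, into a finite sum of evaluation maps $\nu' \mapsto r_i\nu'(U^{y_i})$, whose weak continuity in $\nu'$ is immediate from Remark~\ref{remark:fgeqr}; one then cuts out a weak neighbourhood of $\nu$ on which $G>r$. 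With $F$ and $G$ continuous and linear on all of $\VV_{w}D$, the paper concludes by the Schr\"oder--Simpson theorem (agreement on Dirac valuations forces equality), rather than by the universal property of a completion. To salvage your version you would either have to switch to this orientation, or independently prove that the Scott-continuous map $G$ restricted to $\mathcal{P}_{w}D$ is weakly continuous --- which is precisely the content your proposal is missing.
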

As far as we know, this is the most general Fubini theorem on dcpo's. The proof, which relies on the Schr\"oder-Simpson Theorem~\cite{goubaultsstheorem}, is included in Appendix~\ref{section:commu}.  Hence by combining Remark~\ref{remark:thesameunitandextension}, Proposition~\ref{pro:submonad} and Theorem \ref{theorem:VPFubini} we get our next theorem.

\begin{theorem}
\label{theorem:kcofsiscommu}
For any \emph{\texttt{K}}-category $\mathbf K$ with  $\mathbf K\subseteq \mathbf D$, $\VV_{\mathbf K, \leq}$ is a commutative monad on $\DCPO$.
\end{theorem}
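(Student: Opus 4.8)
The plan is to leverage the fact, established earlier in the excerpt, that for any \texttt{K}-category $\mathbf K \subseteq \mathbf D$ the operator $\VV_{\mathbf K, \leq} = \Omega \circ \VV_{\mathbf K} \circ \Sigma$ is already a monad on $\DCPO$, with unit sending $x$ to $\delta_x$ and Kleisli extension given by integration (Remark~\ref{remark:thesameunitandextension}). What remains is to verify commutativity, which amounts to checking the Fubini equation~\eqref{eqn:Fub} for all $(\nu, \xi) \in \VV_{\mathbf K, \leq} D \times \VV_{\mathbf K, \leq} E$. The key observation is that commutativity of a strong monad reduces, in this valuations setting, precisely to this Fubini-type identity, exactly as in the open problem stated for $\VV$ itself.

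First I would record the inclusion $\VV_{\mathbf K, \leq} D \subseteq \mathcal P D$ supplied by Proposition~\ref{pro:submonad}: every valuation in $\VV_{\mathbf K, \leq} D$ is in particular a point-continuous valuation, and similarly $\VV_{\mathbf K, \leq} E \subseteq \mathcal P E \subseteq \VV E$. This is the crucial reduction step, because it places any pair $(\nu, \xi) \in \VV_{\mathbf K, \leq} D \times \VV_{\mathbf K, \leq} E$ inside the scope of Theorem~\ref{theorem:VPFubini}. Concretely, since $\nu \in \mathcal P D$ and $\xi \in \VV E$, the hypothesis $(\nu, \xi) \in \mathcal P D \times \VV E$ of Theorem~\ref{theorem:VPFubini} is met, so the Fubini equation~\eqref{eqn:Fub} holds for exactly these valuations and every $U \in \sigma(D \times E)$.

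It then remains to package this into the formal definition of commutativity. I would invoke the strong monad structure: since $\VV_{\mathbf K, \leq}$ is obtained from $\VV$ by restriction and co-restriction (the unit and Kleisli extension agree with those of $\VV$ on the relevant subdcpo's, by Remark~\ref{remark:thesameunitandextension}), its strength is inherited from $\tau^{\VV}$, just as in the proof of Theorem~\ref{theorem:M is commutative}. The commutativity condition for a strong monad is the equality of the two canonical maps $\VV_{\mathbf K, \leq} D \times \VV_{\mathbf K, \leq} E \to \VV_{\mathbf K, \leq}(D \times E)$ built from the left and right strengths; evaluated at $(\nu, \xi)$ on an open set $U$, these two maps are precisely the two iterated integrals appearing on the two sides of~\eqref{eqn:Fub}. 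Hence the Fubini identity of Theorem~\ref{theorem:VPFubini} is exactly the statement that these two composites coincide, and commutativity follows.

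The main obstacle is entirely front-loaded into Theorem~\ref{theorem:VPFubini}: once the general point-continuous Fubini theorem is available, the present statement is a short corollary obtained by chaining the inclusion from Proposition~\ref{pro:submonad} with that theorem and reinterpreting the equation in terms of the double strength (cf.~Remark~\ref{remark:tensorofvaluations}). I expect no genuine difficulty in the categorical bookkeeping, since the monad and strength are literal restrictions of $\VV$'s; the only care needed is to confirm that the two strength composites land in $\VV_{\mathbf K, \leq}(D \times E)$ and evaluate to the claimed iterated integrals, which is routine given the explicit formulas already provided.
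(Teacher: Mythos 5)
Your proposal is correct and takes essentially the same route as the paper: reduce commutativity of the (already established) monad $\VV_{\mathbf K,\leq}$ to the Fubini equation, use Proposition~\ref{pro:submonad} to place both valuations inside $\mathcal P D$ and $\mathcal P E$, and conclude with Theorem~\ref{theorem:VPFubini}. The one step you wave off as routine bookkeeping --- verifying that $\tau^{\VV}_{DE}$ actually maps $D\times \VV_{\mathbf K,\leq}E$ into $\VV_{\mathbf K,\leq}(D\times E)$ --- is where the paper's proof spends essentially all of its effort, and your proposed justification (``just as in the proof of Theorem~\ref{theorem:M is commutative}'') does not transfer verbatim: that argument used d-continuity and the fact that $\MM(D\times E)$ is the d-closure of $\SSS(D\times E)$, which is specific to $\mathbf K=\mathbf D$. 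The paper instead notes that $\tau_{D,E}(a,-)$ coincides with the Kleisli extension $g^{\dagger}$ of the Scott-continuous map $g\colon y\mapsto \delta_{(a,y)}\colon E\to \VV_{\mathbf K,\leq}(D\times E)$, which is automatically well-typed because $\VV_{\mathbf K,\leq}$ is already a monad; alternatively one could replace d-continuity by Definition~\ref{def:Kcat}, item~\eqref{def:Kcat4}, applied to the continuous map $\nu\mapsto\tau^{\VV}_{DE}(a,\nu)$ on the sober space $\VV_{w}E$. Some such argument must be supplied; with it in place, the rest of your outline matches the paper's proof.
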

\begin{proof}
In Appendix~\ref{section:commu}.
\end{proof}

As promised, we conclude this subsection with a third commutative monad $\mathcal W$ on $\DCPO$ by describing a \texttt{K}-category lying between $\mathbf{SOB}$ and $\mathbf D$, the category $\mathbf{WF}$ consisting of well-filtered spaces and continuous maps.  A $T_{0}$ space $X$ is \emph{well-filtered}  if,  given any filtered family $\{K_{a}\}_{a\in A}$ of compact saturated subsets of $X$ with $\bigcap_{a\in A}K_{a}\subseteq U$, with $U$ open, there is some $a\in A$ with $K_{a}\subseteq U$. A proof that  $\mathbf{WF}$ is a \texttt{K}-category between $\mathbf{SOB}$ and $\mathbf{D}$ can be found in~\cite{wxxx20}. Hence $\mathcal W \defeq \VV_{\mathbf{WF}, \leq}$ is a commutative monad on $\DCPO$ and $\MM D\subseteq \mathcal WD\subseteq \mathcal PD$ for every dcpo~$D$.

\begin{remark}
All subsequent results we present in this paper hold for the three monads $\MM, \mathcal W$ and $\mathcal P$. To avoid cumbersome repetition, we explicitly state them for $\MM.$
\end{remark}

\subsection{Continuous Kegelspitzen and $\MM$-algebras}
\label{sub:M-V-relationship}
Kegelspitzen~\cite{keimelplotkin17} are dcpo's that enjoy a convex structure. In this section, we show  every \emph{continuous} Kegelspitze $K$ has a \emph{linear barycenter map} $\beta\colon \MM K \to K$ making $(K,\beta)$ an $\MM$-algebra and conversely, every  $\MM$-algebra $(K, \beta)$ on $\DCPO$ admits a Kegelspitze structure on $K$ making $\beta\colon \MM K \to K$ a linear map.   We begin with the notion of a barycentric algebra.
\begin{definition}
A \emph{barycentric algebra} is a set $A$ endowed with a binary operation $a+_{r}b$ for every real number $r\in [0, 1]$ such that for all $a, b, c \in A$ and $r, p\in [0, 1]$, the following equations hold:
\begin{align*}
&a+_{1} b = a; ~~~~a+_{r}b = b+_{1-r}a; ~~~~ a+_{r} a = a;\\
&(a+_{p}b)+_{r}c = a+_{pr}(b+_{\frac{r-pr}{1-pr}} c)~~\text{provided}~r, p<1.
\end{align*}
\end{definition}

\begin{definition}
A \emph{pointed barycentric algebra} is a barycentric algebra~$A$ with a distinguished element~$\bot$. 
For $a\in A$ and $r\in [0, 1]$, we define $r\cdot a \defeq a+_{r} \bot$. 
A map $f\colon A\to B$ between pointed barycentric algebras is called \emph{linear} if $f(\bot_{A}) = \bot_{B}$ and $f(a+_{r}b) = f(a)+_{r}f(b)$ for all $a, b\in A, r\in [0, 1]$. 
\end{definition}

\begin{definition}
A \emph{Kegelspitze} is a pointed barycentric algebra $K$ equipped with a directed-complete partial order such that, for every $r$ in the unit interval, the functions determined by
convex combination $(a, b)\mapsto a+_{r}b\colon K\times K\to K$ and scalar multiplication $(r, a)\mapsto r\cdot a \colon [0,1]\times K\to K$ are Scott-continuous in both arguments. A \emph{continuous Kegelspitze} is a Kegelspitze that  is a domain in the equipped order. 
\end{definition}

\begin{remark}
In a Kegelspitze $K$, the map $(r, a)\mapsto r\cdot a = a +_r \bot$ is Scott-continuous, hence monotone, in the $r$-component, which implies $\bot = \bot +_{1} a = a+_{0} \bot = 0\cdot a\leq 1\cdot a = a$ for each $a\in K$, i.e., $\bot$ is the least element of $K$.
\end{remark}

\begin{example}
\label{exa:MDiskegel}
For each dcpo $D$, $\MM D$ is a Kegelspitze: for $\nu_{1}, \nu_{2} \in \MM D$ and $r\in [0, 1]$, $\nu_{1} +_{r} \nu_{2}$ is defined as 
$r \nu_{1} + (1-r ) \nu_{2}$. Lemma~\ref{lemma:MDisconvexclosed} implies this is well-defined.\footnote{Note that  Lemma~\ref{lemma:MDisconvexclosed} is stated only for $\MM$, but it also holds for $\mathcal W$ and $\mathcal P$: one notes that $\nu_{1} \mapsto r\nu_{1}+(1-r)\nu_{2}\colon \VV_{w} D\to \VV_{w} D$ is a continuous map between sober spaces and then uses Definition~\ref{def:Kcat} Item~\eqref{def:Kcat4} to replace ``d-continuity'' in the proof.}
The constantly zero valuation $\mathbf 0_{D}$ is the distinguished element. 
Verifying that $\MM D$ is a Kegelspitze is then straightforward.\\
As a consequence, for each Scott-continuous map $f\colon D\to E$, the map $\MM(f) \colon \MM D\to \MM E\colon \nu \mapsto \lambda U\in \sigma E. \nu(f^{-1}(U))$ is obviously  linear. 
\end{example}

\begin{definition}
\label{def:convex-sums}
In each pointed barycentric algebra~$K$, for $a_{i}\in K, r_{i}\in [0, 1], i=1,\ldots, n$ with $\sum_{i=1}^{n}r_{i}\leq 1$, we define the convex sum inductively
\[\sum_{i=1}^{n}r_{i}a_{i} \defeq \begin{cases}
a_{1} & \text{, if } r_{1}=1,\\
a_{1}+_{r_{1}}(\sum_{i=2}^{n} \frac{r_{i}}{1-r_{1}}a_{i}) &  \text{, if }r_{1}< 1.
						\end{cases}
\]
This  is invariant under index-permutation: for $\pi$ a permutation of $\{1, \ldots, n\}$, $\sum_{i=1}^{n}r_{i}a_{i} = \sum_{i=1}^{n}r_{\pi(i)}a_{\pi(i)}$~\cite[Lemma 5.6]{jones90}. If $K$ is a Kegelspitze, then the expression $\sum_{i=1}^{n}r_{i}a_{i}$ is Scott-continuous in each $r_{i}$ and $a_{i}$. 
A countable convex sum may also be defined: given $a_i \in K$ and $r_i \in [0,1]$, for $i \in I$, with $\sum_{i \in I} r_i \leq 1$, let
$ \sum_{i \in I} r_i a_i \defeq \sup \{ \sum_{j \in J} r_j a_j \ |\ J \subseteq I \text{ and } J \text{ is finite} \} . $
\end{definition}

\begin{lemma}
A function $f\colon K_{1} \to K_{2}$ between pointed barycentric algebras $K_{1}$ and $K_{2}$ is linear if and only if $f( \sum_{i=1}^n r_{i}a_{i})  = \sum_{i=1}^n r_{i}f(a_{i})$ for  $a_{i}\in K_{1}, i=1, \ldots, n$ and $\sum_{i=1}^{n}r_{i}\leq 1$.
\end{lemma}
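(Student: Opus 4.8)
The plan is to prove the two implications separately: the forward direction ($\Rightarrow$) by induction on the length $n$ of the convex sum, unfolding Definition~\ref{def:convex-sums} and pushing $f$ through each recursive clause, and the backward direction ($\Leftarrow$) by specializing the convex-sum identity to the two smallest cases that encode the defining conditions of linearity.

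For the forward direction I would assume $f$ is linear, i.e.\ $f(\bot_{K_1}) = \bot_{K_2}$ and $f(a +_r b) = f(a) +_r f(b)$ for all $a,b$ and $r \in [0,1]$, and induct on $n$. In the base case $n = 1$ one has $\sum_{i=1}^1 r_1 a_1 = r_1 \cdot a_1 = a_1 +_{r_1} \bot_{K_1}$ (reading the empty convex sum as $\bot$, which is what makes Definition~\ref{def:convex-sums} total), so linearity together with $f(\bot_{K_1}) = \bot_{K_2}$ gives $f(r_1 \cdot a_1) = f(a_1) +_{r_1} \bot_{K_2} = r_1 \cdot f(a_1)$. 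For the inductive step with $r_1 < 1$ I would write $\sum_{i=1}^n r_i a_i = a_1 +_{r_1} c$ with $c \defeq \sum_{i=2}^n \frac{r_i}{1-r_1} a_i$, first checking that $c$ is a legitimate convex sum of length $n-1$ because its weights sum to $\frac{(\sum_i r_i) - r_1}{1-r_1} \leq 1$. Applying linearity and then the induction hypothesis to $f(c)$ reproduces exactly the recursive clause defining $\sum_{i=1}^n r_i f(a_i)$; the remaining case $r_1 = 1$ forces all other $r_i = 0$ and both sides collapse to $f(a_1)$.

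For the backward direction I would assume $f(\sum_{i=1}^n r_i a_i) = \sum_{i=1}^n r_i f(a_i)$ for all finite convex sums and recover the two clauses of linearity by choosing weights. Taking $n = 1$ and $r_1 = 0$, the sum $0 \cdot a_1$ evaluates to $\bot$ in both algebras (using $a +_0 x = x$ and the empty-sum convention), yielding $f(\bot_{K_1}) = \bot_{K_2}$; taking $n = 2$ with $(r_1,r_2) = (r, 1-r)$ for $r < 1$, the renormalized inner weight is $\frac{1-r}{1-r} = 1$, so the convex sum collapses to $a_1 +_r a_2$ and the hypothesis gives $f(a +_r b) = f(a) +_r f(b)$, while the boundary case $r = 1$ holds automatically from $x +_1 y = x$. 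I expect the genuinely delicate point throughout to be the bookkeeping around $\bot$ and the zero-mass (empty) sum, since that is where pointedness rather than the bare barycentric structure is used; the rest is a direct transcription of the inductive definition through $f$, with the only calculation worth stating being the verification that the renormalized weights $\frac{r_i}{1-r_1}$ again sum to at most $1$ so that the induction hypothesis applies.
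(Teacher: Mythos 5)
Your proof is correct. The paper states this lemma without any proof, so there is nothing to compare against; your induction on $n$ that unfolds Definition~\ref{def:convex-sums} clause by clause, together with the two specializations ($n=1$, $r_1=0$ to recover $f(\bot_{K_1})=\bot_{K_2}$, and $n=2$ with weights $(r,1-r)$ to recover $f(a+_r b)=f(a)+_r f(b)$) is exactly the argument the authors are implicitly relying on. Your explicit handling of the empty inner sum as $\bot$ (so that $\sum_{i=1}^{1} r_1 a_1 = a_1 +_{r_1} \bot = r_1\cdot a_1$, consistent with the paper's definition of scalar multiplication) fills in the one convention the paper leaves unstated, and your check that the renormalized weights $r_i/(1-r_1)$ still sum to at most $1$ is the right point to verify before invoking the induction hypothesis.
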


\begin{definition}
Let $K$ be a Kegelspitze and $s=\sum_{i=1}^{n}r_{i}\delta_{x_{i}}$ be a simple valuation on~$K$. The \emph{barycenter} of $s$ is defined as $\beta_*(s) \defeq \sum_{i=1}^{n}r_{i}x_{i}$. 
\end{definition}

As a straightforward consequence of Jones' Splitting Lemma (\cite[Proposition IV-9.18]{gierzetal:domains}), the map $\beta_{*}(s)$ is monotone from $\SSS K$ to $K$.
If $K$ is 
continuous, 
then $\MM K = \VV K$ and $\SSS K$ is a  basis for $\MM K$  (see Remark~\ref{remark:sameval}). 
We extend $\beta_*$ to the \emph{barycenter map}
\[\beta\colon \MM K\to  K\ \text{by}\  \beta(\nu) \defeq \sup\{\beta_{*}(s)\mid s \in \SSS K \text{ and }s\ll \nu\}.\] 
Note that for each simple valuation $s = \sum_{i=1}^{n}r_{i}\delta_{x_{i}} \in \SSS K$, there exists a directed set $A$ of $\SSS K$ with supremum $s$ consisting of simple valuations way-below~$s$. For example, one can choose $A= \{ \sum_{i=1}^{n}\frac{mr_{i}}{m+1}\delta_{y_{i}}  \ | \ m \in \mathbb N ~\text{and}~ y_{i} \ll x_{i}\}$. 
By~\cite[Lemma IV-9.23.]{gierzetal:domains}, the map $\beta$, as defined above, is a Scott-continuous map extending~$\beta_{*}$, i.e.,  $\beta(\nu) = \beta_{*}(\nu) $ for $\nu\in \SSS K$. Moreover, $\beta$ is a linear map since $\beta_{*}$ is.

\begin{proposition}
\label{prop:EM-algebra}
Each \emph{continuous} Kegelspitze $K$ admits a linear barycenter map $\beta\colon \MM K\to K$ (as above) for which the pair $(K, \beta)$ is an Eilenberg-Moore algebra of $\MM$.
\end{proposition}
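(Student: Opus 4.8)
The plan is to verify that $(K,\beta)$ satisfies the two Eilenberg--Moore axioms, the \emph{unit law} $\beta\circ\eta^{\MM}_{K}=\id_{K}$ and the \emph{multiplication law} $\beta\circ\mu^{\MM}_{K}=\beta\circ\MM(\beta)$, exploiting the properties of $\beta$ already established before the statement: it is a Scott-continuous linear map $\MM K\to K$ that extends $\beta_{*}$ and in particular satisfies $\beta(\delta_{x})=\beta_{*}(\delta_{x})=x$. The unit law is then immediate, since $\eta^{\MM}_{K}(x)=\delta_{x}$ is a simple valuation and so $\beta(\eta^{\MM}_{K}(x))=x$.

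For the multiplication law, the strategy is a density reduction. Both $\beta\circ\mu^{\MM}_{K}$ and $\beta\circ\MM(\beta)$ are Scott-continuous maps $\MM\MM K\to K$, being composites of the Scott-continuous maps $\mu^{\MM}_{K}$, $\MM(\beta)$ and $\beta$; hence it suffices to show that they coincide on a directed-sup-dense subset of $\MM\MM K$. Here I would use continuity of $K$: as noted above, $K$ continuous gives $\MM K=\VV K$ with $\SSS K$ a basis, so $\MM K$ is itself a domain. Consequently $\MM\MM K=\VV(\MM K)$, and by Jones' theorem (Remark~\ref{remark:sameval}) every element of $\MM\MM K$ is a directed supremum of simple valuations. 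Thus it is enough to check the identity on an arbitrary $\varpi=\sum_{j=1}^{m}s_{j}\delta_{\nu_{j}}\in\SSS(\MM K)$ with $\nu_{j}\in\MM K$.

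On such a $\varpi$ the verification is a routine computation from the explicit formulas. By Remark~\ref{remark:multi}, $\mu^{\MM}_{K}(\varpi)=\sum_{j}s_{j}\nu_{j}$, so linearity of $\beta$ gives $\beta(\mu^{\MM}_{K}(\varpi))=\sum_{j}s_{j}\beta(\nu_{j})$. On the other side, $\MM(\beta)$ is linear and sends each $\delta_{\nu_{j}}$ to $\delta_{\beta(\nu_{j})}$ (Example~\ref{exa:MDiskegel}), whence $\MM(\beta)(\varpi)=\sum_{j}s_{j}\delta_{\beta(\nu_{j})}$; applying linearity of $\beta$ once more together with $\beta(\delta_{y})=y$ yields $\beta(\MM(\beta)(\varpi))=\sum_{j}s_{j}\beta(\nu_{j})$. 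The two expressions agree, and by the density reduction the multiplication law holds on all of $\MM\MM K$.

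The only genuinely load-bearing step is the density reduction: one must invoke continuity of $K$ to know that $\MM K$ is a domain, so that simple valuations are directed-sup-dense in $\MM\MM K=\VV(\MM K)$ and checking on the basis is legitimate. Everything else is forced by the explicit descriptions of $\mu^{\MM}$ and $\MM(\beta)$ on simple valuations together with linearity and Scott-continuity of $\beta$, all of which are already in hand.
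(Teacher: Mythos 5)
Your proof is correct, and the computation on simple valuations (both sides reduce to $\sum_j s_j\beta(\nu_j)$ via Remark~\ref{remark:multi}, linearity of $\beta$ and $\MM(\beta)$, and $\beta(\delta_y)=y$) is exactly the one in the paper. Where you differ is in how the reduction to simple valuations is justified. The paper argues via the d-topology: $\SSS(\MM K)$ is d-dense in $\MM(\MM K)$ by the very definition of $\MM$ as a d-closure, both composites are Scott-continuous hence d-continuous, and $K$ is Hausdorff in the d-topology, so agreement on a dense set propagates to all of $\MM(\MM K)$. You instead invoke continuity of $K$ a second time: since $\SSS K$ is a basis for $\MM K=\VV K$, the dcpo $\MM K$ is itself a domain, so by Remark~\ref{remark:sameval} every element of $\MM(\MM K)=\VV(\MM K)$ is a directed supremum of simple valuations, and plain Scott-continuity finishes the job. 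Both routes are sound. Yours is more elementary (no topology beyond directed suprema) but leans essentially on the continuity hypothesis for the density step; the paper's d-closure argument is insensitive to whether $\MM K$ is a domain and is the pattern reused throughout the paper (e.g.\ in the proof of Theorem~\ref{theorem:M is commutative}), which is presumably why the authors prefer it even where, as here, the stronger hypothesis is available.
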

\begin{proof}
Clearly, $\beta\circ \eta^{\MM}_{K} = \id_{K}$. To prove that $\beta\circ \mu^{\MM}_{K} =\beta\circ \MM( \beta)$, we only need to prove both sides are equal on simple valuations in $\MM(\MM K)$, since $\SSS(\MM K)$ is dense in  $\MM(\MM K)$ in the d-topology, and both sides of the equation are d-continuous functions. However, when applied to the simple valuation $\sum_{i=1}^{n}r_{i}\delta_{\nu_{i}} \in \SSS(\MM K)$, both sides equal $\sum_{i=1}^{n}r_{i}\beta(\nu_{i})$. This follows from direct computation by employing Remark~\ref{remark:multi} and linearity of~$\beta$. 
\end{proof}

We next show that every Eilenberg-Moore algebra $(K, \beta)$ of $\MM$ on $\dcpo$ admits a Kegelspitze structure on $K$ making $\beta\colon \MM K\to K$ a linear map. 

\begin{proposition}
\label{prop:Keglstructure}
Let  $(K, \beta)$ be an $\MM$-algebra on $\DCPO$. For $a, b\in K$ and $r\in[0,1]$, define $a+_{r}b \defeq \beta (\delta_{a} +_{r} \delta_{b})$.  Then with the operation~$+_{r}$,  $K$ is a Kegelspitze and $\beta\colon \MM K\to K$ is linear. 
\end{proposition}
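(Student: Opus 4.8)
The plan is to first fix the distinguished element of $K$ and establish a single linearity identity for $\beta$ directly from the Eilenberg--Moore laws, then use that identity to transport the four barycentric axioms from $\MM K$ to $K$, and finally verify Scott-continuity to conclude that $K$ is a Kegelspitze. For the distinguished element I would take $\bot \defeq \beta(\mathbf 0_K)$, the image under $\beta$ of the least (zero) valuation, which is precisely the distinguished element of the Kegelspitze $\MM K$ (Example~\ref{exa:MDiskegel}). The operation $a +_r b = \beta(\delta_a +_r \delta_b)$ is well-defined since $\delta_a, \delta_b \in \SSS K \subseteq \MM K$ and, by convex-closedness of $\MM K$ (Lemma~\ref{lemma:MDisconvexclosed}), $\delta_a +_r \delta_b \in \MM K$.

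The engine of the argument is the linearity identity $\beta(\xi +_r \zeta) = \beta(\xi) +_r \beta(\zeta)$ for all $\xi,\zeta \in \MM K$, which I would read off from the algebra law $\beta\circ\mu^{\MM}_K = \beta\circ\MM(\beta)$ applied to the simple valuation $r\delta_\xi + (1-r)\delta_\zeta \in \SSS(\MM K)$. By Remark~\ref{remark:multi} the multiplication sends this to $r\xi + (1-r)\zeta = \xi +_r \zeta$, while $\MM(\beta)$ pushes it forward to $r\delta_{\beta(\xi)} + (1-r)\delta_{\beta(\zeta)} = \delta_{\beta(\xi)} +_r \delta_{\beta(\zeta)}$. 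Applying $\beta$ to both and unfolding the definition of $+_r$ on $K$ gives $\beta(\xi +_r \zeta) = \beta(\delta_{\beta(\xi)} +_r \delta_{\beta(\zeta)}) = \beta(\xi) +_r \beta(\zeta)$, and $\beta(\mathbf 0_K) = \bot$ holds by our choice of $\bot$. The decisive feature is that this identity holds for arbitrary $\xi,\zeta$, not merely for Dirac valuations.

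With linearity available, the barycentric axioms for $K$ follow by pushing the corresponding axioms of $\MM K$ through $\beta$, using $\beta(\delta_x) = \beta(\eta^{\MM}_K(x)) = x$. The unit, idempotence and commutativity laws need no linearity at all: e.g.\ $a +_r a = \beta(\delta_a +_r \delta_a) = \beta(\delta_a) = a$, and $a +_r b = \beta(\delta_a +_r \delta_b) = \beta(\delta_b +_{1-r}\delta_a) = b +_{1-r} a$. The associativity law is the one place requiring the full strength of linearity: I would apply $\beta$ to the identity $(\delta_a +_p \delta_b) +_r \delta_c = \delta_a +_{pr}(\delta_b +_{\frac{r-pr}{1-pr}}\delta_c)$ holding in $\MM K$, and then invoke the linearity identity on each outer convex combination. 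On the left this means applying it with $\xi = \delta_a +_p \delta_b$, a genuine non-Dirac element of $\MM K$, which is exactly why the identity had to be proved for all elements; the result is $(a +_p b) +_r c = a +_{pr}(b +_{\frac{r-pr}{1-pr}} c)$.

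Finally, Scott-continuity of $(a,b)\mapsto a +_r b$ and of $(r,a)\mapsto r\cdot a = a +_r \bot$ is immediate, since each is a composite of the Scott-continuous unit $\eta^{\MM}_K$, the Scott-continuous convex-combination operation of the Kegelspitze $\MM K$, and the Scott-continuous structure map $\beta$. As noted in the remark following the definition of a Kegelspitze, $\bot$ is then automatically the least element of $K$, so $(K, +_r, \bot)$ is a Kegelspitze and $\beta$ is linear. I expect the only real subtlety to be the associativity verification, and in particular the discipline of establishing linearity of $\beta$ for all of $\MM K$ before invoking it there; the remaining axioms and the continuity claims are routine transports of the corresponding facts about $\MM K$.
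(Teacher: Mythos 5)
Your proof follows essentially the same route as the paper's: both hinge on applying the algebra law $\beta\circ\mu^{\MM}_{K}=\beta\circ\MM(\beta)$ to a simple valuation in $\SSS(\MM K)$ and then importing the barycentric identities from the Kegelspitze $\MM K$ via $\beta(\delta_x)=x$. Your factoring is arguably cleaner: you isolate the identity $\beta(\xi+_{r}\zeta)=\beta(\xi)+_{r}\beta(\zeta)$ for arbitrary $\xi,\zeta\in\MM K$ once and for all, whereas the paper inlines exactly this computation (twice) inside its ten-step derivation of the associativity axiom. You also correctly flag the point that matters most, namely that the identity must be available for non-Dirac arguments such as $\xi=\delta_{a}+_{p}\delta_{b}$.

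The one step that is too quick is the Scott-continuity of $(r,a)\mapsto r\cdot a=\beta(\delta_{a}+_{r}\delta_{\bot})$ in the $r$-component. This is not a composite of maps that the Kegelspitze axioms for $\MM K$ declare continuous: those axioms give continuity of $(\nu_1,\nu_2)\mapsto\nu_1+_{r}\nu_2$ for each fixed $r$ and of $(r,\nu)\mapsto\nu+_{r}\mathbf 0_{K}$, but $r\mapsto\nu_1+_{r}\nu_2$ is in general not even monotone (the function $r\mapsto r\nu_1(U)+(1-r)\nu_2(U)$ decreases when $\nu_2(U)>\nu_1(U)$), and $\delta_{\bot}$ is not the distinguished element $\mathbf 0_{K}$ of $\MM K$. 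The fix is the one the paper makes explicit: first establish that $\bot=\beta(\mathbf 0_{K})$ is the least element of $K$ directly from monotonicity of $\beta$ (namely $\beta(\mathbf 0_{K})\le\beta(\delta_{a})=a$), so that $\delta_{\bot}\le\delta_{a}$ and hence $r\mapsto\delta_{a}+_{r}\delta_{\bot}$ is monotone and Scott-continuous. Note that you cannot instead appeal to the remark following the definition of Kegelspitze for leastness of $\bot$, since that remark derives leastness \emph{from} the continuity of scalar multiplication in $r$ --- the very property you are trying to establish here.
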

\begin{proof}
See Appendix~\ref{section:commu}.
\end{proof}

\begin{proposition}
\label{prop:EMmaps}
Let  $(K_{1}, \beta_{1})$ and $(K_{2}, \beta_{2})$ be $\MM$-algebras on $\DOM$. A Scott-continuous function $f\colon K_{1}\to K_{2}$ is   an algebra morphism from $(K_{1}, \beta_{1})$ to $(K_{2}, \beta_{2})$ if and only if $f$ is linear with respect to the Kegelspitze structure on $K_{1}$ and $K_{2}$ introduced by $\beta_{1}$ and $\beta_{2}$, respectively, as in Proposition~\ref{prop:Keglstructure}.
\end{proposition}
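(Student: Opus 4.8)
The plan is to prove both implications by unfolding the definition of an $\MM$-algebra morphism: $f\colon (K_{1},\beta_{1})\to(K_{2},\beta_{2})$ is such a morphism precisely when the square $f\circ\beta_{1}=\beta_{2}\circ\MM(f)$ commutes, where $\MM(f)\colon\MM K_{1}\to\MM K_{2}$ is the functorial action. Two facts will carry most of the argument. First, by Example~\ref{exa:MDiskegel} the map $\MM(f)$ is linear and satisfies $\MM(f)(\delta_{x})=\delta_{f(x)}$, so $\MM(f)(\delta_{a}+_{r}\delta_{b})=\delta_{f(a)}+_{r}\delta_{f(b)}$ and $\MM(f)(\mathbf 0_{K_{1}})=\mathbf 0_{K_{2}}$. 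Second, by Proposition~\ref{prop:Keglstructure} each $\beta_{i}$ is linear; combining the characterization of linear maps via convex sums with the unit law $\beta_{i}\circ\eta^{\MM}_{K_{i}}=\id$ (so that $\beta_{i}(\delta_{x})=x$), we get $\beta_{i}\big(\sum_{j}r_{j}\delta_{x_{j}}\big)=\sum_{j}r_{j}x_{j}$ on simple valuations, the right-hand side being the convex sum in the Kegelspitze $K_{i}$; in particular $\bot_{K_{i}}=\beta_{i}(\mathbf 0_{K_{i}})$.

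For the ``only if'' direction I would assume $f$ is an algebra morphism and compute directly. Using $a+_{r}b=\beta_{1}(\delta_{a}+_{r}\delta_{b})$, the commuting square, and the two facts above,
\[
f(a+_{r}b)=f\big(\beta_{1}(\delta_{a}+_{r}\delta_{b})\big)=\beta_{2}\big(\MM(f)(\delta_{a}+_{r}\delta_{b})\big)=\beta_{2}\big(\delta_{f(a)}+_{r}\delta_{f(b)}\big)=f(a)+_{r}f(b),
\]
and similarly $f(\bot_{K_{1}})=f(\beta_{1}(\mathbf 0_{K_{1}}))=\beta_{2}(\MM(f)(\mathbf 0_{K_{1}}))=\beta_{2}(\mathbf 0_{K_{2}})=\bot_{K_{2}}$. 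Hence $f$ is linear. Note this direction uses neither continuity of the $K_{i}$.

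For the ``if'' direction I would assume $f$ is linear and show the two Scott-continuous maps $f\circ\beta_{1}$ and $\beta_{2}\circ\MM(f)$ from $\MM K_{1}$ to $K_{2}$ coincide. On a simple valuation $s=\sum_{i=1}^{n}r_{i}\delta_{x_{i}}\in\SSS K_{1}$ both sides evaluate to the same convex sum $\sum_{i=1}^{n}r_{i}f(x_{i})$ in $K_{2}$: the left side equals $f\big(\sum_{i}r_{i}x_{i}\big)=\sum_{i}r_{i}f(x_{i})$ by linearity of $f$ and the convex-sum characterization, while the right side equals $\beta_{2}\big(\sum_{i}r_{i}\delta_{f(x_{i})}\big)=\sum_{i}r_{i}f(x_{i})$ by linearity of $\MM(f)$ followed by that of $\beta_{2}$. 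Thus the two maps agree on $\SSS K_{1}$.

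The main obstacle, and the one place where continuity of the domains is essential, is passing from agreement on simple valuations to agreement on all of $\MM K_{1}$. Here I would invoke that, since $K_{1}$ is a continuous Kegelspitze (an $\MM$-algebra on $\DOM$), we have $\MM K_{1}=\VV K_{1}$ and $\SSS K_{1}$ is a basis of $\MM K_{1}$ (Remark~\ref{remark:sameval} and the discussion preceding Proposition~\ref{prop:EM-algebra}); consequently every $\nu\in\MM K_{1}$ is the directed supremum of the simple valuations way-below it. Since both $f\circ\beta_{1}$ and $\beta_{2}\circ\MM(f)$ are Scott-continuous, they preserve this directed supremum, so their agreement on $\SSS K_{1}$ forces agreement on all of $\MM K_{1}$. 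This establishes that $f$ is an algebra morphism, completing the equivalence. (I avoid a d-density argument here because, unlike in Theorem~\ref{theorem:M is commutative} where the codomain was the d-Hausdorff space $[0,1]$, the codomain $K_{2}$ need not be Hausdorff in the d-topology, so the basis argument is the appropriate tool.)
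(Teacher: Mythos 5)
Your proof is correct, and the ``only if'' direction is exactly the paper's argument. The ``if'' direction reaches the same reduction to simple valuations, but you close the gap differently: the paper notes that $f\circ\beta_{1}$ and $\beta_{2}\circ\MM(f)$ are Scott-continuous hence d-continuous, that $\SSS K_{1}$ is d-dense in $\MM K_{1}$ by construction, and that $K_{2}$ is Hausdorff in the d-topology, so agreement on $\SSS K_{1}$ propagates to all of $\MM K_{1}$; you instead use that $K_{1}$ is a domain, so $\MM K_{1}=\VV K_{1}$ and every valuation is a directed supremum of simple valuations, and then appeal to plain Scott-continuity. Both routes are valid here. Yours is arguably more elementary (no topology beyond Scott-continuity) but leans on continuity of $K_{1}$ and Jones's basis theorem, whereas the paper's d-density argument is the one that would survive if the hypothesis were weakened from $\DOM$ to $\DCPO$, where $\SSS K_1$ need not be a basis of $\MM K_1$. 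One factual correction to your closing parenthetical: \emph{every} dcpo is Hausdorff in the d-topology --- for $x\not\leq y$ the sets $\down y$ and its complement are disjoint d-open sets separating $y$ from $x$ --- and the paper uses exactly this for the codomain $K_{2}$. So the d-density argument is available here after all; your stated reason for avoiding it does not hold, even though the alternative you chose is perfectly sound.
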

\begin{proof}
See Appendix~\ref{section:commu}.
\end{proof}

\begin{theorem}
\label{thm:emcategory}
The Eilenberg-Moore category $\DOM^{\MM}$ of $\MM$ over $\DOM$ is isomorphic to the category of continuous Kegelspitzen and Scott-continuous linear maps. 
\end{theorem}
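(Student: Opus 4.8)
The plan is to exhibit the isomorphism by means of two functors, one in each direction, that act as the identity on underlying Scott-continuous functions and are mutually inverse on objects. Write $\mathbf{CKeg}$ for the category of continuous Kegelspitzen and Scott-continuous linear maps. I would define $F\colon \mathbf{CKeg}\to \DOM^{\MM}$ on objects by $F(K) = (K, \beta)$, where $\beta$ is the barycenter map of the continuous Kegelspitze $K$; this lands in $\DOM^{\MM}$ by Proposition~\ref{prop:EM-algebra}, since the underlying dcpo of a continuous Kegelspitze is by definition a domain. Dually, I would define $G\colon \DOM^{\MM}\to \mathbf{CKeg}$ by sending an $\MM$-algebra $(K,\beta)$ to the Kegelspitze whose operations are $a+_r b \defeq \beta(\delta_a +_r \delta_b)$, as furnished by Proposition~\ref{prop:Keglstructure}; again $K\in\DOM$ is a domain, so this Kegelspitze is continuous. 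On morphisms both functors send a map to itself: by Proposition~\ref{prop:EMmaps} a Scott-continuous function between the relevant objects is an $\MM$-algebra morphism exactly when it is linear, so the two hom-sets literally coincide and functoriality is immediate.

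It then remains to check that $F$ and $G$ are mutually inverse on objects, since they already agree on morphisms. For $G\circ F = \Id$: starting from a continuous Kegelspitze $K$ with operations $+_r$, the barycenter map restricts on simple valuations to $\beta(\sum_i r_i\delta_{x_i}) = \sum_i r_i x_i$, so in particular $\beta(\delta_a +_r\delta_b) = \beta(r\delta_a + (1-r)\delta_b) = a +_r b$ by Definition~\ref{def:convex-sums}. Thus the operations recovered by $G$ coincide with the original ones, and the distinguished element agrees because in any Kegelspitze it is forced to be the least element.

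For $F\circ G = \Id$, let $(K,\beta)$ be an $\MM$-algebra on $\DOM$ and let $\beta'$ be the barycenter map of the Kegelspitze structure that $G$ builds from $\beta$; I must show $\beta' = \beta$. Because $\beta$ is linear for this structure (Proposition~\ref{prop:Keglstructure}) and $\beta(\delta_x) = \beta(\eta^{\MM}_K(x)) = x$ by the algebra unit law, $\beta$ sends a simple valuation $\sum_i r_i\delta_{x_i}$ to the convex sum $\sum_i r_i x_i$, which is exactly the value of $\beta'$ on simple valuations. Since $K$ is a domain, $\MM K = \VV K$ with $\SSS K$ a basis (Remark~\ref{remark:sameval}), so each $\nu\in\MM K$ is the directed supremum of the simple valuations way-below it; Scott-continuity of both $\beta$ and $\beta'$ then forces $\beta(\nu) = \beta'(\nu)$ for every $\nu$. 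Hence $F\circ G = \Id$, and $F$, $G$ constitute the desired isomorphism of categories.

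The main obstacle is the equality $\beta' = \beta$ in the composite $F\circ G$: here one must combine the linearity of the algebra structure map (which pins $\beta$ down on simple valuations) with the basis property of $\SSS K$ inside $\MM K = \VV K$ in order to propagate agreement from the basis to all of $\MM K$ via Scott-continuity. The restriction to $\DOM$ is essential precisely because it guarantees $\MM K = \VV K$ and that $\SSS K$ is a basis, which is what makes the barycenter map well-defined and uniquely determined by its values on simple valuations.
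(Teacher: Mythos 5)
Your proof is correct and follows essentially the same route as the paper, whose proof is simply ``combine Propositions~\ref{prop:EM-algebra}, \ref{prop:Keglstructure} and~\ref{prop:EMmaps}.'' You merely make explicit the mutual-inverse verification on objects (agreement of $\beta$ and $\beta'$ on simple valuations plus the basis property of $\SSS K$ in $\MM K = \VV K$) that the paper leaves implicit.
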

\begin{proof}
Combine Propositions~\ref{prop:EM-algebra},~\ref{prop:Keglstructure} and~\ref{prop:EMmaps}.
\end{proof}

\begin{remark}\label{rem:kegelspitz}
Theorem~\ref{thm:emcategory} characterises  $\DOM^{\MM}$, which equals $\DOM^{\VV_{\mathbf K, \leq}}$ for any \texttt{K}-category $\mathbf K$ with  $\mathbf K\subseteq \mathbf D$ since $\VV = \MM$ on domains (see Remark~\ref{remark:sameval} and Proposition~\ref{pro:submonad}). This corrects an error in~\cite{jones90}: there it is proved that \emph{continuous abstract probabilistic domains} and linear maps form a full subcategory of $\DOM^{\VV}$.  But there is a claim that all objects in $\DOM^{\VV}$ are abstract probabilistic domains. A separating example is the extended non-negative reals $[0, \infty]$, which is a continuous Kegelspitze but not an abstract probabilistic domain. 
\end{remark}

\section{Categorical Model}
\label{sec:model}

In this section we describe the categorical properties of the Kleisli category of our monad $\MM$.
Everything we say in this section is also true for our other two monads as well.

We write $\KL$ for the Kleisli category of our monad $\MM : \dcpo \to \dcpo$.
In order
to distinguish between the categorical primitives of $\DCPO$ and $\KL$,
we indicate with $f: A \kto B$ the morphisms of $\KL$ and
we write $f \kcirc g \eqdef \mu \circ \MM(f) \circ g$ for the Kleisli composition of morphisms in $\KL$.
We write $\kid_A : A \kto A$ with $\kid_A = \eta_A  : A \to \MM A$ for the identity morphisms in $\KL.$
The monad $\MM$ induces an adjunction $\JJ \dashv \mathcal U : \KL \to \dcpo,$ where:
\begin{align*}
\JJ A \defeq A , \quad  \JJ f \defeq \eta \circ f  , \quad
\UU A \defeq \MM A , \quad \UU f \defeq \mu \circ \MM f .
\end{align*}

\subsubsection{Coproducts}
The category $\KL$ inherits (small) coproducts from $\dcpo$ in the standard way
\cite[pp. 264]{jacobs-coalgebra} and we write $A_1 \kplus A_2 \eqdef A_1 + A_2$ for the induced
(binary) coproduct. The induced coprojections are given by $\JJ(\emph{in}_1) \colon A_1 \kto A_1 \kplus A_2$
and $\JJ(\emph{in}_2) \colon A_2 \kto A_1 \kplus A_2.$
Then for $f\colon A\kto C$ and $g\colon B\kto D$, $f\kplus g = [\MM(\emph{in}_{C}) \circ f, \MM(\emph{in}_{D}) \circ g]$.

\subsubsection{Symmetric monoidal structure}
Because our monad $\MM$ is \emph{commutative}, it induces a
symmetric monoidal structure on $\KL$ in a canonical way \cite[pp. 462]{premonoidal}.
The induced tensor product is $A \ktimes B \eqdef A \times B$ and the
Kleisli projections are
$\JJ(\pi_A) : A \ktimes B \kto A$ and $\JJ(\pi_B) : A \ktimes B \kto B$ \footnote{These projections do \emph{not}
satisfy the universal property of a product.}.
For $f\colon A\kto C$ and $g\colon B\kto D$, their tensor product is given by
$f\ktimes g = \lambda (a, b). f(a)\otimes g(b)$. Note that the last expression uses the double strength of $\MM$, see Remark~\ref{remark:tensorofvaluations}. 

Standard categorical arguments now show that the Kleisli products distribute over the Kleisli coproducts. We write $d_{A,B,C} : A \ktimes (B \kplus C) \cong (A \ktimes B) \kplus (A \ktimes C)$ for this natural isomorphism.

\subsubsection{The left adjoint $\JJ$}
The functor $\JJ$, whose action is the identity on objects, preserves
the monoidal structure and the coproduct structure up to equality (and not merely up to isomorphism). That is, $\JJ(A \star B) = JA \kstar JB$
and $J(f \star g) = Jf \kstar Jg,$ where $\star \in \{\times, +\}.$

\subsubsection{Kleisli Exponential}
Our Kleisli adjunction also contains the structure of a \emph{Kleisli-exponential}
(which is also known as a \emph{$\MM$-exponential}). Following Moggi \cite{moggi-monads}, we will use
this to interpret higher-order function types. Next, we describe this structure in greater
detail.

The functor $J(-) \ktimes B : \DCPO \to \KL$ has a right adjoint, which
we write as $[B \kto -] : \KL \to \DCPO,$ for each dcpo $B$. In
particular $[B \kto - ] \defeq [B \to \UU(-)],$ which means that,
on objects, $[B \kto C] = [B \to \MM C].$ This data provides us with a
family of Scott-continuous bijections
\begin{equation}
\label{eq:currying-simple}
\lambda : \KL(\JJ A \ktimes B, C) \cong \DCPO(A, [B \kto C])
\end{equation}
natural in $A$ and $C$, called \emph{currying}. We also denote with $\epsilon : \JJ [B \kto -] \ktimes B \naturalto \Id,$ the counit of the adjunctions \eqref{eq:currying-simple}, often called \emph{evaluation}.
Because this family of adjunctions is parameterised by objects $B$ of $\KL$, it
follows using standard categorical results \cite[\S IV.7]{maclane} that the
assignment $[B \kto -] : \KL \to \DCPO$ may be  extended uniquely to a bifunctor
$ [- \kto -] : \KL^\op \times \KL \to \DCPO , $
such that the bijections $\lambda$ in \eqref{eq:currying-simple} are natural
in all components\footnote{This extension is canonically given by $[f \kto g] \defeq \lambda(g \kcirc \epsilon_{C_1} \kcirc (\kid \ktimes f) ).$}.

\begin{remark}
Some authors describe currying and evaluation for Kleisli exponentials without
referring to the functor $\JJ$. This cannot lead to confusion on the object level, 
but to be fully precise, one has to specify that the naturality
properties on the $A$-component hold only for \emph{total} maps. We make this
explicit by including $\JJ$ in our presentation.
\end{remark}

\subsubsection{Enrichment Structure}
The category $\DCPO_{\MM}$ is enriched over $\dcpobs$: for all dcpo's~$A, B$ and $C$, the Kleisli exponential $[A\kto B] = [A\to \MM B] = \KL(A,B)$ is a pointed dcpo in the pointwise order, and the Kleisli composition 
\[ 
\kcirc \colon [A\kto B]\times [B\kto C] \to [A\kto C] \colon (f, g) \mapsto g\kcirc f = g^{\ddagger} \circ f
\]
is obviously a strict Scott-continuous map. 
Moreover, the adjunction $\JJ \dashv \UU \colon \KL \to \dcpo$ is also $\DCPO$-enriched (see {\rm \cite[Definition~6.7.1]{borceux:handbook2}} for definition) and so are the bifunctors
$(- \ktimes -),  (- \kplus -) $ and $[- \kto - ] .$

We interpret probabilistic effects using the convex structure of our model which we now describe.
For each dcpo~$B$, $\MM B$ is a Kegelspitze in the stochastic order (Example~\ref{exa:MDiskegel}) : for $r\in[0, 1]$ and $\nu_{1}, \nu_{2}\in \MM B$, 
$\nu_{1}+_{r}\nu_{2}$ is defined as $r\nu_{1}+(1-r)\nu_{2}$;
the zero-valuation $\mathbf 0_{B}$ is the distinguished element (which is also least).
It follows that $[A\kto B] = \KL(A,B)$ is a Kegelspitze in the pointwise order: for $f, g\in [A\kto B]$,
$f+_{r} g$ is defined as $\lambda x. f(x)+_{r}g(x)$.
Next, we note that this convex structure is preserved by Kleisli composition~$\kcirc$, Kleisli coproduct $\kplus$ and Kleisli product $\ktimes$.

\begin{lemma}
\label{lemma:respectkegel}
Let $A, B, C, D$ be dcpo's,  $f, f_{1}, f_{2} \in [A\kto B]$, $g, g_{1}, g_{2} \in [B\kto C], h, h_{1}, h_{2}\in [C\kto D]$ and $r\in [0, 1]$. Then we have:
\begin{itemize}
\item $(g_{1}+_{r}g_{2}) \kcirc f = g_{1} \kcirc f +_{r} g_{2}\kcirc f;$
\item $g\kcirc (f_{1}+_{r}f_{2})  = g \kcirc f_{1} +_{r} g\kcirc f_{2};$ 
\item $(f_{1}+_{r}f_{2})\kstar h = f_{1}\kstar h +_{r} f_{2} \kstar h;$
\item $f\kstar (h_{1}+_{r}h_{2})= f\kstar h_{1} +_{r} f\kstar h_{2},$
\end{itemize}
where $\star\in \{\times, +\}$ in the last two cases. 
\end{lemma}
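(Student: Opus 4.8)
The plan is to reduce every equation, after evaluating at a point, to a linearity property of one of the valuation-level operations out of which $\kcirc$, $\ktimes$, $\kplus$ and $+_{r}$ are built. Recall that on $\MM B$ we have $\nu_1 +_{r} \nu_2 = r\nu_1 + (1-r)\nu_2$, that $+_{r}$ on the hom-dcpo $[A \kto B]$ is computed pointwise, and that all four claimed identities are equalities of Scott-continuous maps; since each side is assembled from the Kleisli operations it is automatically Scott-continuous, so it suffices to check equality pointwise. The three linearity facts I will use are: (i) the Kleisli extension $g \mapsto g^{\dagger}$, given by $g^{\dagger}(\nu)(U) = \int_x g(x)(U)\,d\nu$, is linear both in the measure $\nu$ and in $g$ (standard properties of the integral, \cite{jones90}); (ii) the tensor $\nu \otimes \xi$ of Remark~\ref{remark:tensorofvaluations} is linear in $\nu$ and in $\xi$ separately, as is visible from its defining double integral; and (iii) $\MM(g)$ is a linear map for every Scott-continuous $g$ (Example~\ref{exa:MDiskegel}).

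For the two composition identities I write $g \kcirc f = g^{\ddagger} \circ f$. For the second identity, evaluating at $x \in A$ and using linearity of $g^{\dagger}$ in its measure argument gives $g^{\ddagger}\bigl(f_1(x) +_{r} f_2(x)\bigr) = g^{\ddagger}(f_1(x)) +_{r} g^{\ddagger}(f_2(x))$, which is exactly $(g \kcirc f_1 +_{r} g \kcirc f_2)(x)$. For the first identity, linearity of the integral in the integrand yields $(g_1 +_{r} g_2)^{\dagger} = g_1^{\dagger} +_{r} g_2^{\dagger}$ pointwise on $\MM B$ — since $(g_1 +_{r} g_2)(x)(U) = r\,g_1(x)(U) + (1-r)\,g_2(x)(U)$ — and precomposing with $f$ finishes the case.

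For the product cases ($\star = \times$) I evaluate at $(a,c)$, where $f \ktimes h$ is $f(a) \otimes h(c)$ by definition. Bilinearity of $\otimes$ then gives $\bigl(f_1(a) +_{r} f_2(a)\bigr) \otimes h(c) = (f_1(a) \otimes h(c)) +_{r} (f_2(a) \otimes h(c))$ for the third identity, and symmetrically, using linearity of $\otimes$ in its second argument, for the fourth.

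The coproduct cases ($\star = +$) are the only place the argument uses more than linearity, and I regard this as the main (albeit mild) subtlety. Here $f \kplus h = [\MM(\text{in}_B) \circ f,\, \MM(\text{in}_D) \circ h]$. On the $A$-summand, $(f_1 +_{r} f_2) \kplus h$ returns $\MM(\text{in}_B)\bigl(f_1(a) +_{r} f_2(a)\bigr)$, which splits as $\MM(\text{in}_B)(f_1(a)) +_{r} \MM(\text{in}_B)(f_2(a))$ by linearity of $\MM(\text{in}_B)$; this agrees with the right-hand side there. On the $C$-summand, both terms of the right-hand side coincide and yield $\MM(\text{in}_D)(h(c)) +_{r} \MM(\text{in}_D)(h(c))$, so one must invoke the barycentric idempotency axiom $a +_{r} a = a$ to collapse this to $\MM(\text{in}_D)(h(c))$ and match the left-hand side. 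The fourth identity for $\kplus$ is symmetric, with idempotency now applied on the $A$-summand.
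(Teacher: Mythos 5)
Your proposal is correct and follows essentially the same route as the paper's proof in Appendix~\ref{app:timesplusconvex}: evaluate pointwise, expand the definitions of $\kcirc$, $\ktimes$ and $\kplus$, and invoke the linearity of the integral in the measure and in the integrand (Proposition~\ref{prop:sumproperty}), bilinearity of $\otimes$, and linearity of $\MM(\mathrm{in})$, with the idempotency axiom $a+_{r}a=a$ handling the untouched summand in the coproduct case exactly as the paper does.
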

\begin{proof}
See Appendix~\ref{app:timesplusconvex}. 
\end{proof}

\subsubsection{Important Subcategories}
\label{subsub:subcategories}
In order to describe our denotational semantics, we have to identify two important subcategories of $\KL$.

\begin{definition}
\label{def:T}
The subcategory of \emph{deterministic total maps}, denoted $\TD$, is the full-on-objects subcategory of $\KL$ each of whose morphisms $f \colon X \kto Y$  admits a factorisation $f = \JJ(f') = \left( X \xrightarrow{f'} Y \xrightarrow{\eta_Y} \MM Y \right).$
\end{definition}
Therefore, by definition, each map $f: X \kto Y$ in $\TD$ satisfies
$f(x) = \delta_y$ for some $y \in Y$. These maps are \emph{deterministic} in
the sense that they carry no interesting convex structure and they are
\emph{total} in the sense that they map all inputs $x \in X$ to non-zero
valuations. The importance of this subcategory is that all values of
our language admit an interpretation within $\TD$. Moreover, the categorical
structure of $\TD$ is very easy to describe, as our next proposition shows.

\begin{proposition}
There exists a $\DCPO$-enriched isomorphism of categories $\DCPO \cong \TD$.
\end{proposition}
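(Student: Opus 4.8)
The plan is to exhibit the corestriction of $\JJ$ as the required isomorphism. By Definition~\ref{def:T} the morphisms of $\TD$ are exactly those of the form $\JJ(f')$, so $\JJ \colon \DCPO \to \KL$ corestricts to a functor $\JJ \colon \DCPO \to \TD$ (functoriality being inherited from $\JJ$ being a left adjoint, with image landing in $\TD$ by construction). This corestriction acts as the identity on objects, hence is a bijection on objects, and it is full onto $\TD$ essentially by the definition of $\TD$. It therefore only remains to check that $\JJ$ is faithful and that the induced maps on hom-sets are isomorphisms in $\DCPO$.

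First I would establish faithfulness. A morphism $\JJ(f') \colon X \kto Y$ is the composite $\eta_Y \circ f' \colon X \to \MM Y$, sending $x \mapsto \delta_{f'(x)}$. Since distinct points of a dcpo are separated by some Scott-open set, the unit $\eta_Y \colon y \mapsto \delta_y$ is injective, hence a monomorphism in $\DCPO$; consequently $\eta_Y \circ f' = \eta_Y \circ g'$ forces $f' = g'$. Thus $\JJ$ is injective on each hom-set, and combined with fullness it gives a bijection $\DCPO(X,Y) \cong \TD(X,Y)$ for all dcpo's $X, Y$.

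It then remains to upgrade these bijections to isomorphisms in $\DCPO$. The hom-object $\TD(X,Y)$ carries the order inherited from $\KL(X,Y) = [X \kto Y] = [X \to \MM Y]$, namely the pointwise stochastic order, while $\DCPO(X,Y) = [X \to Y]$ carries the pointwise order. The key observation is that $\eta_Y$ is an order-embedding: for $a, b \in Y$ one has $\delta_a \leq \delta_b$ in the stochastic order if and only if $a \in U \Rightarrow b \in U$ for every $U \in \sigma Y$, which is precisely $a \leq b$. Hence $f' \mapsto \eta_Y \circ f'$ is monotone and order-reflecting, so the bijection $\DCPO(X,Y) \cong \TD(X,Y)$ is an order-isomorphism, and an order-isomorphism between dcpo's automatically preserves and reflects directed suprema. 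In particular $\TD(X,Y)$ is a sub-dcpo of $\KL(X,Y)$, since the supremum of a directed family $\{\eta_Y \circ f_i'\}$ equals $\eta_Y \circ (\sup_i f_i')$ by Scott-continuity of $\eta_Y$, so the $\DCPO$-enrichment on $\TD$ is well-defined. Assembling these facts yields the desired $\DCPO$-enriched isomorphism $\DCPO \cong \TD$.

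I expect the only genuinely substantive point to be the order-embedding property of $\eta_Y$ -- equivalently, that the stochastic order restricted to Dirac valuations recovers the original order on $Y$ -- while faithfulness, fullness, and the bijection on objects are direct unwindings of Definition~\ref{def:T} and the monad laws.
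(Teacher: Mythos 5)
Your proposal is correct and follows essentially the same route as the paper: the paper's proof likewise observes that each $\eta_X$ is injective because $\Sigma X$ is $T_0$, concludes that $\JJ$ is faithful, and takes its corestriction to $\TD$ as the isomorphism. You additionally spell out the enrichment details (that Dirac valuations reflect the order, so the hom-set bijections are order-isomorphisms preserving directed suprema), which the paper leaves implicit but which is a correct and worthwhile elaboration.
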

\begin{proof}
Each map $\eta_X \colon X \to \MM X$ is injective, because $\Sigma X$ is a $T_0$ space and so $\JJ \colon \DCPO \to \KL$ is faithful. Its corestriction to $\TD$ is the required isomorphism.
\end{proof}

In our model, the canonical copy map at an object $A$ is given by the map $\JJ
\langle \id_A, \id_A \rangle \colon A  \kto A \ktimes A$ and the canonical
discarding map at $A$ is the map $\JJ(1_A) \colon A \kto 1,$ where $1_A \colon
A \to 1$ is the terminal map of $\dcpo$.  Because maps in $\TD$ are in the
image of $\JJ$, it follows that they are compatible with the copy and
discard maps and thus also with weakening and contraction \cite{benton-small,benton-wadler}.

The next subcategory we introduce is important, because we will use it for the interpretation of open types.
It has sufficient structure to solve recursive domain equations.

\begin{definition}
\label{def:D}
The subcategory of \emph{deterministic partial maps}, denoted $\PD$, is the full-on-objects subcategory of $\KL$ each of whose morphisms $f \colon X \kto Y$ admits a factorisation $f = \left( X \xrightarrow{f'} Y_\perp \xrightarrow{\phi_Y} \MM Y \right),$
where $Y_\perp$ is the dcpo obtained from $Y$ by freely adding a least element $\perp$, and $\phi_Y$ is the map:
\[
\phi_Y \colon Y_\perp \to \MM Y :: y \mapsto
  \begin{cases}
    {\mathbf 0}_{Y}        & \text{, if } y = \perp \\
    \delta_y & \text{, if } y \neq \perp .
  \end{cases}
\]
\end{definition}
These maps are  \emph{partial} because some inputs are mapped to $\mathbf 0$, but also deterministic, because the convex structure is trivial in both cases.
This is further justified by the next proposition.

\begin{proposition}
\label{prop:d-iso}
There exists a $\dcpobs$-enriched isomorphism of categories $\dcpo_{\LL} \cong \PD$,
where $\dcpo_{\LL}$ is the Kleisli category of the lift monad $\LL : \DCPO \to \DCPO$.
\end{proposition}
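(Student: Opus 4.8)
The plan is to exhibit the isomorphism explicitly as the identity-on-objects functor $F \colon \dcpo_\LL \to \PD$ induced by the maps $\phi_Y$. Recall that the lift monad $\LL$ has $\LL Y = Y_\perp$, unit $\eta^{\LL}_Y \colon Y \to Y_\perp$ the (non-strict) inclusion, and multiplication $\mu^{\LL}_Y \colon (Y_\perp)_\perp \to Y_\perp$ identifying the two bottom elements; a morphism of $\dcpo_\LL$ from $X$ to $Y$ is thus a Scott-continuous map $f' \colon X \to Y_\perp$. I would define $F$ to be the identity on objects and to send such an $f'$ to $F(f') \defeq \phi_Y \circ f' \colon X \to \MM Y$. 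First I would record the basic properties of $\phi_Y$: it is Scott-continuous, injective (since $\Sigma Y$ is $T_0$ the Diracs $\delta_y$ are pairwise distinct, and $\delta_y \neq \mathbf{0}_Y$ because $\delta_y(Y) = 1$), and an order-embedding (using that $\delta_y \leq \delta_{y'}$ iff $y \leq y'$ in $Y$, that $\mathbf{0}_Y$ lies below every valuation, and that no $\delta_y$ lies below $\mathbf{0}_Y$). In particular $\phi_Y$ preserves directed suprema and sends $\perp$ to the least element $\mathbf{0}_Y$, so it is an embedding of pointed dcpos.

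Next I would verify that $F$ is a functor, which is the computational heart of the proof. The hom-set image is correct by construction: by Definition~\ref{def:D} the morphisms of $\PD(X,Y)$ are exactly the maps of the form $\phi_Y \circ f'$, so $F$ lands in $\PD$ and is surjective on each hom-set, while injectivity of $F_{X,Y}$ is immediate from injectivity of $\phi_Y$. For identities, $F(\eta^{\LL}_X) = \phi_X \circ \eta^{\LL}_X = \eta^{\MM}_X = \kid_X$, since $\phi_X$ sends each $x \neq \perp$ to $\delta_x$. For composition, writing $(g')^{\dagger_{\LL}} \colon Y_\perp \to Z_\perp$ for the strict extension of $g' \colon Y \to Z_\perp$, I would prove the single identity
\[
(\phi_Z \circ g')^{\ddagger} \circ \phi_Y = \phi_Z \circ (g')^{\dagger_{\LL}} \colon Y_\perp \to \MM Z,
\]
from which $F(g' \circ_{\LL} f') = F(g') \kcirc F(f')$ follows by post-composing with $f'$. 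This identity is checked pointwise on $Y_\perp$: at $\perp$ both sides give $\mathbf{0}_Z$, because the Kleisli extension is given by integration and the integral against $\mathbf{0}_Y$ vanishes; at $y \in Y$ both sides give $\phi_Z(g'(y))$, because $h^{\ddagger}(\delta_y) = h(y)$ by \eqref{eq:nestedintwithsim}. Conceptually this says precisely that $\phi \colon \LL \Rightarrow \MM$ is a morphism of monads (naturality and compatibility with units and multiplications being the same computations at Diracs and at $\mathbf{0}$), so that $F$ is the canonical identity-on-objects functor between Kleisli categories induced by a monad morphism.

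Finally I would assemble the enrichment statement. On hom-sets, $F_{X,Y} \colon [X \to Y_\perp] \to \PD(X,Y)$ is post-composition with the pointed-dcpo embedding $\phi_Y$; since $\phi_Y$ is injective and order-reflecting, $F_{X,Y}$ is an order-isomorphism onto its image, and since $\phi_Y$ preserves directed suprema and the bottom, $\PD(X,Y)$ is a sub-pointed-dcpo of $\KL(X,Y) = [X \to \MM Y]$ and $F_{X,Y}$ is an isomorphism in $\dcpobs$. Composition in $\PD$ is the restriction of the strict Scott-continuous Kleisli composition of $\KL$, and composition in $\dcpo_\LL$ is likewise strict and Scott-continuous, so both categories are $\dcpobs$-enriched and $F$, being bijective on objects with hom-maps that are $\dcpobs$-isomorphisms compatible with composition and identities, is a $\dcpobs$-enriched isomorphism. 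The main obstacle is exactly the composition computation: one must track the two distinct bottoms that arise when extending and multiplying (the outer $\perp$ of $(Y_\perp)_\perp$ and the Dirac $\delta_\perp$) and confirm that Kleisli extension collapses both the zero valuation and $\delta_{\mathbf{0}}$ to $\mathbf{0}$; once this is pinned down, the equivalence with ``$\phi$ is a monad morphism'' makes functoriality, and hence the whole isomorphism, straightforward.
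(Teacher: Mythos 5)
Your proposal is correct and follows essentially the same route as the paper: exhibit $\phi\colon \LL \naturalto \MM$ as a map of monads, take the induced identity-on-objects functor between Kleisli categories, and corestrict to $\PD$ using injectivity of each $\phi_Y$. The only difference is that you verify the monad-morphism identities and the $\dcpobs$-enrichment of the hom-maps by direct computation, whereas the paper delegates the former to Proposition~\ref{prop:map-of-monads} in the appendix and leaves the enrichment details implicit.
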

\begin{proof}
The assignment $\phi$ from Definition \ref{def:D} is a \emph{strong map of monads}
$\phi : \LL \naturalto \MM$ which then
induces a functor $\FF \colon \dcpo_\LL \to \KL$ (Appendix \ref{app:domain-equations}).
Each $\phi_Y$ is injective, so the corestriction of $\FF$ to $\PD$ is the required isomorphism.
\end{proof}

\subsubsection{Solving Recursive Domain Equations}
\label{subsub:domain-equations}
In order to interpret recursive types, we solve the required recursive domain equations by constructing \emph{parameterised initial algebras} \cite{fiore-thesis,fiore-plotkin} within (the subcategory of embeddings of) $\PD$ using the limit-colimit coincidence theorem \cite{smyth-plotkin:domain-equations}.

\begin{definition}[see {\cite[\S 6.1]{fiore-thesis}}]
\label{def:initial-algebra}
  Given a category $\CC$ and a functor $\TTT \colon \CC^{n+1} \to \CC,$ a \emph{parameterised initial algebra}
  for $\TTT$ is a pair $(\TTT^\sharp, \iota^\TTT),$ such that:
  \begin{itemize}
    \item $\TTT^\sharp \colon \CC^n \to \CC$ is a functor;
    \item $\iota^\TTT \colon \TTT \circ \langle \Id, \TTT^\sharp \rangle \naturalto \TTT^\sharp : \CC^n \to \CC$ is a natural transformation;
    \item For every $\vec C \in \Ob(\CC^n)$, the pair $(\TTT^\sharp \vec C, \iota^\TTT_{\vec C})$ is an initial $\TTT(\vec C, -)$-algebra.
  \end{itemize}
\end{definition}

In the special case when $n=1$, we recover the usual notion of initial algebra. We consider \emph{parameterised} initial algebras because we need to interpret mutual type recursion. Similarly, one can also define the dual notion of \emph{parameterised final coalgebra}.

\begin{proposition}[see {\cite[\S 4.3]{lnl-fpc-lmcs}}]
\label{prop:par-initial-algebra}
Let $\CC$ be a category with an initial object and all $\omega$-colimits and let $\TTT \colon \CC^{n+1} \to \CC$ be an $\omega$-cocontinuous functor. Then $\TTT$ has a  parameterised initial algebra $(\TTT^\sharp, \iota^\TTT)$ and the functor $\TTT^{\sharp} \colon \CC^n \to \CC $ is also $\omega$-cocontinuous.
\end{proposition}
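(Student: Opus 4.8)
The plan is to carry out Adámek's initial-chain construction pointwise in the parameter, then to lift the resulting object assignment to a functor and finally to verify $\omega$-cocontinuity by an interchange-of-colimits argument.

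First I would fix a parameter $\vec C \in \Ob(\CC^n)$ and note that $\TTT(\vec C, -) \colon \CC \to \CC$ is $\omega$-cocontinuous, being the composite of the $\omega$-cocontinuous $\TTT$ with the tupling $\langle \mathrm{const}_{\vec C}, \Id \rangle$. Since $\CC$ has an initial object $0$ and all $\omega$-colimits, I construct the colimit of the initial $\omega$-chain
\[
0 \to \TTT(\vec C, 0) \to \TTT(\vec C, \TTT(\vec C, 0)) \to \cdots
\]
and take it to be $\TTT^\sharp \vec C$. The $\omega$-cocontinuity of $\TTT(\vec C,-)$ shows that applying $\TTT(\vec C,-)$ to this colimit merely reindexes the same chain, so the canonical structure map $\iota^\TTT_{\vec C} \colon \TTT(\vec C, \TTT^\sharp \vec C) \to \TTT^\sharp \vec C$ is an isomorphism, and the standard Adámek argument identifies $(\TTT^\sharp \vec C, \iota^\TTT_{\vec C})$ as the initial $\TTT(\vec C, -)$-algebra.

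Next I would promote $\TTT^\sharp$ to a functor and obtain naturality of $\iota^\TTT$ simultaneously. Given $\vec f \colon \vec C \to \vec C'$ in $\CC^n$, the composite $\iota^\TTT_{\vec C'} \circ \TTT(\vec f, \id)$ equips $\TTT^\sharp \vec C'$ with the structure of a $\TTT(\vec C, -)$-algebra; initiality of $(\TTT^\sharp \vec C, \iota^\TTT_{\vec C})$ then produces a unique $\TTT(\vec C, -)$-algebra morphism, which I define to be $\TTT^\sharp \vec f$. Uniqueness of the mediating morphism immediately yields preservation of identities and of composites, so $\TTT^\sharp$ is a functor, and the commuting squares expressing that each $\TTT^\sharp \vec f$ is an algebra morphism are exactly the naturality squares for $\iota^\TTT$.

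Finally, the main obstacle is $\omega$-cocontinuity of $\TTT^\sharp$, which I would establish by making the initial-chain construction uniform in the parameter. I define functors $X_m \colon \CC^n \to \CC$ by $X_0 = \mathrm{const}_0$ and $X_{m+1} = \TTT \circ \langle \Id, X_m \rangle$, so that $\TTT^\sharp = \colim_m X_m$ computed pointwise. Each $X_m$ is $\omega$-cocontinuous by an easy induction, using that $\omega$-cocontinuous functors are closed under the relevant tupling and composition and that $\mathrm{const}_0$ is $\omega$-cocontinuous because $\omega$ is connected. Then for any $\omega$-diagram $D \colon \omega \to \CC^n$ I would compute
\[
\TTT^\sharp(\colim D) = \colim_m X_m(\colim D) = \colim_m \colim_k X_m(D_k) = \colim_k \colim_m X_m(D_k) = \colim_k \TTT^\sharp(D_k),
\]
where the decisive middle equality interchanges the two sequential colimits. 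This interchange is the heart of the argument: it is an instance of the general fact that colimits commute with colimits, but one must check that the comparison maps assemble coherently so that the resulting isomorphism is the canonical comparison morphism, which then certifies that $\TTT^\sharp$ preserves the colimit.
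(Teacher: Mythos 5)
Your proof is correct and follows essentially the same route as the paper's source for this result: the paper does not prove the proposition itself but cites \cite[\S 4.3]{lnl-fpc-lmcs}, where the same Ad\'amek-style initial-chain construction, functoriality via initiality, and colimit-interchange argument (using finality of the diagonal $\omega \to \omega\times\omega$) are carried out. The only step you should make explicit is that the functorial action of $\TTT^\sharp$ defined by initiality coincides, by uniqueness of mediating algebra morphisms, with the action inherited from the pointwise colimit $\mathrm{colim}_m X_m$, since that identification is what licenses your final chain of equalities.
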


The next proposition shows that the subcategory $\PD$ has sufficient structure to solve recursive domain equations.

\begin{proposition}
\label{prop:compactness}
The subcategory $\PD$ is \emph{(parameterised) $\DCPO$-algebraically compact}.
More specifically, every $\dcpo$-enriched functor $\TTT: \PD^{n+1} \to \PD$ has a parameterised compact algebra, i.e., a parameterised initial algebra whose inverse is a
parameterised final coalgebra for $\TTT$.
\end{proposition}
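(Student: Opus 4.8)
The plan is to pass to the subcategory of embeddings and invoke the limit--colimit coincidence. By Proposition~\ref{prop:d-iso} the $\dcpobs$-enriched isomorphism $\PD \cong \dcpo_\LL$ lets me transport the entire problem to $\dcpo_\LL$, or equivalently work directly in $\PD$ with its inherited $\dcpobs$-enrichment. The first point to record is that $\PD$ has a zero object: the empty dcpo $\varnothing$ is initial because $\PD(\varnothing, Y)$ is a singleton, and it is terminal because every morphism $Y \kto \varnothing$ factors through $\varnothing_\perp = \{\perp\}$ (Definition~\ref{def:D}) and hence must be the zero map. Consequently the unique morphism $\varnothing \kto Y$ is an embedding whose projection is the zero map, so $\varnothing$ is an initial object of the embedding subcategory $\PDe$.

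Because Proposition~\ref{prop:d-iso} exhibits $\PD$ as $\dcpobs$-enriched — its hom-objects are pointed dcpos and $\kcirc$ is strict Scott-continuous — embedding--projection pairs behave as in any such $\dcpo$-enriched category: an embedding determines its projection uniquely. I would then verify that $\omega$-chains of embeddings have colimits in $\PDe$ computed as \emph{bilimits}, that is, simultaneously as colimits of the embeddings and limits of the associated projections in $\PD$. This is precisely the limit--colimit coincidence of Smyth and Plotkin~\cite{smyth-plotkin:domain-equations}, which transfers to $\PD$ through the isomorphism $\PD \cong \dcpo_\LL$ from the classical case of $\dcpobs$. Next I would observe that any $\dcpo$-enriched functor $\TTT \colon \PD^{n+1} \to \PD$ preserves the order $\sqsubseteq$ on homs, whence $\TTT(p_\bullet)\kcirc\TTT(e_\bullet) = \TTT(\id) = \id$ and $\TTT(e_\bullet)\kcirc\TTT(p_\bullet) = \TTT(e_\bullet \kcirc p_\bullet) \sqsubseteq \id$; thus $\TTT$ sends embedding--projection pairs to embedding--projection pairs and restricts to a functor $\TTT \colon \PDe^{n+1} \to \PDe$, and local continuity makes this restriction preserve bilimits, so it is $\omega$-cocontinuous on $\PDe$.

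With these ingredients I would apply Proposition~\ref{prop:par-initial-algebra} with $\CC = \PDe$: since $\PDe$ has an initial object and all $\omega$-colimits and $\TTT$ is $\omega$-cocontinuous there, $\TTT$ has a parameterised initial algebra $(\TTT^\sharp, \iota^\TTT)$ with $\TTT^\sharp$ again $\omega$-cocontinuous. The final step is to upgrade this to a \emph{compact} algebra. For each parameter $\vec C$, the object $\TTT^\sharp \vec C$ is realised as the colimit in $\PDe$ of the chain $\varnothing \to \TTT(\vec C, \varnothing) \to \TTT(\vec C, \TTT(\vec C, \varnothing)) \to \cdots$ of embeddings, and by Lambek's lemma $\iota^\TTT_{\vec C}$ is an isomorphism. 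Since this colimit is also the limit of the dual chain of projections, the standard coincidence argument shows that $(\iota^\TTT_{\vec C})^{-1}$ equips $\TTT^\sharp \vec C$ with a \emph{final} $\TTT(\vec C, -)$-coalgebra structure; functoriality of $\TTT^\sharp$ makes this assignment natural in $\vec C$, yielding a parameterised final coalgebra. Hence $(\TTT^\sharp, \iota^\TTT)$ is a parameterised compact algebra.

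I expect the main obstacle to be the middle step: rigorously establishing the limit--colimit coincidence for $\PDe$ (existence of bilimits and their preservation by $\dcpo$-enriched functors) and, relatedly, checking that initiality computed in $\PDe$ genuinely furnishes an initial algebra in $\PD$. These are classical facts for $\dcpobs$, so the cleanest route is to prove them once for $\dcpo_\LL$ and transport everything along Proposition~\ref{prop:d-iso}. Finally, the mixed-variance functors actually required to interpret $\mathtt{PFPC}$ types are accommodated by the symmetry $\PDe \cong \PDe^{\op}$ of the embedding category (an embedding in $\PD$ is a projection in $\PD^{\op}$), which turns them into the ordinary covariant $\dcpo$-enriched functors to which the argument above applies.
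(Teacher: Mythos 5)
Your proposal is correct and rests on the same key reduction as the paper: Proposition~\ref{prop:d-iso} gives $\PD \cong \dcpo_{\LL} \cong \dcpobs$, and the latter is classically $\DCPO$-algebraically compact, which the paper simply cites (via \cite[Corollary 7.2.4]{fiore-thesis}) while you unfold it into the underlying Smyth--Plotkin argument (zero object, bilimits in $\PDe$, restriction of locally continuous functors to embeddings, initial algebra = final coalgebra). The one step you rightly flag as the main obstacle --- that initiality computed in $\PDe$ yields genuine initiality in $\PD$ --- is exactly the classical content being imported, so your concluding remark that one should prove it once for $\dcpo_{\LL}$ and transport along Proposition~\ref{prop:d-iso} is precisely the paper's proof.
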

\begin{proof}
By Proposition \ref{prop:d-iso}, we have $\PD \cong \dcpo_{\LL} \cong \dcpobs$ and the latter two categories are well-known to be $\dcpo$-algebraically compact (which may be easily established using \cite[Corollary 7.2.4]{fiore-thesis}).
\end{proof}

Therefore, every $\dcpo$-enriched \emph{covariant} functor on $\KL$ which
restricts to $\PD$ can be equipped with a parameterised compact algebra. In
order to solve equations involving \emph{mixed-variance} functors (induced by function types),
we use the limit-colimit coincidence theorem
\cite{smyth-plotkin:domain-equations}. In particular, an important observation
made by Smyth and Plotkin in \cite{smyth-plotkin:domain-equations} allows us to
interpret all type expressions (including function spaces) as covariant
functors on \emph{subcategories of embeddings}.
These ideas are developed in detail in \cite{icfp19,lnl-fpc-lmcs} and here we
also follow this approach.

\begin{definition}
Given a $\DCPO$-enriched category $\CC$, an \emph{embedding} of $\CC$ is a morphism $e \colon X \to Y$, such that there exists (a necessarily unique) morphism $e^p \colon Y \to X$, called a \emph{projection}, with the properties: $e^p \circ e = \id_X$ and $e \circ e^p \leq \id_Y.$
We denote with $\CC_e$ the full-on-objects subcategory of $\CC$ whose morphisms are the embeddings of $\CC$.
\end{definition}

\begin{proposition}
\label{prop:omega-functors}
The category $\PD_e$ has an initial object and all $\omega$-colimits, and the following assignments:
\begin{itemize}
\item $\ktimes_e \colon \PD_e \times \PD_e \to \PD_e$ by\\
  $X \ktimes_e Y \defeq X \ktimes Y\ \text{and}\ e_1 \ktimes_e e_2 \defeq e_1 \ktimes e_2.$\\
  \item $ \kplus_e \colon \PD_e \times \PD_e \to \PD_e$ by\\
$X \kplus_e Y \defeq X \kplus Y\ \text{and}\ e_1 \kplus_e e_2 \defeq e_1 \kplus e_2$\\
\item
$[\kto]_e^\JJ \colon \PD_e \times \PD_e \to \PD_e$ \\
$[X \kto Y]_e^\JJ \defeq \JJ [ X \kto Y]\ \text{and}\
[e_1 \kto e_2]_e^\JJ \defeq \JJ [e_1^p \kto e_2]$
\end{itemize}
define \emph{covariant} $\omega$-cocontinuous bifunctors on $\PD_e$.
\end{proposition}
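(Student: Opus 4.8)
The plan is to reduce the proposition to two standard facts of $\DCPO$-enriched domain theory: that the embedding subcategory of a sufficiently complete $\DCPO$-enriched category has an initial object and all $\omega$-colimits, realised as bilimits, and that a locally continuous (possibly mixed-variance) bifunctor restricts to an $\omega$-cocontinuous \emph{covariant} bifunctor on embeddings. By Proposition~\ref{prop:d-iso} there is a $\dcpobs$-enriched isomorphism $\PD \cong \dcpo_{\LL} \cong \dcpobs$, and an enriched isomorphism identifies embeddings with embeddings; so I would transport questions about $\PD_e$ to the classically understood category $(\dcpobs)_e$. In particular, the one-point pointed dcpo is the zero object of $\dcpobs$ and is initial in $(\dcpobs)_e$, while $(\dcpobs)_e$ is well known to have all $\omega$-colimits computed via the limit--colimit coincidence theorem~\cite{smyth-plotkin:domain-equations}. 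Transporting back, $\PD_e$ has an initial object (namely $\varnothing$) and all $\omega$-colimits: the colimit $X_\infty$ of a chain of embeddings $X_0 \to X_1 \to \cdots$ comes equipped with embeddings $\beta_n \colon X_n \to X_\infty$ whose projections satisfy the coincidence identity $\bigsqcup_n \beta_n \kcirc \beta_n^p = \id_{X_\infty}$.

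I would next check that each assignment is a well-defined covariant functor $\PD_e \times \PD_e \to \PD_e$. The bifunctors $\ktimes$ and $\kplus$ restrict to $\PD$ and are covariant in both arguments, so for embeddings $e_1, e_2$ the interchange law gives $(e_1^p \ktimes e_2^p) \kcirc (e_1 \ktimes e_2) = (e_1^p \kcirc e_1) \ktimes (e_2^p \kcirc e_2) = \id$ and, using local monotonicity of $\ktimes$, $(e_1 \ktimes e_2) \kcirc (e_1^p \ktimes e_2^p) = (e_1 \kcirc e_1^p) \ktimes (e_2 \kcirc e_2^p) \leq \id$; hence $e_1 \ktimes e_2$ is again an embedding, and likewise for $\kplus$. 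The exponential is the only genuinely mixed-variance case. Writing $H \defeq [-\kto -] \colon \KL^\op \times \KL \to \DCPO$, which is locally continuous, the pair $\big(H(e_1^p, e_2),\, H(e_1, e_2^p)\big)$ is an embedding--projection pair in $\DCPO$: bifunctoriality gives $H(e_1, e_2^p) \circ H(e_1^p, e_2) = H(e_1^p \circ e_1,\, e_2^p \circ e_2) = \id$, and local monotonicity of $H$ gives $H(e_1^p, e_2) \circ H(e_1, e_2^p) = H(e_1 \circ e_1^p,\, e_2 \circ e_2^p) \leq \id$. Applying the $\DCPO$-enriched, composition-preserving functor $\JJ$, whose image lies in $\TD \subseteq \PD$, shows that $[e_1 \kto e_2]_e^\JJ = \JJ H(e_1^p, e_2)$ is an embedding in $\PD$ with projection $\JJ H(e_1, e_2^p)$. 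Functoriality of all three assignments is inherited from that of the underlying bifunctors.

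Finally, $\omega$-cocontinuity would follow from local continuity through the coincidence identity. Let $T_e$ be any of the three induced functors; by the computations above, $T_e(e_1,e_2) \kcirc T_e(e_1,e_2)^p = T(e_1 \kcirc e_1^p,\, e_2 \kcirc e_2^p)$ for the underlying bifunctor $T$ (post-composed with $\JJ$ in the exponential case, which preserves composition, order and suprema). Given $\omega$-chains with bilimit cocones $(\beta_n)$ and $(\gamma_n)$, the composites $\beta_n \kcirc \beta_n^p$ and $\gamma_n \kcirc \gamma_n^p$ form increasing chains with suprema $\id$, so by joint local continuity of $T$,
\[
\bigsqcup_n T_e(\beta_n, \gamma_n) \kcirc T_e(\beta_n, \gamma_n)^p
= \bigsqcup_n T\big(\beta_n \kcirc \beta_n^p,\, \gamma_n \kcirc \gamma_n^p\big)
= T(\id, \id) = \id,
\]
which is exactly the coincidence identity certifying that $T_e$ sends the bilimit to the colimit of the image chain. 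Hence each of $\ktimes_e$, $\kplus_e$ and $[\kto]_e^\JJ$ is $\omega$-cocontinuous. I expect the main obstacle to be purely bookkeeping in the exponential case -- keeping the contravariant first argument and the insertion of $\JJ$ consistent with the directions of embeddings and projections -- but no idea beyond local continuity is needed once this is set up.
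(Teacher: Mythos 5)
Your proposal is correct and follows essentially the same route as the paper's proof: both reduce the statement to the Smyth--Plotkin limit--colimit coincidence machinery, which the paper invokes as a packaged proposition (a locally continuous $\TTT\colon\AAA^{\op}\times\BB\to\CC$ induces a covariant $\omega$-cocontinuous $\TTT^E(e_1,e_2)\defeq\TTT(e_1^p,e_2)$ on embedding subcategories) and which you unfold inline via the embedding--projection identities and the coincidence criterion $\bigsqcup_n\beta_n\kcirc\beta_n^p=\kid$. The one step you assert without argument---that $\ktimes$ and $\kplus$ restrict to $\PD$---is exactly what the paper's corollary $\FF f\ktimes\FF g=\FF(f\otimes g)$ (derived from the strong monad morphism $\LL\naturalto\MM$) supplies, but it is a routine verification and does not affect correctness.
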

\begin{proof}
This follows using results from \cite{smyth-plotkin:domain-equations} together with some restriction arguments which we present in Appendix \ref{app:domain-equations}.
\end{proof}

Therefore, by Proposition
\ref{prop:par-initial-algebra} and Proposition \ref{prop:omega-functors} we can
solve recursive domain equations induced by all well-formed type expressions
(with no restrictions on the admissible logical polarities of the types) within $\PD_e$.
However, since our judgements support weakening and contraction, we have
an extra proof obligation: showing each isomorphism that is a solution
to a recursive domain equation can be copied and discarded. This is indeed true
(for any isomorphism in $\PD$) because of the next proposition.

\begin{proposition}
\label{prop:iso-reflect}
Every isomorphism of $\PD$ (and $\PD_e$) is also an isomorphism of $\TD$.
\end{proposition}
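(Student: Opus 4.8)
The plan is to reduce the statement to an elementary fact about pointed dcpos via Proposition~\ref{prop:d-iso}. First I would recall the shape of the morphisms involved: by Definition~\ref{def:D}, a morphism $f \colon X \kto Y$ of $\PD$ is induced by a Scott-continuous map $f' \colon X \to Y_\perp$ through $f = \phi_Y \circ f'$, and such an $f$ lies in $\TD$ exactly when $f'$ is \emph{total}, i.e.\ $f'(x) \neq \perp$ for every $x$ (so that $f(x) = \delta_{f'(x)}$). Thus the whole content of the proposition is that the partial map underlying a $\PD$-isomorphism is total, and that the resulting total map is a $\DCPO$-isomorphism.

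The heart of the argument is to transport an isomorphism $f \colon X \kto Y$ of $\PD$ along the isomorphism of categories $\PD \cong \dcpo_\LL \cong \dcpobs$ supplied by Proposition~\ref{prop:d-iso}. Under this correspondence $f$ becomes an isomorphism $\theta \colon X_\perp \to Y_\perp$ in $\dcpobs$; being a Scott-continuous bijection with a Scott-continuous inverse, $\theta$ is an order-isomorphism of the pointed dcpos $X_\perp$ and $Y_\perp$. Order-isomorphisms preserve and reflect least elements, so $\theta(\perp_X) = \perp_Y$, and hence $\theta$ restricts to an order-isomorphism between $X_\perp \setminus \{\perp_X\} = X$ and $Y_\perp \setminus \{\perp_Y\} = Y$. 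Translated back, this says precisely that the partial map $f'$ underlying $f$ omits the value $\perp$, so that $f \in \TD$, and that it is moreover a $\DCPO$-isomorphism $X \cong Y$. Since $\JJ$ is faithful and $\DCPO \cong \TD$, it follows that $f$ is an isomorphism of $\TD$. The case of $\PD_e$ is then immediate: an isomorphism of $\PD_e$ has its inverse in $\PD_e \subseteq \PD$, so it is in particular an isomorphism of $\PD$, to which the preceding argument applies.

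The step I expect to be the main obstacle is establishing \emph{totality} of the underlying partial map, since a priori a morphism of $\PD$ may send points to $\mathbf 0_Y$, and this must be excluded for isomorphisms. The order-isomorphism argument above handles it cleanly, but one could equally argue directly inside $\PD$: writing $g$ for the inverse of $f$ and evaluating $g \kcirc f = \kid_X$ at a point $x$ gives $(g \kcirc f)(x) = \delta_x \neq \mathbf 0_X$; were $f'(x) = \perp$, then $f(x) = \mathbf 0_Y$ and $(g \kcirc f)(x) = g^{\ddagger}(\mathbf 0_Y) = \mathbf 0_X$, a contradiction. Either route yields that $f'$ is total, after which faithfulness of $\JJ$ converts the $\PD$-identities $g \kcirc f = \kid_X$ and $f \kcirc g = \kid_Y$ into the corresponding $\DCPO$-identities, exhibiting $f$ as a $\TD$-isomorphism.
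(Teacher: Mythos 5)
Your proof is correct and follows essentially the same route as the paper: transport the isomorphism along $\PD \cong \DCPO_\LL$ from Proposition~\ref{prop:d-iso} and use the fact that every isomorphism of $\DCPO_\LL$ arises from a total isomorphism in $\DCPO$. The paper cites that last fact as well-known and finishes with a diagram chase, whereas you actually prove it (via the order-isomorphism of $X_\perp \cong Y_\perp$, or the direct $g^{\ddagger}(\mathbf 0_Y)=\mathbf 0_X$ contradiction), which is a welcome addition but not a different argument.
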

\begin{proof}
In Appendix \ref{app:domain-equations}.
\end{proof}

We have already explained that morphisms of $\TD$ are compatible with weakening
and contraction, so the above proposition suffices for our purposes.

\section{Denotational Semantics}
\label{sec:semantics}

We now give the denotational semantics of our language by using ideas from \cite{icfp19,lnl-fpc-lmcs}.

\subsection{Interpretation of Types}
\label{sub:type-interpretation}

We begin with the interpretation of (open) types.
Every type $\Theta \vdash A$ is interpreted as a functor $\lrb{\Theta \vdash A} \colon\ \PDe^{|\Theta|} \to \PDe $ and its interpretation is defined by
induction on the derivation of $\Theta \vdash A$ in Figure \ref{fig:type-interpretation}.
The validity of this definition is justified by the next proposition.

\begin{figure}
{\centering
\begin{align*}
\lrb{\Theta \vdash A} &\colon\ \PDe^{|\Theta|} \to \PDe \\
\lrb{\Theta \vdash \Theta_i} &\defeq \Pi_i \\
\lrb{\Theta \vdash A + B } &\defeq\ \kplus_e \circ\ \langle \lrb{\Theta \vdash A}, \lrb{\Theta \vdash B} \rangle \\
\lrb{\Theta \vdash A \times B } &\defeq\ \ktimes_e \circ\ \langle \lrb{\Theta \vdash A}, \lrb{\Theta \vdash B} \rangle \\
\lrb{\Theta \vdash A \to B } &\defeq [\kto]_e^\JJ \circ \langle \lrb{\Theta \vdash A}, \lrb{\Theta \vdash B} \rangle \\
\lrb{\Theta \vdash \mu X.A} &\defeq \lrb{\Theta, X \vdash A}^\sharp 
\end{align*}
\caption{Interpretation of types.}
\label{fig:type-interpretation}
\begin{align*}
   \lrb{A \times B} &= \lrb A \times \lrb B   \\
   \lrb{A+B} &= \lrb A + \lrb B \\
   \lrb{A \to B} &= \left[ \lrb A \to \MM \lrb B \right] \\
   \lrb{\mu X. A} &\cong  \lrb{A[\mu X. A/ X]}
\end{align*}
\caption{Derived equations for closed types.}
\label{fig:derived-type-semantics}}
\end{figure}
\begin{figure}
{\[
\begin{array}{l}
\lrb{\Gamma \vdash M : A} : \lrb \Gamma \kto \lrb A \text{ in $\KL$} \\
\lrb{\Gamma, x:A \vdash x: A} \eqdef \JJ \pi_2 \\
\lrb{\Gamma \vdash (M, N) : A \times B} \eqdef (\lrb M \ktimes \lrb N) \kcirc \JJ \langle \id, \id \rangle  \\
\lrb{\Gamma \vdash \pi_i M : A_i} \eqdef \JJ \pi_i \kcirc \lrb M   , \text{ for } i \in \{ 1, 2\} \\
\lrb{\Gamma \vdash \mathtt{in}_{i} M : A_1+A_2} \eqdef \JJ \emph{in}_i \kcirc \lrb{M}, \text{ for } i \in \{ 1, 2\} \\
\lrb{\Gamma \vdash ( \mathtt{case}\ M\ \mathtt{of}\ \mathtt{in}_1 x \Rightarrow N_1\ |\ \mathtt{in}_2 y \Rightarrow N_2 ) : B} \eqdef \\
\qquad \qquad  [\lrb{N_1}, \lrb{N_2}] \kcirc d \kcirc (\id \ktimes \lrb M) \kcirc \JJ \langle \id, \id \rangle \\
\lrb{\Gamma \vdash \lambda x^A . M : A \to B} \eqdef \JJ \lambda(\lrb M) \\
\lrb{\Gamma \vdash MN : B} \eqdef \epsilon \kcirc (\lrb M \ktimes \lrb N) \kcirc \JJ \langle \id, \id \rangle  \\
\lrb{\Gamma \vdash \mathtt{fold}\ M: \mu X. A} \eqdef \sfold \kcirc \lrb M  \\
\lrb{\Gamma \vdash \mathtt{unfold}\ M : A[\mu X. A / X]} \eqdef \sunfold \kcirc \lrb M \\
\lrb{\Gamma \vdash M\ \mathtt{or}_p\ N : A} \eqdef \lrb M +_{p} \lrb N
\end{array}
\]
\caption{Interpretation of term judgements.}
\label{fig:term-semantics}}
\end{figure}

\begin{proposition}
The assignments $\lrb{\Theta \vdash A} \colon\ \PDe^{|\Theta|} \to \PDe$ are $\omega$-cocontinuous functors.
\end{proposition}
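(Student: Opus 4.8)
The plan is to argue by structural induction on the derivation of $\Theta \vdash A$, following the clauses of Figure~\ref{fig:type-interpretation}. The whole argument rests on two results already established: Proposition~\ref{prop:omega-functors}, which provides the $\omega$-cocontinuous bifunctors $\ktimes_e$, $\kplus_e$ and $[\kto]_e^\JJ$ on $\PDe$ and records that $\PDe$ has an initial object and all $\omega$-colimits; and Proposition~\ref{prop:par-initial-algebra}, which promotes an $\omega$-cocontinuous functor $\CC^{n+1} \to \CC$ to an $\omega$-cocontinuous parameterised-initial-algebra functor $\CC^n \to \CC$. Before starting the induction I would record two routine closure facts in $\PDe$: the pairing $\langle G_1, G_2 \rangle$ of $\omega$-cocontinuous functors is $\omega$-cocontinuous, and the composite $F \circ G$ of $\omega$-cocontinuous functors is $\omega$-cocontinuous. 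Both follow because colimits in a finite product category are computed componentwise, and because the inner functor of a composite carries a colimiting $\omega$-cocone to a colimiting $\omega$-cocone that the outer functor then preserves.

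For the base case $\lrb{\Theta \vdash \Theta_i} = \Pi_i$, I would observe that a product projection $\Pi_i \colon \PDe^{|\Theta|} \to \PDe$ preserves all colimits, hence in particular is $\omega$-cocontinuous.

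For the three binary cases $A + B$, $A \times B$ and $A \to B$, each interpretation is of the shape $F \circ \langle \lrb{\Theta \vdash A}, \lrb{\Theta \vdash B} \rangle$ with $F$ equal to $\kplus_e$, $\ktimes_e$ or $[\kto]_e^\JJ$, respectively. The induction hypotheses make $\lrb{\Theta \vdash A}$ and $\lrb{\Theta \vdash B}$ $\omega$-cocontinuous; the pairing closure fact then makes $\langle \lrb{\Theta \vdash A}, \lrb{\Theta \vdash B} \rangle$ $\omega$-cocontinuous; Proposition~\ref{prop:omega-functors} makes $F$ $\omega$-cocontinuous; and the composition closure fact finishes each case.

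The only case that calls on nontrivial machinery is the recursive type $\lrb{\Theta \vdash \mu X. A} = \lrb{\Theta, X \vdash A}^\sharp$, and this is where I expect the real (though not severe) work to sit. By the induction hypothesis $\lrb{\Theta, X \vdash A} \colon \PDe^{|\Theta|+1} \to \PDe$ is $\omega$-cocontinuous, and since $\PDe$ has an initial object and all $\omega$-colimits I can invoke Proposition~\ref{prop:par-initial-algebra} with $\CC = \PDe$, $n = |\Theta|$ and $\TTT = \lrb{\Theta, X \vdash A}$. This yields a parameterised initial algebra whose functor $\TTT^\sharp$, which is by definition $\lrb{\Theta \vdash \mu X. A}$, is $\omega$-cocontinuous. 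The main thing to get right here is the arity bookkeeping, since the recursive clause drops the parameter count from $|\Theta|+1$ to $|\Theta|$; after that the two cited propositions do all the heavy lifting.
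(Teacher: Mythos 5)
Your proof is correct and follows exactly the route the paper takes: induction on the type derivation, with Proposition~\ref{prop:omega-functors} handling the base bifunctors and Proposition~\ref{prop:par-initial-algebra} handling the $\mu X.A$ case. The paper's own proof is just the one-line citation of those two propositions, so your added detail on closure under pairing and composition is a harmless (and welcome) elaboration of the same argument.
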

\begin{proof}
By induction using Propositions \ref{prop:par-initial-algebra} and \ref{prop:omega-functors}.
\end{proof}

We are primarily interested in closed types and for them we simply write $\lrb A \defeq \lrb{\cdot \vdash A}(*)$, where $*$ is the unique object of the terminal category $\mathbf 1 = \PDe^{0}.$
For closed types, it follows that $\lrb{A} \in \Ob(\PDe) = \Ob(\DCPO)$. 

We proceed by defining the folding/unfolding isomorphisms for recursive types and proving a necessary lemma.

\begin{lemma}[Substitution]
\label{lem:type-substitution}
If $\Theta, X \vdash A$ and $\Theta \vdash B$, then:
\begin{align*}
\lrb{\Theta \vdash A[B/X]} = \lrb{\Theta, X \vdash A} \circ \langle \Id, \lrb{\Theta \vdash B} \rangle.
\end{align*}
\end{lemma}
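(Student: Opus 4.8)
The plan is to induct on the structure of $A$ (equivalently on the derivation of $\Theta, X \vdash A$), checking the claimed equality of functors $\PDe^{|\Theta|} \to \PDe$ clause by clause against Figure~\ref{fig:type-interpretation}. Write $n = |\Theta|$ and abbreviate $G \defeq \langle \Id, \lrb{\Theta \vdash B} \rangle \colon \PDe^{n} \to \PDe^{n+1}$; this functor is $\omega$-cocontinuous, since $\Id$ is and $\lrb{\Theta \vdash B}$ is (the preceding proposition), and a pairing into a product preserves the componentwise-computed $\omega$-colimits. The two base cases unfold the projection clause: if $A = X$ then $A[B/X] = B$ and $\lrb{\Theta, X \vdash X} = \Pi_{n+1}$, so $\Pi_{n+1} \circ G = \lrb{\Theta \vdash B}$ as required; if $A = \Theta_i$ is another variable of $\Theta$, then $A[B/X] = \Theta_i$ and $\Pi_i \circ G = \Pi_i = \lrb{\Theta \vdash \Theta_i}$.

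For a binary constructor $A = A_1 \star A_2$ with $\star \in \{+, \times, \to\}$ and associated bifunctor $\star_e$, substitution distributes, $(A_1 \star A_2)[B/X] = A_1[B/X] \star A_2[B/X]$, and the definitional clause together with the identity $\langle F_1 \circ G, F_2 \circ G \rangle = \langle F_1, F_2 \rangle \circ G$ gives
\begin{align*}
\lrb{\Theta \vdash (A_1 \star A_2)[B/X]}
&= \star_e \circ \langle \lrb{\Theta \vdash A_1[B/X]}, \lrb{\Theta \vdash A_2[B/X]} \rangle \\
&= \star_e \circ \langle \lrb{\Theta, X \vdash A_1} \circ G, \lrb{\Theta, X \vdash A_2} \circ G \rangle \\
&= \left( \star_e \circ \langle \lrb{\Theta, X \vdash A_1}, \lrb{\Theta, X \vdash A_2} \rangle \right) \circ G \\
&= \lrb{\Theta, X \vdash A_1 \star A_2} \circ G,
\end{align*}
the second line being the induction hypothesis on $A_1$ and $A_2$.

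The recursive case $A = \mu Y.\,A'$ is the main obstacle. After $\alpha$-renaming so that $Y \neq X$ and $Y$ is not free in $B$, the premise is $\Theta, X, Y \vdash A'$ and $(\mu Y.\,A')[B/X] = \mu Y.\,(A'[B/X])$, so the two functors to be compared are $\lrb{\Theta, X, Y \vdash A'}^{\sharp} \circ G$ and $\lrb{\Theta, Y \vdash A'[B/X]}^{\sharp}$. The crucial ingredient is that forming a parameterised initial algebra commutes strictly with $\omega$-cocontinuous reindexing of the parameters: for $\omega$-cocontinuous $\TTT \colon \CC^{k+1} \to \CC$ and $\omega$-cocontinuous $H \colon \CC^{j} \to \CC^{k}$,
\[
\TTT^{\sharp} \circ H = \left( \TTT \circ (H \times \Id_{\CC}) \right)^{\sharp}.
\]
Indeed, for each object $\vec D$ the endofunctor $\left(\TTT \circ (H \times \Id)\right)(\vec D, -)$ is literally $\TTT(H\vec D, -)$; since $H$ is $\omega$-cocontinuous the functor $\TTT \circ (H \times \Id)$ is again $\omega$-cocontinuous, so Proposition~\ref{prop:par-initial-algebra} applies and both sides arise as the canonical colimit of the same initial $\omega$-sequence. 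They therefore agree on objects, and a diagram chase with the induced mediating morphisms gives agreement on morphisms, so the two functors are equal on the nose rather than merely up to isomorphism.

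Applying this identity with $\TTT = \lrb{\Theta, X, Y \vdash A'}$ and $H = G$ rewrites $\lrb{\Theta, X, Y \vdash A'}^{\sharp} \circ G$ as $\bigl( \lrb{\Theta, X, Y \vdash A'} \circ (G \times \Id_{\PDe}) \bigr)^{\sharp}$; the functor under the $\sharp$ substitutes $\lrb{\Theta \vdash B}$ into the $X$-slot while keeping the recursion variable $Y$ as a parameter, and the induction hypothesis identifies it with $\lrb{\Theta, Y \vdash A'[B/X]}$, whence taking $\sharp$ of both sides closes the case. The one remaining bookkeeping point is that the formation rule adds $Y$ \emph{last}, so in $A'$ the substituted variable $X$ sits one slot before the parameter $Y$; this is harmless, but the clean induction is obtained by proving the formally more general statement allowing substitution at an arbitrary context position (the stated lemma being the special case where $X$ is last), or equivalently by invoking stability of the interpretation under exchange of context variables, i.e.\ the symmetry of the product category $\PDe^{n+2}$.
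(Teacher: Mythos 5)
Your proof is correct and is essentially the argument the paper leaves implicit: the paper states Lemma~\ref{lem:type-substitution} without proof (deferring, as with its augmented counterpart Lemma~\ref{lem:substitution-augmented}, to the standard induction in the cited \texttt{FPC}-semantics literature), and that standard induction is exactly what you carry out -- structural induction over Figure~\ref{fig:type-interpretation}, with the recursive-type case handled by the identity $\TTT^{\sharp} \circ H = (\TTT \circ (H \times \Id))^{\sharp}$ for $\omega$-cocontinuous reindexing, justified by the canonical initial-$\omega$-chain construction of Proposition~\ref{prop:par-initial-algebra}. Your closing remark about strengthening the induction hypothesis to substitution at an arbitrary context position (or invoking exchange) is the right way to discharge the genuine bookkeeping issue that $Y$ is appended after $X$; no gap remains.
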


\begin{definition}
\label{def:fold/unfold}
For closed types $\mu X.A$, we define:
\begin{align*}
\sfold_{\mu X. A} &: \lrb{A[\mu X. A/ X]}  = \lrb{X \vdash A} \lrb{\mu X. A}   \cong \lrb{\mu X. A},
\end{align*}
where the equality is Lemma~\ref{lem:type-substitution} and the isomorphism is the initial algebra.
We write $\sunfold_{\mu X. A}$ for the inverse isomorphism.
Note that both of them are isomorphisms in $\TD$.
\end{definition}

Now the equations for closed types in Figure \ref{fig:derived-type-semantics} follow immediately.

\subsection{Interpretation of Terms}
\label{sub:term-interpretation}
A context $\Gamma = x_1 \colon A_1, \ldots, x_n \colon A_n$ is interpreted as the dcpo $\lrb \Gamma \eqdef \lrb{A_1} \times \cdots \times \lrb{A_n}.$
A term $\Gamma \vdash M : A$ is, as usual, interpreted as a morphism $\lrb{\Gamma \vdash M : A} \colon \lrb \Gamma \kto \lrb A$ in $\KL$ and we will
abbreviate this by writing $\lrb M$ when its type and context are clear.
The interpretation of term judgements are defined by induction in Figure \ref{fig:term-semantics}. This interpretation is defined in the standard categorical way using the structure of $\KL$
and using the structure of the Kleisli exponential following Moggi \cite{moggi-monads}.
To interpret probabilistic choice, we use the convex structure of $\KL.$
All the notation used in Figure \ref{fig:term-semantics} is introduced in Section \ref{sec:model} and Section \ref{sec:semantics}.

\subsection{Soundness and Computational Adequacy}
\label{sub:soundness-adequacy}
In this subsection we prove the main semantic results for our model -- soundness and (strong) adequacy. In order to do so, we first have to prove some useful lemmas.

As usual, the interpretation of values enjoys additional structural properties.

\begin{lemma}
\label{lem:values}
For any value $\Gamma \vdash V : A$, its interpretation $ \lrb V $ is a morphism of $\TD$. Equivalently, it is in the image of $\JJ$.
\end{lemma}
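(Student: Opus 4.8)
The plan is to argue by structural induction on the value $V$ (equivalently, on its formation derivation), following the grammar for values. Before treating the cases, I would first record two closure properties of $\TD$ that do the real work. First, $\TD$ is closed under Kleisli composition: if $f = \JJ(f')$ and $g = \JJ(g')$, then $g \kcirc f = \JJ(g') \kcirc \JJ(f') = \JJ(g' \circ f')$ since $\JJ$ is a functor, so $g \kcirc f$ again factors through $\JJ$. Second, $\TD$ is closed under the Kleisli tensor $\ktimes$: by the preservation property of $\JJ$ recorded in the subsection on the left adjoint $\JJ$, namely $\JJ(f' \times g') = \JJ f' \ktimes \JJ g'$, any tensor of two maps in the image of $\JJ$ is again in the image of $\JJ$. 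I would also note that the elementary maps $\JJ \pi_2$, $\JJ(\emph{in}_i)$ and $\JJ\langle \id, \id\rangle$ lie in $\TD$ by definition, and that $\sfold$ is an isomorphism of $\TD$ by Definition~\ref{def:fold/unfold}.

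With these facts in hand the induction is routine. For a variable, $\lrb{x} = \JJ \pi_2 \in \TD$ directly. For a pair $(V, W)$ of values, the inductive hypothesis gives $\lrb V, \lrb W \in \TD$, hence $\lrb V \ktimes \lrb W \in \TD$ by tensor-closure, and then $\lrb{(V,W)} = (\lrb V \ktimes \lrb W) \kcirc \JJ\langle \id, \id\rangle \in \TD$ by composition-closure. For an injection $\mathtt{in}_i V$, the hypothesis gives $\lrb V \in \TD$, so $\lrb{\mathtt{in}_i V} = \JJ(\emph{in}_i) \kcirc \lrb V \in \TD$. For $\mathtt{fold}\ V$ we likewise combine $\lrb V \in \TD$ with $\sfold \in \TD$ to get $\lrb{\mathtt{fold}\ V} = \sfold \kcirc \lrb V \in \TD$.

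The only case that is slightly different is the abstraction $\lambda x^A. M$, and it is also the case one must be a little careful with: here $M$ need not be a value, so there is no inductive hypothesis to invoke. Instead I would observe that currying produces a genuine (total) $\DCPO$-morphism $\lambda(\lrb M) \colon \lrb \Gamma \to [A \kto B]$, so that $\lrb{\lambda x^A. M} = \JJ \lambda(\lrb M)$ lies in the image of $\JJ$ by construction, hence in $\TD$. Thus there is no genuine obstacle; the content is entirely in the two closure properties above, which reduce every value constructor to a composite of $\JJ$-images, and the equivalence with ``being in the image of $\JJ$'' is then immediate from the definition of $\TD$.
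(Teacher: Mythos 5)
Your proof is correct and takes essentially the same approach as the paper, which simply states that the result follows by ``straightforward induction on the derivation of $V$''; your argument supplies exactly the details of that induction. The two closure properties you isolate (closure of $\TD$ under $\kcirc$ and $\ktimes$ via functoriality and strict monoidality of $\JJ$) and your careful handling of the $\lambda$-abstraction case, where the body need not be a value but the interpretation is $\JJ\lambda(\lrb M)$ by construction, are precisely the points that make the induction go through.
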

\begin{proof}
Straightforward induction on the derivation of $V$.
\end{proof}

This means the interpretation of each closed value may be seen as a Dirac valuation. Next, we prove a substitution lemma.

\begin{lemma}[Substitution]
\label{lem:term-substitution}
Let $\Gamma \vdash V : A$ be a value and $\Gamma, x : A \vdash M : B$ a term. Then:
\[ \lrb{ M[V/x] } = \lrb M \kcirc (\id_{\lrb \Gamma} \ktimes \lrb V) \kcirc \JJ \langle \id_{\lrb \Gamma}, \id_{\lrb \Gamma} \rangle . \]
\end{lemma}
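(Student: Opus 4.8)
The plan is to argue by induction on the derivation of the typing judgement $\Gamma, x : A \vdash M : B$, following the clauses of Figure~\ref{fig:term-semantics}. The single fact that drives the whole argument is that, by Lemma~\ref{lem:values}, the value interpretation $\lrb V$ lies in $\TD$, i.e.\ $\lrb V = \JJ(v)$ for a unique Scott-continuous map $v \colon \lrb{\Gamma} \to \lrb A$. As explained in \S\ref{subsub:subcategories}, morphisms of $\TD$ are exactly those compatible with the copy and discard maps, so $\lrb V$ may be freely copied and discarded. Writing $\mathrm{sub}_V \defeq (\id_{\lrb{\Gamma}} \ktimes \lrb V) \kcirc \JJ \langle \id_{\lrb{\Gamma}}, \id_{\lrb{\Gamma}} \rangle$ for the substitution morphism, the claim becomes $\lrb{M[V/x]} = \lrb M \kcirc \mathrm{sub}_V$, and I note that $\mathrm{sub}_V$ is itself a $\TD$-morphism, being built entirely from images of $\JJ$.

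For the base cases, suppose $M$ is a variable. If $M = x$, then $M[V/x] = V$ and $\lrb{x} = \JJ \pi_2$, so $\JJ \pi_2 \kcirc \mathrm{sub}_V$ collapses to $\lrb V$: the copy $\JJ\langle \id, \id\rangle$ is duplicated, $\lrb V$ is applied to the second factor, and projecting onto it discards the first (total) factor. If $M = y$ is another variable of $\Gamma$, then $M[V/x] = y$, $\lrb y$ is the corresponding Kleisli projection, and precomposing with $\mathrm{sub}_V$ discards the freshly introduced $\lrb A$-component; since $\lrb V \in \TD$ is total, this discarding leaves $\lrb y$ unchanged. Both cases thus reduce to the comonoid equations for the $\TD$-maps involved.

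For the inductive step, every syntactic constructor commutes with $[V/x]$ (for instance $(M,N)[V/x] = (M[V/x], N[V/x])$, and similarly for $\pi_i$, $\mathtt{in}_i$, $\mathtt{case}$, application, $\mathtt{fold}$, $\mathtt{unfold}$ and $\mathtt{or}_p$). In each case I apply the induction hypothesis to the immediate subterms and reassemble, using that the structural morphisms of $\KL$ --- the diagonal $\JJ\langle\id,\id\rangle$, the projections $\JJ\pi_i$, the distributivity isomorphism $d$ and the evaluation $\epsilon$ --- are natural with respect to the $\TD$-morphism $\mathrm{sub}_V$. The engine of every case is the pair of coherence identities
\[ \JJ\langle\id,\id\rangle \kcirc g = (g \ktimes g) \kcirc \JJ\langle\id,\id\rangle \qquad\text{and}\qquad \JJ(1) \kcirc g = \JJ(1), \]
valid for $g \in \TD$, which let me duplicate (resp.\ erase) copies of $\mathrm{sub}_V$ so that it can be distributed to each subterm before invoking the induction hypothesis; the clause for $\mathtt{or}_p$ additionally uses that $\kcirc$ is linear in the relevant argument (Lemma~\ref{lemma:respectkegel}).

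The delicate case, and the one I expect to be the main obstacle, is $\lambda$-abstraction $M = \lambda y^C. M_0$, whose interpretation passes through currying $\lambda(-)$. Here the substituted variable $x$ sits inside the binder while a fresh variable $y$ enlarges the context, so $\mathrm{sub}_V$ must be commuted past the currying isomorphism \eqref{eq:currying-simple}. This combines naturality of $\lambda$ in the context component with the fact that $\mathrm{sub}_V$ acts only on the $\lrb{\Gamma}$-factor, leaving the bound $\lrb C$-factor untouched; formally it reduces to $\lambda(\lrb{M_0}) \kcirc \mathrm{sub}_V = \lambda\bigl(\lrb{M_0} \kcirc (\mathrm{sub}_V \ktimes \id_{\lrb C})\bigr)$, which follows from naturality of currying together with the $\TD$-coherence identities above. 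Once this case is settled, the remaining constructors are routine diagram chases.
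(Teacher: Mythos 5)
Your proposal is correct and follows exactly the route the paper takes: its proof of Lemma~\ref{lem:term-substitution} is stated simply as ``by induction on $M$ using Lemma~\ref{lem:values}'', which is precisely your strategy of inducting on the typing derivation and exploiting that $\lrb V$ lies in $\TD$ (hence is copyable/discardable and interacts well with the diagonal, projections, and the naturality of currying in the total component). Your elaboration of the variable and $\lambda$-abstraction cases fills in details the paper leaves implicit, but introduces no new idea beyond the paper's one-line argument.
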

\begin{proof}
By induction on $M$ using Lemma \ref{lem:values}.
\end{proof}

Soundness and (strong) adequacy are formulated in terms of convex sums of the interpretations of terms. 
For a collection of terms $M_i$ with $\Gamma \vdash M_i : A$, for
each $i \in I$, each interpretation $\lrb{M_i}$ is a map in the
Kegelspitze $\KL(\lrb \Gamma, \lrb A),$ so, we may form convex sums
of these maps. 

Soundness is the statement that our interpretation is invariant under single-step reduction (in a probabilistic sense).

\begin{theorem}[Soundness]
\label{thm:soundness}
For any term $\Gamma \vdash M : A$,
\begin{align*}
\lrb M = \sum_{M \probto{p} M'} p \lrb{M'} , 
\end{align*}
assuming $M \probto{p} M'$ for some rule from Figure \ref{fig:operational} and where the convex sum ranges over all such rules.
\end{theorem}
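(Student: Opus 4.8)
The plan is to proceed by induction on the call-by-value evaluation context $E$ occurring in the unique decomposition $M = E[R]$ of a reducible term into an evaluation context and a redex $R$. The structural fact I would isolate first is that the reductions $M \probto{p} M'$ ranged over in the statement are precisely $E[R] \probto{p} E[R']$ for the reductions $R \probto{p} R'$ of the redex, by the congruence rule of Figure \ref{fig:operational} together with the determinism of the redex position in call-by-value. In particular the convex sum has one summand when $R$ is deterministic and two summands when $R$ is an $\mathtt{or}_p$-redex. It therefore suffices to (i) verify the identity for a bare redex (the case $E = [\cdot]$), and (ii) show that each one-level evaluation frame transports convex sums of interpretations to convex sums.

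For the base cases I would treat each redex of Figure \ref{fig:operational} separately. The deterministic redexes $\pi_i(V,W)$, the two $\mathtt{case}$ redexes, and $\mathtt{unfold}\ \mathtt{fold}\ V$ are handled by expanding the interpretation of Figure \ref{fig:term-semantics} and computing in $\DCPO \cong \TD$: since $V,W$ are values, Lemma~\ref{lem:values} places their interpretations in the image of $\JJ$, and because $\JJ$ preserves the monoidal and coproduct structure strictly, the computation collapses to the familiar identities in $\DCPO$ (e.g.\ $\pi_1 \circ \langle f, g\rangle = f$); the $\mathtt{unfold}\ \mathtt{fold}$ case is immediate from $\sunfold \kcirc \sfold = \id$ (Definition~\ref{def:fold/unfold}). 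The $\beta$-redex $(\lambda x.M)V$ is the most involved: here I would combine the counit (evaluation) law of the Kleisli exponential with the term Substitution Lemma~\ref{lem:term-substitution} to rewrite $\epsilon \kcirc (\JJ\lambda(\lrb M) \ktimes \lrb V) \kcirc \JJ\langle\id,\id\rangle$ as $\lrb{M[V/x]}$. Finally, for the single probabilistic redex $M\ \mathtt{or}_p\ N$, the identity $\lrb{M\ \mathtt{or}_p\ N} = \lrb M +_p \lrb N = p\lrb M + (1-p)\lrb N$ holds by the definition of the interpretation together with the Kegelspitze structure, and this is the one case producing two summands.

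For the inductive step I would run through the grammar of evaluation frames $F \in \{(-,M),\ (V,-),\ \pi_i(-),\ (-)N,\ V(-),\ \mathtt{in}_i(-),\ \mathtt{case}\ (-)\ \mathtt{of}\ \ldots,\ \mathtt{fold}\ (-),\ \mathtt{unfold}\ (-)\}$ and show that $\lrb{N} = \sum_i p_i \lrb{N_i}$ implies $\lrb{F[N]} = \sum_i p_i \lrb{F[N_i]}$; composing frames then yields the claim for arbitrary $E$. In each case the interpretation of $F[N]$ arises from $\lrb N$ by pre- and post-composing with fixed morphisms and by tensoring (via $\ktimes$) with a fixed interpretation in one argument, so the claim is an instance of Lemma~\ref{lemma:respectkegel}, which asserts that $\kcirc$ and $\ktimes$ are linear with respect to the convex structure in each argument (the distributor $d$ and the cotupling $[\lrb{N_1},\lrb{N_2}]$ are fixed maps absorbed into a post-composition). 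I expect the main obstacle to be bookkeeping rather than depth: one must check that the convex-sum laws of Lemma~\ref{lemma:respectkegel} are applied in the correct argument for each frame (in particular, linearity of $\ktimes$ in its \emph{second} argument for the frames $(V,-)$, $V(-)$ and $\mathtt{case}\ (-)\ \mathtt{of}\ \ldots$, where a value or the identity occupies the other slot), and that the $\beta$ base case threads the substitution lemma correctly through the counit law.
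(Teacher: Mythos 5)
Your proposal is correct and matches the paper's (one-line) proof in substance: the paper's ``straightforward induction using Lemma~\ref{lem:term-substitution}'' is precisely an induction over the reduction derivation, which by the congruence rule of Figure~\ref{fig:operational} is the same as your induction on the evaluation context $E$, with the substitution lemma discharging the $\beta$-redex and Lemma~\ref{lemma:respectkegel} (plus strictness of $\JJ$ on values) discharging the frame cases you enumerate. Your write-up simply makes explicit the linearity bookkeeping that the paper leaves implicit.
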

\begin{proof}
Straightforward induction using Lemma \ref{lem:term-substitution}.
\end{proof}

In the above theorem, the convex sum has at most two summands which are reached after a single reduction step.
The next, considerably stronger statement, generalises this result to reductions involving an arbitrary number of steps.
\emph{Strong adequacy} is the statement that the denotational interpretation is invariant with respect to reduction in a \emph{big-step} sense
(see \cite{pagani-strong-adequacy}, \cite{vakaretal,jones90} where such results are proven).

\begin{theorem}[Strong Adequacy]
\label{thm:strong-adequacy}
For any term $\cdot \vdash M : A,$
\[ \lrb M = \sum_{V \in \Val(M)} P(M \probto{}_* V) \lrb V . \]
\end{theorem}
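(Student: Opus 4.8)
The plan is to prove the two inequalities
\[
\lrb M \;\geq\; \sum_{V\in\Val(M)} P(M\probto{}_* V)\,\lrb V
\qquad\text{and}\qquad
\lrb M \;\leq\; \sum_{V\in\Val(M)} P(M\probto{}_* V)\,\lrb V
\]
separately, both inside the Kegelspitze $\KL(\lrb{\cdot},\lrb A)$. The first (the \emph{lower bound}) follows quickly from Soundness together with the order structure of the model; the second (the \emph{upper bound}) is the genuine computational-adequacy content and will require a logical relation between the denotational model and the operational semantics.

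For the lower bound, first I would establish by induction on $n$, using Theorem~\ref{thm:soundness} and the convex-sum calculus of Definition~\ref{def:convex-sums}, an \emph{iterated soundness} identity
\[
\lrb M \;=\; \sum_{M'} c_n(M,M')\,\lrb{M'},
\]
where the sum ranges over all closed terms $M'$ reachable from $M$ in at most $n$ steps (with values regarded as absorbing), $c_n(M,M')$ is the corresponding total path-weight, and $c_n(M,V)=P(M\probto{}_{\leq n} V)$ for a value $V$. Since $\mathbf 0$ is the least element of the Kegelspitze (Example~\ref{exa:MDiskegel}) and finite convex sums are monotone in each argument, replacing $\lrb{M'}$ by $\mathbf 0$ for every non-value $M'$ can only decrease the right-hand side, so $\lrb M \geq \sum_{V\in\Val(M)} P(M\probto{}_{\leq n} V)\,\lrb V$. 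This last expression is increasing in $n$, and taking the supremum while using Scott-continuity of the countable convex sum in its coefficients (Definition~\ref{def:convex-sums}) gives the lower bound, because $P(M\probto{}_* V)=\sup_n P(M\probto{}_{\leq n} V)$.

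For the upper bound I would use the formal-approximation (logical-relation) technique developed in~\cite{icfp19,lnl-fpc-lmcs,pagani-strong-adequacy}. Concretely, I would define a type-indexed family consisting of a \emph{value relation} $\vartriangleleft_A \subseteq \lrb A \times \{V \mid\, \cdot\vdash V:A\ \text{value}\}$ and a \emph{computation relation} $\vartriangleleft^{\MM}_A \subseteq \MM\lrb A \times \{M \mid\, \cdot\vdash M:A\}$, the latter being the probabilistic/convex lifting of the former and designed so that $d \vartriangleleft^{\MM}_A M$ entails $d \leq \sum_{V\in\Val(M)} P(M\probto{}_* V)\,\lrb V$. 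The value relation is defined by induction on type structure: products and sums componentwise, and at a function type $f \vartriangleleft_{A\to B}\lambda x.N$ iff $f(e)\vartriangleleft^{\MM}_B N[W/x]$ whenever $e\vartriangleleft_A W$. I would then prove the \emph{fundamental lemma}: for every $x_1:A_1,\dots,x_k:A_k\vdash M:B$, related closed value substitutions are sent to related computations. This is an induction on the typing derivation using Lemma~\ref{lem:values}, the substitution Lemma~\ref{lem:term-substitution}, and Soundness; applying it to a closed $M$ gives $\lrb M \vartriangleleft^{\MM}_A M$, and unfolding the computation relation yields the upper bound.

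The hard part will be the \emph{recursive-type} case of the logical relation, since our types are mixed-variance and are interpreted by the solutions of recursive domain equations in $\PDe$. I would define $\vartriangleleft_{\mu X.A}$ as a fixed point of the relation-transformer induced by the type constructor and prove it well-defined and \emph{admissible} (closed under the directed suprema needed for the fundamental lemma) by appealing to the minimal-invariance property of these solutions guaranteed by Proposition~\ref{prop:compactness} --- that the initial algebra's inverse is a final coalgebra --- exactly as in the adequacy arguments of~\cite{icfp19,lnl-fpc-lmcs}. Verifying admissibility, and that the fixed-point relation interacts correctly with the $\sfold$/$\sunfold$ isomorphisms, is the delicate step that makes the upper bound substantially harder than the lower bound.
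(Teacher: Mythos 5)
Your plan coincides with the paper's proof: the lower bound is derived from Soundness by the same finite-approximation-and-supremum argument (Corollary~\ref{cor:soundness-inequality}), and the upper bound uses exactly the logical-relations machinery of \cite{icfp19,lnl-fpc-lmcs} that the paper deploys --- a type-indexed value relation, a computation relation given by the Scott-closure of finite convex sums of path weights over terminal reduction paths, existence at mixed-variance recursive types via (parameterised) initial algebras / algebraic compactness, and a fundamental lemma whose closed instance gives $\lrb M$ in the computation relation at $M$. The one clause you should state explicitly is that the value relation must also require $e \vartriangleleft_A V \Rightarrow e \leq \lrb V$ (built into Definition~\ref{def:logical-relations} and into condition (A3) of Theorem~\ref{thm:formal-relations}), since the applicative and structural clauses alone do not make the convex lifting entail $d \leq \sum_{V\in\Val(M)} P(M \probto{}_* V)\,\lrb V$, which is the inequality your upper bound rests on.
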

\begin{proof}
In Appendix \ref{app:adequacy}.
\end{proof}

\begin{remark}
In the above theorem, $\Val(M)$ is defined in \eqref{eq:val(M)} and it may
contain (countably) infinitely many elements; the convex sum is defined in
Definition \ref{def:convex-sums}.
\end{remark}

This theorem is also true to its name, because it immediately implies the usual notion of adequacy.

\begin{corollary}[Adequacy]
\label{cor:adequacy}
Let $\cdot \vdash M : 1$ be a term. Then
\begin{align*}
\lrb M(*)(\{*\}) = \mathrm{Halt}(M) , &&\text{(see \eqref{eq:halt})}
\end{align*}
where $*$ is the unique element of the singleton dcpo $1$.
\end{corollary}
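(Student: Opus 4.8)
The plan is to derive the corollary directly from Strong Adequacy (Theorem~\ref{thm:strong-adequacy}), by applying a suitable evaluation functional to both sides. First I would pin down the semantic spaces involved. Since $0 \eqdef \mu X. X$ is interpreted as the initial (empty) dcpo $\varnothing$ and $1 \eqdef 0 \to 0$, we have $\lrb 1 = [\lrb 0 \to \MM \lrb 0]$, a function space out of the empty dcpo, which is the singleton dcpo $1 = \{*\}$; its Scott topology is $\{\emptyset, \{*\}\}$. The interpretation is a Kleisli map $\lrb M \colon \lrb\cdot = 1 \to \MM 1$, so $\lrb M(*)$ is a valuation on $1$ and $\lrb M(*)(\{*\}) \in [0,1]$ is well defined, matching the statement of the corollary.

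Next I would introduce the evaluation functional $\mathrm{ev}\colon \KL(1,1) = [1 \to \MM 1] \to [0,1]$ given by $\mathrm{ev}(f) \defeq f(*)(\{*\})$. The key observation is that $\mathrm{ev}$ is Scott-continuous -- evaluation at the point $*$ and evaluation of a valuation at the fixed open set $\{*\}$ both preserve directed suprema, which are computed pointwise -- and linear with respect to the relevant Kegelspitze structures: the convex structure on $\KL(1,1)$ is pointwise, on $\MM 1$ it is $(\nu_1 +_r \nu_2)(U) = r\nu_1(U) + (1-r)\nu_2(U)$, and $[0,1]$ carries its usual convex structure. By the lemma characterising linearity through preservation of finite convex sums, together with Scott-continuity, $\mathrm{ev}$ preserves the countable convex sums of Definition~\ref{def:convex-sums}.

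Applying $\mathrm{ev}$ to the identity of Theorem~\ref{thm:strong-adequacy} then yields
\[ \lrb M(*)(\{*\}) = \mathrm{ev}\Big( \sum_{V \in \Val(M)} P(M \probto{}_* V)\, \lrb V \Big) = \sum_{V \in \Val(M)} P(M \probto{}_* V)\, \mathrm{ev}(\lrb V). \]
It remains to compute $\mathrm{ev}(\lrb V)$ for each value $V$. By Lemma~\ref{lem:values}, $\lrb V = \JJ(v') = \eta_{\lrb 1}\circ v'$ for some $v'\colon 1 \to \lrb 1$; since $\lrb 1 = \{*\}$ we get $\lrb V(*) = \delta_{*}$, hence $\mathrm{ev}(\lrb V) = \delta_*(\{*\}) = 1$. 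Therefore $\lrb M(*)(\{*\}) = \sum_{V \in \Val(M)} P(M \probto{}_* V)$, which is exactly $\Halt(M)$ by~\eqref{eq:halt}.

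The main obstacle is the passage from finite to countable convex sums: one must verify that $\mathrm{ev}$ sends the countable convex sum of Definition~\ref{def:convex-sums} (a directed supremum of finite convex sums) to the ordinary convergent series defining $\Halt(M)$. This follows from the Scott-continuity of $\mathrm{ev}$ together with the fact that in $[0,1]$ the countable convex sum $\sum_i r_i \cdot 1$ coincides with the supremum of the finite partial sums $\sum_{j\in J} r_j$, i.e.\ with the ordinary sum $\sum_i r_i$; everything else in the argument is routine bookkeeping about the singleton dcpo and the image of $\JJ$.
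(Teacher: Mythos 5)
Your proposal is correct and follows the same route as the paper: the paper's proof is a one-line appeal to Theorem~\ref{thm:strong-adequacy} with $A = 1$ together with the fact that $\lrb V(*)(\{*\}) = 1$ for any value $\cdot \vdash V : 1$, which is precisely the computation you carry out via Lemma~\ref{lem:values}. Your explicit evaluation functional $\mathrm{ev}$ and the continuity/linearity check just make visible the bookkeeping the paper leaves implicit when pushing the countable convex sum through to $\Halt(M)$.
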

\begin{proof}
Special case of Theorem \ref{thm:strong-adequacy} when $A = 1$ using the fact that if $\cdot \vdash V : 1$ is a value, then $\lrb V(*)(\{*\}) = 1 \in \mathbb R.$
\end{proof}

The commutativity of our monad $\MM$ implies that given
any well-formed terms $\Gamma \vdash M_1: A_1$ and $\Gamma \vdash M_2: A_2$ and $\Gamma, x_1 : A_1, x_2 : A_2 \vdash N : B$, then
\begin{align}
 \lrb{\mathtt{let}\ x_1 = M_1\ \mathtt{in}\ \mathtt{let}\ x_2 = M_2\ \mathtt{in}\ N} = \label{eq:contextual} 
 \lrb{\mathtt{let}\ x_2 = M_2\ \mathtt{in}\ \mathtt{let}\ x_1 = M_1\ \mathtt{in}\ N} ,
\end{align}
where $\mathtt{let}\ x = M\ \mathtt{in}\ N$ may be defined using the usual syntactic sugar.
This, together with adequacy (Corollary \ref{cor:adequacy}) and some standard arguments (see \cite{vakaretal}) implies that the programs in \eqref{eq:contextual} are contextually equivalent.
This improves on the results obtained by Jones~\cite{jones90}, because Equation \ref{eq:contextual} could not be established in her model without a proof that the monad $\VV$ on $\dcpo$ is commutative; as we commented earlier, this remains an open problem.
We finally note that all results in this section also hold for the monads $\mathcal W$ and $\mathcal P.$

\section*{Summary and Future work}\label{sec:summfut}
We have constructed three commutative valuations monads on $\mathbf{DCPO}$
that contain the simple valuations, and shown how to use any of them to give
purely domain-theoretic models for $\mathtt{PFPC}$ that are sound and
adequate.
Our construction using topological methods can be applied
to any \texttt{K}-category $\mathbf K$ with  $\mathbf K\subseteq \mathbf D$,
offering the possibility of further such monads. We also identified the
Eilenberg-Moore algebras of each monad as consisting of Kegelspitzen. In the special case where we consider continuous domains, we
characterized the Eilenberg-Moore algebras over $\mathbf{DOM}$ of all three of our monads and also the $\VV$ monad as precisely
the continuous Kegelspitzen. 
We also proved the most general Fubini theorem for dcpo's yet available.

For future work, we are interested in applying our constructions to extensions of
$\mathtt{PFPC}$. For example, we believe our constructions can be extended to
add sampling, scoring, conditioning and the other tools needed to model statistical
probabilistic programming languages, such as those considered
in~\cite{statonetal,vakaretal}. In particular, the authors of~\cite{vakaretal}
comment that the lack of a commutative monad of valuations on $\mathbf{DCPO}$
is what required them to develop the theory of $\omega$-quasi-Borel spaces.
We believe our approach could support a model of such a statistical programming language solely using domain-theoretic methods,
where we can adapt the ideas from~\cite{mislovedrv} to model random elements; we believe such a model would lead to a simplification of the development.

In a different vein, we plan to apply our results to construct a
model of a programming language that supports both classical
probabilistic effects and also quantum resources.  We have already identified a
suitable type system, where
the probabilistic effects are induced by quantum measurements.  We plan to
interpret the quantum fragment in a category of von Neumann algebras \cite{Kornell18}.
We also plan to show how the decomposition of classical
probabilistic effects in terms of quantum ones can be interpreted by moving
between the Kleisli category of our monad $\MM$ and the category of von Neumann
algebras we identified using the barycentre maps we described in this paper.



\bibliographystyle{ieeetran}
\bibliography{refs}


\appendices
\newpage
\section{Monads, commutativity and $\MM$-algebras}
\label{section:commu}
Let $D$ be a dcpo. Recall that the \emph{d-topology} on $D$ consists of all sub-dcpo's of $D$ as closed subsets. The d-topology on $D$ is finer than the Scott topology. In fact $D$ is even Hausdorff in the d-topology: for $x\not\leq y$ in $D$, $D\setminus {\downarrow}y$ and ${\downarrow}y$ are disjoint open sets in the d-topology, containing $x$ and $y$ respectively. Functions that are continuous between dcpo's equipped with the d-topology are called \emph{d-continuous} functions. Scott-continuous functions between dcpo's are d-continuous~\cite[Lemma 5]{zhao10}.

Recall that $\MM D$ is the smallest sub-dcpo of $\VV D$ that contains $\SSS D$, hence $\MM D$ is actually the topological closure of $\SSS D$ in $\VV D$ equipped with the d-topology. Hence we also say that $\MM D$ is the \emph{d-closure} of $\SSS D$ inside $\VV D$. 

Let $f\colon D\to [0, 1]$ be a Scott-continuous function and $\nu\in \VV D$. The integral $\int_{x\in D}f(x)d\nu$, defined as the Riemann integral $\int_{0}^{1}\nu(f^{-1}((t, 1]))dt$, satisfies the following properties, which can be found in~\cite{jones90}. 
\begin{proposition} 
\label{prop:sumproperty}
Let $D$ be a dcpo, $f\colon D\to [0, 1]$ be a Scott-continuous function. Then we have the following:
\begin{enumerate}
\item \label{sump2} The map $(\nu_{i} \mapsto \sum_{i=1}^{n}r_{i}\nu_{i})\colon \VV D\to \VV D$ is Scott-continuous hence d-continuous, for fixed \label{} $\nu_{j}, j\neq i$ and $r_{i}, i=1,\ldots, n$ with $\sum_{i=1}^{n}r_{i}\leq 1$. 
\item \label{sump4} For $ \sum_{i=1}^{n}r_{i}\nu_{i} \in \VV D$, it is true that $\int fd \sum_{i=1}^{n}r_{i}\nu_{i} =  \sum_{i=1}^{n}r_{i}\int f d\nu_{i}$. 
\item \label{sump44} For $\nu\in \VV D$ and $f, g\in [D\to [0, 1]]$, $\int rf+ sg d\nu = r\int fd\nu + s\int gd\nu$ for $r+s\leq 1$. 
\end{enumerate}
\end{proposition}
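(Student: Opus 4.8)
All three items appear in \cite{jones90}; here is the route I would take. For claim \eqref{sump2}, the plan is to exploit that both the stochastic order and directed suprema on $\VV D$ are computed pointwise, and that the convex sum is itself defined pointwise by $(\sum_{j=1}^{n} r_j \nu_j)(U) = \sum_{j=1}^{n} r_j \nu_j(U)$ for each $U \in \sigma D$. Fixing the $\nu_j$ for $j \neq i$ and the coefficients, monotonicity in the $i$-th slot is immediate, and for Scott-continuity I would take a directed family $\{\mu_k\}_k$ with supremum $\mu$ and check, for each open $U$,
\[
\Big(\sum_{j \neq i} r_j \nu_j + r_i \mu\Big)(U) = \sum_{j \neq i} r_j \nu_j(U) + r_i \sup_k \mu_k(U) = \sup_k \Big(\sum_{j \neq i} r_j \nu_j + r_i \mu_k\Big)(U),
\]
using that addition and scalar multiplication on $[0,1]$ preserve directed suprema; since suprema in $\VV D$ are pointwise, this is exactly preservation of the supremum, and \emph{d}-continuity then follows from \cite[Lemma 5]{zhao10}. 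Claim \eqref{sump4} I would read off directly from the Choquet formula: for each $t$ one has $(\sum_i r_i \nu_i)(f^{-1}((t,1])) = \sum_i r_i \nu_i(f^{-1}((t,1]))$, and the finite sum may be pulled out of the Riemann integral $\int_0^1 (\cdot)\,dt$.

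The substantive statement is \eqref{sump44}, the linearity of the Choquet integral in its integrand against the modular valuation $\nu$, and I would split it into positive homogeneity and additivity. For positive homogeneity, $\int (rf)\,d\nu = r\int f\,d\nu$ for $r \in (0,1]$, the key observation is that $(rf)^{-1}((t,1]) = f^{-1}((t/r,1])$ for $t \in [0,r)$ and is empty for $t \geq r$, so the change of variables $u = t/r$ gives
\[
\int (rf)\,d\nu = \int_0^r \nu\big(f^{-1}((t/r,1])\big)\,dt = r\int_0^1 \nu\big(f^{-1}((u,1])\big)\,du = r\int f\,d\nu,
\]
with the case $r = 0$ trivial. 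Combining this with additivity and noting that $rf + sg$ has range in $[0,1]$ precisely when $r + s \leq 1$ then yields \eqref{sump44}, so it remains to prove additivity, $\int (f+g)\,d\nu = \int f\,d\nu + \int g\,d\nu$ whenever $f$, $g$ and $f+g$ all take values in $[0,1]$.

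For additivity I would follow the classical two-step argument. First I would prove it for \emph{step functions} $\sum_{k=1}^{m} a_k \chi_{U_k}$ with $U_1 \supseteq \cdots \supseteq U_m$ a descending chain of Scott-open sets; for these the Choquet formula reduces to $\int s\,d\nu = \sum_k a_k \nu(U_k)$, and additivity becomes a bookkeeping exercise resolved by repeated application of the modularity law $\nu(U) + \nu(V) = \nu(U \cup V) + \nu(U \cap V)$ to recombine the two staircases. Second, I would extend to arbitrary Scott-continuous $f,g$ by approximating each from below by an increasing sequence of step functions $f_n \uparrow f$, $g_n \uparrow g$ (e.g.\ the dyadic level-set approximations $f_n = 2^{-n}\sum_{k=1}^{2^n} \chi_{f^{-1}((k2^{-n},1])}$), so that $f_n + g_n \uparrow f+g$, and then pass to the supremum using Scott-continuity of $f \mapsto \int f\,d\nu$. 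That last ingredient I would in turn deduce from the Choquet formula: for a directed family $f_i \uparrow f$ one has $f^{-1}((t,1]) = \bigcup_i f_i^{-1}((t,1])$, whence $\nu(f^{-1}((t,1])) = \sup_i \nu(f_i^{-1}((t,1]))$ by Scott-continuity of $\nu$, and the interchange of $\sup_i$ with $\int_0^1 (\cdot)\,dt$ is justified because the integrands are antitone in $t$ and increasing in $i$, a regime in which the Riemann integral commutes with monotone limits. The main obstacle is the step-function case: it is the sole place where modularity of $\nu$ is genuinely used, and arranging the descending chains so that the union/intersection pattern lines up with the modularity law is the real technical crux, everything else being either pointwise or a routine monotone-limit argument.
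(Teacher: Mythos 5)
Your proposal is correct, and it matches the paper's treatment: the paper offers no proof of Proposition~\ref{prop:sumproperty} beyond citing Jones's thesis~\cite{jones90}, and your reconstruction (pointwise suprema for Scott-continuity of the convex sum, pulling the finite sum through the Choquet formula, and linearity of the integral via positive homogeneity plus additivity proved first for descending-chain step functions using modularity and then extended by dyadic approximation and monotone limits) is exactly the standard argument found there. You also correctly identify the step-function additivity as the only place where modularity of the valuation is genuinely needed.
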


\begin{proof}[{\bf Proof of Lemma~\ref{lemma:MDisconvexclosed}   }]~
We prove the case $n=2$ and the general case can be proved similarly. We realize that for a fixed simple valuation $s\in \SSS D$, the map $(\nu\mapsto r_{1}\nu+r_{2}s)\colon \VV D\to \VV D$ maps $\SSS D$ into $\SSS D$. From the previous proposition, Item~\ref{sump2},  this map is d-continuous, it then maps the dcpo-closure of $\SSS D$, which is $\MM D$, into $\MM D$, the dcpo-closure of $\SSS D$. That is, for each simple valuation $s$ and each $\nu\in \MM D$, $ r_{1}\nu+r_{2}s \in \MM D$.  Now we fix $\nu\in \MM D$. Then the map $\xi \mapsto r_{1}\nu + r_{2}\xi \colon \VV D\to \VV D$ maps $\SSS D$ into $\MM D$, hence it also maps $\MM D$ into $\MM D$ since it is d-continuous. This means for $\xi, \nu\in \MM D$, $r_{1}, r_{2}\in [0, 1]$ with $r_{1}+r_{2}\leq 1$, $r_{1}\nu + r_{2}\xi \in \MM D$. 
\end{proof}

\begin{proof}[{\bf Proof of Theorem~\ref{theorem:VPFubini}}]~
To prove this theorem, we first recall two results due to Heckmann~\cite[Theorem~2.4, Theorem~5.5]{heckmann96}. Specifying these results to dcpo $D$, it implies that if $\nu$ is a point-continuous valuation in $\mathcal PD$, and $\nu \in \mathcal O$ for $\mathcal O$ an open set in $\mathcal P_{w} D$,  then there exists a simple valuation $\sum_{i=1}^{n} r_{i}\delta_{x_{i}} \in \SSS D$ such that $\sum_{i=1}^{n} r_{i}\delta_{x_{i}} \leq \nu$ and $\sum_{i=1}^{n} r_{i}\delta_{x_{i}} \in \mathcal O$. 

Now we fix $\xi \in \mathcal P E$ and $U\in \sigma(D\times E)$, and consider the functions 
$$ F\colon  \VV_{w}D \to [0, \infty]\colon \nu \mapsto   \int_{x\in D}\int_{y\in E} \chi_{U}(x, y) d \xi d\nu  $$
and 
$$ G \colon \VV_{w}D \to [0, \infty] \colon \nu \mapsto  \int_{y\in E}\int_{x\in D} \chi_{U}(x, y) d \nu d\xi,  $$
where $[0, \infty]$ is equipped with the Scott topology. We claim that $F$ and $G$ are continuous. 

The fact that $F$ is continuous is straightforward from Remark~\ref{remark:fgeqr}. To see that $G$ is continuous, we assume that $ \int_{y\in E}\int_{x\in D} \chi_{U} d \nu d\xi > r$ and aim to find an open set $\mathcal U$ of $\VV_{w}D$ such that $\nu \in \mathcal U$ and for any $\nu' \in \mathcal U$, $ \int_{y\in E}\int_{x\in D} \chi_{U} d \nu' d\xi > r$. To this end, we note that $g\colon E\to [0, 1] \colon  y \mapsto  \int_{x\in D} \chi_{U}(x, y) d \nu $ is Scott-continuous. Hence $[g>r] \cap \mathcal PE $ is an open subset of $\mathcal P_{w}E$ that contains $\xi$. Applying the aforementioned result we find a simple valuation $\sum_{i=1}^{n} r_{i}\delta_{y_{i}} \in \SSS E$ such that $\sum_{i=1}^{n} r_{i}\delta_{y_{i}} \leq \xi$ and $\sum_{i=1}^{n} r_{i}\delta_{y_{i}} \in [g> r]$. This implies that 
$$ \int_{y\in E}\int_{x\in D} \chi_{U}(x, y) d \nu d \sum_{i=1}^{n} r_{i}\delta_{y_{i}} >r.$$
By applying Equation~\ref{eq:nestedintwithsim}, this in turn implies that 
$$\sum_{i=1}^{n}  \int_{x\in D} r_{i}\chi_{U}(x, y_{i}) d \nu >r.$$
Obviously, we could find $s_{i}\geq 0, i=1, \ldots, n$ such that $  \int_{x\in D} r_{i}\chi_{U}(x, y_{i}) d \nu > s_{i}$ and $\sum_{i=1}^{n}s_{i} > r$. Now we let $$\mathcal U = \bigcap_{i=1}^{n} [r_{i}\chi_{U}(x, y_{i}) > s_{i} ]. $$
By Remark~\ref{remark:fgeqr} the set $\mathcal U$ is open in $\VV_{w}D$ and obviously $\nu \in \mathcal U$. Moreover, for any $\nu'\in \mathcal U$, we have
$$ \int_{y\in E}\int_{x\in D} \chi_{U}(x,y) d \nu' d\xi \geq  \int_{y\in E}\int_{x\in D} \chi_{U}(x,y) d \nu' d  \sum_{i=1}^{n} r_{i}\delta_{y_{i}} =  \sum_{i=1}^{n} \int_{x\in D} r_{i} \chi_{U}(x, y_{i}) d\nu'  \geq  \sum_{i=1}^{n} s_{i} > r. $$
Hence $G$ is continuous indeed. 

The functions $F$ and $G$ are also linear from Proposition~\ref{prop:sumproperty}, Item~\ref{sump4}. Hence both $F$ and $G$ are continuous linear map from $\VV_{w}D$ to $[0, \infty]$, we now apply a varied version of the Schr\"oder-Simpson Theorem, which can be found in~\cite[Corollary~2.5]{goubaultsstheorem}, to see that $F$ and $G$ are uniquely determined by their actions on Dirac measures $\delta_{a}, a\in D$. However, we note that $F(\delta_{a}) = \int_{y\in E} \chi_{U}(a, y) d \xi  = G(\delta_{a})$, again by Equation~\ref{eq:nestedintwithsim}. Hence $F=G$, and we finish the proof by letting $\xi$ range in~$\mathcal P_{w}E$. 
\end{proof}

\begin{proof}[{\bf Proof of Theorem~\ref{theorem:kcofsiscommu}}]~
We only need to prove that the strength of $\VV_{\mathbf K,\leq}$ exists, and is of the same form as $\tau^{\VV}$, the strength of $\VV$, and then conclude with Theorem~\ref{theorem:VPFubini}.

We know that for each \texttt{K}-category $\mathbf K\subseteq \mathbf D$, $\VV_{\mathbf K,\leq}$ is a monad on $\DCPO$. Hence, for any dcpo's $D$ and $E$, and  any Scott-continuous map $f\colon D\to \VV_{\mathbf K,\leq} E$, the function  $$f^{\dagger}\colon \VV_{\mathbf K,\leq} D \to \VV_{\mathbf K,\leq} E \colon \nu \mapsto \lambda U\in \sigma E. \int_{x\in D} f(x)(U)d\nu$$
is a well-defined Scott-continuous map. 

Now we apply this fact to the map $g\colon E\to \VV_{\mathbf K,\leq}(D\times E) \colon y \mapsto \delta_{(a, y)}$, where $a$ is any fixed element in $D$. The map $g$ is obviously Scott-continuous.  Hence for any $\nu\in \VV_{\mathbf K,\leq}E$,  
$$g^{\dagger}(\nu) = \lambda U\in \sigma(D\times E). \int_{y\in E} \delta_{(a, y)}(U) d\nu = \lambda U\in \sigma(D\times E). \int_{y\in E}\chi_{U}(a, y)d\nu$$
is in $\VV_{\mathbf K,\leq}(D\times E)$. 
This implies the map $$\tau_{D, E} \colon D\times \VV_{\mathbf K,\leq} E \to \VV_{\mathbf K,\leq}(D\times E) \colon (a, \nu) \mapsto \lambda U\in \sigma(D\times E). \int_{y\in E}\chi_{U}(a, y)d\nu$$
is well-defined, and it is obviously Scott-continuous. Note that apart from the domain and codomain, the map $\tau_{D, E}$ is same to the strength $\tau_{D, E}^{\VV}$ of $\VV$ at $(D, E)$. Then the same arguments as in Jones' thesis would show that $\tau_{D, E}$ is the strength of $\VV_{\mathbf K,\leq}$ at $(D, E)$. Hence $\VV_{\mathbf K,\leq}$ is a strong monad. 
\end{proof}

\begin{proof}[{\bf Proof of Proposition~\ref{prop:Keglstructure}}]~
We first prove that $K$ is a pointed barycentric algebra. 
It is easy to see that $\beta({\bf 0}_{K})$ is the least element in $K$, since for any $x\in K$,  $\beta({\bf 0}_{K}) \leq \beta(\delta_{x}) = x$. It is also easy to see that $a+_{1}b = a$, $a+_{r}b = b+_{1-r}a$ and $a+_{r} a = a$. We now proceed to prove that  $(a+_{p}b)+_{r}c = a+_{pr}(b+_{\frac{r-pr}{1-pr}} c)$ for any $r, p<1$ and $a, b, c\in K$. To this end, we perform the following:
\begin{align*}
(a+_{p}b)+_{r}c &= \beta(\delta_{a+_{p}b} +_{r} \delta_{c})           						 & \text{definition of $+_{r}$}\\
			    &= \beta(\delta_{\beta(\delta_{a}+_{p}\delta_{b})} +_{r} \delta_{\beta(\delta_{c})})           						 & \text{definition of $+_{p}$ and $\beta(\delta_{c}) =c$}\\
			    &= \beta(\MM(\beta) (\delta_{\delta_{a}+_{p} \delta_{b}}+_{r}\delta_{\delta_{c}})) 	& \text{$\MM(\beta)$ is linear and $\MM(\beta)(\delta_{\nu}) = \delta_{\beta(\nu)}$} \\
			    &= \beta(\mu^{\MM}_{K} (\delta_{\delta_{a}+_{p} \delta_{b}}+_{r}\delta_{\delta_{c}})) &\text{$(K, \beta)$ is an \MM-algebra}\\
			    &= \beta((\delta_{a}+_{p}\delta_{b})+_{r}\delta_{c} )						&\text{$\mu^{\MM}_{K}$ is the multiplication of~$\MM$ at $K$}\\
			    &= \beta(\delta_{a}+_{pr}(\delta_{b}+_{\frac{r-rp}{1-pr}}\delta_{c} ))	       &\text{$\MM K$ is a Kegelspitze}\\	
			    &= \beta(\mu^{\MM}_{K}( \delta_{\delta_{a}}+_{pr}\delta_{(\delta_{b}+_{\frac{r-rp}{1-pr}}\delta_{c} )}))  &\text{$\mu^{\MM}_{K}$ is the multiplication of~$\MM$ at $K$}\\
			    &= \beta(\MM(\beta)(  \delta_{\delta_{a}}+_{pr}\delta_{(\delta_{b}+_{\frac{r-rp}{1-pr}}\delta_{c} )} )) &\text{$(K, \beta)$ is an \MM-algebra}\\
			    &= \beta( \delta_{\beta(\delta_{a})} +_{pr} \delta_{\beta(  \delta_{b}+_{\frac{r-rp}{1-pr}}\delta_{c})}   ) & \text{$\MM(\beta)$ is linear and $\MM(\beta)(\delta_{\nu}) = \delta_{\beta(\nu)}$}\\
			    &= \beta( \delta_{a} +_{pr} \delta_{({b}+_{\frac{r-rp}{1-pr}}{c} )})       	 & \text{definition of ${b}+_{\frac{r-rp}{1-pr}}{c} $ and $\beta(\delta_{a}) =a$}\\
			    &= a+_{pr}(b+_{\frac{r-pr}{1-pr}} c).									 & \text{definition of $+_{pr}$}
\end{align*}
The map $(a, b)\mapsto a+_{r}b = \beta (\delta_{a} +_{r} \delta_{b}) \colon K\times K \to K$ is Scott-continuous since $\beta$ and $\delta$ are Scott-continuous and $\MM K$ is a Kegelspitze. The map $(r, a) \mapsto ra = a+_{r} \beta({\mathbf 0}_{K}) = \beta(\delta_{a} +_{r} \delta_{\beta({\mathbf 0}_{K})})\colon [0, 1]\times K\to K$ is Scott-continuous in $a$ for the exactly same reasons; to see that it also is Scott-continuous in $r$, we only need to show that $r\mapsto \delta_{a} +_{r} \delta_{\beta({\mathbf 0}_{K})}\colon [0, 1] \to \MM K$ is Scott-continuous for any fixed $a\in K$. This is true if $\beta(\mathbf 0_{K})\leq a$. However, we already see that $\beta(\mathbf 0_{K})$ is the least element in $K$.  Hence we have proved that $K$ is a Kegelspitze. The map $\beta$ is clearly linear. 
\end{proof}

\begin{proof}[{\bf Proof of Proposition~\ref{prop:EMmaps}}]~
\paragraph*{\textbf{The ``if'' direction}} Assume that $f\colon K_{1}\to K_{2}$ is linear. We need to prove that $f\circ \beta_{1} = \beta_{2}\circ \MM(f)$. Since both sides are Scott-continuous hence d-continuous and $K_{2}$ is Hausdorff in the d-topology (if $K_{2}$ has more than one elements). We only need to prove they are equal on simple valuations on $K_{1}$. To this end, we pick $\sum_{i=1}^{n}r_{i}\delta_{x_{i}}\in \MM K_{1}$, and see 
\begin{align*}
f( \beta_{1} (\sum_{i=1}^{n}r_{i}\delta_{x_{i}})) &= f( \sum_{i=1}^{n}r_{i}x_{i} )  &\text{$\beta_{1}$ is linear and $\beta_{1}(\delta_{x_{i}}) = x_{i}$}\\
										 &= \sum_{i=1}^{n}r_{i}f(x_{i})   &\text{$f$ is linear }\\
										 &= \beta_{2} (\sum_{i=1}^{n}r_{i}\delta_{f(x_{i})} )  &\text{$\beta_{2}$ is linear and  $\beta_{2}(\delta_{f(x_{i})}) =f( x_{i})$}\\
										 &= \beta_{2}(\MM(f)(\sum_{i=1}^{n}r_{i}\delta_{x_{i}} )).  &\text{$\MM(f)$ is linear and $\MM(f)(\delta_{x_{i}}) = \delta_{f(x_{i})}$}\\
\end{align*}
\paragraph*{\textbf{The ``only if'' direction}} Assume that $f\colon K_{1} \to K_{2}$ is an algebra morphism from $(K_{1}, \beta_{1})$ to $(K_{2}, \beta_{2})$. Then we know that  $f\circ \beta_{1} = \beta_{2}\circ \MM(f)$. We prove that $f$ is linear. First, for $a, b\in K_{1}$ and $r\in [0, 1]$, we have
\begin{align*}
f(a+_{r} b) &= f( \beta_{1}(\delta_{a}+_{r}\delta_{b}))  &\text{definition of $a+_{r}b$}\\
		   &= \beta_{2}( \MM(f) ( \delta_{a}+_{r}\delta_{b} ) ) &\text{$f$ is an algebra morphism }\\
		   &= \beta_{2} (\delta_{f(a)} +_{r} \delta_{f(b)}) &\text{$\MM (f)$ is linear and $\MM(f)(\delta_{x}) = \delta_{f(x)}$}\\
		   &=f(a) +_{r} f(b).  &\text{definition of $f(a)+_{r}f(b)$}
\end{align*}
Second, to prove that $f$ maps $\beta(\mathbf 0_{K_{1}})$ to $\beta_{2}(\mathbf 0_{K_{2}})$, we see that $f(\beta_{1}(\mathbf 0_{K_{1}})) = \beta_{2}(\MM(f)(\mathbf 0_{K_{1}}))= \beta_{2}(\mathbf 0_{K_{2}})$ because $\MM(f)$ is linear. 
\end{proof}

\newpage
\section{Solving Recursive Domain Equations in $\KL$}
\label{app:domain-equations}

We use $(\MM, \eta, \mu, \tau)$ to indicate our commutative monad and we write $(\LL, \eta^\LL, \mu^\LL, \tau^\LL)$ to indicate the lift monad on $\DCPO$, which is also commutative.

Recall that the lift monad $\LL : \DCPO \to \DCPO$ freely adds a new least element, often denoted $\perp$, to a dcpo $X$. The resulting dcpo is $\LL X \defeq X_\perp$. The monad structure of $\LL$ is defined by the following assignments:

\begin{minipage}{0.3\textwidth}
\begin{align*}
\eta^\LL_X : X &\to X_\perp \\ 
x &\mapsto x
\end{align*}
\end{minipage}
\begin{minipage}{0.3\textwidth}
\begin{align*}
\mu^\LL_X : (X_{\perp_1})_{\perp_2} &\to X_\perp \\
x                                &\mapsto
  \begin{cases}
    \perp & \text{, if } x = \perp_1 \\
    \perp & \text{, if } x = \perp_2 \\
    x     & \text{, if } \perp_1 \neq x \neq \perp_2
  \end{cases}
\end{align*}
\end{minipage}
\begin{minipage}{0.3\textwidth}
\begin{align*}
\tau^\LL_{XY} : X_\perp \times Y &\to (X \times Y)_\perp \\
(x,y)                      &\mapsto
  \begin{cases}
    \perp & \text{, if } x = \perp \\
    (x,y) & \text{, if } x \neq \perp
  \end{cases}
\end{align*}
\end{minipage}

We write $\DCPO_\LL$ for the Kleisli category of $\LL$ and we write its morphisms as $f : X \pto Y,$ which is by definition a morphism $f : X \to Y_\perp$ in $\DCPO$.
We write $X \otimes Y$ and $X \oplus Y$ for the symmetric monoidal product and coproduct, respectively, which are (canonically) induced by the commutative monad $\LL$.

\begin{proposition}
\label{prop:map-of-monads}
The assignment $\phi : \LL \naturalto \MM$ defined by
\begin{align*}
\phi_X : X_\perp &\to \MM X \\
x                &\mapsto
  \begin{cases}
    {\mathbf 0}_{X}        & \text{, if } x = \perp \\
    \delta_x & \text{, if } x \neq \perp 
  \end{cases}
\end{align*}
is a strong map of monads (see \cite[Definition 5.2.9]{jacobs-coalgebra} for more details).
\end{proposition}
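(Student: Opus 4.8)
The plan is to verify that $\phi$ satisfies the defining conditions of a strong map of monads: that each component $\phi_X$ is Scott-continuous, that $\phi$ is natural, and that $\phi$ is compatible with the units, the multiplications and the strengths of $\LL$ and $\MM$.

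First I would check Scott-continuity of $\phi_X \colon X_\perp \to \MM X$. The element $\perp$ is the least element of $X_\perp$ and $\phi_X(\perp) = \mathbf 0_X$ is the least element of $\MM X$, while on $X_\perp \setminus \{\perp\} \cong X$ the map $\phi_X$ agrees with the (Scott-continuous) unit $x \mapsto \delta_x$; since a directed subset of $X_\perp$ is either $\{\perp\}$ or, modulo the bottom element, a directed subset of $X$, continuity of $\phi_X$ reduces to that of the unit. Naturality, i.e.\ $\MM(f) \circ \phi_X = \phi_Y \circ \LL(f)$ for $f \colon X \to Y$, follows by the two-case check $\perp \mapsto \mathbf 0$ and $x \mapsto \delta_x$, using the identities $\MM(f)(\mathbf 0_X) = \mathbf 0_Y$ and $\MM(f)(\delta_x) = \delta_{f(x)}$. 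Compatibility with units is immediate, since $\phi_X(\eta^\LL_X(x)) = \phi_X(x) = \delta_x = \eta_X(x)$.

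The substantive step is compatibility with multiplications, namely
\[
\phi_X \circ \mu^\LL_X = \mu_X \circ \MM(\phi_X) \circ \phi_{\LL X} \colon (X_\perp)_\perp \to \MM X .
\]
I would verify this by the three-case analysis dictated by the structure of $(X_\perp)_\perp$: the outer bottom $\perp_2$, the inner bottom $\perp_1$ (the image of the least element of $X_\perp$), and a genuine point $x \in X$. In each case the left-hand side is computed directly from the definition of $\mu^\LL$ (both bottoms collapse to $\perp$, which $\phi_X$ sends to $\mathbf 0_X$); on the right-hand side one applies $\phi_{\LL X}$, then $\MM(\phi_X)$ using $\MM(\phi_X)(\delta_w) = \delta_{\phi_X(w)}$ and $\MM(\phi_X)(\mathbf 0) = \mathbf 0$, and finally $\mu_X$, invoking Remark~\ref{remark:multi} to evaluate $\mu_X$ on the resulting Dirac valuation $\delta_{\mathbf 0_X}$ or $\delta_{\delta_x}$. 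The three cases yield $\mathbf 0_X$, $\mathbf 0_X$ and $\delta_x$ respectively on both sides. Keeping the two distinct bottom elements of $(X_\perp)_\perp$ straight, and matching them against the way $\mu_X$ acts on Dirac valuations of valuations, is the only place demanding care, and is the step I expect to be the main obstacle.

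Finally I would check compatibility with the strengths. On the non-bottom branch of the lifted argument both sides produce a Dirac valuation at the paired point, using $\tau$ applied to a Dirac valuation together with the Choquet identity $\int \chi_U \, d\delta_y = \chi_U(\cdot, y)$; on the bottom branch $\phi$ sends the result to $\mathbf 0$, while on the other side $\tau$ applied to $\mathbf 0_Y$ gives $\mathbf 0_{X \times Y}$ because integration against the zero valuation vanishes. Assembling these five verifications establishes that $\phi$ is a strong map of monads, as required.
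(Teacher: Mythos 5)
Your proposal is correct and follows essentially the same route as the paper's proof: the same two-case check for naturality and unit compatibility, the same three-case analysis on $(X_{\perp_1})_{\perp_2}$ for the multiplication square (using $\mu_X(\delta_{\delta_x}) = \delta_x$ and $\mu_X(\delta_{\mathbf 0_X}) = \mathbf 0_X$), and the same two-case computation for the strength square. The only addition is your explicit verification of Scott-continuity of $\phi_X$, which the paper leaves implicit.
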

\begin{proof}
To see that $\phi$ is a natural transformation, we need to show, for any Scott-continuous map $f\colon X\to Y$, $\phi_{Y}\circ \LL f = \MM f \circ \phi_{X}\colon X_{\perp} \to \MM Y$. However, it is easy to see that both sides send $\perp$ to ${\mathbf 0}_{Y}$ and $x$ that is not $\perp$ to~$\delta_{f(x)}$.  

Now, we first verify that $\phi$ is a map of monads. That is, for each dcpo $X$, we need to prove that $\phi_{X}\circ \eta^\LL_{X}= \eta_{X}$ and $\phi_{X}\circ \mu^\LL_{X} = \mu_{X}\circ \MM(\phi_{X})\circ \phi_{X_{\perp}}\colon (X_{\perp_{1}})_{\perp_{2}} \to \MM(X)$. 

The first equation is trivial, hence we proceed to prove the second. For this, we see
\[\phi_{X}\circ \mu^\LL_{X} (x) =  
\begin{cases}
\phi_{X}(\perp) = {\mathbf 0}_{X} & \text{, if } x = \perp_{1}~\text{or}~x=\perp_2\\
\phi_{X}(x) = \delta_{x} & \text{, if } \perp_1 \neq x \neq \perp_2
\end{cases}
\]
and 
\[
\mu_{X}\circ \MM(\phi_{X})\circ \phi_{X_{\perp}}(x) =  
\begin{cases}
\mu_{X}\circ \MM(\phi_{X})({\mathbf 0}_{X_{\perp}}) = \mu_{X}({\mathbf 0}_{\MM X})   ={\mathbf 0}_{X}& \text{, if } x = \perp_{2}\\
\mu_{X}\circ \MM(\phi_{X})(\delta_{\perp}) = \mu_{X}(\delta_{\phi_{X}(\perp)}) =\mu_{X}(\delta_{{\mathbf 0}_{X}}) = {\mathbf 0}_{X} & \text{, if } x = \perp_{1}\\
\mu_{X}\circ \MM(\phi_{X})(\delta_{x}) = \mu_{X}(\delta_{\phi_{X}(x)}) = \mu_{X}(\delta_{\delta_{x}}) = \delta_{x}  & \text{, if } \perp_1 \neq x \neq \perp_2.
\end{cases}
\]
Hence $\phi: \LL \naturalto \MM$ is a map of monads. 

To prove that $\phi$ is a strong map of monads, we need to show that for any dcpo's $X$ and $Y$, 
\[ \tau_{XY}\circ (\phi_{X}\times \id_{Y}) =  \phi_{XY}\circ \tau^\LL_{XY} \colon X_{\perp}\times Y \to \MM(X\times Y).\]
The strength $\tau$ of $\MM$ at $(X, Y)$ is defined as follows:
\[\tau_{XY}\colon \MM X \times Y \to \MM(X\times Y)\colon (\nu, y) \mapsto \lambda U. \int_{x\in X} \chi_{U}(x, y)d\nu,\]
where $\chi_{U}$ is the characteristic function of $U\in \sigma(X\times Y)$, i.e., $\chi_{U}(x, y) = 1$ if $(x, y)\in U$ and  $\chi_{U}(x, y) = 0$, otherwise.  
Now we perform the following computation
\[  
\tau_{XY}\circ (\phi_{X}\times \id_{Y})(x, y) =  
\begin{cases}
\tau_{XY} ({\mathbf 0}_{X}, y) = \lambda U. \int_{x\in X}\chi_{U}(x, y)d{\mathbf 0}_{X} = \lambda U. 0 = {\mathbf 0}_{X\times Y} & \text{, if } x = \perp\\
\tau_{XY} (\delta_{x}, y) =  \lambda U. \int_{x\in X}\chi_{U}(x, y)d\delta_{x} = \lambda U.\chi_{U}(x, y) = \delta_{(x, y)}    & \text{, if }  x \neq \perp
\end{cases}
\]
and 
\[ 
\phi_{XY}\circ \tau^\LL_{XY}(x, y) =  
\begin{cases}
\phi_{XY}(\perp)= {\mathbf 0}_{X\times Y} & \text{, if } x = \perp\\
\phi_{XY}((x, y)) = \delta_{(x, y)}    & \text{, if }  x \neq \perp
\end{cases}
\]
which concludes the proof.
\end{proof}

Recall that any map of monads induces a functor between the corresponding Kleisli categories of the two monads (see \cite[Exercise 5.2.1]{jacobs-coalgebra}).
This allows us to show the next corollary.

\begin{corollary}
\label{cor:preservation}
The functor $\FF : \DCPO_\LL \to \DCPO_\MM$, induced by $\phi: \LL \naturalto \MM,$ and defined by:
\begin{align*}
\FF X &\defeq X \\
\FF (f: X \pto Y) &\defeq \phi_Y \circ f
\end{align*}
\emph{strictly} preserves the monoidal and coproduct structures in the sense that
the following equalities:
\begin{align*}
\FF(X \otimes Y) &= \FF X \ktimes \FF Y & \FF(X \oplus Y) &= \FF X \kplus \FF Y \\
\FF(f \otimes g) &= \FF f \ktimes \FF g & \FF(f \oplus g) &= \FF f \kplus \FF g 
\end{align*}
hold.
\end{corollary}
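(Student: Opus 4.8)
The plan is to reduce all four equalities to two ingredients already available: the fact, established in Proposition~\ref{prop:map-of-monads}, that $\phi\colon\LL\naturalto\MM$ is a \emph{strong} map of monads, and the explicit descriptions of the Kleisli tensor and coproduct recorded in Section~\ref{sec:model}. I would first dispose of the object-level equalities. In each of the two Kleisli categories the induced tensor on objects is the underlying $\DCPO$-product and the induced coproduct on objects is the underlying $\DCPO$-coproduct; since $\FF$ is the identity on objects we get $\FF(X\otimes Y)=X\times Y=\FF X\ktimes\FF Y$ and $\FF(X\oplus Y)=X+Y=\FF X\kplus\FF Y$ on the nose.

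For the coproduct equality on morphisms I would use only that $\phi$ is a natural transformation. Writing $f\colon X\pto X'$ and $g\colon Y\pto Y'$, the Kleisli coproducts are $f\oplus g=[\LL(\mathrm{in}_{X'})\circ f,\ \LL(\mathrm{in}_{Y'})\circ g]$ and, in the target category, $\FF f\kplus\FF g=[\MM(\mathrm{in}_{X'})\circ\FF f,\ \MM(\mathrm{in}_{Y'})\circ\FF g]$. Post-composing a copairing with $\phi_{X'+Y'}$ distributes over the two components by the universal property of the coproduct, and then naturality of $\phi$ at $\mathrm{in}_{X'}$ and $\mathrm{in}_{Y'}$ rewrites each $\phi\circ\LL(\mathrm{in})$ as $\MM(\mathrm{in})\circ\phi$. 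This turns $\FF(f\oplus g)=\phi_{X'+Y'}\circ(f\oplus g)$ into exactly $[\MM(\mathrm{in}_{X'})\circ\FF f,\ \MM(\mathrm{in}_{Y'})\circ\FF g]=\FF f\kplus\FF g$.

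The monoidal equality on morphisms is the only step that uses strength and carries the real content. The Kleisli tensor of morphisms factors through the double strength: $f\ktimes g=\mathrm{dst}^{\MM}\circ(f\times g)$ and $f\otimes g=\mathrm{dst}^{\LL}\circ(f\times g)$, where $\mathrm{dst}$ is assembled from the strength, costrength and multiplication of the corresponding strong commutative monad. Since $\phi$ preserves both the strength and the multiplication, and both monads are commutative (so the two composites defining $\mathrm{dst}$ coincide), $\phi$ commutes with the double strengths, i.e. $\phi_{X'\times Y'}\circ\mathrm{dst}^{\LL}_{X',Y'}=\mathrm{dst}^{\MM}_{X',Y'}\circ(\phi_{X'}\times\phi_{Y'})$. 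Pre-composing this identity with $f\times g$ gives $\FF(f\otimes g)=\phi\circ\mathrm{dst}^{\LL}\circ(f\times g)=\mathrm{dst}^{\MM}\circ(\phi\times\phi)\circ(f\times g)=\FF f\ktimes\FF g$.

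I expect the main obstacle to be verifying cleanly the double-strength compatibility $\phi\circ\mathrm{dst}^{\LL}=\mathrm{dst}^{\MM}\circ(\phi\times\phi)$, because $\mathrm{dst}$ is a composite of strength, functor action and $\mu$ and one must track that $\phi$ preserves each ingredient; the safest route is a pointwise check on $(x,y)$ that splits on whether the $\LL$-values are $\perp$. On one side $\FF(f\otimes g)(x,y)$ equals $\mathbf 0_{X'\times Y'}$ when $f(x)=\perp$ or $g(y)=\perp$ and equals $\delta_{(f(x),g(y))}$ otherwise; on the other side $(\FF f\ktimes\FF g)(x,y)=\phi(f(x))\otimes\phi(g(y))$, which collapses to the two facts $\mathbf 0\otimes\xi=\xi\otimes\mathbf 0=\mathbf 0_{X'\times Y'}$ and $\delta_{a}\otimes\delta_{b}=\delta_{(a,b)}$ for the tensor of valuations of Remark~\ref{remark:tensorofvaluations}. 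Both facts are again instances of $\phi$ preserving the strength, so the two descriptions agree case by case, completing the argument.
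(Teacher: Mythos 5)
Your proposal is correct and follows essentially the route the paper intends: the paper's own proof is just the one-line remark that the claim ``follows by canonical categorical arguments and is just a straightforward verification,'' and your argument is precisely that verification spelled out (identity on objects for the object equalities, naturality of $\phi$ plus the coproduct's universal property for $\kplus$, and compatibility of $\phi$ with the double strength, checked pointwise by case-splitting on $\perp$, for $\ktimes$). The only cosmetic quibble is that the final two computations ($\mathbf 0\otimes\xi=\mathbf 0$ and $\delta_a\otimes\delta_b=\delta_{(a,b)}$) are most honestly justified as direct Choquet-integral calculations rather than as ``instances of $\phi$ preserving the strength,'' but this does not affect correctness.
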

\begin{proof}
This follows by canonical categorical arguments and is just a straightforward verification.
\end{proof}

Before we may prove our next proposition, let us recall an important result from \cite{smyth-plotkin:domain-equations}.
\begin{proposition}
\label{prop:functor-on-embeddings}
Let $\AAA$, $\BB$ and $\CC$ be $\DCPO$-enriched categories. Assume further that $\AAA$ and $\BB$ have all $\omega$-colimits (or all $\omega^\op$-limits). If $\TTT : \AAA^\op \times \BB \to \CC$ is a $\dcpo$-enriched functor,
then the assignment
\begin{align*}
\TTT^E \colon \AAA_e \times \BB_e &\to \CC_e \\
\TTT^E(A,B) &\defeq \TTT(A,B) \\
\TTT^E(e_1, e_2) &\defeq \TTT(e_1^p, e_2)
\end{align*}
defines a \emph{covariant} $\omega$-cocontinuous functor.
\end{proposition}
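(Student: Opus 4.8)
The plan is to follow the classical limit--colimit coincidence argument of Smyth and Plotkin. There are three things to verify: that $\TTT^E$ sends embeddings to embeddings (so that it is well defined into $\CC_e$), that it is a covariant functor, and that it is $\omega$-cocontinuous. The first two are formal consequences of $\TTT$ being a $\DCPO$-enriched functor, and the real content lies in the third.

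First I would show that $\TTT^E(e_1,e_2)=\TTT(e_1^p,e_2)$ is an embedding with projection $\TTT(e_1,e_2^p)$. Writing composition in $\AAA^\op\times\BB$ explicitly, the contravariance in the first coordinate turns the embedding law $e_1^p\circ e_1=\id$ into an equality and the law $e_1\circ e_1^p\leq\id$ into an inequality in the relevant hom-dcpo, so that $\TTT(e_1,e_2^p)\circ\TTT(e_1^p,e_2)=\TTT(\id,\id)=\id$ on the one side, while $\TTT(e_1^p,e_2)\circ\TTT(e_1,e_2^p)=\TTT(e_1\circ e_1^p,\ e_2\circ e_2^p)\leq\TTT(\id,\id)=\id$ on the other, the inequality using that the hom-action of $\TTT$ is monotone and that $\AAA^\op$ carries the same enrichment order as $\AAA$. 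Covariant functoriality then follows from $\id^p=\id$ and from the fact that projections compose contravariantly, $(e_1'\circ e_1)^p=e_1^p\circ(e_1')^p$, so that both $\TTT^E(e_1'\circ e_1,e_2'\circ e_2)$ and $\TTT^E(e_1',e_2')\circ\TTT^E(e_1,e_2)$ reduce to $\TTT(e_1^p\circ(e_1')^p,\ e_2'\circ e_2)$ once the composite in $\AAA^\op\times\BB$ is unwound.

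For $\omega$-cocontinuity I would invoke the Smyth--Plotkin characterisation of colimits in a category of embeddings: a cocone $(\mu_n\colon D_n\to D)$ of embeddings over an $\omega$-chain is colimiting in $\CC_e$ exactly when $\bigsqcup_n\mu_n\circ\mu_n^p=\id_D$. Given $\omega$-chains $(e_n)$ in $\AAA_e$ and $(d_n)$ in $\BB_e$ with colimit cocones $(\alpha_n\colon A_n\to A_\infty)$ and $(\beta_n\colon B_n\to B_\infty)$, these satisfy $\bigsqcup_n\alpha_n\circ\alpha_n^p=\id_{A_\infty}$ and $\bigsqcup_n\beta_n\circ\beta_n^p=\id_{B_\infty}$. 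Applying $\TTT^E$ produces the cocone $\mu_n\defeq\TTT(\alpha_n^p,\beta_n)$ over the image chain, with $\mu_n^p=\TTT(\alpha_n,\beta_n^p)$, and the same computation of composites in $\AAA^\op\times\BB$ as above gives $\mu_n\circ\mu_n^p=\TTT(\alpha_n\circ\alpha_n^p,\ \beta_n\circ\beta_n^p)$. As $n$ varies this is a family of endomorphisms of the \emph{fixed} object $(A_\infty,B_\infty)$, hence an ascending chain in the single hom-dcpo $(\AAA^\op\times\BB)\big((A_\infty,B_\infty),(A_\infty,B_\infty)\big)$ whose supremum is $(\id_{A_\infty},\id_{B_\infty})$. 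Since $\TTT$ is $\DCPO$-enriched, its hom-action is Scott-continuous, so $\bigsqcup_n\mu_n\circ\mu_n^p=\TTT(\id_{A_\infty},\id_{B_\infty})=\id_{\TTT(A_\infty,B_\infty)}$, which is precisely the colimit criterion.

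The main obstacle is this last step, and the key point that makes it go through is the observation that the chain $(\alpha_n\circ\alpha_n^p,\ \beta_n\circ\beta_n^p)_n$ lives in one fixed hom-dcpo, so that enrichment can legitimately be used to commute $\TTT$ with the supremum. The remaining bookkeeping --- the contravariance of $\op$ in the first coordinate and the appeal to the coincidence theorem to pass from $\bigsqcup_n\mu_n\circ\mu_n^p=\id$ to the assertion that the image cocone is colimiting --- is routine, and the hypothesis may equivalently be stated via $\omega^\op$-limits, since these are identified with $\omega$-colimits of embeddings by the same theorem.
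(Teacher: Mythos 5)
Your proposal is correct and follows essentially the same route as the paper, whose proof simply cites Smyth--Plotkin (Theorem 2, its corollary, and Theorem 3): you have just written out in full the standard argument behind those results, namely that $\TTT(e_1^p,e_2)$ is an embedding with projection $\TTT(e_1,e_2^p)$ and that local continuity of $\TTT$ lets one pass the $O$-colimit criterion $\bigsqcup_n\mu_n\circ\mu_n^p=\id$ through the functor. The only point worth making explicit is that the hypothesis that $\AAA$ and $\BB$ have all $\omega$-colimits is what guarantees the given colimiting cocones in $\AAA_e$ and $\BB_e$ satisfy this criterion in the first place, which you use but do not flag.
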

\begin{proof}
This follows by combining several results from \cite{smyth-plotkin:domain-equations}, namely Theorem 2, the corollary after it and Theorem 3.
\end{proof}

Therefore, by trivialising the category $\AAA$, we may obtain results for
purely covariant functors. When neither category is trivialised, this allows us
to interpret mixed-variance functors (such as function space) as covariant
functors on subcategories of embeddings.

\begin{proposition}
\label{app:omega-functors}
The category $\PD_e$ has an initial object and all $\omega$-colimits and the following assignments:
\begin{align*}
\ktimes_e \colon \PD_e \times \PD_e &\to \PD_e  & \kplus_e \colon \PD_e \times \PD_e &\to \PD_e \\
X \ktimes_e Y &\defeq X \ktimes Y & X \kplus_e Y &\defeq X \kplus Y \\
e_1 \ktimes_e e_2 &\defeq e_1 \ktimes e_2 & e_1 \kplus_e e_2 &\defeq e_1 \kplus e_2
\end{align*}
\begin{align*}
[\kto]_e^\JJ \colon \PD_e \times \PD_e &\to \PD_e \\
[X \kto Y]_e^\JJ &\defeq \JJ [ X \kto Y] \\
[e_1 \kto e_2]_e^\JJ &\defeq \JJ [e_1^p \kto e_2]
\end{align*}
define \emph{covariant} $\omega$-cocontinuous functors on $\PD_e$.
\end{proposition}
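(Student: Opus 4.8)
The plan is to derive everything from the Smyth--Plotkin machinery recalled in Proposition~\ref{prop:functor-on-embeddings}, together with the structural isomorphism $\PD \cong \DCPO_\LL \cong \dcpobs$ supplied by Proposition~\ref{prop:d-iso} and Proposition~\ref{prop:compactness}. The genuinely hard analytic content --- that a $\dcpo$-enriched functor carries embeddings to embeddings and induces an $\omega$-cocontinuous functor on the embedding subcategories --- is entirely outsourced to Proposition~\ref{prop:functor-on-embeddings}; so the real task is the bookkeeping needed to cast each of the three operations into the form required by that proposition, and this is where I expect the only obstacles to lie.

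First I would dispatch the ambient structure. Since an isomorphism of $\dcpo$-enriched categories sends embeddings to embeddings, Proposition~\ref{prop:d-iso} restricts to an isomorphism $\PDe \cong (\dcpobs)_e$. It is classical, via the limit--colimit coincidence \cite{smyth-plotkin:domain-equations}, that $(\dcpobs)_e$ has an initial object (the one-point dcpo) and all $\omega$-colimits, and transporting along the isomorphism yields the same for $\PDe$.

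Next I would verify that each operation restricts to a $\dcpo$-enriched functor on $\PD$ with the stated variances; this is the one content-bearing step. For $\ktimes$ and $\kplus$ it follows from Corollary~\ref{cor:preservation}: under $\FF$ they correspond to the monoidal product $\otimes$ and coproduct $\oplus$ of $\DCPO_\LL$, so they send pairs of $\PD$-morphisms to $\PD$-morphisms, and being the restrictions of the $\dcpo$-enriched Kleisli bifunctors on $\KL$ they remain $\dcpo$-enriched and covariant in both arguments. For the hom I would set $\TTT \defeq \JJ[- \kto -]$: the bifunctor $[- \kto -] \colon \KL^\op \times \KL \to \DCPO$ is $\dcpo$-enriched, and postcomposing with $\JJ \colon \DCPO \to \KL$ (whose image lies in $\TD \subseteq \PD$) and restricting the domain to $\PD^\op \times \PD$ produces a $\dcpo$-enriched functor $\PD^\op \times \PD \to \PD$, contravariant in the first and covariant in the second variable.

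With the hypotheses in place I would invoke Proposition~\ref{prop:functor-on-embeddings}. For the covariant bifunctors $\ktimes$ and $\kplus$ I would trivialise the contravariant slot, taking $\AAA = \mathbf 1$ and $\BB = \PD \times \PD$ and using the identity $(\PD \times \PD)_e = \PDe \times \PDe$, so that the proposition directly delivers covariant $\omega$-cocontinuous $\ktimes_e$ and $\kplus_e$. For the hom I would take $\AAA = \BB = \PD$ with $\TTT$ as above, whence $\TTT^E(e_1, e_2) = \TTT(e_1^p, e_2) = \JJ[e_1^p \kto e_2] = [e_1 \kto e_2]_e^\JJ$, giving the covariant $\omega$-cocontinuous bifunctor $[\kto]_e^\JJ$. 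The main obstacle I anticipate is not any isolated difficult estimate but precisely this setup: confirming closure of $\PD$ under the three operations, the identification $(\PD \times \PD)_e = \PDe \times \PDe$, and that the hom truly lands in $\PD$ once $\JJ$ is applied --- the contravariance there being exactly what the $(-)^p$ in Proposition~\ref{prop:functor-on-embeddings} is designed to absorb.
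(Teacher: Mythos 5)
Your proposal is correct and follows essentially the same route as the paper: restrict $\ktimes$, $\kplus$ and $\JJ\circ[-\kto-]$ to $\PD$ via Corollary~\ref{cor:preservation} and the factorisation of $\JJ$ through $\FF$, then apply Proposition~\ref{prop:functor-on-embeddings}, trivialising the contravariant slot for the first two and letting $(-)^p$ absorb the contravariance of the hom. The only cosmetic difference is that you obtain the initial object and $\omega$-colimits of $\PDe$ by transporting along $\PDe\cong(\dcpobs)_e$, whereas the paper exhibits $\varnothing$ directly as a zero object whose zero maps are embeddings and projections; both are fine.
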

\begin{proof}
The empty dcpo $\varnothing$ is a zero object in $\PD$ such that each map $e : \varnothing \kto X $ is an embedding and each map $p : X \kto \varnothing $ is a projection.
Therefore, $\varnothing$ is initial in $\PD_e$. The existence of all $\omega$-colimits in $\PD_e$ follows from the existence of all $\omega$-colimits of $\PD$ together with results from \cite{smyth-plotkin:domain-equations}.

Next, we show that $\ktimes \colon \KL \times \KL \to \KL$ restricts to a
functor $\ktimes^\PD \colon \PD \times \PD \to \PD.$ On objects, this is
obvious. For morphisms, observe that the morphisms of $\PD$ are exactly those
which are in the image of $\FF$. Therefore $\ktimes^\PD$ restricts as indicated
because $\FF f \ktimes \FF g = \FF(f \otimes g)$ by Corollary
\ref{cor:preservation}. Then, by Proposition \ref{prop:functor-on-embeddings}, it follows that
$(\ktimes^\PD)^E \colon \PD_e \times \PD_e \to \PD_e$ is a covariant $\omega$-cocontinuous functor.
However, by definition, $\ktimes_e = (\ktimes^\PD)^E$ which shows the result for $\ktimes_e$.

Exactly the same argument (swapping $\ktimes$ for $\kplus$ and $\otimes$ for $\oplus$) shows the
result for $\kplus_e$.

For function spaces, consider the functor $\JJ \circ [ \kto ] : \KL^\op \times \KL \to \KL.$ This composition (co)restricts to a functor
$(\JJ \circ [\kto])^\PD \colon \PD^\op \times \PD \to \PD,$ because $\JJ(f \kto g) = \eta \circ (f \kto g) = \phi \circ \eta^\LL \circ (f \kto g) = \FF(\eta^\LL \circ (f \kto g)).$ 
By Proposition \ref{prop:functor-on-embeddings}, it follows $((\JJ \circ [\kto])^\PD)^E \colon \PD_e \times \PD_e \to \PD_e$ is a covariant $\omega$-cocontinuous functor.
Finally, by definition, $[\kto]_e^\JJ = ((\JJ \circ [\kto])^\PD)^E$ which concludes the proof.
\end{proof}

We conclude the appendix with a proof that the subcategories $\TD$ and $\PD$ contain the same isomorphisms.

\begin{proposition}
Every isomorphism of $\PD$ is also an isomorphism of $\TD$.
\end{proposition}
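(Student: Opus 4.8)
The plan is to show that an isomorphism of $\PD$ is automatically \emph{total}, hence a morphism of $\TD$, and that the same holds for its inverse; since $\TD$ and $\PD$ are both full-on-objects subcategories of $\KL$ with the same identities and the same (Kleisli) composition, a mutually inverse pair lying in $\TD$ is an inverse pair in $\TD$. So let $f \colon X \kto Y$ be an isomorphism of $\PD$ with inverse $g \colon Y \kto X$, so that $g \kcirc f = \kid_X$ and $f \kcirc g = \kid_Y$ hold in $\KL$. By Definition~\ref{def:D} we may write $f = \phi_Y \circ f'$ and $g = \phi_X \circ g'$ for Scott-continuous maps $f' \colon X \to Y_\perp$ and $g' \colon Y \to X_\perp$, and I recall that $g \kcirc f = g^{\ddagger} \circ f$, where $g^{\ddagger}(\nu) = \lambda U.\, \int_{y} g(y)(U)\, d\nu$.

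The key observation is that the Kleisli extension sends the zero valuation to the zero valuation: since integration against ${\mathbf 0}_{Y}$ vanishes, $g^{\ddagger}({\mathbf 0}_{Y}) = {\mathbf 0}_{X}$. I use this to force totality. Suppose $f'(x) = \perp$ for some $x \in X$. Then $f(x) = \phi_Y(\perp) = {\mathbf 0}_{Y}$, so $(g \kcirc f)(x) = g^{\ddagger}(f(x)) = g^{\ddagger}({\mathbf 0}_{Y}) = {\mathbf 0}_{X}$. But $(g \kcirc f)(x) = \kid_X(x) = \eta_X(x) = \delta_x \neq {\mathbf 0}_{X}$, a contradiction. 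Hence $f'$ never attains $\perp$; it therefore co-restricts to a Scott-continuous map $\overline{f'} \colon X \to Y$, and since $\phi_Y(y) = \delta_y = \eta_Y(y)$ for $y \neq \perp$, we get $f = \phi_Y \circ f' = \eta_Y \circ \overline{f'} = \JJ(\overline{f'})$, so $f \in \TD$. Applying the symmetric argument to $f \kcirc g = \kid_Y$ shows $g'$ never attains $\perp$, so $g \in \TD$ as well.

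Finally, because $f$ and $g$ are mutually inverse in $\KL$ and both are morphisms of $\TD$ (whose identities $\kid_X, \kid_Y$ also lie in $\TD$), they are mutually inverse in $\TD$, so $f$ is an isomorphism of $\TD$; the case of $\PD_e$ follows at once, as an isomorphism of $\PD_e$ is in particular one of $\PD$. The only routine point to check is that the co-restriction $\overline{f'}$ of a $\perp$-avoiding Scott-continuous map into $Y_\perp$ is Scott-continuous into $Y$, which holds because a nonempty directed subset of $Y$ has the same supremum whether computed in $Y$ or in $Y_\perp$. I do not expect a genuine obstacle here: the whole argument rests on the single computation $g^{\ddagger}({\mathbf 0}_{Y}) = {\mathbf 0}_{X}$, which is exactly what rules out partiality. (Alternatively, one could transport the statement through the isomorphisms $\PD \cong \DCPO_\LL$ and $\TD \cong \DCPO$ and observe that an isomorphism of $\DCPO_\LL$ is total because $g^{\dagger}(\perp) = \perp \neq \eta^\LL_X(x)$; the direct argument above avoids checking compatibility of these isomorphisms with the inclusions.)
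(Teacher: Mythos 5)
Your proof is correct, but it takes a different (more direct) route than the paper. The paper's proof transports the statement along the commuting square relating $\JJ^\LL \colon \DCPO \to \DCPO_\LL$ to the inclusion $\TD \hookrightarrow \PD$ via the isomorphisms $\DCPO \cong \TD$ and $\DCPO_\LL \cong \PD$, citing the well-known fact that every isomorphism of $\DCPO_\LL$ is in the image of $\JJ^\LL$, and then finishes with a diagram chase; this is exactly the alternative you mention and set aside in your final parenthetical. You instead verify totality directly in $\KL$: the single computation $g^{\ddagger}({\mathbf 0}_Y) = {\mathbf 0}_X$ (integration against the zero valuation vanishes) forces $f'$ to avoid $\perp$, since otherwise $(g \kcirc f)(x) = {\mathbf 0}_X \neq \delta_x = \kid_X(x)$, and symmetrically for $g$; fullness-on-objects of $\TD$ in $\KL$ then makes the mutually inverse pair an isomorphism of $\TD$. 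Your route buys self-containedness (no appeal to the commuting square or to the corresponding fact about $\DCPO_\LL$) at the cost of the small extra check, which you correctly supply, that the corestriction $\overline{f'} \colon X \to Y$ of a $\perp$-avoiding map remains Scott-continuous; the paper's route buys brevity by outsourcing the totality argument to the lift monad, where it is standard. Both arguments are sound.
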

\begin{proof}
Observe that, by definition, the morphisms of $\TD$ are those in the image of $\JJ \colon \DCPO \to \KL$ and the morphisms of $\PD$ are those in the image of $\FF \colon \DCPO_\LL \to \KL.$
Then, it is easy to see that the following diagram:
\[ \stikz{dcpo-lift-v.tikz} \]
commutes, where:
\begin{itemize}
\item the top arrow is the subcategory inclusion $\TD \hookrightarrow \PD$;
\item the left vertical isomorphism is the corestriction of $\JJ$ to $\TD$;
\item the right vertical isomorphism is the corestriction of $\FF$ to $\PD$;
\item the functor $\JJ^\LL$ is the Kleisli inclusion of $\DCPO$ into $\DCPO_\LL$, defined by $\JJ^\LL(X) \defeq X$ and $\JJ^\LL(f) \defeq \eta^\LL \circ f.$
\end{itemize}
It is well-known (and easy to prove) that if $f: X \pto Y$ in $\DCPO_\LL$ is an isomorphism, then there exists $f' : X \to Y$ in $\DCPO$ which is also an isomorphism and $f = \JJ^\LL (f').$
The proof is finished by a simple diagram chase using this fact.
\end{proof}

\newpage
\section{ Products, coproducts and Kleisli composition preserve barycentric sums of functions}
\label{app:timesplusconvex}
The  monoidal product $\_\ktimes \_ \colon \KL \times \KL\to \KL$ is defined as: for dcpo's $A$ and $B$, $A\ktimes B\defeq A\times B$, and  for Scott-continuous maps $f\colon A\to \MM C$ and $g\colon B\to \MM D$, $f\ktimes g \defeq \lambda(a, b). f(a)\otimes g(b)$, where $f(a)\otimes g(b)$ is defined in Remark~\ref{remark:tensorofvaluations}. For $f, h\colon A\to \MM C$ and $r\in [0, 1]$, $f+_{r}h$ is defined pointwise, that is, $(f+_{r}h)(a) = f(a)+_{r}h(a) = rf(a) + (1-r)h(a).$
It follows from Lemma~\ref{lemma:MDisconvexclosed} that $f+_{r}h$ is well-defined and obviously $f+_{r}h$ is Scott-continuous, hence $f+_{r}h\in [A\to \MM C]$. 

\begin{proposition}
For $f, h\colon A\to \MM C$, $g\colon B\to \MM D$ and $r\in [0, 1]$, we have
\begin{enumerate}
\item $(f+_{r}h)\ktimes g = f\ktimes g +_{r} h\ktimes g \colon A\ktimes B\to \MM(C\ktimes D)$;
\item $g\ktimes (f+_{r}h) = g\ktimes f +_{r} g\ktimes h \colon B\ktimes A\to \MM(D\ktimes C)$. 
\end{enumerate}

\end{proposition}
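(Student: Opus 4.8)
The plan is to reduce both identities to a bilinearity property of the double strength (the tensor $\otimes$ of valuations from Remark~\ref{remark:tensorofvaluations}) and then to verify that property by unfolding its iterated-integral definition. Since both $\ktimes$ and the convex sum $+_r$ of maps are defined \emph{pointwise}, evaluating each side of the two claimed equations at an arbitrary point reduces everything to the following statement about valuations: for $\nu_1, \nu_2 \in \MM C$, $\xi \in \MM D$ and $r \in [0,1]$,
\[
(\nu_1 +_r \nu_2) \otimes \xi = (\nu_1 \otimes \xi) +_r (\nu_2 \otimes \xi)
\quad\text{and}\quad
\xi \otimes (\nu_1 +_r \nu_2) = (\xi \otimes \nu_1) +_r (\xi \otimes \nu_2).
\]
All valuations appearing here lie in the relevant $\MM(\,\cdot\,)$ by Lemma~\ref{lemma:MDisconvexclosed} together with Remark~\ref{remark:tensorofvaluations}, so each expression is well-defined; recall also that $\nu_1 +_r \nu_2 = r\nu_1 + (1-r)\nu_2$ in $\MM C$. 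Taking $\nu_1 = f(a)$, $\nu_2 = h(a)$ and $\xi = g(b)$ then yields the two parts of the proposition directly.

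To establish these two valuation identities I would fix an open set $U$ and expand $\otimes$ according to Remark~\ref{remark:tensorofvaluations}, recalling that in $\mu \otimes \rho$ the first argument $\mu$ is the \emph{inner} integration measure and the second argument $\rho$ is the \emph{outer} one. The second identity is immediate: there the convex combination sits in the outer measure, the inner integral $\int \chi_U \, d\xi$ is a fixed function of the outer variable independent of $\nu_1,\nu_2$, and a single application of Proposition~\ref{prop:sumproperty}, Item~\ref{sump4} (linearity of the integral in the measure) splits the outer integral and gives the claim. For the first identity the convex combination sits in the inner measure, so I would first apply Item~\ref{sump4} to the inner integral $\int \chi_U \, d(\nu_1 +_r \nu_2)$, producing an outer integrand that is the convex combination $r G_1 + (1-r) G_2$ of the two inner integrals $G_i = \int \chi_U \, d\nu_i$; I would then invoke Proposition~\ref{prop:sumproperty}, Item~\ref{sump44} (linearity of the integral in the \emph{integrand}, which applies since $r + (1-r) = 1 \le 1$) to split the outer integral, arriving at $r(\nu_1 \otimes \xi)(U) + (1-r)(\nu_2 \otimes \xi)(U)$, as required.

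The main subtlety I anticipate is exactly this asymmetry between the two components. Because $\otimes$ commits to a fixed order of integration, linearity in the inner measure alone does not close the first identity: after splitting the inner integral one is left with a convex sum of functions inside the outer integral, and finishing requires the separate linearity-in-the-integrand property. It is worth stressing that no appeal to the Fubini-type result (Theorem~\ref{theorem:VPFubini}) or to commutativity is needed here; the proposition is purely a linearity statement about the double strength in the order of integration fixed by its definition. Once the two valuation identities are in hand, re-attaching the pointwise abstractions $\lambda(a,b)$ and $\lambda(b,a)$ recovers the displayed equalities of Kleisli maps.
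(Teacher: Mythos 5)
Your proposal is correct and follows essentially the same route as the paper: evaluate both sides pointwise, unfold the iterated-integral definition of $\otimes$ from Remark~\ref{remark:tensorofvaluations}, and split the integrals using Proposition~\ref{prop:sumproperty}, Items~\ref{sump4} and~\ref{sump44}. In particular you correctly identify the asymmetry the paper's computation exhibits --- the inner-measure case needs linearity in the integrand (Item~\ref{sump44}) on top of linearity in the measure, while the outer-measure case needs only Item~\ref{sump4} --- and you are right that no Fubini-type result is involved.
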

\begin{proof}
We only prove Item~1, the second item can be proved similarly. \\
For each $(a, b)\in A\ktimes B$, we have the following:
\begin{align*}
 &((f+_{r}h)\ktimes g) (a, b)  	 \\&= (f+_{r}h)(a) \otimes g(b) &\text{definition of $\_\ktimes \_$}\\
                                            &= (f(a)+_{r}h(a))\otimes g(b) & \text{definition of $f+_{r}h$}\\
                                            &= \lambda U\in \sigma(C\ktimes D). \int_{y\in D}\int_{x\in C}\chi_{U}(x, y)d(f(a)+_{r}h(a))d g(b) &\text{definition of the operation~$\otimes$}\\ 
                                            &= \lambda U. \int_{y\in D}( \int_{x\in C}\chi_{U}(x, y)df(a) +_{r} \int_{x\in C}\chi_{U}(x, y)dh(a))d g(b)& \text{by Proposition~\ref{prop:sumproperty}, Item~\ref{sump4}}\\ 
                                            &= \lambda U. \int_{y\in D} \int_{x\in C}\chi_{U}(x, y)df(a) d g(b) +_{r}  \int_{y\in D}\int_{x\in C}\chi_{U}(x, y)d h(a)d g(b)& \text{by Proposition~\ref{prop:sumproperty}, Item~\ref{sump44}}\\     
                                            &= \lambda U. \int_{y\in D} \int_{x\in C}\chi_{U}(x, y)df(a) d g(b) +_{r} \lambda U. \int_{y\in D}\int_{x\in C}\chi_{U}(x, y)d h(a)d g(b)& \text{by definition of $+_{r}$ of valuations}\\ 
                                            &= (f \ktimes g)(a, b)  +_{r} (h\ktimes g)(a, b)  & \text{definition of $\_\ktimes \_$}\\
                                            &=(f \ktimes g +_{r} h\ktimes g)(a, b)  & \text{definition of $+_{r}$ of functions.}  
\end{align*}
Hence the proof is completed. 
\end{proof}

The functor $\_\kplus \_ \colon \KL \times \KL\to \KL$ is defined as: for dcpo's $A$ and $B$, $A\kplus B\defeq A+ B$, and  for Scott-continuous maps $f\colon A\to \MM C$ and $g\colon B\to \MM D$, $f\kplus g = [\MM(i_{C}) \circ f, \MM(i_{D}) \circ g]$, where $i_{C}\colon C\to C+D$ and $i_{D}\colon D\to C+D$ are the obvious injections. 
\begin{proposition}
For $f, h\colon A\to \MM C, g\colon B\to \MM D$ and $r\in [0, 1]$, we have
\begin{enumerate}
\item $(f+_{r}h)\kplus g = (f\kplus g) +_{r} (h\kplus g)$;
\item $g\kplus (f+_{r}h) = (g\kplus f) +_{r} (g\kplus h)$. 
\end{enumerate}
\end{proposition}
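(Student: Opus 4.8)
The plan is to prove both equalities by evaluating the two sides as Scott-continuous functions on the disjoint union $A \kplus B = A + B$, exploiting the fact that a morphism out of a coproduct in $\DCPO$ (hence in $\KL$) is a Scott-continuous function on $A+B$ and is therefore determined by its restrictions to the two summands. I would treat Item~1 in detail; Item~2 is entirely symmetric, with the roles of the two summands exchanged.

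First I would isolate the two ingredients the argument rests on. By Example~\ref{exa:MDiskegel}, for any Scott-continuous map the induced functor action $\MM(-)$ is \emph{linear} with respect to the Kegelspitze structure, so in particular $\MM(i_{C})\colon \MM C \to \MM(C+D)$ satisfies $\MM(i_{C})(\nu +_{r} \nu') = \MM(i_{C})(\nu) +_{r} \MM(i_{C})(\nu')$ for all $\nu, \nu' \in \MM C$. Second, the barycentric-algebra axiom $\nu +_{r} \nu = \nu$ (idempotence) holds in $\MM(C+D)$. Recall also that $+_{r}$ of maps into $\MM(C+D)$ is computed pointwise, as is the copairing $[u,v]$, which sends $\emph{in}_1\,a \mapsto u(a)$ and $\emph{in}_2\,b \mapsto v(b)$.

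Then I would unfold and compare pointwise. The left-hand side is $(f +_{r} h)\kplus g = [\MM(i_{C}) \circ (f +_{r} h),\ \MM(i_{D}) \circ g]$, and the right-hand side is the pointwise $+_{r}$ of $[\MM(i_{C})\circ f,\, \MM(i_{D})\circ g]$ and $[\MM(i_{C})\circ h,\, \MM(i_{D})\circ g]$. On an element $\emph{in}_1\,a$ the left side yields $\MM(i_{C})(f(a) +_{r} h(a))$ and the right side yields $\MM(i_{C})(f(a)) +_{r} \MM(i_{C})(h(a))$; these coincide by linearity of $\MM(i_{C})$. On an element $\emph{in}_2\,b$ the left side yields $\MM(i_{D})(g(b))$, while the right side yields $\MM(i_{D})(g(b)) +_{r} \MM(i_{D})(g(b))$, which collapses to $\MM(i_{D})(g(b))$ by idempotence. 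Since the two functions agree on both summands, they are equal.

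There is no substantive obstacle here: the whole content is linearity of $\MM(i_{C})$ on the left summand together with idempotence of $+_{r}$ on the right summand, packaged into a two-case pointwise check. The only point deserving a little care is bookkeeping — keeping track of which summand $g$ feeds into, and noting that on that summand the two copairings already agree, so the pointwise $+_{r}$ on the right degenerates to a trivial $\nu +_{r} \nu$. The symmetric Item~2 is dispatched by the identical argument with $C$ and $D$ (and the corresponding injections) interchanged.
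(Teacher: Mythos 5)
Your proof is correct and follows essentially the same route as the paper's: a pointwise two-case check on the coproduct, using linearity of $\MM(i_C)$ on the first summand and idempotence $\nu +_r \nu = \nu$ on the second. The only cosmetic difference is that you invoke linearity of $\MM(i_C)$ as a known fact (from Example~\ref{exa:MDiskegel}) where the paper unfolds the definition $\MM(i_C)(\nu) = \lambda U.\,\nu(i_C^{-1}(U))$ and verifies it inline.
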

\begin{proof}
Again, we only prove the first claim as the second can be proved similarly. Let $a\in A$, we perform the following computation:
\begin{align*}
((f+_{r}h)\kplus g )(i_{A}(a)) 	 &= [\MM(i_{C}) \circ (f+_{r}h), \MM(i_{D}) \circ g] (i_{A}(a))   &\text{definition of $\_\kplus \_$}\\
                                            &= \MM(i_{C})((f+_{r}h)(a))  &\text{obvious}\\
                                            &= \MM(i_{C})(f(a) +_{r} h(a))   &\text{definition of $f+_{r}h$} \\
                                            &= \lambda U. (f(a) +_{r} h(a))( i^{-1}_{C}(U)) & \text{definition of $\MM(i_{C})$}   \\
                                            &= \lambda U. f(a)(i^{-1}_{C}(U)) +_{r} h(a)( i^{-1}_{C}(U)) &\text{definition of $f(a) +_{r} h(a)$}\\
                                            &= \lambda U. f(a)(i^{-1}_{C}(U)) +_{r} \lambda U. h(a)( i^{-1}_{C}(U)) &\text{definition of $+_{r}$ of valuations}\\
                                            &= \MM(i_{C})(f(a)) +_{r}  \MM(i_{C})(h(a)) , & \text{definition of $\MM(i_{C})$} \\ 
                                            &=(f\kplus g)(i_{A}(a)) +_{r} (h\kplus g)(i_{A}(a)) &\text{definition of $\_\kplus \_$}\\
                                            &=( (f\kplus g) +_{r} (h\kplus g))(i_{A}(a)). &\text{definition of  $(f\kplus g) +_{r} (h\kplus g)$}
\end{align*}
Moreover, it is easy to see that for $b\in B$, $((f+_{r}h)\kplus g )(i_{B}(b)) = \MM (i_{D}) (g(b)) = \MM (i_{D}) (g(b)) +_{r} \MM (i_{D}) (g(b))= ( (f\kplus g) +_{r} (h\kplus g))(i_{B}(b))$. Hence we finish the proof. 
\end{proof}

Recall that in $\KL$ the Kleisli composition  $\kcirc \colon [A\kto B]\times [B\kto C] \to [A\kto C]$ is given by  
\[
(f, g) \mapsto g\kcirc f = g^{\ddagger} \circ f.
\]

\begin{proposition}
For $f, h\colon A\to \MM B, g, k \colon B\to \MM C$ and $r\in [0, 1]$, we have
\begin{enumerate}
\item $g\kcirc (f+_{r}h) = g\kcirc f +_{r} g\kcirc h$;
\item $(g+_{r} k) \kcirc f = g\kcirc f +_{r} k\kcirc f$.
\end{enumerate}
\end{proposition}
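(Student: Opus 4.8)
The plan is to reduce both identities to the linearity of Choquet integration recorded in Proposition~\ref{prop:sumproperty}. Recall that for $g\colon B\to \MM C$ the Kleisli extension acts by $g^{\ddagger}(\nu) = \lambda U\in\sigma C.\int_{x\in B} g(x)(U)\,d\nu$, and that $g\kcirc f = g^{\ddagger}\circ f$. Both claims are pointwise statements about maps into the Kegelspitze $\MM C$, in which $\nu_1 +_r \nu_2 = r\nu_1 + (1-r)\nu_2$, so it suffices to verify them after evaluating at an arbitrary $a\in A$ and recalling that $+_r$ on the function spaces $[A\kto B]$ is defined pointwise.

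For Item~1, the key observation is that each Kleisli extension $g^{\ddagger}\colon \MM B\to\MM C$ is a \emph{linear} map. Indeed, for $\nu_1,\nu_2\in\MM B$ and $r\in[0,1]$, Proposition~\ref{prop:sumproperty}, Item~\ref{sump4} (linearity of the integral in the valuation argument) gives, for every $U\in\sigma C$,
\[
g^{\ddagger}\big(r\nu_1 + (1-r)\nu_2\big)(U) = \int_{x\in B} g(x)(U)\,d\big(r\nu_1+(1-r)\nu_2\big) = r\,g^{\ddagger}(\nu_1)(U) + (1-r)\,g^{\ddagger}(\nu_2)(U),
\]
so $g^{\ddagger}(\nu_1 +_r \nu_2) = g^{\ddagger}(\nu_1) +_r g^{\ddagger}(\nu_2)$. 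Applying this with $\nu_1 = f(a)$, $\nu_2 = h(a)$ and using $(f+_r h)(a) = f(a) +_r h(a)$ yields $\big(g\kcirc(f+_r h)\big)(a) = g^{\ddagger}\big(f(a)+_r h(a)\big) = g^{\ddagger}(f(a)) +_r g^{\ddagger}(h(a)) = \big(g\kcirc f +_r g\kcirc h\big)(a)$.

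For Item~2, I would instead show that forming the extension is linear in its functional argument, i.e.\ $(g +_r k)^{\ddagger} = g^{\ddagger} +_r k^{\ddagger}$ pointwise. Fixing $\nu\in\MM B$ and $U\in\sigma C$, the integrand is the convex combination $x\mapsto (g(x)+_r k(x))(U) = r\,g(x)(U) + (1-r)\,k(x)(U)$ of the two Scott-continuous functions $x\mapsto g(x)(U)$ and $x\mapsto k(x)(U)$; since $r+(1-r)=1\le 1$, Proposition~\ref{prop:sumproperty}, Item~\ref{sump44} (linearity of the integral in the integrand) gives $(g+_r k)^{\ddagger}(\nu)(U) = r\,g^{\ddagger}(\nu)(U) + (1-r)\,k^{\ddagger}(\nu)(U)$. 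Evaluating at $f(a)$ then gives $\big((g+_r k)\kcirc f\big)(a) = g^{\ddagger}(f(a)) +_r k^{\ddagger}(f(a)) = \big(g\kcirc f +_r k\kcirc f\big)(a)$, as required.

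I expect no serious obstacle here: the entire content is the two complementary linearity properties of Choquet integration (linear in the valuation for Item~1, linear in the integrand for Item~2), both already available from Proposition~\ref{prop:sumproperty}. The only point requiring minor care is the bookkeeping of the pointwise definition of $+_r$ on $[A\kto B]$ versus on the valuations of $\MM C$, together with the well-definedness of all the convex sums involved, which is guaranteed by Lemma~\ref{lemma:MDisconvexclosed}.
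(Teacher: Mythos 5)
Your proof is correct and follows essentially the same route as the paper's: both items are verified pointwise at $a\in A$, with Item~1 reduced to linearity of the Choquet integral in the valuation argument (Proposition~\ref{prop:sumproperty}, Item~\ref{sump4}) and Item~2 to linearity in the integrand (Proposition~\ref{prop:sumproperty}, Item~\ref{sump44}), exactly as in the paper's Appendix proof.
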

\begin{proof}
\begin{enumerate}
\item Let $a\in A$. We have 
\begin{align*}
g\kcirc (f+_{r}h)(a)  &=  (g^{\ddagger}\circ (f+_{r}h)) (a)  &\text{definition of $\kcirc$}\\
				   &=  g^{\ddagger} (f(a)+_{r}h(a))  &\text{definition of $f+_{r}h$}\\
				   &= \lambda U. \int_{x\in B} g(x)(U)d (f(a)+_{r}h(a)) &\text{definition of $g^{\ddagger}$}\\
				   &= \lambda U. \int_{x\in B} g(x)(U)d f(a) +_{r} \lambda U. \int_{x\in B}g(x)(U) d h(a) &\text{by Proposition~\ref{prop:sumproperty}, Item~\ref{sump4}} \\
				   &= g^{\ddagger}( f(a) )+_{r} g^{\ddagger}(h(a)) &\text{definition of $g^{\ddagger}$}\\
				   &= (g\kcirc f +_{r} g\kcirc h)(a). 
\end{align*}
\item Let $a\in A$. We have 
\begin{align*}
((g+_{r} k) \kcirc f) (a)  &= (g+_{r}k)^{\ddagger}(f(a)) &\text{definition of $\kcirc$}\\
				   &= \lambda U. \int_{x\in B} (g+_{r}k)(x)(U) d f(a) &\text{definition of $\_^{\ddagger}$}\\
				   &= \lambda U. \int_{x\in B} g(x)(U)d f(a) +_{r} \lambda U. \int_{x\in B}k(x)(U) d f(a) &\text{by Proposition~\ref{prop:sumproperty}, Item~\ref{sump44}} \\
				   &= g^{\ddagger}( f(a) )+_{r} k^{\ddagger}(f(a)) &\text{definition of $\_^{\ddagger}$}\\
				   &= (g\kcirc f +_{r} k\kcirc f)(a). 
\end{align*}
\end{enumerate}
\end{proof}

\newpage
\section{Proof of Strong Adequacy}
\label{app:adequacy}

The purpose of this appendix is to provide a proof Theorem \ref{thm:strong-adequacy}.
We begin by stating a corollary for the soundness theorem.

\begin{corollary}
\label{cor:soundness-inequality}
For any closed term $\cdot \vdash M: A$, we have:
\[ \lrb M \geq \sum_{V \in \Val(M)} P(M \probto{}_* V) \lrb V . \]
\end{corollary}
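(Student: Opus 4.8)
The plan is to iterate the Soundness Theorem (Theorem~\ref{thm:soundness}) finitely often and then pass to the limit. First I would prove, by induction on $n \geq 0$, the finite approximation
\[ \lrb M \geq \sum_{V \in \Val(M)} P(M \probto{}_{\leq n} V)\, \lrb V \]
for every closed term $\cdot \vdash M : A$. For each fixed $n$ there are only finitely many reduction paths of length $\leq n$, so only finitely many values $V$ have $P(M \probto{}_{\leq n} V) > 0$ and the right-hand side is a genuine \emph{finite} convex sum in the Kegelspitze $\KL(1, \lrb A)$; moreover $\sum_V P(M \probto{}_{\leq n} V) \leq 1$ (a similar easy induction), so the sum is well-formed. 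The base case $n = 0$ is immediate: the sum is $\lrb M$ when $M$ is a value and is the least element $\mathbf 0$ otherwise, so the inequality is trivial.

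For the inductive step with $M$ not a value, I would apply Soundness to rewrite $\lrb M = \sum_{M \probto{p} M'} p\,\lrb{M'}$, invoke the induction hypothesis $\lrb{M'} \geq \sum_{V} P(M' \probto{}_{\leq n} V)\lrb V$ for each one-step reduct $M'$, and use that finite convex sums are monotone in each argument (Definition~\ref{def:convex-sums}) to obtain $\lrb M \geq \sum_{M \probto{p} M'} p\bigl(\sum_V P(M' \probto{}_{\leq n} V)\lrb V\bigr)$. Flattening this nested convex sum and regrouping the summands by their value $V$ (using associativity, permutation-invariance and idempotency $a +_r a = a$ of barycentric sums) turns the coefficient of $\lrb V$ into $\sum_{M \probto{p} M'} p\cdot P(M' \probto{}_{\leq n} V)$, which equals $P(M \probto{}_{\leq n+1} V)$ by decomposing each path of length $\leq n+1$ into its first reduction step followed by a path of length $\leq n$. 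Here one must be careful to sum over reduction-rule \emph{instances}, so that the path count matches the summation in Soundness (this matters, e.g., for $M_0 \mathtt{or}_p M_0$, where two distinct rules reach the same reduct). This completes the induction.

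It remains to take $n \to \infty$. First I would record the elementary fact that dropping summands from a subprobability convex sum only decreases it: for finite $J \subseteq I$ one has $\sum_{j \in J} r_j a_j \leq \sum_{i \in I} r_i a_i$, since replacing each discarded $a_i$ by $\mathbf 0$ leaves the value unchanged (as $r\cdot \mathbf 0 = \mathbf 0$) while $\mathbf 0$ is least, whence the claim follows from monotonicity of convex sums. Fixing a finite $J \subseteq \Val(M)$ and choosing $n$ large enough that every $V \in J$ is reachable within $n$ steps, the approximation above together with this fact gives $\lrb M \geq \sum_{V \in J} P(M \probto{}_{\leq n} V)\lrb V$. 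Since $P(M \probto{}_{\leq n} V)$ increases with $n$ to $P(M \probto{}_* V) = \sup_n P(M \probto{}_{\leq n} V)$ and finite convex sums are Scott-continuous in each coefficient (Definition~\ref{def:convex-sums}), taking the supremum over $n$ yields $\lrb M \geq \sum_{V \in J} P(M \probto{}_* V)\lrb V$ for every finite $J \subseteq \Val(M)$. Taking the supremum over such $J$ --- which by the drop-terms fact is precisely the directed supremum defining the countable convex sum $\sum_{V \in \Val(M)} P(M \probto{}_* V)\lrb V$ --- gives the desired inequality.

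I expect the main obstacle to be this final limiting step: the statement concerns a countably infinite convex sum (defined as a directed supremum over finite subsets), so the argument must carefully combine the drop-terms monotonicity with the Scott-continuity of finite convex sums in order to interchange the two suprema (over $n$ and over $J$). By contrast, the inductive step is essentially bookkeeping with the barycentric-algebra laws, the only genuine care being the consistent counting of reduction-rule instances in the path decomposition.
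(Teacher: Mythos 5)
Your proposal is correct and follows essentially the same route as the paper's proof: an induction on the number of reduction steps that unfolds one application of the Soundness Theorem per step and regroups the resulting nested convex sum, followed by a passage to the limit via the double supremum (over the step bound and over finite subsets of $\Val(M)$) using Scott-continuity of finite convex sums in their coefficients. The only cosmetic difference is that the paper fixes an arbitrary finite $F \subseteq \Val(M)$ throughout the induction, whereas you induct on the automatically finite sum over all values reachable in $\leq n$ steps and restrict to a finite $J$ only at the end; the two are interchangeable.
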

\begin{proof}
First, let us decompose the convex sum on the right-hand side.
\begin{align*}
\sum_{V \in \Val(M)} P(M \probto{}_* V) \lrb V &= \sup_{\substack{F \subseteq \Val(M) \\ F \text{ finite}}} \sum_{V \in F} P(M \probto{}_* V) \lrb V & \text{(Definition)} \\
&= \sup_{\substack{F \subseteq \Val(M) \\ F \text{ finite}}} \sum_{V \in F} \left( \sup_{i \in \mathbb N} P(M \probto{}_{\leq i} V) \right) \lrb V & \text{(Definition)} \\
&= \sup_{\substack{F \subseteq \Val(M) \\ F \text{ finite}}} \sup_{i \in \mathbb N} \sum_{V \in F}  P(M \probto{}_{\leq i} V) \lrb V & \left( \text{Scott-continuity of } \sum_i r_i a_i \text{ in each } r_i \right).
\end{align*}
Therefore, it suffices to show that
\begin{equation}
\label{eq:soundness-corollary}
\lrb M \geq \sum_{V \in F}  P(M \probto{}_{\leq i} V) \lrb V
\end{equation}
for any choice of finite $F \subseteq \Val(M)$ and $i \in \mathbb N.$ This can now be shown by induction on $i$. If $M \in F$ (which means $M$ is a value), then \eqref{eq:soundness-corollary} is a strict equality.
Assume $M \not \in F$.
If $i=0$, then the right-hand side of \eqref{eq:soundness-corollary} is 0 and so the inequality holds.
For the step case, if $M$ is a value, then RHS is 0 and the inequality holds. Otherwise:
\begin{align*}
\sum_{V \in F}  P(M \probto{}_{\leq i+1} V) \lrb V &= \sum_{V \in F}  \sum_{M \probto p M'} p \cdot P(M' \probto{}_{\leq i} V) \lrb V & \\
&= \sum_{M \probto p M'} p \cdot \sum_{V \in F}  P(M' \probto{}_{\leq i} V) \lrb V &\\
&\leq \sum_{M \probto p M'} p \cdot \lrb{M'}  & (\text{IH for } M') \\
&= \lrb M &(\text{Soundness})
\end{align*}
where we also implicitly used the fact that $\Val(M') \subseteq \Val(M).$
\end{proof}

The remainder of the appendix is dedicated to showing the converse inequality, which is considerably more difficult to prove.

\subsection{Overview of the Proof Strategy}

The proof of strong adequacy requires considerable effort. Our proof strategy consists in formulating logical relations that we use to prove our adequacy result.
These logical relations are described in Theorem \ref{thm:formal-relations} and the design of our logical relations follows that of Claire Jones in her thesis \cite{jones90}.
Once this theorem is proved, the proof of adequacy is fairly straightforward. We use the logical relations to establish some useful closure properties in Subsection \ref{sub:closure-properties} and this allows us to easily prove
Lemma \ref{lem:fundamental}, which is often called the Fundamental Lemma. This lemma easily implies Strong Adequacy as we show.

Most of the effort in proving our Strong Adequacy result lies in the proof of Theorem \ref{thm:formal-relations}. It is not possible to use the properties (A1) -- (A4)
as a definition of the relations, because then condition (A4) would be defined via non-well-founded induction.
The proof of the existence of this family of relations is not obvious. We use techniques from \cite{icfp19,lnl-fpc-lmcs} (which are in turn based on ideas from \cite{fiore-thesis}) to show the existence of these relations.
The main idea of the proof of existence is to define, for every type $A$, a category $\RR(A)$ of logical relations with a suitable notion of morphism. We then show that every such category has sufficient structure to construct
parameterised initial algebras (Proposition \ref{prop:logical-colimits}). We may then define functors on these categories (Definition \ref{def:logical-functors})
which construct logical relations in the same manner as they are needed in Theorem \ref{thm:formal-relations}. These functors are $\omega$-cocontinuous (Proposition \ref{prop:logical-functors}) which means that we may form
(parameterised) initial algebras using them. This allows us to define an \emph{augmented interpretation of types} on the categories $\RR(A)$ which satisfies some important coherence conditions with respect to the standard
interpretation of types (Corollary \ref{cor:cool-form}). These coherence conditions show that each augmented interpretation $\elrbs A$ of a type $A$ contains the standard interpretation $\lrb A$, together with the logical relation that we need,
as shown in Theorem \ref{thm:formal-relations}.

\subsection{Logical Relations}

\begin{assumption}
Throughout this appendix, we assume that all types are closed, unless otherwise noted.
\end{assumption}

\begin{definition}
For each type $A$, we write:
\begin{itemize}
  \item $\Val(A) \defeq \{ V \ |\ V \text{ is a value and } \cdot \vdash V : A\}.$
  \item $\Prog(A) \defeq \{ M \ |\ M \text{ is a term and } \cdot \vdash M : A\}.$
\end{itemize}
\end{definition}

Next, we define sets of relations that are parameterised by dcpo's $X$ from our semantic category, types $A$ from our language and partial deterministic embeddings $e_X : X \kto \lrb A$ which show how $X$ approximates $\lrb A$.
We shall write relation membership in infix notation, that is, for a binary relation $\tleq$, we write $v \tleq V$ to indicate $(v, V) \in \tleq.$

\begin{definition}
\label{def:logical-relations}
For any dcpo $X$, type $A$ and morphism $e \colon X \kto \lrb A$ in $\PDe$, let:
\begin{align*}
\ValRel(X, A, e)  = \{ \tleq_{X,A}^e \subseteq \TD(1, X) \times \Val(A) \ |\ &\forall V \in \Val(A).\ (-) \tleq_{X,A}^e V \text{ is a Scott closed subset of } \TD(1,X) \text{ and} \\
&\forall V \in \Val(A).\ v \tleq_{X,A}^e V \Rightarrow e \kcirc v \leq \lrb V \}. 
\end{align*}
\end{definition} 

\begin{remark}
In the above definition, relations $\tleq_{X,A}^e \in \ValRel(X,A,e)$ can be seen as ternary relations $\tleq_{X,A}^e \subseteq \TD(1,X) \times \Val(A) \times \{e\}$. However, since there is no choice for the third component, we prefer to see them as binary relations that are parameterised by the embeddings $e$. Indeed, this leads to a much nicer notation.
We shall also sometimes indicate the parameters $X, A$ and $e$ of the relation in order to avoid confusion as to which set $\ValRel(X,A,e)$ it belongs to.
\end{remark}

The relations we need for the adequacy proof inhabit the sets $\ValRel(\lrb A, A, \kid_{\lrb A})$. In the remainder of the appendix, we will show how to choose exactly one relation (the one we need) from each of those sets.

Before we may define the relation constructors we need, we have to introduce some auxiliary definitions.

\begin{definition}
\label{def:paths}
Let $M \colon A$ and $N \colon A$ be closed terms of the same type. We define
\[ \mathrm{Paths(M, N)} \defeq \left\{ \pi\ |\ \pi = \left( M = M_0  \probto{p_0} M_1 \probto{p_1} M_2 \probto{p_2} \cdots \probto{p_n} M_n = N \right)  \text{ is a reduction path} \right\} . \]
In other words, $\mathrm{Paths(M, N)}$ is the set of all reduction paths from $M$ to $N$. The \emph{probability weight} of a path $\pi \in \Paths(M,N)$ is $P(\pi) \eqdef \prod_{i=0}^n p_i,$ i.e., it is simply the product of all the probabilities of single-step reductions within the path.
The \emph{set of terminal reduction paths of $M$} is
\[ \mathrm{\TPaths(M)} \defeq \bigcup_{V \in \Val(A)} \Paths(M,V) . \]
Thus the endpoint of any path $\pi \in \TPaths(M)$ is a value. If $\pi \in \Paths(M,W)$, where $W$ is a value, then we shall write $V_\pi \eqdef W.$ That is, for a path $\pi \in \TPaths(M)$, the notation $V_\pi$ indicates the endpoint of the path $\pi$ which is indeed a value.
\end{definition}

\begin{remark}
We also note that for each closed term $M$, the set $\TPaths(M)$ is countable.
\end{remark}

The next definition we introduce is crucial for the proof of strong adequacy.

\begin{definition}\label{def:SM}
Given a relation $\tleqd \in \ValReld$ and a term $\cdot \vdash M : A$, let $\mathcal S(\tleqd; M)$ be the Scott-closure in $\KL(1,X)$ of the set
\begin{equation}
\label{eq:logical-sums}
\mathcal S_{0}(\tleqd; M) \defeq \left\{ \sum_{\pi \in F} P(\pi) v_\pi \ |\ F \subseteq \TPaths(M),\ F \text{ is finite and }  \text{$v_\pi \tleqd V_\pi$ for each $\pi \in F$} \right\} .
\end{equation}
In other words, $\mathcal S(\tleqd; M)$ is the smallest Scott-closed subset of $\KL(1,X)$ which contains all morphisms of the form in \eqref{eq:logical-sums}.
For a subset $U \subseteq \KL(1,X),$ we write $\overline U$ to indicate its Scott-closure in $\KL(1,X)$.
\end{definition}

\begin{lemma}
\label{lem:semantically-dense-value}
For any value $V$, we have $\mathcal S(\tleqd; V) = \ol{\{ v\ |\ v \tleqd V\}} \cup \{ 0\} = \ol{\{ v\ |\ v \tleqd V\} \cup \{ 0\}} .$
\end{lemma}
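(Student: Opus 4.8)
The plan is to unfold the definition of $\mathcal S_0(\tleqd; V)$ using the fact that a value cannot reduce. First I would observe that, since $V$ is a value, no reduction rule from Figure~\ref{fig:operational} applies to it: every redex on the left of a one-step rule is a non-value, the value grammar forces all proper subterms of $V$ to be values, and the call-by-value evaluation contexts never descend under a $\lambda$-binder. Hence $V$ is a normal form, and the only terminal reduction path out of $V$ is the trivial length-zero path $\pi_0$, which has $V_{\pi_0} = V$ and empty-product weight $P(\pi_0) = 1$; that is, $\TPaths(V) = \{\pi_0\}$.

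Consequently the only finite subsets $F \subseteq \TPaths(V)$ are $\emptyset$ and $\{\pi_0\}$. The choice $F = \emptyset$ yields the empty convex sum, which by convention is the least element $0$ of $\KL(1,X)$, while the choice $F = \{\pi_0\}$ yields $P(\pi_0)\,v_{\pi_0} = v_{\pi_0}$ for an arbitrary $v_{\pi_0} \tleqd V_{\pi_0} = V$. Therefore
\[
\mathcal S_0(\tleqd; V) = \{v \mid v \tleqd V\} \cup \{0\},
\]
and the second equality in the statement is immediate from the definition of $\mathcal S(\tleqd; V)$ as the Scott-closure of $\mathcal S_0(\tleqd; V)$ in $\KL(1,X)$.

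For the first equality it then remains to check that $\ol{\{v \mid v \tleqd V\} \cup \{0\}} = \ol{\{v \mid v \tleqd V\}} \cup \{0\}$. I would argue this from the fact that Scott-closure is an ordinary topological closure operator, namely the closure for the Scott topology on $\KL(1,X) \cong \MM X$, so it distributes over finite unions: $\ol{A \cup B} = \ol{A} \cup \ol{B}$. Taking $B = \{0\}$ and noting that $0$ is the bottom element of $\KL(1,X)$ (the constantly-$\mathbf 0_X$ valuation), the singleton $\{0\}$ is down-closed and trivially closed under directed suprema, hence Scott-closed, so $\ol{\{0\}} = \{0\}$. This yields the first equality. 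The only points requiring care are the empty-sum convention and the observation that values are normal forms; neither presents a genuine obstacle, so the argument is essentially a direct computation rather than a hard proof.
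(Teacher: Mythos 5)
Your proof is correct and takes essentially the same route as the paper's, whose entire argument is the observation that every sum in $\mathcal S_{0}(\tleqd; V)$ is either a singleton sum or the empty sum; you have simply filled in the supporting details (values are normal forms, the length-zero path has weight $1$, closure distributes over finite unions, and $\{0\}$ is Scott-closed), all of which check out.
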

\begin{proof}
This is because all of the sums in \eqref{eq:logical-sums} are singleton sums or the empty sum.
\end{proof}

\begin{lemma}[{\cite[Lemma 8.4]{jones90}}]
\label{lem:topological-goodness}
Let $Y$ be a dcpo and let $\{X_i\}_{i \in F}$ be a finite collection of dcpo's.
Let $f \colon \prod_i X_i \to Y$ be a Scott-continuous function. Let $C_Y$ be a Scott-closed subset of $Y$. Let $U_i \subseteq X_i$ be arbitrary subsets, such that $f(\prod_i U_i) \subseteq C_Y$.
Then $f(\prod_i \overline{U_i}) \subseteq C_Y$, where $\overline{U_i}$ is the Scott-closure of $U_i$ in $X_i$.
\end{lemma}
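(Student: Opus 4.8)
The plan is to reduce the multivariable statement to the one-variable case and then to settle the one-variable case by a clean topological argument, relying on the fact (recalled in Section~\ref{sub:domain-preliminaries}) that Scott-continuous maps between dcpo's are exactly the continuous maps between the associated Scott spaces.

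First I would record the single-variable version: given a Scott-continuous $g \colon X \to Y$ and a Scott-closed $C_Y \subseteq Y$ with $g(U) \subseteq C_Y$ for some $U \subseteq X$, one has $g(\overline U) \subseteq C_Y$. Since $g$ is topologically continuous for the Scott topologies, the preimage $g^{-1}(C_Y)$ is Scott-closed in $X$. It contains $U$ by hypothesis, and $\overline U$ is by definition the least Scott-closed set containing $U$; hence $\overline U \subseteq g^{-1}(C_Y)$, which is exactly $g(\overline U) \subseteq C_Y$.

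Second, I would pass from one variable to the finite product $\prod_{i \in F} X_i$ by replacing the coordinates by their closures one at a time. Enumerating $F = \{1, \dots, n\}$, I would prove by induction on $k$ the statement $(P_k)$: $f(x_1, \dots, x_n) \in C_Y$ whenever $x_1 \in \overline{U_1}, \dots, x_k \in \overline{U_k}$ and $x_{k+1} \in U_{k+1}, \dots, x_n \in U_n$. The base case $(P_0)$ is precisely the hypothesis $f(\prod_i U_i) \subseteq C_Y$, and $(P_n)$ is the desired conclusion. For the step from $(P_{k-1})$ to $(P_k)$, fix $x_1 \in \overline{U_1}, \dots, x_{k-1} \in \overline{U_{k-1}}$ and $x_{k+1} \in U_{k+1}, \dots, x_n \in U_n$, and consider the partial application $g \colon X_k \to Y$ sending $t$ to $f(x_1, \dots, x_{k-1}, t, x_{k+1}, \dots, x_n)$. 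The insertion $t \mapsto (x_1, \dots, t, \dots, x_n)$ is Scott-continuous, since suprema in a product are computed coordinatewise and the remaining coordinates are held constant, so $g$ is Scott-continuous. By $(P_{k-1})$ we have $g(U_k) \subseteq C_Y$, and the single-variable version then yields $g(\overline{U_k}) \subseteq C_Y$, which is exactly $(P_k)$.

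The only subtlety to flag is that one cannot apply the one-variable argument to $f$ directly, because the Scott-closure of $\prod_i U_i$ in the product dcpo is in general strictly smaller than $\prod_i \overline{U_i}$, so closing all coordinates simultaneously is unavailable. The coordinatewise induction above is exactly what circumvents this, at the cost of invoking that partial applications of a Scott-continuous map remain Scott-continuous. No part of the argument is genuinely hard; the main point to get right is the bookkeeping in the induction hypothesis, ensuring that at each step the coordinate being closed still ranges over $U_k$ (so that $(P_{k-1})$ applies) while the already-processed coordinates range over their full closures.
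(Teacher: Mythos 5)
Your proof is correct. The paper itself offers no proof of this lemma --- it is quoted directly from Jones's thesis \cite[Lemma 8.4]{jones90} --- and your argument is essentially the standard one found there: reduce to the single-variable case via the observation that $g^{-1}(C_Y)$ is Scott-closed and contains $U$, hence contains $\overline{U}$, and then close the coordinates one at a time using Scott-continuity of partial applications. Your remark that the Scott-closure of $\prod_i U_i$ in the product may be strictly smaller than $\prod_i \overline{U_i}$ correctly identifies why the one-shot argument is unavailable and the coordinatewise induction is needed.
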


\begin{lemma}
\label{lem:logical-composition}
Let $\tleq_{X_1, A}^{e_1}$ and $\tleq_{X_2,A}^{e_2}$ be two logical relations and $\cdot \vdash M : A$ a term.
Assume that $g: X_1 \kto X_2$ is a morphism, such that $v \tleq_{X_1, A}^{e_1} V$ implies $g \kcirc v \in \mathcal S(\tleq_{X_2, A}^{e_2} ; V),$ for any $V \in \Val(M).$
If $m \in \mathcal S(\tleq_{X_1, A}^{e_1} ; M)$, then $g \kcirc m \in \mathcal S(\tleq_{X_2, A}^{e_2} ; M).$
\end{lemma}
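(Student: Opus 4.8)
The plan is to exploit that post-composition with $g$, i.e.\ the map $g \kcirc (-) \colon \KL(1, X_1) \to \KL(1, X_2)$, is both Scott-continuous (Kleisli composition is Scott-continuous by the enrichment structure) and linear in the Kegelspitze sense: by Lemma~\ref{lemma:respectkegel} we have $g \kcirc (f_1 +_r f_2) = g \kcirc f_1 +_r g \kcirc f_2$, and $g \kcirc \mathbf 0 = \mathbf 0$ since $g^{\ddagger}$ is strict. Because $\mathcal S(\tleq_{X_2,A}^{e_2}; M)$ is Scott-closed by Definition~\ref{def:SM} and $g \kcirc (-)$ is Scott-continuous, its preimage under $g \kcirc (-)$ is Scott-closed. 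Hence it suffices to prove $g \kcirc m_0 \in \mathcal S(\tleq_{X_2,A}^{e_2};M)$ for every generator $m_0 \in \mathcal S_0(\tleq_{X_1,A}^{e_1};M)$; the statement for all $m \in \mathcal S(\tleq_{X_1,A}^{e_1};M) = \overline{\mathcal S_0(\tleq_{X_1,A}^{e_1};M)}$ then follows by passing to Scott-closures.

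So I would fix a generator $m_0 = \sum_{\pi \in F} P(\pi) v_\pi$, with $F \subseteq \TPaths(M)$ finite and $v_\pi \tleq_{X_1,A}^{e_1} V_\pi$ for each $\pi \in F$. Dropping the paths of weight $0$ changes neither $m_0$ nor $g \kcirc m_0$, since such summands equal $\mathbf 0$ and contribute nothing to a convex sum; so I may assume $P(\pi) > 0$ for all $\pi \in F$. Then $P(M \probto{}_* V_\pi) \geq P(\pi) > 0$, so $V_\pi \in \Val(M)$ and the hypothesis on $g$ applies to each $\pi$. By linearity, $g \kcirc m_0 = \sum_{\pi \in F} P(\pi)(g \kcirc v_\pi)$, and the hypothesis together with Lemma~\ref{lem:semantically-dense-value} gives, for each $\pi \in F$,
\[ g \kcirc v_\pi \in \mathcal S(\tleq_{X_2,A}^{e_2}; V_\pi) = \overline{U_\pi}, \qquad U_\pi \defeq \{ w \mid w \tleq_{X_2,A}^{e_2} V_\pi \} \cup \{ \mathbf 0 \}, \]
the closure being taken in $\KL(1,X_2)$.

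The final step, which I expect to be the crux, is to move this ``membership up to Scott-closure'' through the convex-sum operation, and here Lemma~\ref{lem:topological-goodness} is exactly the right tool. I would consider the map $\Sigma \colon \prod_{\pi \in F} \KL(1, X_2) \to \KL(1, X_2)$ sending $(w_\pi)_{\pi \in F} \mapsto \sum_{\pi \in F} P(\pi) w_\pi$; it is Scott-continuous because convex sums in a Kegelspitze are Scott-continuous in each argument (Definition~\ref{def:convex-sums}), hence jointly Scott-continuous on the finite product. Taking $C_Y \defeq \mathcal S(\tleq_{X_2,A}^{e_2}; M)$, which is Scott-closed, and the subsets $U_\pi$, I would first verify $\Sigma\big(\prod_\pi U_\pi\big) \subseteq C_Y$: for $(w_\pi) \in \prod_\pi U_\pi$, the $\mathbf 0$-valued coordinates drop out of the convex sum, leaving a sum $\sum_{\pi \in F'} P(\pi) w_\pi$ with $F' \subseteq \TPaths(M)$ finite and $w_\pi \tleq_{X_2,A}^{e_2} V_\pi$, which lies in $\mathcal S_0(\tleq_{X_2,A}^{e_2}; M) \subseteq C_Y$. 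Lemma~\ref{lem:topological-goodness} then yields $\Sigma\big(\prod_\pi \overline{U_\pi}\big) \subseteq C_Y$. Since $(g \kcirc v_\pi)_{\pi \in F} \in \prod_\pi \overline{U_\pi}$, I conclude $g \kcirc m_0 = \Sigma\big((g \kcirc v_\pi)_{\pi \in F}\big) \in \mathcal S(\tleq_{X_2,A}^{e_2};M)$, which settles the generator case and hence the lemma.
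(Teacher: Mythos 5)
Your proof is correct and follows essentially the same route as the paper's: reduce to generators of $\mathcal S_0(\tleq_{X_1,A}^{e_1};M)$ via Scott-continuity of $g\kcirc(-)$, push $g$ through the convex sum by linearity, invoke Lemma~\ref{lem:semantically-dense-value} for each $g\kcirc v_\pi$, and apply Lemma~\ref{lem:topological-goodness} to the convex-sum map while discarding the $\mathbf 0$ coordinates. Your explicit handling of weight-zero paths (to ensure $V_\pi\in\Val(M)$ so the hypothesis on $g$ applies) is a small point the paper's proof glosses over, and is a welcome bit of extra care rather than a deviation.
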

\begin{proof}
By Lemma \ref{lem:topological-goodness}, it suffices to show that
\[ \left( g \kcirc \sum_{\pi \in F} P(\pi) v_\pi \right) \in \mathcal S(\tleq_{X_2, A}^{e_2} ; M)\]
for any choice of finite $F \subseteq \TPaths(M)$ and morphisms $v_\pi$ with $v_\pi \tleq_{X_1,A}^{e_1} V_\pi.$
We have   
\[ g \kcirc \sum_{\pi \in F} P(\pi) v_\pi = \sum_{\pi \in F} P(\pi) (g \kcirc v_\pi), \]
where the equality follows by linearity of $(g \kcirc -)$. Next, for each $v_\pi$, by assumption $g \kcirc v_\pi \in \mathcal S(\tleq_{X_2, A}^{e_2} ; V_\pi).$
Therefore by applying Lemma \ref{lem:semantically-dense-value}, it follows $g \kcirc v_\pi \in \overline{\{ v'\ |\ v' \tleq_{X_2,A}^{e_2} V_\pi \} \cup \{  0 \}}.$
Now, consider the function 
\[ \sum_{\pi \in F} P(\pi) (-) \colon \prod_{|F|} \KL(1,X_2) \to \KL(1,X_2) . \]
This function is continuous, so by Lemma \ref{lem:topological-goodness} again, it suffices to show that 
\[ \sum_{\pi \in F} P(\pi) m'_\pi  = \sum_{\substack{ \pi \in F \\ m'_\pi \neq 0 } } P(\pi) m'_\pi \in \mathcal S(\tleq_{X_2, A}^{e_2} ; M), \]
where either $m'_\pi =  0$ or $m'_\pi \tleq_{X_2,A}^{e_2} V_\pi$ for each $\pi \in F$. Since the summands where $m'_\pi =  0$ do not affect the sum, it suffices to show that this is true under the assumption that
$m'_\pi \tleq_{X_2,A}^{e_2} V_\pi$.
But this is true by definition of $\mathcal S(\tleq_{X_2, A}^{e_2}; M)$.
\end{proof}

Next, we define important \emph{closure relations} which we use for terms.

\begin{definition}
\label{def:logical-closure}
If $\tleqd \in \ValRel(X, A, e)$, let ${\qtleqd} \subseteq \KL(1, X) \times \Prog(A)$ be the relation defined by
  \[ m \qtleqd M \text{ iff } m \in \mathcal S(\tleqd; M) . \]
\end{definition} 

\begin{lemma}
For any term  $\cdot \vdash M : A$ and $\tleqd \in \ValReld$, the set $(-) \qtleqd M$ is a Scott-closed subset of $\KL(1,X).$
\end{lemma}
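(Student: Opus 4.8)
The plan is to observe that this statement is immediate once the two relevant definitions are unfolded; there is no real calculation to perform. First I would unfold Definition~\ref{def:logical-closure}: the section of the relation $\qtleqd$ over the fixed term $M$ is, by definition of $\qtleqd$,
\[ \{ m \in \KL(1,X) \mid m \qtleqd M \} = \{ m \in \KL(1,X) \mid m \in \mathcal S(\tleqd; M) \} = \mathcal S(\tleqd; M). \]
Thus the set whose Scott-closedness we must establish is literally $\mathcal S(\tleqd; M)$. Next I would invoke Definition~\ref{def:SM}, in which $\mathcal S(\tleqd; M)$ is \emph{defined} to be the Scott-closure in $\KL(1,X)$ of the set $\mathcal S_0(\tleqd; M)$; being a Scott-closure, it is Scott-closed by construction. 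This finishes the argument.

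The only background fact I would pause to confirm is that $\KL(1,X)$ is genuinely a dcpo, so that the Scott topology—and hence the notion of Scott-closure—is well-defined. This holds because $\KL(1,X) = \DCPO(1,\MM X) \cong \MM X$, and $\MM X$ is a (pointed) dcpo, indeed a Kegelspitze by Example~\ref{exa:MDiskegel}. With this in place, the intersection of all Scott-closed sets containing $\mathcal S_0(\tleqd; M)$ is well-defined and is the smallest such set, exactly as asserted in Definition~\ref{def:SM}.

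I do not expect any obstacle here: the lemma simply records, essentially by fiat, that every fiber of the closure relation $\qtleqd$ is Scott-closed, a fact built directly into Definition~\ref{def:SM}. It is isolated as a separate statement only because this Scott-closedness is needed repeatedly in what follows—most notably whenever one wants to apply Lemma~\ref{lem:topological-goodness} to a Scott-closed target set of the form $\mathcal S(\tleqd; M)$—and it is cleaner to have it on record than to re-derive it at each use.
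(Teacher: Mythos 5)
Your proposal is correct and matches the paper's own proof, which likewise observes that the fiber $(-)\qtleqd M$ is by definition $\mathcal S(\tleqd; M)$, a Scott-closure and hence Scott-closed. The extra check that $\KL(1,X)\cong \MM X$ is a dcpo is a harmless (and accurate) addition.
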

\begin{proof}
This follows immediately by definition, because $\mathcal S(\tleqd; M)$ is Scott-closed.
\end{proof}

\begin{lemma}
\label{lem:scott-closure-embedding}
Let $C$ be a Scott-closed subset of a dcpo $X$. Let $W \eqdef \{ \delta_x\ |\ x \in C \} \subseteq \MM X$ and let $\overline W$ be the Scott-closure of $W$ in $\MM X.$ Then, $\delta_y \in \overline W$ iff $y \in C.$
\end{lemma}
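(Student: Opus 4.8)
The plan is to prove the nontrivial (``only if'') direction by exhibiting a single Scott-closed subset of $\MM X$ that contains $W$ but avoids $\delta_y$ whenever $y \notin C$; the reverse implication is immediate, since $y \in C$ gives $\delta_y \in W \subseteq \overline W$ directly by definition.

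First I would pass to the complement $U \defeq X \setminus C$, which is Scott-open in $X$ precisely because $C$ is Scott-closed, so $U \in \sigma X$ and $\nu(U)$ is defined for every $\nu \in \MM X$. Then consider the evaluation map $\mathrm{ev}_U \colon \MM X \to [0,1]$ given by $\nu \mapsto \nu(U)$. Since $\MM X$ is a sub-dcpo of $\VV X$, directed suprema in $\MM X$ are computed as in $\VV X$, i.e.\ pointwise on open sets, so $(\sup_i \nu_i)(U) = \sup_i \nu_i(U)$; together with the fact that the stochastic order is the pointwise order, this shows $\mathrm{ev}_U$ is monotone and preserves directed suprema, hence is Scott-continuous.

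The key step is the following observation: the set $Z \defeq \mathrm{ev}_U^{-1}(\{0\}) = \{\nu \in \MM X \mid \nu(U) = 0\}$ is Scott-closed, being the preimage under the Scott-continuous map $\mathrm{ev}_U$ of the Scott-closed set $\{0\} = \mathord{\downarrow}0 \subseteq [0,1]$. Moreover $Z$ contains $W$: for $x \in C$ we have $x \notin U$, so $\delta_x(U) = 0$ and thus $\delta_x \in Z$. Because $\overline W$ is by definition the smallest Scott-closed subset of $\MM X$ containing $W$, it follows that $\overline W \subseteq Z$.

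Finally I would conclude as follows. If $\delta_y \in \overline W$, then $\delta_y \in Z$, so $\delta_y(U) = 0$. But $\delta_y(U) = 1$ if $y \in U$ and $\delta_y(U) = 0$ otherwise, so $\delta_y(U) = 0$ forces $y \notin U$, i.e.\ $y \in C$, as required. I do not anticipate any serious obstacle here; the only point needing care is the Scott-continuity of $\mathrm{ev}_U$, and this follows directly from the pointwise description of directed suprema in $\VV X$ recalled earlier in the paper.
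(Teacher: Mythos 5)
Your proof is correct and is essentially the paper's argument in complementary form: the paper works with the Scott-open set $[U>0]=\{\nu \mid \nu(U)>0\}$ and notes it would have to meet $W$ if it contained $\delta_y\in\overline W$, while you work with its complement $Z=\{\nu\mid \nu(U)=0\}$ and use minimality of the Scott closure. The continuity of $\nu\mapsto\nu(U)$ that you verify is exactly what underlies the paper's claim that $[U>0]$ is Scott-open, so the two proofs coincide in substance.
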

\begin{proof}
The ``if'' direction is straightforward. The ``only if'' direction is trivial when $C = X$. We now prove the case that $C$ is a proper subset of $X$, and let $U$ be the complement of $C$. Hence $U$ is a nonempty Scott open subset of $X$. Let us assume that $\delta_y \in \overline W$ but $y\in U$, then we know that $[U>0]\defeq \{ \nu \in \MM X \ | \ \nu(U)>0 \} $ is a Scott open subset of $\MM X$ containing $\delta_{y}$, hence we would have that $[U>0] \cap W \not= \emptyset$ since by assumption $\delta_{y} \in \overline W$. However, this is impossible since for any $x\in C$, $\delta_{x}(U) = 0$.
\end{proof}

\begin{lemma}
\label{lem:id}
Let $X$ be a dcpo, let $v \in \TD(1,X)$ and let $V$ be a value. Then $v \tleqd V $ iff $v \qtleqd V.$
\end{lemma}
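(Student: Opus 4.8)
The plan is to unfold both sides against the definitions and then reduce the one nontrivial implication to Lemma~\ref{lem:scott-closure-embedding}. First I would expand $v \qtleqd V$ using Definition~\ref{def:logical-closure}, which says $v \qtleqd V$ iff $v \in \mathcal S(\tleqd; V)$, and then rewrite the right-hand side using Lemma~\ref{lem:semantically-dense-value}, giving
\[ v \qtleqd V \iff v \in \overline{\{ w \mid w \tleqd V \}} \cup \{ 0 \}, \]
where the closure is taken in $\KL(1,X)$. The forward direction is then immediate: if $v \tleqd V$, then $v$ lies in $\{ w \mid w \tleqd V\}$, hence in its closure, hence $v \qtleqd V$.

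For the backward direction, suppose $v \qtleqd V$ with $v \in \TD(1,X)$. Since $v$ is a deterministic total map, it is of the form $v = \JJ(v')$, so $v(*) = \delta_y$ for some $y \in X$; in particular $v \neq 0$, because the zero map sends $*$ to the zero valuation $\mathbf 0_X$, which is not a Dirac valuation. Consequently the summand $\{0\}$ is irrelevant and we obtain $v \in \overline{\{ w \mid w \tleqd V\}}$, the Scott-closure being computed in $\KL(1,X) = \MM X$.

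The key step is to match this against Lemma~\ref{lem:scott-closure-embedding}. Using the $\DCPO$-enriched isomorphism $\DCPO \cong \TD$, I would identify the hom-dcpo $\TD(1,X)$ with $X$ itself, via $x \mapsto w_x$ where $w_x(*) = \delta_x$; the order inherited from $\KL(1,X)$ agrees with that of $X$ since $\delta_x \leq \delta_{x'}$ iff $x \leq x'$. Under the inclusion $\TD(1,X) \hookrightarrow \KL(1,X) = \MM X$ the element $w_x$ maps to $\delta_x$. By the definition of $\ValRel(X,A,e)$ the set $\{w \mid w \tleqd V\}$ is Scott-closed in $\TD(1,X)$, so $C \defeq \{x \in X \mid w_x \tleqd V\}$ is a Scott-closed subset of $X$, and $\{w \mid w \tleqd V\}$ corresponds under the inclusion exactly to $W \defeq \{\delta_x \mid x \in C\} \subseteq \MM X$. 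Hence $\overline{\{w \mid w \tleqd V\}} = \overline W$ in $\MM X$. Applying Lemma~\ref{lem:scott-closure-embedding} to the Scott-closed set $C$, from $\delta_y \in \overline W$ we conclude $y \in C$, i.e.\ $v = w_y \tleqd V$, as required.

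The main obstacle, and the reason Lemma~\ref{lem:scott-closure-embedding} is needed, is that the two relations are closed with respect to two different ambient dcpo's: $\tleqd$ is Scott-closed in the smaller space $\TD(1,X)$, whereas the closure defining $\mathcal S(\tleqd; V)$, and hence $\qtleqd$, is taken in the larger space $\KL(1,X) = \MM X$. A priori the larger closure could introduce new Dirac valuations $\delta_y$ not already accounted for; Lemma~\ref{lem:scott-closure-embedding} is precisely the statement that it does not, which is what makes the two relations agree on $\TD(1,X)$.
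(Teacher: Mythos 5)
Your proof is correct and follows essentially the same route as the paper's: unfold via Lemma~\ref{lem:semantically-dense-value}, note that $v \in \TD(1,X)$ forces $v \neq 0$, and then apply Lemma~\ref{lem:scott-closure-embedding} to the Scott-closed set $(-)\tleqd V$ under the identification $\TD(1,X) \cong X$. The paper's proof is just a terser version of the same argument; your explicit bookkeeping of the two ambient dcpo's is a faithful expansion of what it leaves implicit.
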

\begin{proof}
The left-to-right direction follows immediately by Lemma \ref{lem:semantically-dense-value}. For the other direction, we first observe that since $v \in \TD(1,X)$, then $v \neq 0$.
Therefore by Lemma \ref{lem:semantically-dense-value}, it follows $v \in \ol{\{ w\ |\ w \tleqd V\}} $ and then by Lemma \ref{lem:scott-closure-embedding} we complete the proof.
\end{proof}

\begin{lemma}
\label{lem:logical-inequality}
For any value $\cdot \vdash V : A$ and $\tleqd \in \ValReld$, if $m \qtleqd V$ then $ e \kcirc m \leq \lrb V$.
\end{lemma}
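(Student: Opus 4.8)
The plan is to recognise the set of morphisms satisfying the desired inequality as a Scott-closed subset of $\KL(1,X)$, and then to invoke the explicit description of $\mathcal S(\tleqd; V)$ from Lemma~\ref{lem:semantically-dense-value} as the \emph{smallest} such set containing the generating relation. Concretely, I would set
\[ C \defeq \{ m \in \KL(1,X) \mid e \kcirc m \leq \lrb V \} \]
and prove the lemma by showing $\mathcal S(\tleqd; V) \subseteq C$, since by Definition~\ref{def:logical-closure} the hypothesis $m \qtleqd V$ is exactly the statement that $m \in \mathcal S(\tleqd; V)$.

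First I would check that $C$ is Scott-closed. The assignment $m \mapsto e \kcirc m$ is Scott-continuous, because Kleisli composition $\kcirc$ is Scott-continuous (as recorded in the enrichment structure of $\KL$). The principal down-set $\down \lrb V = \{ n \in \KL(1, \lrb A) \mid n \leq \lrb V \}$ is Scott-closed, being a lower set that contains the suprema of the directed families lying inside it. Hence $C = (e \kcirc -)^{-1}(\down \lrb V)$ is the preimage of a Scott-closed set under a Scott-continuous map, and is therefore Scott-closed.

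Next I would verify that $C$ contains the generators. For any $v$ with $v \tleqd V$, the defining property of $\ValRel(X,A,e)$ (Definition~\ref{def:logical-relations}) gives $e \kcirc v \leq \lrb V$, so $v \in C$; and since Kleisli composition is strict, $e \kcirc 0 = 0 \leq \lrb V$, so $0 \in C$ as well. Thus $\{ v \mid v \tleqd V \} \cup \{ 0 \} \subseteq C$. By Lemma~\ref{lem:semantically-dense-value}, $\mathcal S(\tleqd; V) = \ol{\{ v \mid v \tleqd V \} \cup \{ 0 \}}$ is the least Scott-closed set containing $\{ v \mid v \tleqd V \} \cup \{ 0 \}$, so $\mathcal S(\tleqd; V) \subseteq C$. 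Applying this to any $m \in \mathcal S(\tleqd; V)$ yields $e \kcirc m \leq \lrb V$, as required.

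There is essentially no hard step here: once $C$ is seen to be Scott-closed, the conclusion is immediate from the characterisation of $\mathcal S(\tleqd; V)$ in the \emph{value} case. The only points demanding care are the Scott-continuity and strictness of $(e \kcirc -)$, both of which follow from the $\dcpobs$-enrichment of $\KL$ already established, together with the observation that $\down \lrb V$ is Scott-closed. The reason the argument is so short is precisely that $V$ is a value: Lemma~\ref{lem:semantically-dense-value} collapses the a priori complicated closure $\mathcal S(\tleqd; V)$, which for general terms involves convex sums over terminal reduction paths, down to the Scott-closure of the single-step relation together with $0$.
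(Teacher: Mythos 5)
Your proposal is correct and follows essentially the same route as the paper's proof: both reduce to Lemma~\ref{lem:semantically-dense-value}, observe that $(e \kcirc -)$ is Scott-continuous and $\down \lrb V$ is Scott-closed (the paper phrases this via Lemma~\ref{lem:topological-goodness}, you via preimages, which is equivalent), handle the $0$ element, and finish with the defining property of $\ValRel(X,A,e)$ on the generators $v \tleqd V$. No gaps.
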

\begin{proof}
We know $m \in \mathcal S(\tleqd; V) = \ol{\{ v\ |\ v \tleqd V\}} \cup \{ 0 \} $ and clearly $e \kcirc m \leq \lrb V$ is equivalent to $(e \kcirc m ) \in \down \lrb V,$ which is a Scott-closed subset.
If $m = 0$, then the statement is obviously true. So, assume that $m \in \ol{\{ v\ |\ v \tleqd V\}}. $
Composition with $e$ is a Scott-continuous function and therefore using Lemma \ref{lem:topological-goodness}, to finish the proof it suffices to show $e \kcirc v \leq \lrb V$ for each choice
of $v \tleqd V$. But this is true by assumption on $\tleqd$.
\end{proof}

\subsection{Categories of Logical Relations}

\begin{definition}
\label{def:logical-categories}
For any type $A$, we define a category $\RR(A)$ where:
\begin{itemize}
\item Each object is a triple $(X, e_X, \tleq_X)$, where $X$ is a dcpo, $e_X \colon X \kto \lrb A$ is a morphism in $\PD_e$ and $\tleq_X \in \ValRel(X,A,e_X)$.
\item A morphism $f: (X, e_X, \tleq_{X}) \to (Y, e_Y, \tleq_{Y})$ is a morphism $f : X \kto Y$ in $\PD_e$, which satisfies the three additional conditions:
  \begin{itemize}
    \item If $v \tleq_{X} V,$ then $f \kcirc v\ \overline{\tleq_{Y}}\ V.$
    \item If $v \tleq_{Y} V,$ then $f^p \kcirc v\ \overline{\tleq_{X}}\ V.$
    \item $e_X = e_Y \kcirc f.$
  \end{itemize}
\item Composition and identities coincide with those in $\PD_e.$
\end{itemize}
\end{definition}

\begin{lemma}
For every type $A$, the category $\RR(A)$ is indeed well-defined.
\end{lemma}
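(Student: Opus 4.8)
The plan is to verify that the data in Definition~\ref{def:logical-categories} genuinely assemble into a category. Since the objects are given explicitly and both composition and identities are declared to coincide with those of $\PD_e$, associativity and the unit laws are inherited from $\PD_e$ for free. Hence the only real obligation is a \emph{closure} check: that for each object the $\PD_e$-identity satisfies the three morphism conditions, and that the $\PD_e$-composite of two $\RR(A)$-morphisms again satisfies them. In particular, because composition agrees with that of $\PD_e$ and $\PD_e$ is already a category, the composite $g \kcirc f$ is automatically an embedding, so it remains only to check the three extra conditions.

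First I would treat identities. For $\id_X \colon (X, e_X, \tleq_X) \to (X, e_X, \tleq_X)$, an embedding with $\id_X^p = \id_X$, the factorisation condition $e_X = e_X \kcirc \id_X$ is immediate, and both relational conditions collapse (using $\id_X \kcirc v = v$) to the implication $v \tleq_X V \Rightarrow v\ \overline{\tleq_X}\ V$. This is exactly the left-to-right direction of Lemma~\ref{lem:id}; equivalently it is immediate from Lemma~\ref{lem:semantically-dense-value}, since $v \tleq_X V$ places $v$ in $\{w \mid w \tleq_X V\} \subseteq \mathcal S(\tleq_X; V)$.

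Next I would treat composition. Let $f \colon (X, e_X, \tleq_X) \to (Y, e_Y, \tleq_Y)$ and $g \colon (Y, e_Y, \tleq_Y) \to (Z, e_Z, \tleq_Z)$. The factorisation condition is routine: $e_X = e_Y \kcirc f = (e_Z \kcirc g) \kcirc f = e_Z \kcirc (g \kcirc f)$. The forward relational condition is the crux. Given $v \tleq_X V$, the morphism property of $f$ yields $f \kcirc v \in \mathcal S(\tleq_Y; V)$, while the morphism property of $g$ supplies precisely the hypothesis of Lemma~\ref{lem:logical-composition} with the term $M$ taken to be the value $V$ (so $\Val(M) = \{V\}$): namely $w \tleq_Y V \Rightarrow g \kcirc w \in \mathcal S(\tleq_Z; V)$. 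Lemma~\ref{lem:logical-composition} then gives $g \kcirc (f \kcirc v) \in \mathcal S(\tleq_Z; V)$, i.e. $(g \kcirc f) \kcirc v\ \overline{\tleq_Z}\ V$.

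The backward condition is handled dually. Using that projections of composites of embedding--projection pairs compose contravariantly, $(g \kcirc f)^p = f^p \kcirc g^p$. Given $v \tleq_Z V$, the projection condition for $g$ gives $g^p \kcirc v \in \mathcal S(\tleq_Y; V)$, and the projection condition for $f$ supplies the hypothesis of Lemma~\ref{lem:logical-composition} for the map $f^p$; applying the lemma yields $f^p \kcirc (g^p \kcirc v) = (g \kcirc f)^p \kcirc v \in \mathcal S(\tleq_X; V)$, as required. I expect the only genuine subtlety to be bookkeeping rather than any deep argument: recognising that both relational conditions are exactly instances of Lemma~\ref{lem:logical-composition} specialised to a value-term $M = V$, and recalling the contravariant law $(g\kcirc f)^p = f^p \kcirc g^p$; everything else is inherited verbatim from $\PD_e$.
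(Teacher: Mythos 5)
Your proof is correct and follows essentially the same route as the paper's (much terser) argument: identities are handled by the left-to-right direction of Lemma~\ref{lem:id}, and closure under composition is obtained by instantiating Lemma~\ref{lem:logical-composition} at the value-term $M=V$, once for $g\kcirc f$ and once for $(g\kcirc f)^p=f^p\kcirc g^p$. The extra bookkeeping you supply (inheritance of associativity and unit laws from $\PD_e$, the factorisation condition $e_X=e_Z\kcirc(g\kcirc f)$) is exactly what the paper leaves implicit.
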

\begin{proof}
We have to show that $\kid: (X, e_X, \tleq_X) \to (X, e_X, \tleq_X)$ is indeed a morphism in $\RR(A).$ This follows from Lemma \ref{lem:id}.
Next, we have to show that if $f \colon (X, e_X, \tleq_X) \to (Y, e_Y, \tleq_{Y})$ and $g \colon (Y, e_Y, \tleq_{Y}) \to (Z, e_Z, \tleq_{Z}),$ then we also have
$g \kcirc f \colon (X, e_X, \tleq_X) \to (Z, e_Z, \tleq_{Z}).$ But this follows by Lemma \ref{lem:logical-composition}.
\end{proof}

\begin{lemma}
\label{lem:logical-composition-fancy}
Let $\cdot \vdash M : A$ be a term and let $g \colon (X, e_X, \tleq_X) \to (Y, e_Y, \tleq_Y)$ be a morphism in $\RR(A)$. If $m \ol{\tleq_X} M$ then $g \kcirc m \ol{\tleq_Y} M$.
Moreover, if $n \ol{\tleq_Y} N,$ then $g^p \kcirc n \ol{\tleq_X}$ N.
\end{lemma}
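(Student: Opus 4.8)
The plan is to obtain both assertions as immediate instances of Lemma~\ref{lem:logical-composition}, since the two closure conditions in the definition of a morphism of $\RR(A)$ (Definition~\ref{def:logical-categories}) are exactly the hypotheses that lemma requires. First I would unfold the notation: by Definition~\ref{def:logical-closure}, a statement of the form $p \ol{\tleq_Z} P$ means precisely $p \in \mathcal S(\tleq_Z; P)$, so both halves of the lemma are assertions about membership in the relevant Scott-closed sets $\mathcal S(\tleq_X; M)$ and $\mathcal S(\tleq_Y; M)$.

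For the first claim, I would use that $g \colon (X, e_X, \tleq_X) \to (Y, e_Y, \tleq_Y)$ being a morphism in $\RR(A)$ guarantees, among its defining conditions, that $v \tleq_X V$ implies $g \kcirc v\ \ol{\tleq_Y}\ V$, i.e.\ $g \kcirc v \in \mathcal S(\tleq_Y; V)$ for every value $V$; in particular this holds for every $V \in \Val(M)$. This is exactly the hypothesis of Lemma~\ref{lem:logical-composition} instantiated with $\tleq_{X_1,A}^{e_1} \defeq \tleq_X$, $\tleq_{X_2,A}^{e_2} \defeq \tleq_Y$ and the morphism $g \colon X \kto Y$. Applying the lemma then yields $g \kcirc m \in \mathcal S(\tleq_Y; M)$ whenever $m \in \mathcal S(\tleq_X; M)$, which is the desired implication $m \ol{\tleq_X} M \Rightarrow g \kcirc m \ol{\tleq_Y} M$.

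For the ``moreover'' part, I would invoke the dual defining condition of a morphism in $\RR(A)$: that $v \tleq_Y V$ implies $g^p \kcirc v\ \ol{\tleq_X}\ V$, i.e.\ $g^p \kcirc v \in \mathcal S(\tleq_X; V)$ for every value $V$. This is the hypothesis of the same lemma with the two spaces interchanged, applied to the projection $g^p \colon Y \kto X$ (taking $\tleq_{X_1,A}^{e_1} \defeq \tleq_Y$ and $\tleq_{X_2,A}^{e_2} \defeq \tleq_X$); note that $g^p$ is a genuine morphism of $\PD \subseteq \KL$, so the lemma indeed applies to it. This gives $g^p \kcirc n \in \mathcal S(\tleq_X; N)$ whenever $n \in \mathcal S(\tleq_Y; N)$, which is exactly $n \ol{\tleq_Y} N \Rightarrow g^p \kcirc n \ol{\tleq_X} N$. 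Since both halves are literal specialisations of Lemma~\ref{lem:logical-composition}, I do not anticipate a genuine obstacle; the only point requiring care is to align the closure-relation notation of Definition~\ref{def:logical-closure} with the $\mathcal S(-;-)$ notation appearing in the hypothesis of that lemma, and to observe that the relation-preservation conditions hold for all values, hence a fortiori for the values $V \in \Val(M)$ to which the lemma refers.
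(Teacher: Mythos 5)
Your proposal is correct and matches the paper's proof, which simply observes that the statement "follows immediately by Lemma~\ref{lem:logical-composition}." You have merely spelled out the instantiations — using the two closure conditions in Definition~\ref{def:logical-categories} as the hypotheses of that lemma for $g$ and for $g^p$ respectively — which is exactly what the paper leaves implicit.
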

\begin{proof}
This follows immediately by Lemma \ref{lem:logical-composition}.
\end{proof}

\begin{definition}
\label{def:logical-forgetful}
For every type $A$, we define the obvious forgetful functor $U^A \colon \RR(A) \to \PDe$ by
\begin{align*}
U^A(X,e, \tleq) &= X \\
U^A(f) &= f.
\end{align*}
\end{definition}

\begin{proposition}
\label{prop:logical-colimits}
For each type $A$, the category $\RR(A)$ has an initial object and all $\omega$-colimits. Furthermore, the forgetful functor $U^A \colon \RR(A) \to \PDe$ preserves and reflects $\omega$-colimits (and also the initial objects).
\end{proposition}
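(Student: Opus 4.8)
The plan is to reduce everything to the analogous facts about $\PDe$, which by Proposition~\ref{prop:omega-functors} (through Smyth and Plotkin's limit--colimit coincidence) has an initial object and all $\omega$-colimits, and then to decorate the relevant underlying dcpos with an embedding into $\lrb A$ and a value relation, checking the $\ValRel$ axioms and the three morphism clauses of $\RR(A)$. For the initial object I would take $(\varnothing, e_\varnothing, \emptyset)$, where $\varnothing$ is the empty dcpo (initial in $\PDe$), $e_\varnothing\colon\varnothing\kto\lrb A$ is the unique $\PDe$-morphism, and $\emptyset$ is the empty relation, which is the only element of $\ValRel(\varnothing,A,e_\varnothing)$ since $\TD(1,\varnothing)=\emptyset$. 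Given any $(Y,e_Y,\tleq_Y)$, the unique $\PDe$-map ${!}\colon\varnothing\kto Y$ is an $\RR(A)$-morphism: the forward clause is vacuous; for the backward clause $!^p\kcirc v$ is the unique map $1\kto\varnothing$, namely $0$, and $0\in\mathcal S(\emptyset;V)$ via the empty sum, so $0\ \overline{\tleq_\varnothing}\ V$; and $e_\varnothing=e_Y\kcirc{!}$ by initiality. Uniqueness is inherited from $\PDe$, so $U^A$ preserves the initial object.

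For $\omega$-colimits, given a chain $(X_n,e_n,\tleq_n)\xrightarrow{f_n}(X_{n+1},e_{n+1},\tleq_{n+1})$ I apply $U^A$, take the colimit $(X,\{\mu_n\})$ in $\PDe$ with projections $\mu_n^p$ and the identity $\id_X=\sup_n\mu_n\kcirc\mu_n^p$. Since each $f_n$ satisfies $e_n=e_{n+1}\kcirc f_n$, the family $\{e_n\}$ is a $\PDe$-cocone on $\lrb A$, yielding a unique embedding $e_X\colon X\kto\lrb A$ with $e_X\kcirc\mu_n=e_n$. I then define
\[ v\tleq_X V \iff \forall n.\ \mu_n^p\kcirc v\ \overline{\tleq_n}\ V \qquad (v\in\TD(1,X),\ V\in\Val(A)). \]
Each map $v\mapsto\mu_n^p\kcirc v$ is Scott-continuous and $\mathcal S(\tleq_n;V)$ is Scott-closed, so $(-)\tleq_X V$ is an intersection of preimages of Scott-closed sets, hence Scott-closed in $\TD(1,X)$ (using that $\TD(1,X)\cong X\hookrightarrow\MM X$ preserves directed suprema). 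For the inequality clause I write $v=\sup_n\mu_n\kcirc\mu_n^p\kcirc v$, so $e_X\kcirc v=\sup_n e_n\kcirc(\mu_n^p\kcirc v)$, and Lemma~\ref{lem:logical-inequality} gives $e_n\kcirc(\mu_n^p\kcirc v)\le\lrb V$ for each $n$, whence $e_X\kcirc v\le\lrb V$. Thus $\tleq_X\in\ValRel(X,A,e_X)$.

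Next I verify that each $\mu_n$ is an $\RR(A)$-morphism and that the cocone is universal. The composite $\mu_m^p\kcirc\mu_n$ equals the transition embedding $X_n\kto X_m$ when $m\ge n$ and the transition projection when $m<n$; as these are composites of the $f_i$, hence $\RR(A)$-morphisms, the forward and backward clauses give $\mu_m^p\kcirc\mu_n\kcirc v\ \overline{\tleq_m}\ V$ for all $m$ whenever $v\tleq_n V$, i.e. $\mu_n\kcirc v\tleq_X V$; since $\mu_n\kcirc v\in\TD(1,X)$, Lemma~\ref{lem:id} upgrades this to the forward clause $\mu_n\kcirc v\ \overline{\tleq_X}\ V$, while the backward clause holds by the very definition of $\tleq_X$. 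For universality, given an $\RR(A)$-cocone $\{h_n\}$ into $(Y,e_Y,\tleq_Y)$ with mediating $\PDe$-map $h$ (so $h_n=h\kcirc\mu_n$ and $h_n^p=\mu_n^p\kcirc h^p$), I would express $h\kcirc v=\sup_n h_n\kcirc(\mu_n^p\kcirc v)$ and $h^p\kcirc v=\sup_n \mu_n\kcirc(h_n^p\kcirc v)$ as directed suprema and apply Lemma~\ref{lem:logical-composition-fancy} termwise to place them in the Scott-closed sets $\mathcal S(\tleq_Y;V)$ and $\mathcal S(\tleq_X;V)$ respectively, obtaining the forward and backward clauses; the equation $e_X=e_Y\kcirc h$ follows from uniqueness of the mediating map into $\lrb A$. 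Hence $(X,e_X,\tleq_X)$ is a colimit sent by $U^A$ to $(X,\{\mu_n\})$, so $U^A$ preserves $\omega$-colimits.

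For reflection I would observe that $U^A$ reflects isomorphisms: if $U^A(\phi)$ is invertible in $\PDe$ then $\phi^{-1}=\phi^p$ again satisfies the morphism clauses, since the forward clause for $\phi^{-1}$ is the backward clause for $\phi$ and vice versa (using $(\phi^p)^p=\phi$), so $\phi$ is invertible in $\RR(A)$. Reflection of initial objects and of $\omega$-colimits then follows from preservation by the standard cocone-comparison argument: the canonical lift constructed above and the given object both map to the same (co)limit in $\PDe$, and the induced comparison morphism of $\RR(A)$ is sent to an iso, hence is itself an iso. The main obstacle I anticipate is the interaction between the limit--colimit coincidence and the closure operator $\mathcal S$, specifically the universality step, where $h\kcirc v$ and $h^p\kcirc v$ must be rewritten as directed suprema whose members lie in the Scott-closed sets $\mathcal S(\tleq_Y;V)$ and $\mathcal S(\tleq_X;V)$; everything else is bookkeeping layered on top of the structure of $\PDe$.
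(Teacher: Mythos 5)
Your proposal is correct and follows essentially the same route as the paper's proof: the same initial object $(\varnothing,\,0_{\varnothing,\lrb A},\,\emptyset)$, the same colimit relation $v \tleq_\omega V \iff \forall n.\ \tau_n^p \kcirc v \ \overline{\tleq_n}\ V$ built over the $\PDe$-colimit, the same limit--colimit coincidence computations for the cocone legs and the mediating map, and the same observation that the morphism clauses are symmetric in $f$ and $f^p$ to get reflection. The only cosmetic difference is that the paper verifies reflection of colimits directly by showing the comparison map $a$ and its inverse satisfy the three clauses, rather than routing through ``$U^A$ reflects isomorphisms,'' but the content is identical.
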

\begin{proof}
We begin with the initial object.

\paragraph*{\textbf{Initial object}}
For any dcpo's $X$ and $Y$, we write $0_{X,Y} : X \kto Y$ for the zero morphism in $\PD$. Notice that $0_{\varnothing,X}$ is an embedding with projection counterpart given by $0_{X,\varnothing}.$

The object $(\varnothing, 0_{\varnothing, \lrb{A}}, \emptyset)$ is initial in $\RR(A).$ 
Indeed, let $(X, e_X, \tleq_X)$ be any other object of $\RR(A).$ It suffices to show that $0_{\varnothing, X}: (\varnothing, 0_{\varnothing, \lrb A}, \emptyset) \to (X, e_X, \tleq_X)$ is a morphism in $\RR(A)$, because if it exists, then it is clearly unique.
The first and third conditions of Definition \ref{def:logical-categories} are trivially satisfied. The second condition is also satisfied, because $0_{\varnothing,X}^p \kcirc v = 0_{1, \varnothing}$, which is the least (and only) element in $\KL(1, \varnothing)$
and this element is contained in every relation $\overline{\tleq_Y}$, including $\overline{\emptyset}$.

\paragraph*{\textbf{The diagram}}
For the rest of the proof, let $D: \omega \to \RR(A)$ be an $\omega$-diagram in $\RR(A)$. Let $D(i) = (X_i, e_i, \tleq_i)$ and let $D(i \leq j) = f_{i,j}$.

\paragraph*{\textbf{Construction of the colimiting object}}
Consider the $\omega$-diagram $UD$ in $\PDe$.
This category has all $\omega$-colimits, so let $\tau \colon UD \naturalto X_\omega$ be its colimiting cocone. Next, consider the cocone $\epsilon \colon UD \naturalto \lrb A$ defined by $\epsilon_i \eqdef e_i \colon X_i \kto \lrb A.$
Let $e_\omega : X_\omega \kto \lrb A$ be the unique cocone morphism $e_\omega: \tau \to \epsilon$ induced by the colimit $\tau$ in $\PDe$. We now define a relation
\begin{align*}
&\tleq_\omega \in \ValRel(X_\omega, A, e_\omega) \text{ by:} \\
v &\tleq_\omega V \text{ iff } \forall k \in \mathbb N. \ \tau_k^p \kcirc v \ol{\tleq_k} V.
\end{align*}
We have to show that $\tleq_\omega \in \ValRel(X_\omega, A, e_\omega),$ as claimed above. We begin with downwards-closure. Assume $v \tleq_\omega V$ and that $v' \leq v$ in $\TD(1,X_\omega) $. Then,
$\forall k \in \mathbb N.\ \tau_k^p \kcirc v \ol{\tleq_k} V$ and therefore $\tau_k^p \kcirc v' \ol{\tleq_k} V$, because $(-\ol{\tleq_k} V)$ is downwards-closed and so by definition $v' \tleq_\omega V,$ as required.

Next, we show that $(- \tleq_\omega V)$ preserves directed suprema and is therefore Scott-closed in $\TD(1,X_\omega)$. Assume that $\{v_d\}_{d \in D}$ is a directed set, such that $v_d \tleq_\omega V$ for each $d \in D$.
Therefore, $\forall k \in \mathbb N.\  \forall d \in D.\ \tau_k^p \kcirc v_d \ol{\tleq_k} V.$ Scott-closure of $(- \ol{\tleq_k} V)$ implies that $\tau_k^p \kcirc (\sup_{d \in D} v_d) = \sup_{d \in D} \tau_k^p \kcirc  v_d \ol{\tleq_k} V$ holds for all $k \in \mathbb N$. Therefore,
by definition $\sup_{d \in D} v_d \tleq_\omega V.$

We also have to show that if $v \tleq_\omega V$, then $e_\omega \kcirc v \leq \lrb V.$ If $v \tleq_\omega V$, then $\forall k \in \mathbb N.\ \tau_k^p \kcirc v \ol{\tleq_k} V$ and so by Lemma \ref{lem:logical-inequality} we get
$ e_k \kcirc \tau_k^p \kcirc v \leq \lrb V$. But $e_k \kcirc \tau_k^p \kcirc v = e_\omega \kcirc \tau_k \kcirc \tau_k^p \kcirc v.$ The limit-colimit coincidence theorem in the category $\PD$, shows that this forms an increasing sequence and that
\[ \lrb V \geq \sup_{k \in \mathbb N} e_\omega \kcirc \tau_k \kcirc \tau_k^p \kcirc v = e_\omega \kcirc \left( \sup_{k \in \mathbb N} \tau_k \kcirc \tau_k^p \right) \kcirc v = e_\omega \kcirc \kid \kcirc v = e_\omega \kcirc v , \]
as required. We will show that the object $(X_\omega, e_\omega, \tleq_\omega)$ is the colimiting object of $D$ in $\RR(A)$. Before we can do this, we first have to construct the colimiting cocone in $\RR(A).$

\paragraph*{\textbf{Construction of the colimiting cocone}}
We show that $\tau : D \naturalto X_\omega$ is a cocone in $\RR(A)$. The commutativity requirements are clearly satisfied, so it suffices to show 
that each $\tau_i \colon X_i \kto X_\omega$ is a morphism $\tau_i \colon (X_i, e_i, \tleq_i) \to (X_\omega, e_\omega, \tleq_\omega)$ in $\RR(A)$.
Towards that end, assume that $v \tleq_i V$. We have to show that $\tau_i \kcirc v \ol{\tleq_\omega} V$, but by Lemma \ref{lem:id}, it
suffices to show that $\tau_i \kcirc v \tleq_\omega V$\footnote{Note that $\tau_i \kcirc v$ is a morphism of $\TD$, because $v$ is one and because $\tau_i \in \PD_e$ which is a subcategory of $\TD.$}.
Showing this is equivalent to showing that $\forall k \in \mathbb N.\ \tau_k^p \kcirc \tau_i \kcirc v \ol{\tleq_k} V.$
For any $k \geq i$, we get:
\[ \tau_k^p \kcirc \tau_i \kcirc v = \tau_k^p \kcirc \tau_k \kcirc f_{i,k} \kcirc v = f_{i,k} \kcirc v \ol{\tleq_k} V \]
because $f_{i,k}$ is a morphism $f_{i,k} \colon (X_i, e_i, \tleq_i) \to (X_k, e_k, \tleq_k)$ and $v \tleq_i V$ by assumption.
For any $k < i$, we get: 
\[ \tau_k^p \kcirc \tau_i \kcirc v = f_{k,i}^p \kcirc \tau_i^p \kcirc \tau_{i} \kcirc v = f_{k,i}^p \kcirc v \ol{\tleq_k} V \]
because $f_{k,i}$ is a morphism $f_{k,i} \colon (X_k, e_k, \tleq_k) \to (X_i, e_i, \tleq_i)$ and $v \tleq_i V$ by assumption (and Lemma \ref{lem:logical-composition-fancy}).

To show that $\tau_i \colon (X_i, e_i, \tleq_i) \to (X_\omega, e_\omega, \tleq_\omega)$ is a morphism, we have to show that if $v \tleq_\omega V,$ then also $\tau_i^p \kcirc v \ol{\tleq_i} V .$ But this is true by definition of $\tleq_\omega.$

Finally we have to show that $e_i = e_\omega \kcirc \tau_i.$ But this is true by construction of $e_\omega$.

Therefore, $\tau : D \naturalto (X_\omega, e_\omega, \tleq_\omega)$ is indeed a cocone of $D$ in $\RR(A).$

\paragraph*{\textbf{Couniversality of the cocone}}
For the rest of the proof, assume that $\alpha : D \naturalto (Y, e_y, \tleq_Y)$ is some other cocone of $D$ in $\RR(A).$ Next, consider the cocone $U\alpha$ in $\PDe$ and let $a: X_\omega \kto Y$ be the unique cocone morphism $a : U\tau \to U\alpha$ induced by the colimit in $\PDe.$
By the limit-colimit coincidence theorem in $\PD$, we get
\[ a = a \kcirc \kid = a \kcirc \sup_{i \in \mathbb N} \tau_i \kcirc \tau_i^p = \sup_{i \in \mathbb N} a \kcirc \tau_i \kcirc \tau_i^p = \sup_{i \in \mathbb N} \alpha_i \kcirc \tau_i^p \]
We will show that $a: (X_\omega, e_\omega, \tleq_\omega) \to (Y, e_Y, \tleq_Y)$ is a morphism in $\RR(A).$ Towards this end, assume that $v \tleq_\omega V$. Then $\forall k \in \mathbb N.\ \tau_k^p \kcirc v \ol{\tleq_k} V$ and therefore $\alpha_k \kcirc \tau_k^p \kcirc v \ol{\tleq_Y} V, $ because by assumption
$\alpha_k : (X_k, e_k, \tleq_k) \to (Y, e_y, \tleq_Y).$ Since $(- \ol{\tleq_Y} V)$ is closed under suprema, it follows
\[ \sup_{k \in \mathbb N} \alpha_k \kcirc \tau_k^p \kcirc v =  \left( \sup_{k \in \mathbb N} \alpha_k \kcirc \tau_k^p \right) \kcirc v = a \kcirc v \ol{\tleq_Y} V, \]
which shows that $a$ satisfies one of the requirements for being a morphism in $\RR(A).$

For the second requirement, assume that $v \tleq_Y V.$ Then, $\forall k \in \mathbb N.\ \alpha_k^p \kcirc v \ol{\tleq_k} V$, by assumption on $\alpha_k$. The same argument shows that $\forall k \in \mathbb N.\ \tau_k \kcirc \alpha_k^p \kcirc v \ol{\tleq_\omega} V$,
because $\tau_k$ is also a morphism in the category. Since $(- \ol{\tleq_\omega} V)$ is closed under suprema, we get:
\[ \sup_{k \in \mathbb N} \tau_k \kcirc \alpha_k^p \kcirc v = \sup_{k \in \mathbb N} \tau_k \kcirc \tau_k^p \kcirc a^p \kcirc v = \left( \sup_{k \in \mathbb N} \tau_k \kcirc \tau_k^p \right) \kcirc a^p \kcirc v =  a^p \kcirc v  \ol{\tleq_\omega} V \]
as required.

For the third requirement, we have to show that $e_\omega = e_Y \kcirc a.$ By assumption on the cone $\alpha : D \naturalto (Y, e_Y, \tleq_Y)$, we have that $\forall i \in \mathbb N.\ e_i = e_Y \kcirc \alpha_i$
and by construction of $a$, we know $\alpha_i = a \kcirc \tau_i.$ Therefore $\forall i\in \mathbb N.\ e_i = e_Y \kcirc a \kcirc \tau_i.$ However, $e_\omega$ is by construction the unique morphism in $\PD_e$, such that $\forall i.\ e_i = e_\omega \kcirc \tau_i,$ which shows that
$e_\omega = e_Y \kcirc a,$ as required. Therefore, we have shown that $a : (X_\omega, e_\omega, \tleq_\omega) \to (Y, e_Y, \tleq_Y) $ is indeed a morphism in $\RR(A).$

That $a : \tau \to \alpha$ is the unique cocone morphism is now obvious, because if $a': \tau \to \alpha$ is another one, then $Ua$ and $Ua'$ are both
cocone morphisms between $U\tau$ and $U\alpha$ in $\PDe$ and therefore $a = Ua = Ua' = a'.$
Therefore,  $\tau : D \naturalto (X_\omega, e_\omega, \tleq_\omega)$ is indeed the colimiting cocone of $D$ in $\RR(A)$, which shows that $\RR(A)$ has all $\omega$-colimits.

\paragraph*{\textbf{$U^A$ preserves $\omega$-colimits}}
Assume that the cocone $\alpha : D \naturalto (Y, e_y, \tleq_Y)$ from above is colimiting in $\RR(A).$ But, we know that $\tau : D \naturalto (X_\omega, e_\omega, \tleq_\omega)$ is also a colimiting cocone of $D$. Therefore, there exists a unique cocone isomorphism $i: \tau \to \alpha.$ Then, $Ui : U\tau \to U\alpha$ is
a cocone isomorphism in $\PDe$. However, by construction, $U\tau$ is a colimiting cocone of $UD$ in $\PDe$ and therefore so is $U\alpha$. 

\paragraph*{\textbf{$U^A$ reflects $\omega$-colimits}}
Assume that the cocone $\alpha : D \naturalto (Y, e_y, \tleq_Y)$ from above is such that $U\alpha : UD \naturalto Y$ is colimiting in $\PDe.$ Then the morphism $a: X_\omega \kto Y$ from above is an isomorphism in $\PDe$.
We have already shown that $a : (X_\omega, e_\omega, \tleq_\omega) \to (Y, e_Y, \tleq_Y)$ is a morphism in $\RR(A)$. Thus, to finish the proof, it suffices to show that $a^{-1}$ is a morphism in $\RR(A)$ in the opposite direction. But this is obviously true,
because $a^{-1} = a^p$ and $(a^{-1})^p = a$ and we have shown above that these morphisms satisfy the logical requirements and clearly $e_Y = e_\omega \kcirc a^{-1}.$

\end{proof}

Next, we introduce important relation constructors and some new notation.

\begin{notation}
Given morphisms $m_i : 1 \kto X_i$, for $i \in \{1, \ldots, n\},$ we define
\[ \llangle m_1, \ldots, m_n \rrangle \eqdef (m_1 \ktimes \cdots \ktimes m_n) \kcirc \JJ \langle \id_1, \ldots, \id_1 \rangle : 1 \kto X_1 \times \cdots \times X_n . \]
\end{notation}
\begin{notation}
\label{not:application}
Given morphisms $x : 1 \kto X$ and $f: 1 \kto [X \kto Y]$ in $\KL$, let $f[x] : 1 \kto Y$ be the morphism defined by
\[ f[x] \eqdef \epsilon \kcirc (f \ktimes x) \kcirc \JJ\langle \id_1, \id_1 \rangle . \]
\end{notation}

\begin{definition}[Relation Constructions]
\label{def:logical-constructions}
We define relation constructors:
\begin{itemize}
\item If $\tleqone \in \ValRel(X_1, A_1, e_1)$ and $\tleqtwo \in \ValRel(X_2, A_2, e_2)$, define
  \begin{align*}
  (\tleqone &+ \tleqtwo) \in \ValRel(X_1+ X_2, A_1 + A_2, e_1 \kplus e_2) \text{ by: } \\
  \JJ \emph{in}_i \kcirc v\ (\tleqone &+ \tleqtwo)\ \mathtt{in}_i V \text{ iff }  v \tleq_{X_i, A_i}^{e_i} V   \qquad\qquad (\text{for $i \in \{1,2\}$}). 
  \end{align*}
\item If $\tleqone \in \ValRel(X_1, A_1, e_1)$ and $\tleqtwo \in \ValRel(X_2, A_2, e_2)$, define
  \begin{align*}
  (\tleqone &\times \tleqtwo) \in \ValRel(X_1 \times X_2, A_1 \times A_2, e_1 \ktimes e_2) \text{ by: } \\
  \llangle v_1, v_2 \rrangle \ (\tleqone &\times \tleqtwo)\ (V_1,V_2) \text{ iff }  v_1 \tleqone V_1 \text{ and } v_2 \tleqtwo V_2.
  \end{align*}
\item If $\tleqone \in \ValRel(X_1, A_1, e_1)$ and $\tleqtwo \in \ValRel(X_2, A_2, e_2)$, define
  \begin{align*}
  (\tleqone &\to\ \tleqtwo) \in \ValRel([X_1 \kto X_2], A_1 \to A_2, \JJ [e_1^p \kto e_2]) \text{ by: } \\
  f\ (\tleqone &\to\ \tleqtwo)\ \lambda x. M \text{ iff }  \JJ [e_1^p \kto e_2] \kcirc f \leq \lrb{\lambda x.M} \text{ and } \forall (v \tleqone V).\  f[v] \qtleqtwo (\lambda x.M)V. 
  \end{align*}
\end{itemize}
\end{definition}

\begin{lemma}
The assignments in Definition \ref{def:logical-constructions} are indeed well-defined.
\end{lemma}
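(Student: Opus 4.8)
The plan is to verify, for each of the three constructors, the two defining requirements of $\ValRel$ from Definition~\ref{def:logical-relations}: that for every value $V$ of the relevant type the fibre $\{v \mid v \tleq V\}$ is a Scott-closed subset of $\TD(1,X)$, and that $v \tleq V$ implies $e \kcirc v \le \lrb V$ for the relevant embedding $e$. Before the case analysis I would record two preliminaries that legitimise the clauses. First, that $e_1 \kplus e_2$, $e_1 \ktimes e_2$ and $\JJ[e_1^p \kto e_2]$ are genuine morphisms of $\PDe$ with the stated (co)domains; this is the content of Proposition~\ref{prop:omega-functors}. Second, that closed values of type $A_1+A_2$ (resp. $A_1\times A_2$, resp. $A_1\to A_2$) are exactly the terms $\mathtt{in}_i V$ (resp. $(V_1,V_2)$, resp. $\lambda x.M$), so the defining clauses cover every value unambiguously. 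I would also note a fact used repeatedly: since $\eta$ is a Scott-continuous order-embedding, the Diracs form a sub-dcpo of $\MM Y$, so directed suprema computed in $\TD(1,Y)$ agree with those computed in $\KL(1,Y)$.

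For the coproduct I would use that $\TD(1,X_1+X_2)$ decomposes as the disjoint union of the images of $\JJ\mathtt{in}_1$ and $\JJ\mathtt{in}_2$, each component being clopen. The fibre over $\mathtt{in}_i V$ is the image under $\JJ\mathtt{in}_i$ of the Scott-closed set $\{v \mid v \tleq_{X_i,A_i}^{e_i} V\}$, hence Scott-closed in the clopen $i$-th component and therefore in the whole. For the inequality, naturality of the coprojections gives $(e_1\kplus e_2)\kcirc \JJ\mathtt{in}_i \kcirc v = \JJ\mathtt{in}_i \kcirc e_i \kcirc v$, and combining $e_i\kcirc v \le \lrb V$ with monotonicity and the semantic equation $\lrb{\mathtt{in}_i V} = \JJ\mathtt{in}_i\kcirc\lrb V$ closes the case. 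The product case is analogous: under $\TD(1,X_1\times X_2)\cong \TD(1,X_1)\times \TD(1,X_2)$ the fibre over $(V_1,V_2)$ is a product of two Scott-closed sets, hence Scott-closed; and functoriality of $\ktimes$ together with $\lrb{(V_1,V_2)} = \llangle \lrb{V_1},\lrb{V_2}\rrangle$ and monotonicity of $\llangle-,-\rrangle$ (from $\DCPO$-enrichment of $\ktimes$) give the inequality.

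The function case is where the real work lies, and I would treat it last. Here the second $\ValRel$ requirement is immediate, since it is literally the first conjunct $\JJ[e_1^p\kto e_2]\kcirc f \le \lrb{\lambda x.M}$ of the definition. The obstacle is Scott-closedness of the fibre over $\lambda x.M$. Downward closure is routine: if $f'\le f$, then post-composition with $\JJ[e_1^p\kto e_2]$ and each application $f'\mapsto f'[v]$ are monotone, and the sets $\mathcal S(\tleqtwo;(\lambda x.M)V)$ are downward-closed (being Scott-closed by Definition~\ref{def:SM}), so both conjuncts are inherited. For closure under directed suprema I would invoke $\DCPO$-enrichment: post-composition $\JJ[e_1^p\kto e_2]\kcirc(-)$ and the application map $f\mapsto f[v] = \epsilon\kcirc(f\ktimes v)\kcirc \JJ\langle\id,\id\rangle$ are built from Scott-continuous operations, hence Scott-continuous in $f$. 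Thus for a directed family $\{f_d\}$ in the fibre, $\JJ[e_1^p\kto e_2]\kcirc \sup_d f_d = \sup_d (\JJ[e_1^p\kto e_2]\kcirc f_d) \le \lrb{\lambda x.M}$, and for each $v\tleqone V$ we have $(\sup_d f_d)[v] = \sup_d (f_d[v]) \in \mathcal S(\tleqtwo;(\lambda x.M)V)$ because each term lies in this Scott-closed, hence sup-closed, set.

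The main obstacle I anticipate is exactly this directed-supremum argument in the function case: it depends on the application operation being genuinely Scott-continuous in its functional argument and on the suprema being taken compatibly in $\TD$ and $\KL$. This is precisely why the preliminary observation that $\TD(1,Y)$ is a sub-dcpo of $\KL(1,Y)$ is essential, as it lets the enriched Scott-continuity of $\kcirc$ and of application transfer to suprema formed inside $\TD(1,[X_1\kto X_2])$.
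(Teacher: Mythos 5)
Your verification is correct and matches the paper's intent exactly: the paper dismisses this lemma with ``Straightforward verification,'' and what you have written out — checking Scott-closedness of each fibre and the inequality $e \kcirc v \le \lrb V$ for each of the three constructors, with the sub-dcpo observation about $\TD(1,Y) \subseteq \KL(1,Y)$ doing the work in the function case — is precisely that verification carried out in detail.
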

\begin{proof}
Straightforward verification.
\end{proof}

Next, a simple lemma that we use later.

\begin{lemma}
\label{lem:parameterised-adjunction}
Assume we are given morphisms $f : 1 \kto [C \kto D], h : A \kto C, g : D \kto B$ and $v: 1 \kto A.$ Then
\[ ( \JJ[h \kto g] \kcirc f)[v] = g \kcirc f[h \kcirc v] . \]
\end{lemma}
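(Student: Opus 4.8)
The plan is to unfold both sides into expressions built from the Kleisli tensor $\ktimes$ and the evaluation maps $\epsilon$, and then to transform the left-hand side into the right-hand side using only the bifunctoriality of $\ktimes$ together with the defining property of the currying bijection $\lambda$ from \eqref{eq:currying-simple}. Concretely, the two facts I would rely on are the explicit formula for the action of the exponential bifunctor, $[h \kto g] = \lambda\big(g \kcirc \epsilon \kcirc (\kid \ktimes h)\big)$ (the extension recorded in the footnote after \eqref{eq:currying-simple}), and the counit equation $\epsilon \kcirc (\JJ\lambda(k) \ktimes \kid) = k$, which expresses that $\lambda$ is inverse to precomposition with evaluation.

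First I would expand the left-hand side. Writing $F \eqdef \JJ[h \kto g] \kcirc f \colon 1 \kto [A \kto B]$ and unfolding Notation \ref{not:application}, we get
\[ F[v] = \epsilon \kcirc (F \ktimes v) \kcirc \JJ\langle \id_1, \id_1 \rangle . \]
Since $v = \kid_A \kcirc v$, bifunctoriality of $\ktimes$ gives $F \ktimes v = (\JJ[h \kto g] \ktimes \kid_A) \kcirc (f \ktimes v)$, whence
\[ F[v] = \epsilon \kcirc (\JJ[h \kto g] \ktimes \kid_A) \kcirc (f \ktimes v) \kcirc \JJ\langle \id_1, \id_1 \rangle . \]
Substituting $[h \kto g] = \lambda\big(g \kcirc \epsilon \kcirc (\kid \ktimes h)\big)$ and applying the counit equation reduces the leading composite to $\epsilon \kcirc (\JJ[h \kto g] \ktimes \kid_A) = g \kcirc \epsilon \kcirc (\kid \ktimes h)$.

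A second appeal to bifunctoriality, this time in the form $(\kid \ktimes h) \kcirc (f \ktimes v) = f \ktimes (h \kcirc v)$, then yields
\[ F[v] = g \kcirc \epsilon \kcirc (f \ktimes (h \kcirc v)) \kcirc \JJ\langle \id_1, \id_1 \rangle , \]
and the tail $\epsilon \kcirc (f \ktimes (h \kcirc v)) \kcirc \JJ\langle \id_1, \id_1 \rangle$ is exactly $f[h \kcirc v]$ by Notation \ref{not:application}, so that $F[v] = g \kcirc f[h \kcirc v]$, as required. The computation is routine once it is set up, and I do not expect a genuine obstacle; the one point requiring care is the bookkeeping of the components of $\epsilon$ and of the identities being tensored, since the outer evaluation lives at $[A \kto B] \ktimes A \kto B$ while the inner one lives at $[C \kto D] \ktimes C \kto D$, and each invocation of bifunctoriality or of the counit equation must be read at the matching component.
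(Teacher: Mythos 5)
Your proof is correct, and it follows the same overall skeleton as the paper's: unfold Notation \ref{not:application}, use bifunctoriality of $\ktimes$ to isolate $\epsilon \kcirc (\JJ[h \kto g] \ktimes \kid)$, rewrite that composite as $g \kcirc \epsilon \kcirc (\kid \ktimes h)$, and fold back up. The one place you diverge is in how that central rewrite is justified. The paper factors $[h \kto g] = [\kid \kto g] \circ [h \kto \kid]$ and then handles the two factors separately, using naturality of $\epsilon$ in the covariant argument to extract $g$, and citing Mac Lane's ``parameterised adjunction'' result (the dinaturality of evaluation in the contravariant argument) to turn $\epsilon \kcirc (\JJ[h \kto \kid] \ktimes \kid)$ into $\epsilon \kcirc (\kid \ktimes h)$. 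You instead substitute the explicit formula $[h \kto g] = \lambda\bigl(g \kcirc \epsilon \kcirc (\kid \ktimes h)\bigr)$ recorded in the paper's footnote and apply the counit (triangle) identity $\epsilon \kcirc (\JJ\lambda(k) \ktimes \kid) = k$ once. Both are sound; your version is arguably the more self-contained of the two, since it leans only on the paper's own definition of the bifunctor extension and the defining property of the adjunction \eqref{eq:currying-simple}, whereas the paper's version makes the covariant/contravariant mechanism explicit at the cost of an external citation. Your closing caution about matching the components of the two evaluations is exactly the right bookkeeping to flag.
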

\begin{proof}
\begin{align*}
( \JJ[h \kto g] \kcirc f)[v] &= \epsilon \kcirc ( (\JJ[h \kto g] \kcirc f) \ktimes v) \kcirc \JJ \langle \id, \id \rangle & (\text{Definition}) \\
&= \epsilon \kcirc (\JJ[h \kto g] \ktimes \kid) \kcirc ( f \ktimes v) \kcirc \JJ \langle \id, \id \rangle &  \\
&= \epsilon \kcirc (\JJ[\kid \kto g] \ktimes \kid) \kcirc (\JJ[h \kto \kid] \ktimes \kid) \kcirc ( f \ktimes v) \kcirc \JJ \langle \id, \id \rangle &  \\
&= g \kcirc \epsilon \kcirc (\JJ[h \kto \kid] \ktimes \kid) \kcirc ( f \ktimes v) \kcirc \JJ \langle \id, \id \rangle & (\text{Naturality of } \epsilon) \\
&= g \kcirc \epsilon \kcirc (\kid \ktimes h) \kcirc ( f \ktimes v) \kcirc \JJ \langle \id, \id \rangle & (\text{Parameterised adjunction \cite[pp.102]{maclane}}) \\
&= g \kcirc f[h \kcirc v]& (\text{Definition})
\end{align*}
\end{proof}

\begin{notation}
Throughout the rest of the paper we shall write $(- \ktwo_e -) \eqdef [- \kto -]_e^\JJ : \PDe \times \PDe \to \PDe$. That is, we just introduce a more concise notation for the functor $[- \kto - ]_e^\JJ$ from Proposition \ref{app:omega-functors}.
\end{notation}

The next definition is crucial. Given two logical relations, it is used to define the product, coproduct and function space logical relations. Moreover, this is done in a functorial sense on the categories $\RR(A)$.

\begin{definition}
\label{def:logical-functors}
Let $A$ and $B$ be types. We define \emph{covariant} functors in the following way (recall Definition \ref{def:logical-constructions}):
\begin{enumerate}
\item $\times^{A,B} \colon \RR(A) \times \RR(B) \to \RR(A \times B)$ by
  \begin{align*}
  (X, e_X, \tleq_X) \times^{A,B} (Y, e_Y, \tleq_Y) &\eqdef (X \times Y, e_X \ktimes_e e_Y, \tleq_X \times \tleq_Y) \\
  f \times^{A,B} g &\eqdef f \ktimes_e g
  \end{align*}
\item $+^{A,B} \colon \RR(A) \times \RR(B) \to \RR(A + B)$ by
  \begin{align*}
  (X, e_X, \tleq_X) +^{A,B} (Y, e_Y, \tleq_Y) &\eqdef (X + Y, e_X \kplus_e e_Y, \tleq_X + \tleq_Y) \\
  f +^{A,B} g &\eqdef f \kplus_e g
  \end{align*}
\item $\to^{A,B} \colon \RR(A) \times \RR(B) \to \RR(A \to B)$ by
  \begin{align*}
  (X, e_X, \tleq_X) \to^{A,B} (Y, e_Y, \tleq_Y) &\eqdef ([X \kto Y], e_X \ktwo_e e_Y, \tleq_X \to \tleq_Y) \\
  f \to^{A,B} g &\eqdef f \ktwo_e g
  \end{align*}
\end{enumerate}
\end{definition}

\begin{proposition}
Each of the functors from Definition \ref{def:logical-functors} is well-defined.
\end{proposition}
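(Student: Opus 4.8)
The plan is to verify three things for each of the three functors $\times^{A,B}$, $+^{A,B}$, $\to^{A,B}$: that it sends objects of $\RR(A)\times\RR(B)$ to objects of the target category, that it sends morphisms to morphisms (i.e.\ the three defining conditions of Definition~\ref{def:logical-categories} hold for the image), and that it preserves identities and composition. The last point is immediate, since on underlying dcpos each functor agrees with the corresponding Kleisli bifunctor ($\ktimes_e$, $\kplus_e$ or $\ktwo_e$) of Proposition~\ref{app:omega-functors}, and composition and identities in every $\RR(C)$ coincide with those of $\PDe$. The object action is also routine: the embedding component (e.g.\ $e_X \ktimes_e e_Y \colon X\times Y \kto \lrb{A\times B} = \lrb A\times\lrb B$) is a legitimate $\PDe$-morphism because $\ktimes_e,\kplus_e,\ktwo_e$ are functors on $\PDe$, and the relation component ($\tleq_X\times\tleq_Y$, etc.) lies in the correct $\ValRel$ set by the well-definedness lemma following Definition~\ref{def:logical-constructions}. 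Finally, the commutation condition lifts directly: for instance $(e_{X'}\ktimes_e e_{Y'})\kcirc(f\ktimes_e g) = (e_{X'}\kcirc f)\ktimes_e(e_{Y'}\kcirc g) = e_X\ktimes_e e_Y$ by functoriality together with the commutation conditions for $f$ and $g$, and analogously for $\kplus_e$; for $\ktwo_e$ one additionally uses $e_X^p = f^p\kcirc e_{X'}^p$ and $f\kcirc f^p\le\id$ from the embedding--projection structure.

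So the substance is the two logical conditions (forward and backward preservation of relations) for $f\ktimes_e g$, $f\kplus_e g$ and $f\ktwo_e g$, which I would establish by computing each functor explicitly on the generators of the constructed relations and then reducing to the underlying value-relations. For the product one has $(f\ktimes_e g)\kcirc\llangle v_1,v_2\rrangle = \llangle f\kcirc v_1, g\kcirc v_2\rrangle$ and $(f\ktimes_e g)^p = f^p\ktimes g^p$. Forward, from $v_1\tleq_X V_1$ and $v_2\tleq_Y V_2$ the maps $f\kcirc v_i$ are total and related, so Lemma~\ref{lem:id} recovers the value-relations and the pair satisfies $\tleq_{X'}\times\tleq_{Y'}$. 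Backward is subtler because $f^p\kcirc v_1'$ need not be total: Lemma~\ref{lem:semantically-dense-value} identifies $\mathcal S(\tleq_X;V_1)$ with $\overline{\{u\mid u\tleq_X V_1\}}\cup\{0\}$, the pairing $\llangle -,-\rrangle$ is Scott-continuous and sends $0$ in either slot to $0$ (since $\otimes$ is strict), and it maps the generators into the Scott-closed set $\mathcal S(\tleq_X\times\tleq_Y;(V_1,V_2))$, so Lemma~\ref{lem:topological-goodness} pushes this bound to the closures and places $\llangle f^p\kcirc v_1', g^p\kcirc v_2'\rrangle$ in that set. The coproduct case is entirely analogous, with the coprojections $\JJ\emph{in}_i$ in place of the pairing and the strictness of Kleisli extension giving $\JJ\emph{in}_i\kcirc 0 = 0$.

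The function-space functor $\to^{A,B}$ is the main obstacle. Here $f\ktwo_e g = \JJ[f^p\kto g]$ with projection $(f\ktwo_e g)^p = \JJ[f\kto g^p]$, and the reparenthesising identity $(\JJ[h\kto k]\kcirc h')[v] = k\kcirc h'[h\kcirc v]$ of Lemma~\ref{lem:parameterised-adjunction} is the key tool. For the forward condition, given $h\,(\tleq_X\to\tleq_Y)\,\lambda x.M$, the inequality clause follows from the bifunctorial identity $\JJ[e_{X'}^p\kto e_{Y'}]\kcirc\JJ[f^p\kto g] = \JJ[e_X^p\kto e_Y]$, while the application clause requires $g\kcirc h[f^p\kcirc v']\;\overline{\tleq_{Y'}}\;(\lambda x.M)V'$ for $v'\tleq_{X'}V'$: one invokes the backward condition for $f$ to get $f^p\kcirc v'\in\mathcal S(\tleq_X;V')$, treats the $0$ case directly, and on the nonzero part applies Lemma~\ref{lem:topological-goodness} to the Scott-continuous map $m\mapsto g\kcirc h[m]$, which carries related $v$ into $\mathcal S(\tleq_{Y'};(\lambda x.M)V')$ by the application clause for $h$ and Lemma~\ref{lem:logical-composition-fancy} for $g$. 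The backward condition mirrors this, using the forward condition for $f$, Lemma~\ref{lem:id} to recover a value-relation, the application clause for $h'$, and the projection half of Lemma~\ref{lem:logical-composition-fancy} for $g$; its inequality clause reduces to $\JJ[e_X^p\kto e_Y]\kcirc\JJ[f\kto g^p]\le\JJ[e_{X'}^p\kto e_{Y'}]$, which follows from $f\kcirc e_X^p\le e_{X'}^p$, $e_Y\kcirc g^p\le e_{Y'}$ and monotonicity of the hom-bifunctor and of Kleisli composition. The delicate points throughout are the careful bookkeeping between the genuine value-relations $\tleq$ and their Scott-closures $\overline{\tleq}$, the systematic treatment of the zero morphism, and matching the covariant/contravariant uses of $f$ and $g$ against the forward versus backward clauses.
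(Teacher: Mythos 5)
Your proposal is correct and follows essentially the same route as the paper: the substance is the forward/backward relation-preservation conditions, with the function-space case handled exactly as in the paper via the bifunctorial identity for the embedding components, Lemma~\ref{lem:parameterised-adjunction}, the zero-versus-total dichotomy combined with Lemma~\ref{lem:id}, and the embedding--projection inequalities for the inequality clauses. The paper only writes out the function-space case and dismisses $\ktimes$ and $\kplus$ as similar; your explicit treatment of those (and of the object-level and functoriality checks) is a harmless elaboration of the same argument.
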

\begin{proof}
We will show the case for function types which is the most complicated. The other cases follow by a straightforward verification using similar arguments.

\paragraph*{\textbf{Function types}}
Let
\begin{align*}
f_1 &: (X_1, e_1^X, \tleq_1^X) \to (Y_1, e_1^Y, \tleq_1^Y) \\
f_2 &: (X_2, e_2^X, \tleq_2^X) \to (Y_2, e_2^Y, \tleq_2^Y)
\end{align*}
We have to show
\[ f_1 \ktwo_e f_2 : (X_1 \ktwo_e X_2, e_1^X \ktwo_e e_2^X, \tleq_1^X \to \tleq_2^X) \to (Y_1 \ktwo_e Y_2, e_1^Y \ktwo_e e_2^Y, \tleq_1^Y \to \tleq_2^Y)  \]
is a morphism in $\RR(A \to B)$.

First, we show that $f_1 \ktwo_e f_2$ respects the embedding component. Indeed:
\begin{align*}
e_1^X \ktwo_e e_2^X = (e_1^Y \kcirc f_1) \ktwo_e (e_2^Y \kcirc f_2) = (e_1^Y \ktwo_e e_2^Y) \kcirc (f_1 \ktwo_e f_2).
\end{align*}
Next, assume that $v\ (\tleq_1^X \to \tleq_2^X)\ V$ . Assume further that $v' \tleq_1^Y V'$. Then, clearly
$f_1^p \kcirc v' \ol{\tleq_1^X} V'$. If $f_1^p \kcirc v' = 0$, then it trivially follows that $v[f_1^p \kcirc v'] = 0 \ol{\tleq_2^X} VV'.$
Otherwise, $f_1^p \kcirc v' \in \TD$ and so $f_1^p \kcirc v' \tleq_1^X V'$ and therefore $v[f_1^p \kcirc v'] \ol{\tleq_2^X} VV'.$
In all cases, $v[f_1^p \kcirc v'] \ol{\tleq_2^X} VV'$ and therefore $f_2 \kcirc v[f_1^p \kcirc v'] \ol{\tleq_2^Y} VV'$.
But then, by Lemma \ref{lem:parameterised-adjunction} we have:
\[ f_2 \kcirc v[f_1^p \kcirc v'] = (\JJ[f_1^p \kto f_2] \kcirc v)[v'] = ((f_1 \ktwo_e f_2) \kcirc v)[v'] \ol{\tleq_2^Y} VV' . \]
Furthemore 
\begin{align*}
(e_1^Y \ktwo_e e_2^Y) \kcirc (f_1 \ktwo_e f_2) \kcirc v = (e_1^X \ktwo_e e_2^X) \kcirc v \leq \lrb V
\end{align*}
and therefore by definition $(f_1 \ktwo_e f_2) \kcirc v\ (\tleq_1^Y \to \tleq_2^Y)\ V$ and therefore also $(f_1 \ktwo_e f_2) \kcirc v \ol{(\tleq_1^Y \to \tleq_2^Y)} V$, as required.

For the other direction, assume that $v\ (\tleq_1^Y \to \tleq_2^Y)\ V$ . Assume further that $v' \tleq_1^X V'$. Then, clearly
$f_1 \kcirc v' \ol{\tleq_1^Y} V'$. If $f_1 \kcirc v' = 0$, then it trivially follows that $v[f_1 \kcirc v'] = 0 \ol{\tleq_2^Y} VV'.$
Otherwise, $f_1 \kcirc v' \in \TD$ and so $f_1 \kcirc v' \tleq_1^Y V'$ and therefore $v[f_1 \kcirc v'] \ol{\tleq_2^Y} VV'.$
In all cases, $v[f_1 \kcirc v'] \ol{\tleq_2^Y} VV'$ and therefore $f_2^p \kcirc v[f_1 \kcirc v'] \ol{\tleq_2^X} VV'$.
But then, by Lemma \ref{lem:parameterised-adjunction} we have:
\[ f_2^p \kcirc v[f_1 \kcirc v'] = (\JJ[f_1 \kto f_2^p] \kcirc v)[v'] = ((f_1 \ktwo_e f_2)^p \kcirc v)[v'] \ol{\tleq_2^X} VV' . \]
Furthemore
\begin{align*}
(e_1^X \ktwo_e e_2^X) \kcirc (f_1 \ktwo_e f_2)^p \kcirc v &= \JJ[(e_1^X)^p \kto e_2^X] \kcirc \JJ[ f_1 \kto f_2^p ] \kcirc v \\
&= \JJ[ (f_1 \kcirc (e_1^X)^p) \kto (e_2^X \kcirc f_2^p) ] \kcirc v \\
&\leq \JJ[ (f_1 \kcirc (e_1^X)^p) \kto e_2^Y ] \kcirc v \\
&\leq \JJ[ (e_1^Y)^p \kto e_2^Y ] \kcirc v \\
&\leq \lrb V .
\end{align*}
If $(f_1 \ktwo_e f_2)^p \kcirc v \in \TD$, then $(f_1 \ktwo_e f_2)^p \kcirc v\ (\tleq_1^X \to \tleq_2^X)\ V$ by definition.
Otherwise, $(f_1 \ktwo_e f_2)^p \kcirc v = 0 $ and then trivially $(f_1 \ktwo_e f_2)^p \kcirc v = 0 \ol{(\tleq_1^X \to \tleq_2^X)} V$.
Therefore, in all cases $(f_1 \ktwo_e f_2)^p \kcirc v \ol{(\tleq_1^X \to \tleq_2^X)} V$, as required.

Therefore, the functor $\to^{A,B}$ is indeed well-defined.
\end{proof}

Observe that Definition \ref{def:logical-functors} lifts the functors that we use to interpret our types in the category $\KL$ to the categories $\RR(A)$. Next, we show that the functors we just defined are also suitable for forming (parameterised) initial algebras.

\begin{proposition}
\label{prop:logical-functors}
For $\star \in \{\times, +, \to\},$ for all types $A$ and $B$, the functor $\star^{A,B} : \RR(A) \times \RR(B) \to \RR(A \star B)$ is $\omega$-cocontinuous and the following diagram:
\[ \stikz{logical-functors.tikz} \]
commutes.
\end{proposition}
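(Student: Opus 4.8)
The plan is to reduce the entire statement to two facts already in hand: Proposition~\ref{prop:logical-colimits}, which says each $\RR(A)$ has all $\omega$-colimits and each forgetful functor $U^A$ both preserves and reflects them, and Proposition~\ref{app:omega-functors}, which gives $\omega$-cocontinuity of the base bifunctors on $\PDe$. The idea is to transport $\omega$-cocontinuity along the forgetful functors using the commuting square, so that no colimit needs to be constructed by hand. Throughout I would write $\star_e$ for the bifunctor on $\PDe$ corresponding to $\star$, that is, $\ktimes_e$, $\kplus_e$ or $\ktwo_e$ according as $\star$ is $\times$, $+$ or $\to$; in all three cases $\star_e \colon \PDe \times \PDe \to \PDe$ is covariant and $\omega$-cocontinuous by Proposition~\ref{app:omega-functors}.

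First I would dispatch commutativity of the diagram, which asserts that the forgetful functors intertwine $\star^{A,B}$ with $\star_e$, namely $U^{A \star B} \circ \star^{A,B} = \star_e \circ (U^A \times U^B)$. This is immediate from Definition~\ref{def:logical-functors} together with Definition~\ref{def:logical-forgetful}: on an object $\big((X, e_X, \tleq_X), (Y, e_Y, \tleq_Y)\big)$ both composites return the underlying dcpo $X \star_e Y$, and on a morphism $(f, g)$ both return $f \star_e g$, because $U^A$, $U^B$ and $U^{A \star B}$ are the identity on underlying dcpo's and on morphisms. Unfolding the definitions is all that is needed.

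The substance is $\omega$-cocontinuity, and here I would argue by reflection. Since $\omega$-colimits in a product category are computed componentwise, Proposition~\ref{prop:logical-colimits} gives that the product functor $U^A \times U^B \colon \RR(A) \times \RR(B) \to \PDe \times \PDe$ preserves $\omega$-colimits. Now let $\gamma$ be a colimiting cocone over an $\omega$-diagram in $\RR(A) \times \RR(B)$; I want $\star^{A,B}(\gamma)$ to be colimiting in $\RR(A \star B)$. Applying $U^{A \star B}$ and using the commuting square yields
\[
U^{A \star B}\big(\star^{A,B}(\gamma)\big) = \star_e\big((U^A \times U^B)(\gamma)\big).
\]
The cocone $(U^A \times U^B)(\gamma)$ is colimiting in $\PDe \times \PDe$ because $U^A \times U^B$ preserves $\omega$-colimits, and then $\star_e\big((U^A \times U^B)(\gamma)\big)$ is colimiting in $\PDe$ because $\star_e$ is $\omega$-cocontinuous. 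Hence $U^{A \star B}\big(\star^{A,B}(\gamma)\big)$ is colimiting in $\PDe$, and since $U^{A \star B}$ \emph{reflects} $\omega$-colimits by Proposition~\ref{prop:logical-colimits}, the cocone $\star^{A,B}(\gamma)$ is colimiting in $\RR(A \star B)$. This establishes $\omega$-cocontinuity uniformly for all three $\star \in \{\times, +, \to\}$.

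I do not expect a genuine obstacle: the difficult construction (colimits in $\RR(A)$, plus preservation and reflection by $U^A$) is already carried out in Proposition~\ref{prop:logical-colimits}, and the cocontinuity of the base functors is Proposition~\ref{app:omega-functors}. The only point demanding mild care is the componentwise computation of $\omega$-colimits in the product category $\RR(A) \times \RR(B)$, and the resulting observation that $U^A \times U^B$ preserves them; once this is noted, the reflection property of $U^{A \star B}$ does all the remaining work, and the function-space case presents no extra difficulty because $\ktwo_e$, although built from a mixed-variance construction, is itself a covariant $\omega$-cocontinuous bifunctor on $\PDe$.
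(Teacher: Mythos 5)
Your proposal is correct and follows essentially the same route as the paper: commutativity of the square is read off from the definitions, and $\omega$-cocontinuity is obtained by pushing a colimiting cocone through $U^A \times U^B$ and $\star_e$ (both $\omega$-cocontinuous) and then invoking the fact that $U^{A\star B}$ reflects $\omega$-colimits from Proposition~\ref{prop:logical-colimits}. The only difference is that you spell out the componentwise computation of colimits in the product category, which the paper leaves implicit.
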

\begin{proof}
Commutativity of the diagram is immediate from the definitions. To see $\omega$-cocontinuity, let $D$ be an $\omega$-diagram in $\RR(A) \times \RR(B)$ and let $\tau$ be its colimiting cocone.
Because the functors $U^A, U^B$ and $\kstar_e$ are $\omega$-cocontinuous, it follows that :
\begin{align*}
& \qquad (\kstar_e \circ U^A \times U^B)\tau \text{ is colimiting in } \PDe & & \\
& \Longrightarrow (U^{A \star B} \circ \star^{A,B})\tau \text{ is colimiting in } \PDe & &\text{(Commutativity of the above diagram)} \\
& \Longrightarrow \star^{A,B}\tau \text{ is colimiting in } \RR(A \star B) & &\text{($U$ reflects $\omega$-colimits)}
\end{align*}
which shows that $\star^{A,B}$ is $\omega$-cocontinuous.
\end{proof}

Next, we establish an isomorphism between the categories $\RR(\mu X. A)$ and $\RR(A[\mu X.A / X]).$

\begin{definition}
\label{def:logical-fold-unfold}
We define constructors for folding and unfolding logical relations as follows:
\begin{itemize}
\item If $\tleq_{X, A[\mu Y. A / Y]}^e \in \ValRel(X, A[\mu Y.A / Y], e)$, define
  \begin{align*}
  (\mathbb I^{\mu Y. A}    &\tleq_{X, A[\mu Y. A / Y]}^e ) \in \ValRel(X, \mu Y. A, \sfold \kcirc e) \text{ by:} \\
  v\ (\mathbb I^{\mu Y. A} &\tleq_{X, A[\mu Y. A / Y]}^e )\ \mathtt{fold}\ V \text{ iff } v \tleq_{X, A[\mu Y. A / Y]}^e V .
  \end{align*}
\item If $\tleq_{X, \mu Y.A}^e \in \ValRel(X, \mu Y.A, e )$, define
  \begin{align*}
  (\mathbb E^{\mu Y. A}    &\tleq_{X, \mu Y.A}^e) \in \ValRel(X, A[\mu Y. A / Y], \sunfold \kcirc e) \text{ by:}\\
  v\ (\mathbb E^{\mu Y. A} &\tleq_{X, \mu Y.A}^e) )\ V \text{ iff } v \tleq_{X, \mu Y.A}^e \mathtt{fold}\ V .
  \end{align*}
\end{itemize}
\end{definition}

\begin{proposition}
The above assignments are indeed well-defined.
\end{proposition}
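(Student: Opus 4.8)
The plan is to check, for each of the two constructors in Definition~\ref{def:logical-fold-unfold}, the two defining conditions of membership in $\ValRel$ from Definition~\ref{def:logical-relations}: that for each fixed value $V$ the set of $v$ related to $V$ is a Scott-closed subset of $\TD(1,X)$, and that $v$ related to $V$ implies $e' \kcirc v \leq \lrb V$, where $e'$ is the embedding carried by the newly constructed relation. The observation that trivialises the syntactic side is that $\mathtt{fold}$ is the only value-former for a recursive type, so every value of type $\mu Y.A$ has the form $\mathtt{fold}\ V$ with $V \in \Val(A[\mu Y.A/Y])$; consequently both constructors are defined on \emph{all} values of their target type. I will use three facts freely: $\lrb{\mathtt{fold}\ V} = \sfold \kcirc \lrb V$ (Figure~\ref{fig:term-semantics}); that $\sfold$ and $\sunfold$ are mutually inverse isomorphisms (Definition~\ref{def:fold/unfold}), so each is an embedding of $\PDe$ (an isomorphism being an embedding whose projection is its inverse) and the composites $\sfold \kcirc e$ and $\sunfold \kcirc e$ are again embeddings; and that Kleisli composition is monotone in each argument, by the $\dcpobs$-enrichment of $\KL$.

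For the fold constructor, Scott-closedness is immediate: for $V' = \mathtt{fold}\ V$, the set $\{v \mid v\,(\mathbb I^{\mu Y.A}\tleq)\,\mathtt{fold}\ V\}$ equals $\{v \mid v\,\tleq\,V\}$ by definition, which is Scott-closed because $\tleq \in \ValRel(X, A[\mu Y.A/Y], e)$. For the embedding inequality, assume $v\,(\mathbb I^{\mu Y.A}\tleq)\,\mathtt{fold}\ V$, i.e.\ $v\,\tleq\,V$; then $e \kcirc v \leq \lrb V$, and post-composing with the monotone operator $\sfold \kcirc (-)$ gives $(\sfold \kcirc e) \kcirc v \leq \sfold \kcirc \lrb V = \lrb{\mathtt{fold}\ V}$, exactly as required for membership in $\ValRel(X, \mu Y.A, \sfold \kcirc e)$.

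The unfold constructor is entirely symmetric. For $V \in \Val(A[\mu Y.A/Y])$ the set $\{v \mid v\,(\mathbb E^{\mu Y.A}\tleq)\,V\} = \{v \mid v\,\tleq\,\mathtt{fold}\ V\}$ is Scott-closed by hypothesis, and from $v\,\tleq\,\mathtt{fold}\ V$ we obtain $e \kcirc v \leq \lrb{\mathtt{fold}\ V} = \sfold \kcirc \lrb V$; post-composing with $\sunfold \kcirc (-)$ and using $\sunfold \kcirc \sfold = \id$ yields $(\sunfold \kcirc e) \kcirc v \leq \lrb V$. None of the verifications is genuinely delicate; the only point warranting a moment's care is the bookkeeping that the semantic isomorphisms $\sfold$ and $\sunfold$ count as embeddings of $\PDe$, so that the reparametrised morphisms $\sfold \kcirc e$ and $\sunfold \kcirc e$ are the legitimate $\PDe$-embeddings that the two constructors demand.
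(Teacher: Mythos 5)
Your verification is correct and is exactly the routine check the paper leaves implicit (its proof reads only ``Straightforward verification''): you confirm Scott-closedness of each section $(-)\tleq V$, derive the inequality $e'\kcirc v\leq\lrb V$ from $\lrb{\mathtt{fold}\ V}=\sfold\kcirc\lrb V$ and monotonicity of composition, and note that $\sfold\kcirc e$ and $\sunfold\kcirc e$ remain $\PDe$-embeddings. Nothing is missing.
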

\begin{proof}
Straightforward verification.
\end{proof}

\begin{proposition}
\label{prop:logical-folding-isomorphism}
For every type $\cdot \vdash \mu X.A,$ we have an isomorphism of categories
\[\FOLD{\mu X.A} : \RR(A[\mu X.A / X]) \cong \RR(\mu X. A) : \UNFOLD{\mu X. A} , \]
where the functors are defined by 
\begin{align*}
&\FOLD{\mu X.A} :  \RR(A[\mu X.A / X]) \to \RR(\mu X. A)   & & \UNFOLD{\mu X.A} : \RR(\mu X. A) \to \RR(A[\mu X.A / X]) \\
&\FOLD{\mu X.A}(Y, e, \tleq) = (Y, \sfold \kcirc e , \FOLD{\mu X.A} \tleq) & & \UNFOLD{\mu X.A}(Y, e, \tleq) = (Y, \sunfold \kcirc e, \UNFOLD{\mu X.A} \tleq)\\
&\FOLD{\mu X.A}(f) = f & & \UNFOLD{\mu X.A}(f) = f
\end{align*}
\end{proposition}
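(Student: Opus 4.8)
The plan is to check directly that the two assignments are well-defined functors and mutually inverse, reducing everything to the defining biconditionals of the constructors $\FOLD{\mu X.A}$ and $\UNFOLD{\mu X.A}$ from Definition~\ref{def:logical-fold-unfold}, together with the fact that $\sfold_{\mu X.A}$ and $\sunfold_{\mu X.A}$ are mutually inverse isomorphisms in $\TD$ (Definition~\ref{def:fold/unfold}). Two facts drive the argument. First, canonical forms for the calculus: every value of type $\mu X.A$ has the shape $\mathtt{fold}\ V$ for a unique value $\cdot \vdash V : A[\mu X.A/X]$, so quantifying over $\Val(\mu X.A)$ is the same as quantifying over $\mathtt{fold}\ V$ with $V \in \Val(A[\mu X.A/X])$. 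Second, the constructor preserves the value-level relation verbatim: by Definition~\ref{def:logical-fold-unfold} we have $\{v \mid v\ (\FOLD{\mu X.A}\tleq)\ \mathtt{fold}\ V\} = \{v \mid v\ \tleq\ V\}$, so Lemma~\ref{lem:semantically-dense-value} yields the key identity $\mathcal S(\FOLD{\mu X.A}\tleq;\ \mathtt{fold}\ V) = \mathcal S(\tleq;\ V)$; equivalently, $v\ \ol{\FOLD{\mu X.A}\tleq}\ \mathtt{fold}\ V$ holds iff $v\ \ol{\tleq}\ V$ holds.

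Well-definedness on objects is exactly the content of the (well-definedness) proposition following Definition~\ref{def:logical-fold-unfold}, namely that $\FOLD{\mu X.A}\tleq$ and $\UNFOLD{\mu X.A}\tleq$ land in the correct $\ValRel$ sets. For morphisms, I would take $f \colon (Y, e_Y, \tleq_Y) \to (Z, e_Z, \tleq_Z)$ in $\RR(A[\mu X.A/X])$ and verify the three conditions of Definition~\ref{def:logical-categories} for $f \colon \FOLD{\mu X.A}(Y, e_Y, \tleq_Y) \to \FOLD{\mu X.A}(Z, e_Z, \tleq_Z)$. The embedding condition follows by composing $e_Y = e_Z \kcirc f$ on the left with $\sfold$, giving $\sfold \kcirc e_Y = \sfold \kcirc e_Z \kcirc f$. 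For the relational conditions, using canonical forms and the key identity above, the requirement that $v\ (\FOLD{\mu X.A}\tleq_Y)\ \mathtt{fold}\ V$ implies $f \kcirc v\ \ol{\FOLD{\mu X.A}\tleq_Z}\ \mathtt{fold}\ V$ is literally the requirement that $v\ \tleq_Y\ V$ implies $f \kcirc v\ \ol{\tleq_Z}\ V$, which holds because $f$ is a morphism in $\RR(A[\mu X.A/X])$; the projection condition is handled identically with $f^p$. The same computation using $\sunfold$ and the biconditional for $\UNFOLD{\mu X.A}$ shows $\UNFOLD{\mu X.A}$ sends morphisms to morphisms. Functoriality is immediate, since both functors act as the identity on underlying morphisms, so identities and composites are preserved trivially.

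For the isomorphism I would check the two round-trips. On objects, $\UNFOLD{\mu X.A}\FOLD{\mu X.A}(Y, e, \tleq) = (Y,\ \sunfold \kcirc \sfold \kcirc e,\ \UNFOLD{\mu X.A}\FOLD{\mu X.A}\tleq)$, where $\sunfold \kcirc \sfold = \id$ by Definition~\ref{def:fold/unfold}, and the relational component collapses by unwinding the two biconditionals: $v\ (\UNFOLD{\mu X.A}\FOLD{\mu X.A}\tleq)\ V$ iff $v\ (\FOLD{\mu X.A}\tleq)\ \mathtt{fold}\ V$ iff $v\ \tleq\ V$, so $\UNFOLD{\mu X.A}\FOLD{\mu X.A}\tleq = \tleq$. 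The reverse round-trip $\FOLD{\mu X.A}\UNFOLD{\mu X.A}(Y, e, \tleq) = (Y,\ \sfold \kcirc \sunfold \kcirc e,\ \FOLD{\mu X.A}\UNFOLD{\mu X.A}\tleq)$ is symmetric: $\sfold \kcirc \sunfold = \id$, and for a value $\mathtt{fold}\ W$ of type $\mu X.A$ we have $v\ (\FOLD{\mu X.A}\UNFOLD{\mu X.A}\tleq)\ \mathtt{fold}\ W$ iff $v\ (\UNFOLD{\mu X.A}\tleq)\ W$ iff $v\ \tleq\ \mathtt{fold}\ W$; since every value of $\mu X.A$ has this shape, $\FOLD{\mu X.A}\UNFOLD{\mu X.A}\tleq = \tleq$. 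As both functors are the identity on morphisms, the round-trips are the identity on morphisms as well, so the two functors are mutually inverse and the claimed isomorphism of categories holds.

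I expect the only genuinely delicate point to be the translation of the \emph{closure}-based morphism conditions (those phrased via $\ol{\cdot}$ rather than the bare relation $\tleq$): one must be sure that closing up $\FOLD{\mu X.A}\tleq$ at the value $\mathtt{fold}\ V$ produces exactly the closure of $\tleq$ at $V$. This is where Lemma~\ref{lem:semantically-dense-value} is essential, as it identifies $\mathcal S(\tleq;V)$ with $\ol{\{v \mid v\ \tleq\ V\}} \cup \{0\}$ and thereby reduces the Scott-closure computed over $\mu X.A$-values to the one over $A[\mu X.A/X]$-values; combined with canonical forms for values of recursive type, it makes the two relational conditions coincide on the nose.
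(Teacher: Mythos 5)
Your proof is correct and takes essentially the same route as the paper, which simply defers to an analogous lemma in prior work (\cite{lnl-fpc-lmcs}) and notes that the only new obligation is that the functors respect the embedding components. You spell out the direct verification in full — canonical forms for values of recursive type, the identity $\mathcal S(\FOLD{\mu X.A}\tleq;\ \mathtt{fold}\ V) = \mathcal S(\tleq; V)$ via Lemma~\ref{lem:semantically-dense-value}, and the two round-trips — and correctly isolate the one delicate point (translating the closure-based morphism conditions), so nothing is missing.
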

\begin{proof}
The proof is essentially the same as \cite[Lemma 7.23]{lnl-fpc-lmcs}, with one extra proof obligation, namely we have to show that our functorial assignments respect the embedding components. But this is obviously true.
\end{proof}

This finishes the categorical development of the categories $\RR(A)$.

\subsection{Augmented Interpretation of Types}

We have now established sufficient categorical structure in order to construct parameterised initial algebras in the categories $\RR(A).$ Furthermore, we have sufficient structure to also define an \emph{augmented} interpretation of types in these categories.
The main idea behind providing the augmented interpretation is to show how to pick out the logical relations we need from all those that exist in the categories $\RR(A)$.

\begin{notation}
Given any type context $\Theta = X_1, \ldots , X_n$ and closed types $\cdot \vdash C_i$ with $i \in \{1, \ldots, n \}$,
we shall write $\vec C$ for $C_1, \ldots , C_n$ and we also write $[\vec C / \Theta]$ for $[C_1 / X_1, \ldots , C_n / X_n]$.
\end{notation}

\begin{definition}
For any type $\Theta \vdash A$ and closed types $\vec C$, we define their \emph{augmented interpretation} to be the functor
\[ \elrbc{\Theta \vdash A} : \RR(C_1) \times \cdots \times \RR(C_n) \to \RR(A[ \vec C / \Theta ]) \]
defined by induction on the derivation of $\Theta \vdash A$:
\begin{align*}
\elrbc{\Theta \vdash \Theta_i} &:= \Pi_i &&\\
\elrbc{\Theta \vdash A \star B } &:= \star^{A[\vec C / \Theta], B[\vec C / \Theta]} \circ \langle \elrbc{\Theta \vdash A}, \elrbc{\Theta \vdash B} \rangle &(\text{for }\star \in \{ +, \times, \to \})&\\
\elrbc{\Theta \vdash \mu X.A} &:= \left(\FOLD{\mu X. A[\vec C / \Theta]} \circ \elrb{\Theta, X \vdash A}{\vec C, \mu X. A[\vec C / \Theta]} \right)^\sharp , &&
\end{align*}
where the $(-)^\sharp$ operation is from Definition \ref{def:initial-algebra}.
\end{definition} 
 
\begin{proposition}
\label{prop:augmented-interpretation}
Each functor $\elrbc{\Theta \vdash A}$ is well-defined and $\omega$-cocontinuous. Moreover, the following diagram:
\cstikz{augmented-diagram.tikz}
commutes.
\end{proposition}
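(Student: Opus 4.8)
The plan is to prove all three assertions simultaneously by induction on the derivation of $\Theta \vdash A$, since the definition of $\elrbc{\Theta \vdash A}$ is itself by recursion on that derivation. The commuting diagram in question asserts (matching the pattern of Proposition~\ref{prop:logical-functors} and the coherence condition advertised in the proof overview) that forgetting the relational data recovers the ordinary interpretation, i.e.
\[ U^{A[\vec C/\Theta]} \circ \elrbc{\Theta \vdash A} = \lrb{\Theta \vdash A} \circ \left( U^{C_1} \times \cdots \times U^{C_n} \right). \]
The three inductive cases correspond exactly to the three clauses defining $\elrbc{-}$: type variables, the connectives $\star \in \{+, \times, \to\}$, and recursive types.

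The base and connective cases I expect to be routine. For a type variable $\Theta_i$ the functor $\elrbc{\Theta \vdash \Theta_i} = \Pi_i$ is a projection, which is $\omega$-cocontinuous because $\omega$-colimits in a product category are formed componentwise; the diagram commutes immediately since $U^{C_i} \circ \Pi_i = \Pi_i \circ (U^{C_1} \times \cdots \times U^{C_n})$ and $\lrb{\Theta \vdash \Theta_i} = \Pi_i$. For a composite $\Theta \vdash A \star B$, the functor is $\star^{A[\vec C/\Theta], B[\vec C/\Theta]} \circ \langle \elrbc{\Theta \vdash A}, \elrbc{\Theta \vdash B}\rangle$. Well-definedness and $\omega$-cocontinuity follow from the induction hypotheses for $A$ and $B$, the fact that pairing preserves $\omega$-colimits, and Proposition~\ref{prop:logical-functors}, which provides both the $\omega$-cocontinuity of $\star^{A,B}$ and the square relating it to $\kstar_e$ through the forgetful functors. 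Pasting that square onto the induction-hypothesis squares for $A$ and $B$, and using $\lrb{\Theta \vdash A \star B} = \kstar_e \circ \langle \lrb{\Theta \vdash A}, \lrb{\Theta \vdash B}\rangle$ from Figure~\ref{fig:type-interpretation}, yields commutativity.

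The recursive case $\Theta \vdash \mu X.A$ is where the substance lies. Writing $B \eqdef A[\vec C/\Theta]$, the body functor $\FOLD{\mu X. B} \circ \elrb{\Theta, X \vdash A}{\vec C, \mu X. B}$ sends $\prod_i \RR(C_i) \times \RR(\mu X. B)$ to $\RR(\mu X. B)$. The induction hypothesis gives $\omega$-cocontinuity of $\elrb{\Theta, X \vdash A}{\vec C, \mu X. B}$, and $\FOLD{\mu X. B}$ is an isomorphism of categories (Proposition~\ref{prop:logical-folding-isomorphism}), hence $\omega$-cocontinuous, so the composite is $\omega$-cocontinuous. By Proposition~\ref{prop:logical-colimits} the recursion-variable category $\RR(\mu X. B)$ has an initial object and all $\omega$-colimits, so Proposition~\ref{prop:par-initial-algebra} applies and supplies the parameterised initial algebra $(-)^\sharp$ together with its $\omega$-cocontinuity; this settles well-definedness and $\omega$-cocontinuity.

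The hard part will be commutativity of the diagram in the $\mu$-case, i.e. that $U$ commutes with $(-)^\sharp$. My plan is first to intertwine the body functors: since $\FOLD{\mu X. B}$ is the identity on underlying objects and morphisms, $U^{\mu X. B} \circ \FOLD{\mu X. B} = U^{B[\mu X. B/X]}$, and combining this with the induction hypothesis for $\Theta, X \vdash A$ yields $U^{\mu X. B} \circ (\FOLD{\mu X. B} \circ \elrb{\Theta, X \vdash A}{\vec C, \mu X. B}) = \lrb{\Theta, X \vdash A} \circ (\prod_i U^{C_i} \times U^{\mu X. B})$, exhibiting $U$ as intertwining the augmented body functor with the standard one. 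I would then invoke that $U^{\mu X. B}$ preserves and reflects the initial object and $\omega$-colimits (Proposition~\ref{prop:logical-colimits}): because a parameterised initial algebra of an $\omega$-cocontinuous functor is the colimit $\colim_n \TTT(-,\;)^n(\varnothing)$ of the initial chain (the construction underlying Proposition~\ref{prop:par-initial-algebra}), a colimit-preserving functor that intertwines the two body functors necessarily carries one parameterised initial algebra to the other. Since $\lrb{\Theta \vdash \mu X. A} = \lrb{\Theta, X \vdash A}^\sharp$, this gives exactly $U^{\mu X. B} \circ \elrbc{\Theta \vdash \mu X. A} = \lrb{\Theta \vdash \mu X. A} \circ (\prod_i U^{C_i})$. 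The one delicate point to verify is that this transport is coherent uniformly in the parameters $\vec C$ (on objects \emph{and} morphisms), which I expect to follow from the uniqueness clause in the universal property of parameterised initial algebras, so that the induced cocone morphisms on both sides coincide.
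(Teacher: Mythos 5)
Your proposal is correct and matches the paper's approach: the paper's own proof is just a pointer to \cite[Proposition~7.26]{lnl-fpc-lmcs}, and the argument there is exactly your induction — projections and Proposition~\ref{prop:logical-functors} for the base and connective cases, and, for $\mu X.A$, the fact that $U$ preserves and reflects initial objects and $\omega$-colimits (Proposition~\ref{prop:logical-colimits}) and intertwines the body functors, so it carries the initial-chain colimit defining one parameterised initial algebra to the other, with uniformity in $\vec C$ from the uniqueness clause of the universal property. The only point worth making explicit is that strict (not merely up-to-iso) commutation in the $\mu$-case relies on the colimits in $\RR(-)$ being constructed with underlying object the colimit in $\PDe$, which is exactly how Proposition~\ref{prop:logical-colimits} builds them.
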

\begin{proof}
The proof is essentially the same as \cite[Proposition 7.26]{lnl-fpc-lmcs}.
\end{proof}

Next, a corollary which shows that parameterised initial algebras for our type expressions are constructed in the same way in both categories.

\begin{corollary}
\label{cor:augmented-algebras}
The 2-categorical diagram:
\cstikz{parameterised-initial-algebra-augmented.tikz}
commutes, where $\iota$ is the parameterised initial algebra isomorphism (see Definition \ref{def:initial-algebra}).
\end{corollary}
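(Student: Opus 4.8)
The plan is to reduce the statement to the canonical construction of parameterised initial algebras as colimits of initial $\omega$-chains, and then to transport that construction along the forgetful functors $U^A$ using their preservation and reflection properties. Recall from Proposition~\ref{prop:par-initial-algebra} that for an $\omega$-cocontinuous functor the parameterised initial algebra $(\TTT^\sharp,\iota^\TTT)$ of Definition~\ref{def:initial-algebra} is obtained, at each fixed parameter tuple $\vec C$, as the colimit of the initial chain $\varnothing \to \TTT(\vec C,\varnothing) \to \TTT(\vec C,\TTT(\vec C,\varnothing)) \to \cdots$, with the structure isomorphism $\iota^\TTT$ arising as the (inverse of the) comparison morphism supplied by the universal property of that colimit. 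Both the augmented functors $\elrbc{\Theta\vdash A}$ and the standard interpretation functors $\lrb{\Theta\vdash A}$ are $\omega$-cocontinuous (Proposition~\ref{prop:augmented-interpretation} for the former, and the interpretation of types for the latter), so each admits such a colimit description.

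First I would invoke the commuting square of Proposition~\ref{prop:augmented-interpretation}, namely $U^{A[\vec C/\Theta]}\circ\elrbc{\Theta\vdash A} = \lrb{\Theta\vdash A}\circ(U^{C_1}\times\cdots\times U^{C_n})$, specialised to the recursive case, where the relevant functor is $\FOLD{\mu X.A[\vec C/\Theta]}\circ\elrb{\Theta,X\vdash A}{\vec C,\mu X.A[\vec C/\Theta]}$. Since each $U^A$ preserves the initial object and all $\omega$-colimits (Proposition~\ref{prop:logical-colimits}), applying $U$ to the initial $\omega$-chain of this augmented functor produces precisely the initial $\omega$-chain of the standard functor $\lrb{\Theta,X\vdash A}$ in $\PDe$, and the image of its colimiting cocone is again colimiting. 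It follows that $U$ sends the object $\elrbc{\Theta\vdash\mu X.A}$ to $\lrb{\Theta\vdash\mu X.A}$ and, because $U$ preserves the colimit and hence the comparison morphism, sends the augmented initial-algebra isomorphism $\iota$ to the standard one.

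The remaining task is to upgrade this objectwise matching to the full 2-categorical statement, i.e. to verify that $\iota$ is preserved as a natural transformation, naturally in the parameters drawn from $\RR(C_1)\times\cdots\times\RR(C_n)$. This is automatic, because the connecting morphisms of the initial chains, and hence the induced cocone morphisms, are themselves preserved by $U$ (again Proposition~\ref{prop:logical-colimits}); thus each naturality square upstairs is carried to the corresponding naturality square in $\PDe$, while reflection of $\omega$-colimits ensures the correspondence loses no information. The compatibility of the fold isomorphisms across the two interpretations, required at the recursive step, is exactly the content of Definition~\ref{def:logical-fold-unfold} and Proposition~\ref{prop:logical-folding-isomorphism}.

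I expect the main obstacle to be organisational rather than genuinely mathematical: one must carefully track the whiskering of $\iota$ with the forgetful functors and keep the indexing of the parameterised construction consistent through the recursive clause (in particular the substitution $\mu X.A[\vec C/\Theta]$ appearing as a new parameter). Since the argument is formally identical to the one used to establish Proposition~\ref{prop:augmented-interpretation}, itself modelled on \cite[Proposition 7.26]{lnl-fpc-lmcs}, I would conclude by observing that the same reasoning transports verbatim to the present parameterised initial-algebra isomorphisms, which yields commutativity of the stated 2-categorical diagram.
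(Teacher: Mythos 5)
Your proposal is correct and follows essentially the same route as the paper, whose proof is simply a citation to \cite[Corollary 7.27]{lnl-fpc-lmcs}: one realises both parameterised initial algebras via the Ad\'amek colimit of the initial $\omega$-chain, uses the commuting square of Proposition~\ref{prop:augmented-interpretation} (together with the fact that $\FOLD{}$ is the identity on underlying data) to identify the two chains, and then transports the colimit and its comparison isomorphism along the forgetful functors using Proposition~\ref{prop:logical-colimits}. Your write-up is the expected expansion of that citation, and the caveat you flag about making consistent choices of colimits so the diagram commutes on the nose is the only real bookkeeping point.
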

\begin{proof}
The proof is the same as \cite[Corollary 7.27]{lnl-fpc-lmcs}.
\end{proof}

Proposition \ref{prop:augmented-interpretation} shows that the first component of the augmented interpretation coincides with the standard interpretation. This is true for all types, including open ones.
In the special case for closed types, let $\elrbs{A} \eqdef \elrb{\cdot \vdash A}{\cdot}(*)$, where $*$ is the unique object of the terminal category $\mathbf 1 = \RR(A)^0$.
Proposition \ref{prop:augmented-interpretation} therefore shows that $U \elrbs A = \lrb A$, which means that $\elrbs A$ has the form $\elrbs A = (\lrb A, e, \tleq),$ where $e : \lrb A \kto \lrb A$ is some
embedding. Next, we show that $e = \kid.$ In order to do this, we prove a stronger proposition first. We show that the action of the functor $\elrbc{\Theta \vdash A}$ on the embedding component is also completely determined by the action of $\lrb{\Theta \vdash A}$ on embeddings.

\begin{proposition}
\label{prop:augmented-embedding}
For every functor $\elrbc{\Theta \vdash A}$ and objects $(X_i, e_i, \tleq_i)$ with $i \in \{1, \ldots, n\}$, we have:
\[ \pi_e \left( \elrbc{\Theta \vdash A}\left( (X_1, e_1, \tleq_1), \ldots, (X_n, e_n, \tleq_n) \right) \right) =  \lrb{\Theta \vdash A}(e_1, \ldots, e_n) , \]
where for an object $(Z, e_Z, \tleq_Z)$ in any category $\RR(B)$, we define $\pi_e(Z, e_Z, \tleq_Z) = e_Z.$ 
\end{proposition}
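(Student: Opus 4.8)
The plan is to induct on the derivation of $\Theta \vdash A$, comparing at each step the embedding slot $\pi_e$ of the augmented functor $\elrbc{\Theta \vdash A}$ with the action of the standard functor $\lrb{\Theta \vdash A}$ on the embedding morphisms. Write $\vec e = (e_1,\ldots,e_n)$ and $\vec{\mathcal X} = ((X_1,e_1,\tleq_1),\ldots,(X_n,e_n,\tleq_n))$. For a type variable $\Theta_i$ both $\elrbc{\Theta \vdash \Theta_i}$ and $\lrb{\Theta \vdash \Theta_i}$ are the projection $\Pi_i$, so $\pi_e(\Pi_i(\vec{\mathcal X})) = \pi_e(X_i,e_i,\tleq_i) = e_i = \Pi_i(\vec e)$, which is the claim. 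For $\star \in \{+,\times,\to\}$ I would read off from Definition \ref{def:logical-functors} that on objects the functor $\star^{A[\vec C/\Theta],B[\vec C/\Theta]}$ places $e_X \kstar_e e_Y$ in the embedding slot, so that $\pi_e$ of it is the corresponding functor $\kstar_e \in \{\kplus_e,\ktimes_e,\ktwo_e\}$ applied to the two embedding components. Combining this with the definition of $\elrbc{\Theta \vdash A \star B}$ and the inductive hypothesis on the two subderivations gives
\[ \pi_e(\elrbc{\Theta \vdash A \star B}(\vec{\mathcal X})) = \lrb{\Theta \vdash A}(\vec e) \kstar_e \lrb{\Theta \vdash B}(\vec e) = \lrb{\Theta \vdash A \star B}(\vec e), \]
where the last equality is the definition of the standard interpretation in Figure \ref{fig:type-interpretation}.

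The genuinely hard case is $\mu X.A$, where $\elrbc{\Theta \vdash \mu X.A} = (\FOLD{\mu X.A[\vec C/\Theta]} \circ \elrb{\Theta, X \vdash A}{\vec C, \mu X.A[\vec C/\Theta]})^\sharp$ while $\lrb{\Theta \vdash \mu X.A} = \lrb{\Theta, X \vdash A}^\sharp$, so the two sides of the claim arise from a parameterised-initial-algebra operation $(-)^\sharp$ computed in $\RR(\mu X.A[\vec C/\Theta])$ and in $\PDe$ respectively. By Proposition \ref{prop:par-initial-algebra} both are realised as colimits of the $\omega$-chain obtained by iterating the respective functor from the initial object; by Proposition \ref{prop:logical-colimits} the forgetful functor preserves and reflects these colimits, and Proposition \ref{prop:augmented-interpretation} already identifies the underlying dcpo of $\elrbc{\Theta \vdash \mu X.A}(\vec{\mathcal X})$ with $\lrb{\Theta, X \vdash A}^\sharp(\vec X)$. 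I would then track the embedding component along the chain: the initial object of $\RR(\mu X.A[\vec C/\Theta])$ carries the zero embedding $0_{\varnothing,\lrb{\mu X.A[\vec C/\Theta]}}$, and each application of $\FOLD{\mu X.A[\vec C/\Theta]} \circ \elrb{\Theta, X \vdash A}{\vec C, \mu X.A[\vec C/\Theta]}$ post-composes with $\sfold$ the embedding produced by $\elrb{\Theta, X \vdash A}{\vec C, \mu X.A[\vec C/\Theta]}$; by the inductive hypothesis for the subderivation $\Theta, X \vdash A$, writing $\epsilon_{i-1}$ for the embedding of the $(i-1)$st chain object, the latter embedding is $\lrb{\Theta, X \vdash A}(\vec e, \epsilon_{i-1})$, so the $i$th approximant is $\epsilon_i = \sfold \kcirc \lrb{\Theta, X \vdash A}(\vec e, \epsilon_{i-1})$. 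This recursion matches the approximants computed on the $\PDe$ side, where by Definition \ref{def:fold/unfold} the map $\sfold$ is precisely the initial-algebra structure morphism $\iota$.

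The main obstacle, and the step deserving the most care, is passing from the stagewise agreement of the finite approximants to agreement of the colimiting embedding. I would resolve this using the explicit construction of the mediating morphism in the proof of Proposition \ref{prop:logical-colimits}: there the colimit embedding $e_\omega$ is defined as the unique cocone morphism in $\PDe$ from the colimiting cocone $\tau$ to the cocone $\epsilon$ of the embeddings $\epsilon_i$, i.e. the unique map with $e_\omega \kcirc \tau_i = \epsilon_i$ for all $i$. On the standard side, $\lrb{\Theta, X \vdash A}^\sharp(\vec e)$ is by construction the unique morphism between the two colimit objects induced by the chain of comparison maps, and Corollary \ref{cor:augmented-algebras} guarantees that the parameterised initial-algebra isomorphisms $\iota$ (hence the fold structure maps) are built compatibly in $\RR(\mu X.A[\vec C/\Theta])$ and in $\PDe$. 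Verifying that $\lrb{\Theta, X \vdash A}^\sharp(\vec e) \kcirc \tau_i = \epsilon_i$ for every $i$ — which is where the $\sfold$ bookkeeping from the previous paragraph is used — then forces $e_\omega = \lrb{\Theta, X \vdash A}^\sharp(\vec e)$ by uniqueness of the mediating morphism. This yields $\pi_e(\elrbc{\Theta \vdash \mu X.A}(\vec{\mathcal X})) = \lrb{\Theta, X \vdash A}^\sharp(\vec e) = \lrb{\Theta \vdash \mu X.A}(\vec e)$, completing the induction.
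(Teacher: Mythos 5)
Your proof is correct, and your treatment of the variable and $\star$ cases coincides with the paper's. For the $\mu X.A$ case you take a genuinely different route. The paper never unrolls the colimit chain: writing $H = \lrb{\Theta, X \vdash A}$ and $T$ for its augmented counterpart, it applies the induction hypothesis exactly once, to the single object $(H^\sharp\vec X, e_Y, \tleq_Y)$, to compute the embedding component of $T(\vec{\mathcal X}, (H^\sharp\vec X, e_Y, \tleq_Y))$ as $H(\vec e, e_Y)$; it then observes that the initial-algebra structure map $\iota$ is, by Corollary~\ref{cor:augmented-algebras}, a morphism in $\RR(\mu X.A[\vec C/\Theta])$, so the third clause of Definition~\ref{def:logical-categories} forces $\sfold \kcirc H(\vec e, e_Y) = e_Y \kcirc \iota$. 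Since this square is precisely the one that uniquely characterises $H^\sharp \vec e$ via the universal property of the parameterised initial algebra, $e_Y = H^\sharp\vec e$ follows immediately. You instead track the embedding slot along the $\omega$-chain approximating the initial algebra ($\epsilon_0$ the zero embedding, $\epsilon_{i+1} = \sfold \kcirc H(\vec e, \epsilon_i)$) and identify the colimiting embedding $e_\omega$ from the explicit colimit construction in Proposition~\ref{prop:logical-colimits} with $H^\sharp\vec e$ by uniqueness of the mediating cocone morphism. This is sound: your recursion for $\epsilon_i$ does agree with the composites of the ladder maps with the colimit injections that realise $H^\sharp\vec e$ on the $\PDe$ side, so the uniqueness step closes the argument. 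What it costs is extra bookkeeping and a commitment to the particular colimit realisation of $(-)^\sharp$; what the paper's argument buys is construction-independence and brevity, since the required commuting square is literally built into the definition of morphisms of $\RR(\mu X.A[\vec C/\Theta])$.
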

\begin{proof}
By induction on the derivation of $\Theta \vdash A.$

\paragraph*{\textbf{Case} $\Theta_i$ } This is obviously true.

\paragraph*{\textbf{Case $A = A_1 \star A_2$, for $\star \in \{\times, +, \to \}$}}
The statement follows easily by induction and the fact that for every pair of objects $(Y, e_Y, \tleq_Y)$ and $(Z, e_Z, \tleq_Z)$ we have
\[ \pi_e\left(  (Y, e_Y, \tleq_Y)  \star^{A_1,A_2}  (Z, e_Z, \tleq_Z) \right) = e_Y \kstar_e e_Z \]
which follows by definition of the relevant functors.

\paragraph*{\textbf{Case $\mu X. A$}}
First we introduce some abbreviations to simplify notation. We define:
\begin{itemize}
\setlength\itemsep{0.5em}
\item $T \eqdef  \elrb{\Theta, X \vdash A}{\vec C, \mu X. A[\vec C / \Theta]}.$
\item $H \eqdef \lrb{\Theta, X \vdash A}.$
\item $\FOLD \eqdef \FOLD{\mu X. A[\vec C / \Theta]} .$
\item $\vv{(X, e, \tleq)} \eqdef ((X_1, e_1, \tleq_1), \ldots, (X_n, e_n, \tleq_n) )$.
\item $\vv{X} \eqdef (X_1, \ldots, X_n)$.
\item $\vv{e} \eqdef (e_1, \ldots, e_n)$.
\end{itemize}

Now, let $ (Y, e_Y, \tleq_Y) \eqdef (\FOLD{} \circ T)^\sharp \vv{(X, e, \tleq)}.$ To finish the proof, we have to show that $H^\sharp(\vv e) = e_Y.$
From Proposition \ref{prop:augmented-interpretation} we know that $Y = H^\sharp(\vv X).$ From Corollary \ref{cor:augmented-algebras}, we have a parameterised initial algebra isomorphism
\begin{equation}
\label{eq:algebra-lifted}
\iota \colon \FOLD{}T \left( \vv{(X,e,\tleq)}, (H^\sharp \vv{X}, e_Y, \tleq_Y) \right) \to (H^\sharp \vv{X}, e_Y, \tleq_Y) 
\end{equation}
which is also a parameterised initial algebra isomorphism
\begin{equation}
\label{eq:algebra-lower}
\iota \colon H \left( \vv{X}, H^\sharp \vv{X} \right) \to H^\sharp \vv{X} 
\end{equation}
in $\PDe$. By the induction hypothesis for $T$ and $H$ and Proposition \ref{prop:augmented-interpretation}, we get
\[ T \left( \vv{(X,e,\tleq)}, (H^\sharp \vv{X}, e_Y, \tleq_Y) \right) = \left( H(\vv X, H^\sharp \vv X) , H(\vv e, e_Y) , \btleq \right) , \]
where $\btleq$ is some (unimportant) logical relation. Therefore by \eqref{eq:algebra-lifted} and definition of $\FOLD{}$, we get that 
\begin{equation}
\label{eq:algebra-last}
\iota \colon \left( H(\vv X, H^\sharp \vv X) , \mathrm{fold} \kcirc H(\vv e, e_Y) , \mathbb I \btleq \right) \to (H^\sharp \vv{X}, e_Y, \tleq_Y) 
\end{equation}
is an isomorphism with the indicated type. This means that in the category $\PDe$, we have:
\begin{equation}
\label{eq:initial-universal}
\mathrm{fold} \kcirc H(\vv e, e_Y) = e_Y \kcirc \iota
\end{equation}
where we already know that $\iota = \iota_{X_1, \ldots, X_n}$ is the parameterised initial algebra in $\PDe$ of $H$. But, by definition, so is $\mathrm{fold}$ and in fact
$\mathrm{fold} = \iota_{\lrb{C_1}, \ldots, \lrb{C_n}}.$ However, $H^\sharp \vv e$ is the unique morphism, such that
\[ \iota_{\lrb{C_1}, \ldots, \lrb{C_n}} \kcirc H(\vv e, H^\sharp \vv e) = H^\sharp \vv e \kcirc \iota_{X_1, \ldots X_n} \]
which is the universal property of a parameterised initial algebra (see \cite[Remark 4.6]{lnl-fpc-lmcs}) and therefore by equation \eqref{eq:initial-universal} it follows that $e_Y = H^\sharp \vv e,$ as required.
\end{proof}

\begin{corollary}
\label{cor:cool-form}
For every closed type $A$, we have $\elrbs A = (\lrb A, \kid_{\lrb A}, \tleq_A)$ for some logical relation $\tleq_A.$ 
\end{corollary}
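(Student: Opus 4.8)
The plan is to derive the corollary directly from the two preceding propositions, so the argument is short. First I would invoke Proposition~\ref{prop:augmented-interpretation}, whose commuting diagram specialises (for the empty type context) to $U\elrbs A = \lrb A$. Since objects of $\RR(A)$ are triples whose first component is the underlying dcpo, this already forces $\elrbs A$ to have the form $(\lrb A, e, \tleq_A)$ for some embedding $e\colon \lrb A \kto \lrb A$ in $\PDe$ and some logical relation $\tleq_A \in \ValRel(\lrb A, A, e)$. The only remaining task is then to identify the embedding component $e$ as $\kid_{\lrb A}$.

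For this I would apply Proposition~\ref{prop:augmented-embedding} in the degenerate case of a closed type, i.e.\ with empty context $\Theta = \cdot$ and $n = 0$. In that case both $\elrb{\cdot \vdash A}{\cdot}$ and $\lrb{\cdot \vdash A}$ are functors out of the terminal category $\mathbf 1 = \RR(\cdot)^0 = \PDe^0$, which has a single object $*$ and a single morphism $\id_*$. The proposition then yields $\pi_e(\elrbs A) = \pi_e(\elrb{\cdot \vdash A}{\cdot}(*)) = \lrb{\cdot \vdash A}()$, where the right-hand side is the standard type interpretation applied to the empty list of embeddings. Because functors preserve identities, this empty list is nothing but $\id_*$, and $\lrb{\cdot \vdash A}(\id_*) = \id_{\lrb A} = \kid_{\lrb A}$. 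Hence $e = \pi_e(\elrbs A) = \kid_{\lrb A}$, and combining this with $U\elrbs A = \lrb A$ gives $\elrbs A = (\lrb A, \kid_{\lrb A}, \tleq_A)$, as claimed.

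Essentially all of the real content has already been carried out in Proposition~\ref{prop:augmented-embedding}: it is the nontrivial inductive statement showing that the embedding component of the augmented interpretation is always governed by the action of $\lrb{\Theta \vdash A}$ on embeddings. The corollary is merely its $n = 0$ instance, and I do not anticipate any genuine obstacle. The only point worth stating cleanly is the bookkeeping of the degenerate case — reading the empty tuple of embeddings as the identity of the terminal category $\mathbf 1$ and using functoriality to conclude that it maps to $\kid_{\lrb A}$.
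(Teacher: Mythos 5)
Your proposal matches the paper's own proof essentially verbatim: the first component is obtained from Proposition~\ref{prop:augmented-interpretation} (the commuting diagram with $U^A$), and the embedding component is identified as $\kid_{\lrb A}$ by instantiating Proposition~\ref{prop:augmented-embedding} at the empty context, reading the empty tuple of embeddings as $\id_*$ and using functoriality. No gaps; this is exactly the intended argument.
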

\begin{proof}
We already know that the first component is $\lrb A$. For the second component, the previous proposition shows that $\pi_e \elrbs A = \pi_e \elrb{\cdot \vdash A}{\cdot}(*) = \lrb{\cdot \vdash A}(\id_*) = \kid_{\lrb A},$ where $*$ denotes the empty tuple of objects and $\id_*$ the empty tuple of embeddings.
\end{proof}

Finally, we want to show that the third component of $\elrbs A$ is the logical relation that we need to carry out the adequacy proof. For this, we have to prove a substitution lemma first.

\begin{lemma}[Substitution]
\label{lem:substitution-augmented}
For any types $\Theta, X \vdash A$ and $\Theta \vdash B$ and closed types $C_1, \ldots, C_{n}$, we have:
\[ \elrbc{\Theta \vdash A[B/X]} = \elrb{\Theta, X \vdash A}{\vec C, B[\vec C / \Theta]} \circ \langle \Id, \elrbc{\Theta \vdash B} \rangle . \]
\end{lemma}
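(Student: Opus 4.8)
The plan is to prove the identity by induction on the derivation of $\Theta, X \vdash A$, exactly paralleling the proof of the type-level substitution lemma (Lemma~\ref{lem:type-substitution}) but carried out one level up, in the categories $\RR(-)$. The two structural facts I would invoke repeatedly are the pairing law $\langle f, g\rangle \circ h = \langle f \circ h, g \circ h\rangle$ and Proposition~\ref{prop:augmented-interpretation}, which records that the underlying $\PDe$-component of each augmented functor $\elrbc{\Theta \vdash A}$ is precisely $\lrb{\Theta \vdash A}$. This second fact lets me discharge every $\PDe$-level equality by appealing directly to Lemma~\ref{lem:type-substitution}, so that only the embedding and relation components require separate attention, and those are controlled by the explicit constructors of Definition~\ref{def:logical-functors} and Definition~\ref{def:logical-fold-unfold}.

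The base cases are immediate. If $A = \Theta_i$, then $A[B/X] = \Theta_i$ and both sides equal the projection $\Pi_i$, since $\Pi_i \circ \langle \Id, \elrbc{\Theta \vdash B}\rangle = \Pi_i$. If $A = X$, then $A[B/X] = B$, the left-hand side is $\elrbc{\Theta \vdash B}$, and the right-hand side is $\Pi_{n+1} \circ \langle \Id, \elrbc{\Theta \vdash B}\rangle = \elrbc{\Theta \vdash B}$. For a binary constructor $A = A_1 \star A_2$ with $\star \in \{+, \times, \to\}$, I would unfold the definition of the augmented interpretation on both sides, apply the induction hypotheses to $A_1$ and $A_2$, and use the pairing law to pull $\langle \Id, \elrbc{\Theta \vdash B}\rangle$ inside the pairing $\langle \elrbc{\Theta \vdash A_1}, \elrbc{\Theta \vdash A_2}\rangle$. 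The two sides then coincide once one checks that the type superscripts on the functor $\star^{-,-}$ agree, i.e.\ that $A_i[B/X][\vec C/\Theta] = A_i[\vec C/\Theta,\, B[\vec C/\Theta]/X]$; this is the standard (capture-avoiding) commutation of type substitution, the same bookkeeping already needed for Lemma~\ref{lem:type-substitution}.

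The recursive case $A = \mu Y. A'$ is the crux and the expected main obstacle. Here $A[B/X] = \mu Y.(A'[B/X])$, and both sides are built from a folding functor and the parameterised-initial-algebra operation $(-)^\sharp$. Observing first that $A'[B/X][\vec C/\Theta] = A'[\vec C/\Theta,\, B[\vec C/\Theta]/X]$ makes the two occurrences $\FOLD{\mu Y. A'[B/X][\vec C/\Theta]}$ literally the same functor, the task reduces to commuting $(-)^\sharp$ — which abstracts the recursion variable $Y$ — past precomposition with the parameter functor $\langle \Id, \elrbc{\Theta \vdash B}\rangle$, which acts only on the $\Theta, X$ parameters. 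I would handle this with the fusion law for parameterised initial algebras, $\TTT^\sharp \circ G = (\TTT \circ (G \times \Id))^\sharp$ for $\omega$-cocontinuous $\TTT$ (as in \cite{fiore-thesis,lnl-fpc-lmcs}), together with the induction hypothesis for the body $A'$ and the uniqueness of parameterised initial algebras (Proposition~\ref{prop:par-initial-algebra}, lifted to $\RR(-)$ via Corollary~\ref{cor:augmented-algebras}). The genuinely delicate point, and where I expect the bookkeeping to be heaviest, is that the substituted variable $X$ sits in the \emph{middle} of the context $\Theta, X, Y$ of $A'$ rather than in last position, so applying the induction hypothesis requires either a mild exchange argument on parameter positions or a slightly generalised inductive statement; this is exactly the subtlety resolved in the corresponding substitution lemma of \cite{lnl-fpc-lmcs}. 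Once $(-)^\sharp$ is commuted past the parameter functor, both sides are exhibited as the parameterised initial algebra of one and the same $\omega$-cocontinuous functor on $\RR(\mu Y. A'[B/X][\vec C/\Theta])$, forcing equality; the embedding components then match by Proposition~\ref{prop:augmented-embedding} and Lemma~\ref{lem:type-substitution}, and the relation components match directly from the constructors in Definition~\ref{def:logical-functors} and Definition~\ref{def:logical-fold-unfold}.
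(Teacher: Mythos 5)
Your proposal is correct and follows essentially the same route as the paper: the paper's proof is simply a pointer to \cite[Lemma 7.30]{lnl-fpc-lmcs}, and the argument there is exactly the induction you describe — projections for the variable cases, the pairing law plus the substitution commutation $A_i[B/X][\vec C/\Theta] = A_i[\vec C/\Theta,\, B[\vec C/\Theta]/X]$ for the binary constructors, and the fusion law for parameterised initial algebras (with the exchange/generalisation needed because $X$ is not the last variable of $\Theta, X, Y$) for the $\mu$ case. You have correctly identified the one genuinely delicate point, so nothing further is needed.
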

\begin{proof}
The proof is the same as \cite[Lemma 7.30]{lnl-fpc-lmcs}.
\end{proof}

For each type $A$, we have now provided an augmented interpretation $\elrbs A$ of $A$ in the category $\RR(A).$
The interpretation $\elrbs{-}$ satisfies all the fundamental properties of $\lrb{-},$ as we have now shown. It should now be clear that this augmented interpretation is true to its name, because it carries strictly more information compared to the standard interpretation of types.
The additional information that $\elrbs{A}$ carries is precisely the logical relation that we need at type $A$, as we show in the next subsection.

\subsection{Existence of the Logical Relations}

We can now show that the logical relations we need for the adequacy proof exist.

\begin{theorem}
\label{thm:formal-relations}
For each closed type $A$, there exist \emph{formal approximation relations:}
\begin{align*}
\tleq_A &\subseteq \TD(1, \lrb A) \times \mathrm{Val}(A) \\
\ol{\tleq_A} &\subseteq \KL(1, \lrb A) \times \mathrm{Prog}(A)
\end{align*}
which satisfy the following properties:
\begin{enumerate}
\setlength\itemsep{0.4em}
\item [(A1)] $ \JJ \emph{in}_i \kcirc v \tleq_{A_1 + A_2} \mathtt{in}_i V \text{ iff } v \tleq_{A_i} V$, where $i \in \{1,2\}$.
\item [(A2)] $\llangle v_1, v_2 \rrangle \tleq_{A_1 \times A_2} (V_1,V_2) \text{ iff }  v_1 \tleq_{A_1} V_1 \text{ and } v_2 \tleq_{A_2} V_2 .$
\item [(A3)] $ f \tleq_{A \to B} \lambda x. M \text{ iff }  f \leq \lrb{\lambda x.M} \text{ and } \forall (v \tleq_A V).\  f[v] \ol{\tleq_B} (\lambda x.M)V. $
\item [(A4)] $ v \tleq_{\mu X.A} \mathtt{fold}\ V \text{ iff } \sunfold \kcirc v \tleq_{A[\mu X. A / X]} V$.
\item [(B)] $ m \ol{\tleq_A} M \text{ iff } m \in \mathcal S(\tleq_A; M), $ where $\mathcal S(\tleq_A; M)$ is the Scott-closure in $\KL(1, \lrb A)$ of the set
\[ \mathcal S_{0}(\tleq_A; M) \defeq \left\{ \sum_{\pi \in F} P(\pi) v_\pi \ |\ F \subseteq \TPaths(M),\ F \text{ is finite and }  \text{$v_\pi \tleq_A V_\pi$ for each $\pi \in F$} \right\} \ (\text{see Definition \ref{def:paths}}) . \]
\item [(C1)]\label{item:below} If $v \tleq_A V$, then $v \leq \lrb V$.
\item [(C2)] $(- \tleq_A V)$ is a Scott-closed subset of $\TD(1,\lrb A).$
\item [(C3)] If $m \ol{\tleq_A} M$, then $m \leq \lrb M$.
\item [(C4)] $(- \ol{\tleq_A} M)$ is a Scott-closed subset of $\KL(1,\lrb A).$
\item [(C5)] If $v \in \TD(1, \lrb A)$ and $V$ is a value, then $v \tleq_A V$ iff $v \ol{\tleq_A} V.$
\end{enumerate}
\end{theorem}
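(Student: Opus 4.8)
The plan is to \emph{define} the required relations directly from the augmented interpretation and then read off the properties (A1)--(C5) from the results already established. Concretely, for each closed type $A$, Corollary~\ref{cor:cool-form} gives $\elrbs A = (\lrb A, \kid_{\lrb A}, \tleq_A)$ for a uniquely determined $\tleq_A \in \ValRel(\lrb A, A, \kid_{\lrb A})$; I take this $\tleq_A$ as the value relation and I let $\ol{\tleq_A}$ be the associated program relation from Definition~\ref{def:logical-closure}. With this choice, several clauses are immediate: (B) holds by the very definition of $\ol{\tleq_A}$; (C4) is the Scott-closedness lemma stated right after Definition~\ref{def:logical-closure}; (C2) and (C1) are exactly the two defining conditions of $\ValRel(\lrb A, A, \kid_{\lrb A})$ in Definition~\ref{def:logical-relations}, using that the embedding component is $\kid_{\lrb A}$ so that $\kid \kcirc v \le \lrb V$ reduces to $v \le \lrb V$; and (C5) is precisely Lemma~\ref{lem:id}.

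The structural clauses (A1)--(A3) will follow by unfolding the augmented interpretation on a type former and invoking Definition~\ref{def:logical-constructions}. For $\star \in \{+,\times,\to\}$ and closed $A_1, A_2$, the definition of $\elrbc{-}$ together with Proposition~\ref{prop:augmented-interpretation} gives $\elrbs{A_1 \star A_2} = \star^{A_1,A_2}(\elrbs{A_1}, \elrbs{A_2})$. Substituting $\elrbs{A_i} = (\lrb{A_i}, \kid, \tleq_{A_i})$ from Corollary~\ref{cor:cool-form} and applying the functor from Definition~\ref{def:logical-functors} yields $\tleq_{A_1 \star A_2} = \tleq_{A_1} \star \tleq_{A_2}$ with embedding $\kid \kstar_e \kid = \kid$. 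Because the embeddings are identities, the constructor clauses of Definition~\ref{def:logical-constructions} collapse to exactly (A1), (A2) and (A3); in the function-space case one uses that $e_1 = e_2 = \kid$ forces $\JJ[e_1^p \kto e_2] = \kid$, so the side condition $\JJ[e_1^p \kto e_2] \kcirc f \le \lrb{\lambda x.M}$ becomes $f \le \lrb{\lambda x.M}$.

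Clause (A4) is the delicate one and is where I expect the main obstacle, since it is exactly the point at which recursion would otherwise force a non-well-founded definition. The plan is to use the fold/unfold isomorphism of categories (Proposition~\ref{prop:logical-folding-isomorphism}) together with the coherence of parameterised initial algebras (Corollary~\ref{cor:augmented-algebras}) and the augmented substitution lemma (Lemma~\ref{lem:substitution-augmented}). These identify $\elrbs{\mu X.A}$ with $\FOLD{\mu X.A}\bigl(\elrbs{A[\mu X.A/X]}\bigr)$, the identification being realised by the initial-algebra isomorphism, which on underlying dcpos is $\sfold$ (an isomorphism in $\TD$). Transporting $\tleq_{\mu X.A}$ across this isomorphism and reading off the defining clause of the $\FOLD{\mu X.A}$ constructor in Definition~\ref{def:logical-fold-unfold} gives, for $v : 1 \kto \lrb{\mu X.A}$, that $v \tleq_{\mu X.A} \mathtt{fold}\ V$ iff $\sunfold \kcirc v \tleq_{A[\mu X.A/X]} V$, which is (A4); Lemma~\ref{lem:id} is used to pass freely between $\tleq$ and $\ol{\tleq}$ on values when verifying that $\sfold$ and its projection $\sunfold$ are morphisms in the relevant $\RR(-)$ categories.

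Finally, (C3) will be deduced from (C1) and the soundness inequality. By (B) we have $m \in \mathcal S(\tleq_A; M) = \overline{\mathcal S_0(\tleq_A; M)}$, and $\down \lrb M$ is Scott-closed in $\KL(1,\lrb A)$; so by Lemma~\ref{lem:topological-goodness} it suffices to bound each generator $\sum_{\pi \in F} P(\pi) v_\pi$. Using $v_\pi \le \lrb{V_\pi}$ from (C1), monotonicity of convex sums, and grouping paths by their endpoint value, this sum is at most $\sum_{V \in \Val(M)} P(M \probto{}_* V) \lrb V$, which is $\le \lrb M$ by Corollary~\ref{cor:soundness-inequality}. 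This completes the verification of all clauses; the heavy categorical lifting having already been carried out in the preceding propositions, the proof of the theorem is essentially an assembly of those results, with (A4) the only step requiring genuine care about the embedding components and the initial-algebra coherence.
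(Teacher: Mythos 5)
Your proposal is correct and follows essentially the same route as the paper's own proof: define $\tleq_A$ as the third component of $\elrbs A$ via Corollary~\ref{cor:cool-form}, read off (B), (C1), (C2), (C4), (C5) from the constructions and Lemma~\ref{lem:id}, obtain (A1)--(A3) from the compatibility of $\elrbs{-}$ with the functors of Definition~\ref{def:logical-functors}, derive (A4) from the lifted fold/unfold isomorphism via Corollary~\ref{cor:augmented-algebras} and Lemma~\ref{lem:substitution-augmented}, and get (C3) from (C1) together with Corollary~\ref{cor:soundness-inequality}. All the key steps match the paper's argument.
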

\begin{proof}
Consider the object $\elrbs A \in \RR(A).$ We have already shown that $\elrbs A = (\lrb A, \kid_{\lrb A}, \tleq_A)$ for some logical relation $\tleq_A \in \ValRel(\lrb A, A, \kid_{\lrb A}).$ We now show that $\tleq_A$ satisfies the required properties.
Notice that the embedding components are just identities.

Property (B) is satisfied by construction (Definition \ref{def:logical-closure}).
Properties (C1) and (C2) are also satisfied by construction (Definition \ref{def:logical-relations}). Property (C4) is satisfied by construction and property (B).
Property (C3) is satisfied, because if $m \ol{\tleq_A} M$, then by Corollary \ref{cor:soundness-inequality} and property (C1) it follows that $\mathcal S_0(\tleq_A; M) \subseteq \down \lrb M $. The latter set is Scott-closed and therefore
$m \in \mathcal S(\tleq_A; M) \subseteq \down \lrb M ,$ as required. Property (C5) is satisfied by Lemma \ref{lem:id}.

Properties (A1), (A2) and (A3) are satisfied, because for $\star \in \{+, \times, \to\} $, we have that $\tleq_{A \star B} = \tleq_A \star \tleq_B$ and then by Definition \ref{def:logical-constructions}.

To show that property (A4) is also satisfied, we reason as follows. Consider the isomorphism
\[  \sunfold_{\mu X. A} : \lrb{\mu X. A}   \cong \lrb{X \vdash A} \lrb{\mu X. A}   =    \lrb{A[\mu X. A/ X]} : \sfold_{\mu X. A} \]
from Definition \ref{def:fold/unfold}. By Corollary \ref{cor:augmented-algebras} and Lemma \ref{lem:substitution-augmented} (when $\Theta = \cdot$) it follows that this isomorphism lifts to an isomorphism
\[  \sunfold_{\mu X. A} : \elrbs{\mu X.A} \cong \FOLD{\mu X.A} \left( \elrb{X \vdash A}{\mu X.A} \left( \elrbs{\mu X. A} \right) \right) = \FOLD{\mu X.A} \left( \elrbs{A[ \mu X. A / X]} \right) : \sfold_{\mu X. A} \]
in the category $\RR(\mu X.A).$ Expanding definitions, this means we have an isomorphism
\begin{equation}
\label{eq:fold-unfold-adequacy}
\begin{split}
\sunfold_{\mu X.A } : (\lrb{\mu X.A}, \kid, \tleq_{\mu X.A})
& = \elrbs{\mu X.A}\\
& \cong \FOLD{\mu X.A} \left( \elrbs{A[ \mu X. A / X]} \right)\\
& = (\lrb{A[ \mu X. A / X]}, \sfold_{\mu X.A}, \FOLD{\mu X.A} \tleq_{A[ \mu X. A / X]}) : \sfold_{\mu X. A}
\end{split}
\end{equation}
in the category $\RR(\mu X. A).$ The notion of morphism in this category (Definition \ref{def:logical-categories}), construction of $\mathbb I$ (Definition \ref{def:logical-fold-unfold}) and property (C5) allow us to conclude that property (A4) is satisfied. Indeed:
\begin{align*}
&v \tleq_{\mu X. A} \mathtt{fold}\ V &\\
\Longrightarrow \quad & \sunfold_{\mu X.A} \kcirc v\ (\FOLD{\mu X.A} \tleq_{A[ \mu X. A / X]} )\ \mathtt{fold}\ V \\
\Longrightarrow \quad & \sunfold_{\mu X.A} \kcirc v\ \tleq_{A[ \mu X. A / X]} V
\end{align*}
and for the other direction of (A4):
\begin{align*}
& \sunfold_{\mu X.A} \kcirc v \tleq_{A[ \mu X. A / X]} V & \\
\Longrightarrow \quad & \sunfold_{\mu X.A} \kcirc v\ (\FOLD{\mu X.A} \tleq_{A[ \mu X. A / X]} )\ \mathtt{fold}\ V \\
\Longrightarrow \quad & v = \sfold_{\mu X. A} \kcirc \sunfold_{\mu X.A} \kcirc v \tleq_{\mu X. A} \mathtt{fold}\ V .
\qedhere
\end{align*}
\end{proof}

\subsection{Closure Properties of the Logical Relations}
\label{sub:closure-properties}

Here we establish some important closure properties of the relations $\overline{\tleq_A}$ from Theorem \ref{thm:formal-relations}.

\begin{lemma}
\label{lem:logical-convex-sum}
Let $\cdot \vdash M : A$ be a term and let $F$ be some finite index set. Assume that we are given morphisms $m_i$ and terms $M_i$ such that $m_i \ol{\tleq_A} M_i$ for $i \in F$.
Assume further that for each $i \in F$, we are given a reduction path $\pi_i \in \Paths(M, M_i)$, such that all paths $\pi_i$ are distinct.
Then
\[ \sum_{i \in F} P(\pi_i) m_i \ol{\tleq_A} M. \]
\end{lemma}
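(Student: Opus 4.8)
The plan is to unfold the meaning of $\ol{\tleq_A}$ using property (B) of Theorem~\ref{thm:formal-relations}, so that the goal becomes the membership statement $\sum_{i \in F} P(\pi_i) m_i \in \mathcal{S}(\tleq_A; M)$, and then to establish this in two stages: first reduce to the case in which each $m_i$ is a \emph{basic generator} of $\mathcal{S}_0(\tleq_A; M_i)$ (using the Scott-closedness of $\mathcal{S}(\tleq_A; M)$), and afterwards run a direct path-concatenation computation on those generators. Throughout I would tacitly use that the stated convex sum is well-defined, i.e.\ that $\sum_{i\in F}P(\pi_i)\le 1$, which holds in the intended applications where the $\pi_i$ form a reduction front of $M$.

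For the reduction step, recall that $m_i\,\ol{\tleq_A}\, M_i$ means $m_i \in \mathcal{S}(\tleq_A; M_i)$, which by Definition~\ref{def:SM} is the Scott-closure of $\mathcal{S}_0(\tleq_A; M_i)$ in $\KL(1,\lrb A)$. The assignment $(x_i)_{i \in F} \mapsto \sum_{i \in F} P(\pi_i) x_i$ is a Scott-continuous map $\prod_{i \in F} \KL(1,\lrb A) \to \KL(1,\lrb A)$, by the Scott-continuity of finite convex sums in each argument (Definition~\ref{def:convex-sums}), and $\mathcal{S}(\tleq_A; M)$ is Scott-closed by construction. Hence Lemma~\ref{lem:topological-goodness}, applied with $U_i = \mathcal{S}_0(\tleq_A; M_i)$, with $\overline{U_i} = \mathcal{S}(\tleq_A; M_i)\ni m_i$, and with the Scott-closed set $C_Y = \mathcal{S}(\tleq_A; M)$, reduces the claim to showing that $\sum_{i \in F} P(\pi_i) u_i \in \mathcal{S}(\tleq_A; M)$ for every choice of generators $u_i = \sum_{\rho \in G_i} P(\rho) v_\rho \in \mathcal{S}_0(\tleq_A; M_i)$, where each $G_i \subseteq \TPaths(M_i)$ is finite and $v_\rho \tleq_A V_\rho$ for all $\rho \in G_i$.

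The heart of the argument is then the concatenation of paths. For $i \in F$ and $\rho \in G_i$, let $\pi_i \cdot \rho$ denote the reduction path obtained by gluing the terminal path $\rho$ of $M_i$ onto the end of $\pi_i \in \Paths(M, M_i)$. Since $\rho$ ends at the value $V_\rho$, the path $\pi_i \cdot \rho$ is a terminal path of $M$, so $\pi_i \cdot \rho \in \TPaths(M)$, with $V_{\pi_i \cdot \rho} = V_\rho$ and $P(\pi_i \cdot \rho) = P(\pi_i)\,P(\rho)$. Therefore
\[ \sum_{i \in F} P(\pi_i) u_i = \sum_{i \in F} \sum_{\rho \in G_i} P(\pi_i)\,P(\rho)\, v_\rho = \sum_{i \in F} \sum_{\rho \in G_i} P(\pi_i \cdot \rho)\, v_\rho , \]
and because $v_\rho \tleq_A V_\rho = V_{\pi_i \cdot \rho}$, the right-hand side exhibits the sum as an element of $\mathcal{S}_0(\tleq_A; M) \subseteq \mathcal{S}(\tleq_A; M)$, which completes the generator case and hence the proof.

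I expect the main obstacle to be precisely the bookkeeping hidden in this last step rather than any conceptual difficulty: one must verify that the doubly-indexed family $\{\pi_i \cdot \rho\}_{i \in F,\ \rho \in G_i}$ really does index a single \emph{finite subset} of $\TPaths(M)$, with one summand per path. This uses the distinctness hypothesis on the $\pi_i$ together with the fact that terminal paths are prefix-free, and it is the point at which all the combinatorial care resides; the delicate case is when two pairs $(i,\rho)\neq(j,\rho')$ could yield the same concatenated path (which can only occur if some $\pi_i$ is a prefix of another $\pi_j$). Ruling this out, or grouping such coincident contributions, is what must be argued carefully, and it is exactly why the $\pi_i$ are taken to be pairwise prefix-incomparable in the applications of this lemma. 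Once the indexing is settled, membership in $\mathcal{S}_0(\tleq_A; M)$ is immediate from the definitions.
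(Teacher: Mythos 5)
Your proof is correct and follows essentially the same route as the paper's: reduce to generators of $\mathcal S_0(\tleq_A; M_i)$ via Lemma~\ref{lem:topological-goodness} and Scott-continuity of the finite convex sum, then rewrite $P(\pi_i)\,P(\rho)$ as $P(\pi_i\cdot\rho)$ for the concatenated terminal paths to land in $\mathcal S_0(\tleq_A; M)$. The two bookkeeping points you flag --- that $\sum_{i\in F}P(\pi_i)\le 1$ and that the concatenated paths are pairwise distinct, both guaranteed when the $\pi_i$ are prefix-incomparable, as they are in every application of the lemma --- are simply passed over in silence in the paper's version.
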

\begin{proof}
By assumption, for every $i \in F$, we know that $m_i \in \mathcal S(\tleq_A; M_i)$. Next, consider the function
\[ g \eqdef \sum_{i \in F} P(\pi_i)(-) \colon \prod_{|F|} \KL(1, \lrb A) \to \KL(1, \lrb A) . \]
This function is Scott continuous and therefore by Lemma \ref{lem:topological-goodness}, it suffices to show that
$g(\prod_i s_i) \in \mathcal S(\tleq_A ; M)$ for any choice of $s_i \in \mathcal S_0(\tleq_A; M_i).$
Next, for every $i \in F$, let
\[ s_i = \left( \sum_{\pi \in F_i} P(\pi) v_{\pi} \right) \in \mathcal S_0(\tleq_A; M_i) \]
where $F_i \subseteq \TPaths(M_i)$ is a finite subset and such that $v_{\pi} \tleq_{A} V_\pi,$ for each $\pi \in F_i.$ Then, we have
\begin{align*}
g \left( \prod_i s_i \right) &= \sum_{i \in F} P(\pi_i) \left( \sum_{\pi \in F_i} P(\pi)v_{\pi} \right)  \\
&= \sum_{i \in F} \sum_{\pi \in F_i} \left(P(\pi_i) \cdot P(\pi) \right) v_{\pi} \\
&= \sum_{i \in F} \sum_{\pi \in F_i} P(\pi_i \pi)  v_{\pi} \\
&\in \mathcal S_0(\tleq_A; M) , 
\end{align*}
where $\pi_i \pi \in \Paths(M, V_\pi)$ is the path constructed by concatenating the path $\pi_i$ to $\pi$.
\end{proof}

\begin{lemma}
\label{lem:logical-probabilistic-choice}
If $m \ol{\tleq_A} M$ and $n \ol{\tleq_A} N,$ then $p \cdot m + (1-p) \cdot n \ol{\tleq_A} M\ \mathtt{or}_p\ N.$
\end{lemma}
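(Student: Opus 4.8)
The plan is to reduce the statement about the probabilistic choice term $M\ \mathtt{or}_p\ N$ to the single-step reduction behaviour recorded in Figure~\ref{fig:operational}, and then to apply the concatenation lemma (Lemma~\ref{lem:logical-convex-sum}) that was just proved. The key observation is that $M\ \mathtt{or}_p\ N$ reduces in one step to $M$ with probability $p$ and to $N$ with probability $1-p$, via the two rules
\[
M\ \mathtt{or}_p\ N \probto{p} M \qquad \text{and} \qquad M\ \mathtt{or}_p\ N \probto{1-p} N.
\]
These two single reductions are reduction paths $\pi_1 \in \Paths(M\ \mathtt{or}_p\ N,\ M)$ and $\pi_2 \in \Paths(M\ \mathtt{or}_p\ N,\ N)$ with weights $P(\pi_1) = p$ and $P(\pi_2) = 1-p$, and they are plainly distinct (assuming $p \notin \{0,1\}$; the boundary cases are handled separately below).

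First I would record the two one-step paths $\pi_1$ and $\pi_2$ above and note that by hypothesis $m\ \ol{\tleq_A}\ M$ and $n\ \ol{\tleq_A}\ N$. I would then set the finite index set $F = \{1,2\}$ in Lemma~\ref{lem:logical-convex-sum}, with $m_1 = m$, $M_1 = M$, $\pi_1$ as above, and $m_2 = n$, $M_2 = N$, $\pi_2$ as above. Lemma~\ref{lem:logical-convex-sum} then yields directly
\[
p\cdot m + (1-p)\cdot n = P(\pi_1) m_1 + P(\pi_2) m_2\ \ol{\tleq_A}\ M\ \mathtt{or}_p\ N,
\]
which is exactly the desired conclusion. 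This is the entire argument in the generic case.

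The only subtlety — and the step I expect to require a little care rather than genuine difficulty — concerns the degenerate cases $p = 0$ and $p = 1$, where the two paths $\pi_1$ and $\pi_2$ coincide as reduction steps or where one summand vanishes, so that the distinctness hypothesis of Lemma~\ref{lem:logical-convex-sum} must be checked honestly. When $p = 1$, the convex sum reduces to $1\cdot m + 0\cdot n = m$, and one argues that $m\ \ol{\tleq_A}\ M\ \mathtt{or}_1\ N$ holds because $M\ \mathtt{or}_1\ N \probto{1} M$ and hence $\TPaths(M\ \mathtt{or}_1\ N)$ is obtained by prepending this single deterministic step to the terminal paths of $M$; the case $p=0$ is symmetric. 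Alternatively, and more cleanly, one can observe that the two reduction rules for $\mathtt{or}_p$ always produce two \emph{distinct} syntactic path labels even when $p \in \{0,1\}$, so that Lemma~\ref{lem:logical-convex-sum} applies uniformly; summands with zero weight contribute nothing to the convex sum and are absorbed exactly as in the proof of Lemma~\ref{lem:logical-composition}. Either way, the result follows, and I would present the uniform version to avoid case analysis.
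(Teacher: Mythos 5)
Your proof is correct and is exactly the paper's argument: the paper simply notes that this lemma is a special case of Lemma~\ref{lem:logical-convex-sum}, instantiated precisely as you do with the two one-step reduction paths $M\ \mathtt{or}_p\ N \probto{p} M$ and $M\ \mathtt{or}_p\ N \probto{1-p} N$. Your additional care about the degenerate cases $p \in \{0,1\}$ is a reasonable (if not strictly demanded by the paper) refinement, and your resolution of it is sound.
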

\begin{proof}
This is just a special case of Lemma \ref{lem:logical-convex-sum}.
\end{proof}

\begin{lemma}
\label{lem:logical-injections}
For $i \in \{1,2\}:$ if $m \ol{\tleq_{A_i}} M$, then $\JJ \emph{\emph{in}}_i \kcirc m \ol{\tleq_{A_1 + A_2}} \mathtt{in}_i M .$
\end{lemma}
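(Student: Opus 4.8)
The plan is to unfold the definition of $\ol{\tleq_{A_1+A_2}}$ through property (B) of Theorem~\ref{thm:formal-relations} and reduce the claim to a statement about the generators of $\mathcal S(\tleq_{A_1+A_2}; \mathtt{in}_i M)$. Since $m \ol{\tleq_{A_i}} M$ means $m \in \mathcal S(\tleq_{A_i}; M)$, and the goal is $\JJ \emph{in}_i \kcirc m \in \mathcal S(\tleq_{A_1+A_2}; \mathtt{in}_i M)$, I would first observe that Kleisli post-composition $g \kcirc (-)$ with $g \defeq \JJ \emph{in}_i$ is a Scott-continuous map $\KL(1,\lrb{A_i}) \to \KL(1, \lrb{A_1+A_2})$ (by the $\dcpobs$-enrichment of $\KL$). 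As $\mathcal S(\tleq_{A_1+A_2}; \mathtt{in}_i M)$ is Scott-closed by construction, Lemma~\ref{lem:topological-goodness} (applied with a single factor, exactly as in the proof of Lemma~\ref{lem:logical-composition}) lets me replace $\mathcal S(\tleq_{A_i};M)$ by its generating set $\mathcal S_0(\tleq_{A_i};M)$: it suffices to prove that $g$ maps every generator into the closed target set.

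Next I would handle a single generator $s = \sum_{\pi \in F} P(\pi) v_\pi \in \mathcal S_0(\tleq_{A_i}; M)$, where $F \subseteq \TPaths(M)$ is finite and $v_\pi \tleq_{A_i} V_\pi$ for each $\pi \in F$. Linearity of $g \kcirc (-)$ (the second item of Lemma~\ref{lemma:respectkegel}, together with $g \kcirc \mathbf 0 = \mathbf 0$, which makes the map linear in the barycentric sense) gives $g \kcirc s = \sum_{\pi \in F} P(\pi) (\JJ \emph{in}_i \kcirc v_\pi)$. The key syntactic ingredient is that $\mathtt{in}_i[\cdot]$ is an evaluation context, so every reduction $M \probto{p} M'$ lifts to $\mathtt{in}_i M \probto{p} \mathtt{in}_i M'$ and, conversely, these are the only reductions of $\mathtt{in}_i M$; moreover $\mathtt{in}_i V$ is a value exactly when $V$ is. This yields a weight-preserving bijection $\pi \mapsto \mathtt{in}_i \pi$ between $\TPaths(M)$ and $\TPaths(\mathtt{in}_i M)$ with $V_{\mathtt{in}_i \pi} = \mathtt{in}_i V_\pi$ and $P(\mathtt{in}_i \pi) = P(\pi)$. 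Finally, property (A1) converts each hypothesis $v_\pi \tleq_{A_i} V_\pi$ into $\JJ \emph{in}_i \kcirc v_\pi \tleq_{A_1+A_2} \mathtt{in}_i V_\pi$. Hence $g \kcirc s = \sum_{\pi \in F} P(\mathtt{in}_i \pi)(\JJ \emph{in}_i \kcirc v_\pi)$ is precisely a generator in $\mathcal S_0(\tleq_{A_1+A_2}; \mathtt{in}_i M)$, so it lies in $\mathcal S(\tleq_{A_1+A_2}; \mathtt{in}_i M)$, as needed. The cases $i=1$ and $i=2$ are symmetric.

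I expect the main obstacle to be the bookkeeping around the path correspondence rather than any deep difficulty: one must verify that $\mathtt{in}_i M$ admits no reductions other than those inherited from $M$ and that its termination coincides with that of $M$, so that a finite $F \subseteq \TPaths(M)$ really transports to a finite subset of $\TPaths(\mathtt{in}_i M)$ with matching endpoints and weights. Everything else is a direct application of already-established machinery: property (B) to pass between $\ol{\tleq}$ and $\mathcal S$, Lemma~\ref{lem:topological-goodness} to reduce to generators, Lemma~\ref{lemma:respectkegel} for linearity of Kleisli composition, and property (A1) for the injection clause of the value relation.
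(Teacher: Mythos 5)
Your proposal is correct and follows essentially the same route as the paper's proof: reduce to generators of $\mathcal S_0(\tleq_{A_i};M)$ via Lemma~\ref{lem:topological-goodness}, use linearity of $(\JJ \emph{in}_i \kcirc -)$ to push the injection through the convex sum, transport each path $\pi$ to a path $\mathrm{in}_i(\pi) \in \Paths(\mathtt{in}_i M, \mathtt{in}_i V_\pi)$ of the same weight, and apply (A1) to the value components. The only difference is that you insist on a full bijection between $\TPaths(M)$ and $\TPaths(\mathtt{in}_i M)$, which is more than is needed --- injectivity of $\pi \mapsto \mathrm{in}_i(\pi)$ on the finite set $F$ already suffices for membership in $\mathcal S_0(\tleq_{A_1+A_2}; \mathtt{in}_i M)$.
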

\begin{proof}
Assume, without loss of generality, that $i=1$. By definition we know that $m \in \mathcal S(\tleq_{A_1}; M) = \ol{\mathcal S_{0}(\tleq_{A_1}; M)}$.
By Lemma \ref{lem:topological-goodness}, it suffices to show
\[ \JJ\emph{in}_1 \kcirc \sum_{\pi \in F} P(\pi) v_\pi \in \mathcal S(\tleq_{A_1 + A_2}; \mathtt{in}_1 M) \]
for any  $\sum_{\pi \in F} P(\pi) v_\pi \in \mathcal S_0(\tleq_{A_1}; M) . $
Since $(\JJ\emph{in}_1 \kcirc -)$ is linear, we see
\[ \JJ\emph{in}_1 \kcirc \sum_{\pi \in F} P(\pi) v_\pi = \sum_{\pi \in F} P(\pi) ( \JJ\emph{in}_1 \kcirc v_\pi ) = \sum_{\pi \in F} P(\mathrm{in}_1(\pi)) ( \JJ\emph{in}_1 \kcirc v_\pi )  \in \mathcal S(\tleq_{A_1 + A_2}; \mathtt{in}_1 M)  , \]
where $\mathrm{in}_1(\pi) \in \Paths(\mathtt{in_1} M, \mathtt{in}_1 V_\pi)$ is the path constructed by reducing $\mathtt{in}_1 M$ to $\mathtt{in}_1 V_\pi$, as specified by $\pi.$
The membership relation is satisfied because by assumption $v_\pi \tleq_{A_1} V_\pi$ and then by Theorem~\ref{thm:formal-relations}~(A1).
\end{proof}

\begin{lemma}
\label{lem:logica-case}
Let $m \ol{\tleq_{A_1 + A_2}} M$. Next, assume that for $k \in \{1,2\}$ we have terms $x_k : A_k \vdash N_k : B$ and morphisms $n_k \colon \lrb{A_k} \kto \lrb{B}$, such that for every $v_k \tleq_{A_k} V_k,$ it is the case that $n_k \kcirc v_k \ol{\tleq_B} N_k[V_k / x_k].$
Then 
\[ [n_1, n_2] \kcirc m \ol{\tleq_B} \mathtt{case}\ M\ \mathtt{of}\ \mathtt{in}_1 x_1 \Rightarrow N_1\ |\ \mathtt{in}_2 x_2 \Rightarrow N_2 . \]
\end{lemma}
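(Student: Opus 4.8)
The plan is to mimic the proof of Lemma~\ref{lem:logical-injections} almost verbatim: unfold the definition of $\ol{\tleq_{A_1+A_2}}$, reduce the goal to the generators of $\mathcal S_0$ by Scott-continuity, analyse each generator with property (A1), and reassemble with Lemma~\ref{lem:logical-convex-sum}. Write $\mathtt{case}_M \eqdef (\mathtt{case}\ M\ \mathtt{of}\ \mathtt{in}_1 x_1 \Rightarrow N_1\ |\ \mathtt{in}_2 x_2 \Rightarrow N_2)$. By Theorem~\ref{thm:formal-relations}~(B), the hypothesis $m \ol{\tleq_{A_1+A_2}} M$ means $m \in \mathcal S(\tleq_{A_1+A_2}; M) = \ol{\mathcal S_0(\tleq_{A_1+A_2}; M)}$, and the goal is $[n_1,n_2]\kcirc m \in \mathcal S(\tleq_B; \mathtt{case}_M)$. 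Since $[n_1,n_2]\kcirc(-)$ is Scott-continuous and $\mathcal S(\tleq_B; \mathtt{case}_M)$ is Scott-closed, Lemma~\ref{lem:topological-goodness} reduces the claim to showing $[n_1,n_2]\kcirc s \in \mathcal S(\tleq_B; \mathtt{case}_M)$ for every generator $s = \sum_{\pi\in F} P(\pi) v_\pi \in \mathcal S_0(\tleq_{A_1+A_2}; M)$, where $F \subseteq \TPaths(M)$ is finite and $v_\pi \tleq_{A_1+A_2} V_\pi$ for the endpoint value $V_\pi$ of each $\pi$.

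Next I would analyse each summand. Because $v_\pi \in \TD(1,\lrb{A_1+A_2})$ is a (nonzero, total) map into the coproduct $\lrb{A_1}+\lrb{A_2}$ and $V_\pi$ is a value of type $A_1+A_2$, each factors through a unique injection; property (A1) then forces them through the \emph{same} index $i_\pi\in\{1,2\}$, so that $V_\pi = \mathtt{in}_{i_\pi} W_\pi$ and $v_\pi = \JJ\emph{in}_{i_\pi}\kcirc v'_\pi$ with $v'_\pi \tleq_{A_{i_\pi}} W_\pi$. Using the universal property of the Kleisli coproduct, $[n_1,n_2]\kcirc \JJ\emph{in}_{i} = n_i$, this gives $[n_1,n_2]\kcirc v_\pi = n_{i_\pi}\kcirc v'_\pi$, and the hypothesis on $n_1,n_2$ yields $n_{i_\pi}\kcirc v'_\pi \ol{\tleq_B} N_{i_\pi}[W_\pi/x_{i_\pi}]$.

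It then remains to glue the summands back together with the correct reduction paths. For each $\pi\in F$ I would lift $\pi$ through the evaluation context $E \eqdef (\mathtt{case}\ [\cdot]\ \mathtt{of}\ \mathtt{in}_1 x_1 \Rightarrow N_1\ |\ \mathtt{in}_2 x_2 \Rightarrow N_2)$ via the context rule of Figure~\ref{fig:operational}, obtaining a reduction $\mathtt{case}_M \to^{*} \mathtt{case}\ \mathtt{in}_{i_\pi} W_\pi\ \mathtt{of}\ \cdots$ of weight $P(\pi)$, and append the single step $\mathtt{case}\ \mathtt{in}_{i_\pi} W_\pi\ \mathtt{of}\ \cdots \probto{1} N_{i_\pi}[W_\pi/x_{i_\pi}]$; call the result $\hat\pi \in \Paths(\mathtt{case}_M, N_{i_\pi}[W_\pi/x_{i_\pi}])$, so $P(\hat\pi) = P(\pi)$. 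As the context $E$ is fixed, $\pi$ is recoverable from $\hat\pi$, hence the $\hat\pi$ are pairwise distinct. Applying Lemma~\ref{lem:logical-convex-sum} with outer term $\mathtt{case}_M$, morphisms $n_{i_\pi}\kcirc v'_\pi$, terms $N_{i_\pi}[W_\pi/x_{i_\pi}]$ and paths $\hat\pi$ gives $\sum_{\pi\in F} P(\pi)(n_{i_\pi}\kcirc v'_\pi) \ol{\tleq_B} \mathtt{case}_M$, and by linearity of $[n_1,n_2]\kcirc(-)$ the left-hand side equals $[n_1,n_2]\kcirc s$, closing the reduction of the first paragraph.

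The main obstacle I anticipate is the per-summand analysis of the second paragraph: one must argue carefully that $v_\pi \tleq_{A_1+A_2} V_\pi$ forces a \emph{consistent} injection on the semantic and syntactic sides (so that the copairing selects the matching $n_{i_\pi}$), using that the relation $\tleq_{A_1+A_2}=\tleq_{A_1}+\tleq_{A_2}$ of Definition~\ref{def:logical-constructions} only relates pairs through a common injection and that every element of $\TD(1,\lrb{A_1+A_2})$ decomposes uniquely through one coprojection. Once this coproduct bookkeeping is pinned down, the remaining ingredients—Scott-continuity and linearity of $[n_1,n_2]\kcirc(-)$, Scott-closedness of $\mathcal S(\tleq_B;-)$, and preservation of probability weights and distinctness of the lifted paths—are routine and parallel the corresponding steps in Lemmas~\ref{lem:logical-injections} and~\ref{lem:logical-convex-sum}.
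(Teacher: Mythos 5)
Your proof is correct and its first half coincides with the paper's: reduce via Lemma~\ref{lem:topological-goodness} and Scott-continuity of $[n_1,n_2]\kcirc(-)$ to generators $\sum_{\pi\in F}P(\pi)v_\pi \in \mathcal S_0(\tleq_{A_1+A_2};M)$, use the shape of $\tleq_{A_1}+\tleq_{A_2}$ (property (A1)) to write $v_\pi = \JJ\emph{in}_{i_\pi}\kcirc v'_\pi$ and $V_\pi = \mathtt{in}_{i_\pi}W_\pi$ with $v'_\pi\tleq_{A_{i_\pi}}W_\pi$, and collapse $[n_1,n_2]\kcirc\JJ\emph{in}_{i_\pi}=n_{i_\pi}$ by linearity of postcomposition. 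Where you diverge is the final assembly: the paper does \emph{not} invoke Lemma~\ref{lem:logical-convex-sum} here; instead it applies Lemma~\ref{lem:topological-goodness} a second time to the Scott-continuous map $\bigl(\sum_{\pi\in F_1}P(\pi)(-)\bigr)+\bigl(\sum_{\pi\in F_2}P(\pi)(-)\bigr)$, expands each $n_k^\pi$ as an element of $\mathcal S_0(\tleq_B;N_k[W_\pi/x_k])$, and exhibits the concatenated paths $\mathrm{case}_k(\pi,\pi')$ explicitly to land in $\mathcal S_0(\tleq_B;\mathtt{case}_M)$ --- in effect re-deriving inline exactly the special case of Lemma~\ref{lem:logical-convex-sum} that you cite. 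Your route is legitimate and shorter: the hypotheses of Lemma~\ref{lem:logical-convex-sum} are met with outer term $\mathtt{case}_M$, terms $N_{i_\pi}[W_\pi/x_{i_\pi}]$ (which need not be values --- the lemma permits this), morphisms $n_{i_\pi}\kcirc v'_\pi$, and the lifted-and-extended paths $\hat\pi$ of weight $P(\pi)$, whose distinctness you correctly justify. This is in fact the same packaging the paper itself uses in the application case (Lemma~\ref{lem:logical-aplication}), so your version arguably improves the uniformity of the appendix; the paper's version buys only a self-contained computation at the cost of duplication.
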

\begin{proof}
For brevity, let $C$ be the term $C \defeq ( \mathtt{case}\ M\ \mathtt{of}\ \mathtt{in}_1 x_1 \Rightarrow N_1\ |\ \mathtt{in}_2 x_2 \Rightarrow N_2 ).$
Next, consider the function
\[ ([n_1,n_2] \kcirc -) \colon \KL(1, \lrb{A_1 + A_2}) \to \KL(1,\lrb B) . \]
This function is Scott continuous.
By Lemma \ref{lem:topological-goodness}, to complete the proof it suffices to show that $[n_1,n_2] \kcirc m' \ol{\tleq_B} C$ for any $m' \in \mathcal S_0(\tleq_{A_1 + A_2}; M)$. Towards that end, let
\begin{align*}
m' = \sum_{\pi \in F} P(\pi) v_\pi ,
\end{align*}
where $F$ is finite and where $v_\pi \tleq_{A_1 + A_2} V_\pi$, for each $\pi \in F.$
Let $F_1 \subseteq F$ be the set of paths $\pi$ such that $V_\pi = \mathtt{in}_1 V_\pi'$ for some $V_\pi'$ and let $F_2 = F - F_1.$
Then  by Theorem~\ref{thm:formal-relations}~(A1), for each $\pi \in F_1$, it follows that $V_\pi = \mathtt{in}_{1} V_\pi'$ and $v_\pi = \JJ \emph{in}_{1} \kcirc v_\pi'$ and $v_\pi' \tleq_{A_{1}} V_\pi'$.
Similarly, for each $\pi \in F_2$, it follows that  $V_\pi = \mathtt{in}_{2} V_\pi'$ and $v_\pi = \JJ \emph{in}_{2} \kcirc v_\pi'$ and $v_\pi' \tleq_{A_{2}} V_\pi'$.
Therefore, we get:
\begin{align*}
[n_1,n_2] \kcirc m' &= [n_1,n_2] \kcirc \left( \left( \sum_{\pi \in F_1} P(\pi) (\JJ \emph{in}_1 \kcirc v_\pi') \right) + \left(  \sum_{\pi \in F_2} P(\pi) (\JJ \emph{in}_2 \kcirc v_\pi')  \right) \right) \\ 
&= \left( \sum_{\pi \in F_1} P(\pi) (n_1 \kcirc v_\pi') \right) + \left(  \sum_{\pi \in F_2} P(\pi) (n_2 \kcirc v_\pi')  \right) \\ 
\end{align*}
In the above sums, by assumption, we know that $n_1 \kcirc v_\pi' \ol{\tleq_B} N_1[V_\pi'/ x_1]$, for each $\pi \in F_1$ and similarly 
$n_2 \kcirc v_\pi' \ol{\tleq_B} N_2[V_\pi' / x_2]$, for each $\pi \in F_2.$ Next, consider the function
\begin{align*}
\left( \left( \sum_{\pi \in F_1} P(\pi)( - ) \right) + \left( \sum_{\pi \in F_2} P(\pi)( - ) \right) \right) : \KL(1, \lrb{B})^{|F_1|} \times \KL(1, \lrb{B})^{|F_2|}  \to \KL(1, \lrb B) .
\end{align*}
This function is Scott-continuous and by Lemma \ref{lem:topological-goodness}, to complete the proof it suffices to show that
\[   \left( \sum_{\pi \in F_1} P(\pi)( n_{1}^\pi ) \right) + \left( \sum_{\pi \in F_2} P(\pi)( n_{2}^\pi ) \right)  \ol{\tleq_B} C , \]
where $n_{1}^\pi \in \mathcal S_0(\tleq_B ; N_1[V_\pi' / x_1])$ for $\pi \in F_1$ and $n_{2}^\pi \in \mathcal S_0(\tleq_B ; N_2[V_\pi' / x_2])$ for $\pi \in F_2$ are taken to be arbitrary.
Towards this end, let
\begin{align*}
n_{1}^\pi &= \sum_{\pi' \in F_{1}^\pi} P(\pi') v_{\pi'} \in \mathcal S_0(\tleq_B ; N_1[V_\pi' / x_1]) \\
n_{2}^\pi &= \sum_{\pi' \in F_{2}^\pi} P(\pi') v_{\pi'} \in \mathcal S_0(\tleq_B ; N_2[V_\pi' / x_2])
\end{align*}
where $F_{k}^\pi$ is finite and where $v_{\pi'} \tleq_{B} V_{\pi'}$, for every $\pi' \in F_{k}^\pi$ and where $k \in \{1,2\}$. Then, we get
\begin{align*}
& \left( \sum_{\pi \in F_1} P(\pi)( n_{1}^\pi ) \right) + \left( \sum_{\pi \in F_2} P(\pi)( n_{2}^\pi ) \right) = \\
&= \left( \sum_{\pi \in F_1} \sum_{\pi' \in F_{1}^\pi} P(\pi) P(\pi') v_{\pi'}  \right) + \left( \sum_{\pi \in F_2} \sum_{\pi' \in F_{2}^\pi} P(\pi) P(\pi') v_{\pi'}  \right)  \\
&= \left( \sum_{\pi \in F_1} \sum_{\pi' \in F_{1}^\pi} P( \mathrm{case}_1(\pi,\pi') ) v_{\pi'}  \right) + \left( \sum_{\pi \in F_2} \sum_{\pi' \in F_{2}^\pi} P( \mathrm{case}_2(\pi,\pi') ) v_{\pi'}  \right)  \\
& \in \mathcal S_0(\tleq_B; C) \subseteq \mathcal S(\tleq_B; C) ,
\end{align*}
where $\mathrm{case}_1(\pi, \pi') \in \Paths(C, V_{\pi'})$ is the path obtained by reducing $C$ to $C_\pi \eqdef ( \mathtt{case}\ \mathtt{in}_1 V_\pi'\ \mathtt{of}\ \mathtt{in}_1 x_1 \Rightarrow N_1\ |\ \mathtt{in}_2 x_2 \Rightarrow N_2 )$ as specified by $\pi$, then performing
the beta reduction $C_\pi \probto 1 N_1[V_\pi'/x_1]$ and then reducing $N_1[V_\pi'/x_1]$ to $V_{\pi'}$ as specified by $\pi'$. Similarly for $\mathrm{case}_2(\pi, \pi').$
The last sum is now by definition in $\mathcal S_0(\tleq_B; C)$.
\end{proof}

\begin{lemma}
\label{lem:logical-pairs}
If $m_1 \ol{\tleq_{A_1}} M_1$ and $m_2 \ol{\tleq_{A_2}} M_2$ then $\llangle m_1, m_2 \rrangle \ol{\tleq_{A_1 \times A_2}} (M_1, M_2).$
\end{lemma}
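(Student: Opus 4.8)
The plan is to follow exactly the template of the preceding closure lemmas, for instance Lemma~\ref{lem:logical-injections} and Lemma~\ref{lem:logica-case}: use Lemma~\ref{lem:topological-goodness} to reduce the claim from the Scott-closed sets $\mathcal S$ to their generating sets $\mathcal S_0$, and then exhibit the required morphism explicitly as a convex sum indexed by suitably concatenated reduction paths.

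First I would record the two structural facts about the pairing map
\[ \llangle -, - \rrangle \colon \KL(1,\lrb{A_1}) \times \KL(1,\lrb{A_2}) \to \KL(1,\lrb{A_1 \times A_2}) . \]
Namely, it sends $(m_1, m_2)$ to $m_1 \otimes m_2$, so it is jointly Scott-continuous by Remark~\ref{remark:tensorofvaluations}, and it is \emph{separately linear}, bilinearity of $\otimes$ being a consequence of linearity of the integral (Proposition~\ref{prop:sumproperty}, Item~\ref{sump4}). Since $m_i \in \mathcal S(\tleq_{A_i}; M_i) = \overline{\mathcal S_0(\tleq_{A_i}; M_i)}$ and the set $\mathcal S(\tleq_{A_1 \times A_2}; (M_1, M_2))$ is Scott-closed, one application of Lemma~\ref{lem:topological-goodness} to $\llangle -, - \rrangle$ (with both arguments at once) reduces the statement to proving
\[ \llangle s_1, s_2 \rrangle \in \mathcal S(\tleq_{A_1 \times A_2}; (M_1, M_2)) \]
for arbitrary generators $s_1 = \sum_{\pi \in F_1} P(\pi) v_\pi \in \mathcal S_0(\tleq_{A_1}; M_1)$ and $s_2 = \sum_{\rho \in F_2} P(\rho) w_\rho \in \mathcal S_0(\tleq_{A_2}; M_2)$.

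For this generator case I would use separate linearity twice to obtain
\[ \llangle s_1, s_2 \rrangle = \sum_{(\pi, \rho) \in F_1 \times F_2} P(\pi)\, P(\rho)\, \llangle v_\pi, w_\rho \rrangle , \]
a valid convex sum since the coefficients $P(\pi) P(\rho)$ sum to at most $1$. Writing $V_\pi$ for the endpoint of $\pi$ and $W_\rho$ for the endpoint of $\rho$, I would construct, for each pair $(\pi, \rho)$, a terminal path $\mathrm{pair}(\pi, \rho) \in \TPaths((M_1, M_2))$ ending at the value $(V_\pi, W_\rho)$: using the call-by-value evaluation contexts $(-, M_2)$ and $(V_\pi, -)$ of Figure~\ref{fig:operational}, first reduce $(M_1, M_2)$ to $(V_\pi, M_2)$ as dictated by $\pi$, then reduce $(V_\pi, M_2)$ to $(V_\pi, W_\rho)$ as dictated by $\rho$; its weight is $P(\mathrm{pair}(\pi,\rho)) = P(\pi) P(\rho)$ and distinct index pairs give distinct paths. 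Property~(A2) of Theorem~\ref{thm:formal-relations} turns $v_\pi \tleq_{A_1} V_\pi$ and $w_\rho \tleq_{A_2} W_\rho$ into $\llangle v_\pi, w_\rho \rrangle \tleq_{A_1 \times A_2} (V_\pi, W_\rho)$. Consequently the displayed sum is a member of $\mathcal S_0(\tleq_{A_1 \times A_2}; (M_1, M_2)) \subseteq \mathcal S(\tleq_{A_1 \times A_2}; (M_1, M_2))$, which closes the reduction.

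I expect the main obstacle to be the operational bookkeeping of the path construction: verifying that concatenating a terminal path of $M_1$ (lifted through the context $(-, M_2)$) with a terminal path of $M_2$ (lifted through $(V_\pi, -)$) yields a genuine terminal reduction path of $(M_1, M_2)$, that its probability weight factors precisely as $P(\pi) P(\rho)$, and that the assignment $(\pi, \rho) \mapsto \mathrm{pair}(\pi, \rho)$ is injective. This is a routine induction on the call-by-value evaluation contexts, strictly analogous to the path constructions $\mathrm{in}_i(\pi)$ and $\mathrm{case}_k(\pi, \pi')$ already used in Lemma~\ref{lem:logical-injections} and Lemma~\ref{lem:logica-case}; everything else is a direct appeal to Scott-continuity, bilinearity, and Theorem~\ref{thm:formal-relations}~(A2).
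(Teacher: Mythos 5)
Your proposal is correct and follows essentially the same route as the paper's proof: reduce to generators of $\mathcal S_0$ via Lemma~\ref{lem:topological-goodness} using joint Scott-continuity of $\llangle -,- \rrangle$, expand by separate linearity into a double convex sum, recognise $P(\pi)P(\rho)$ as the weight of the concatenated path $\mathrm{pair}(\pi,\rho)$ built from the evaluation contexts $(-,M_2)$ and $(V_\pi,-)$, and invoke Theorem~\ref{thm:formal-relations}~(A2). The only cosmetic difference is that the paper justifies bilinearity by Lemma~\ref{lemma:respectkegel} rather than tracing it back to Proposition~\ref{prop:sumproperty}, which is an equally valid citation.
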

\begin{proof}
The map
$\llangle -, - \rrangle \colon \KL(1, \lrb{A_{1}}) \times \KL(1, \lrb{A_{2}}) \to \KL(1, \lrb{A_{1}\times A_{2}}) $
is Scott-continuous in both arguments and therefore by Lemma~\ref{lem:topological-goodness}, to complete the proof it suffices to show that $\llangle m_1', m_2' \rrangle \ol{\tleq_{A_1 \times A_2}} (M_1, M_2) $ for any $m_1' \in \mathcal S_0(\tleq_{A_1} ; M_1)$ and $m_2' \in \mathcal S_0(\tleq_{A_2}; M_2)$.

Now, take $m_{1}' =\sum_{\pi_1\in F_1} P(\pi_1) v_{\pi_1} \in \mathcal S_{0}({\tleq_{A_1}}; M_1)$ and $m_{2}' = \sum_{\pi_2 \in F_2} P(\pi_2) v_{\pi_2} \in \mathcal S_{0}({\tleq_{A_2}}; M_2)$, where $F_1$ and $F_2$ are finite sets, and
where $v_{\pi_1} \tleq_{A_1} V_{\pi_1}$ for each $\pi_1 \in F_1$ and where $v_{\pi_2} \tleq_{A_2} V_{\pi_2}$ for each $\pi_2 \in F_2.$ We then have:
\begin{align}
\llangle m_1', m_2' \rrangle &=  \llangle \sum_{\pi_1 \in F_1} P(\pi_1) v_{\pi_1},  \sum_{\pi_2\in F_2} P(\pi_2) v_{\pi_2}  \rrangle  \label{eq:times1}  \\ 
&= \sum_{\pi_1 \in F_1} \sum_{\pi_2\in F_2} P(\pi_1) P(\pi_2)  \llangle v_{\pi_1}, v_{\pi_2} \rrangle \label{eq:times2}  \\
&= \sum_{\pi_1 \in F_1} \sum_{\pi_2\in F_2} P(\mathrm{pair}(\pi_1, \pi_2))  \llangle v_{\pi_1}, v_{\pi_2} \rrangle   \label{eq:times3} \\
& \ol{\tleq_{A_1 \times A_2}} (M_1, M_2) .  \label{eq:times4}
\end{align}
Equation~\ref{eq:times1} holds by definition. Equation~\ref{eq:times2} is true since the function $\llangle -, - \rrangle $ defined above is linear in each component by  
Lemma~\ref{lemma:respectkegel}~Item~3. In Equation~\ref{eq:times3} $\mathrm{pair}(\pi_1,\pi_2) \in \Paths( (M_1,M_2), (V_{\pi_1}, V_{\pi_2})   )$ is the path which first reduces $(M_1, M_2)$ to $(V_{\pi_1}, M_2)$ as specified by $\pi_1$ and then reduces $(V_{\pi_1}, M_2)$ to $(V_{\pi_1}, V_{\pi_2})$ as specified by $\pi_2$ and it is easy to see
that Equation \ref{eq:times3} holds. Finally \ref{eq:times4} holds, because $v_{\pi_1} \tleq_{A_1} V_{\pi_1}$ and $v_{\pi_2} \tleq_{A_2} V_{\pi_2}$ by assumption and then by Theorem~\ref{thm:formal-relations}~(A2) we have that $\llangle v_{\pi_1}, v_{\pi_2} \rrangle \tleq_{(A_1, A_{2})} (V_{\pi_1}, V_{\pi_2})$.
\end{proof}

\begin{lemma}
\label{lem:logical-projections}
If $m \ol{\tleq_{A_1 \times A_2}} M$ then $\JJ \pi_i \kcirc m \ol{\tleq_{A_i}} \pi_i M$, for $i \in \{1,2\}.$
\end{lemma}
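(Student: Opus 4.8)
The plan is to follow the same template as the preceding closure lemmas (compare Lemma~\ref{lem:logical-injections} and Lemma~\ref{lem:logica-case}), reducing the claim to the generators of $\mathcal S(\tleq_{A_1\times A_2};M)$ via Lemma~\ref{lem:topological-goodness}. By hypothesis and property (B) of Theorem~\ref{thm:formal-relations}, we have $m\in \mathcal S(\tleq_{A_1\times A_2};M)=\ol{\mathcal S_{0}(\tleq_{A_1\times A_2};M)}$, and the map $(\JJ\pi_i\kcirc -)\colon \KL(1,\lrb{A_1\times A_2})\to\KL(1,\lrb{A_i})$ is Scott-continuous. Since $\mathcal S(\tleq_{A_i};\pi_iM)$ is Scott-closed by construction, Lemma~\ref{lem:topological-goodness} lets me reduce the goal to showing $\JJ\pi_i\kcirc m'\in \mathcal S(\tleq_{A_i};\pi_iM)$ for an arbitrary generator $m'=\sum_{\pi\in F}P(\pi)v_\pi\in \mathcal S_{0}(\tleq_{A_1\times A_2};M)$, where $F\subseteq\TPaths(M)$ is finite and $v_\pi\tleq_{A_1\times A_2}V_\pi$ for each $\pi\in F$.

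Next I would unpack the structure of each summand. Each $V_\pi$ is a closed value of type $A_1\times A_2$, hence a pair $V_\pi=(V_\pi^1,V_\pi^2)$ with $V_\pi^k\in\Val(A_k)$, and each $v_\pi\in\TD(1,\lrb{A_1}\times\lrb{A_2})$ is necessarily of the form $\llangle v_\pi^1,v_\pi^2\rrangle$. Theorem~\ref{thm:formal-relations}~(A2) then yields $v_\pi^k\tleq_{A_k}V_\pi^k$ for $k\in\{1,2\}$, and a direct computation gives $\JJ\pi_i\kcirc\llangle v_\pi^1,v_\pi^2\rrangle=v_\pi^i$. Using linearity of $(\JJ\pi_i\kcirc -)$ with respect to convex sums (Lemma~\ref{lemma:respectkegel}), I obtain $\JJ\pi_i\kcirc m'=\sum_{\pi\in F}P(\pi)v_\pi^i$.

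Finally I would realise this sum as a generator of $\mathcal S_{0}(\tleq_{A_i};\pi_iM)$. For each $\pi\in F$, the evaluation context $\pi_i[\cdot]$ lifts the terminal path $\pi\in\Paths(M,(V_\pi^1,V_\pi^2))$ to a reduction $\pi_iM\probto{}\pi_i(V_\pi^1,V_\pi^2)$ of the same probability weight, which I extend by the single step $\pi_i(V_\pi^1,V_\pi^2)\probto{1}V_\pi^i$; writing $\mathrm{proj}_i(\pi)\in\Paths(\pi_iM,V_\pi^i)$ for the resulting terminal path, we have $P(\mathrm{proj}_i(\pi))=P(\pi)$ and distinct $\pi$ give distinct $\mathrm{proj}_i(\pi)$. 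Since $v_\pi^i\tleq_{A_i}V_\pi^i$, this exhibits $\JJ\pi_i\kcirc m'=\sum_{\pi\in F}P(\mathrm{proj}_i(\pi))v_\pi^i\in\mathcal S_{0}(\tleq_{A_i};\pi_iM)\subseteq\mathcal S(\tleq_{A_i};\pi_iM)$, as required.

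I expect the only genuinely non-mechanical point to be the justification that $v_\pi$ decomposes as $\llangle v_\pi^1,v_\pi^2\rrangle$ — i.e.\ that a total deterministic point of $\lrb{A_1}\times\lrb{A_2}$ is the pairing of its coordinate projections — together with the verification that $\JJ\pi_i\kcirc\llangle v_\pi^1,v_\pi^2\rrangle=v_\pi^i$; everything else (Scott-continuity, linearity, and the path lifting through $\pi_i[\cdot]$) is routine and parallels the earlier closure lemmas.
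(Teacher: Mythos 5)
Your proposal is correct and follows essentially the same route as the paper's own proof: reduce to generators of $\mathcal S_0(\tleq_{A_1\times A_2};M)$ via Lemma~\ref{lem:topological-goodness}, decompose each $v_\pi$ as $\llangle v_\pi^1,v_\pi^2\rrangle$ using Theorem~\ref{thm:formal-relations}~(A2), push $\JJ\pi_i$ through the convex sum by linearity, and reindex the paths by appending the single projection reduction step. The paper likewise treats the decomposition of $v_\pi$ as immediate from (A2), so your flagged concern is handled there in exactly the same (implicit) way.
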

\begin{proof}
Without loss of generality, we will show the statement for the first projection.
In order to avoid notational confusion, we will write $\mathrm{pr}_1$ for $\pi_1$ for the projection on the first component in this lemma.
We shall use $\pi$ to range over paths, as in the other lemmas.

Using Lemma \ref{lem:topological-goodness}, to complete the proof it suffices to show that 
\[ \JJ \mathrm{pr}_1 \kcirc m' \ol{\tleq_{A_1}} \mathtt{pr}_1 M \]
for any $m' \in \mathcal S_0(\tleq_{A_1 \times A_2}; M). $ Towards this end, let
\[ m' = \sum_{\pi \in F} P(\pi) v_\pi \in \mathcal S_0(\tleq_{A_1 \times A_2}; M) , \]
where $F \subseteq \TPaths(M)$ is finite and where $v_\pi \tleq_{A_1 \times A_2 } V_\pi$ for every $\pi \in F$. Using Theorem \ref{thm:formal-relations} (A2), we see that it must be the case
\[ v_\pi = \llangle v_\pi^1, v_\pi^2 \rrangle \text{ and } V_\pi = (V_\pi^1, V_\pi^2) \text{ and } v_\pi^1 \tleq_{A_1} V_\pi^1 \text{ and } v_\pi^2 \tleq_{A_2} V_\pi^2 . \]
Therefore, we have
\begin{align*}
\JJ \mathrm{pr}_1 \kcirc m' &= \JJ \mathrm{pr}_1 \kcirc \sum_{\pi \in F} P(\pi) v_\pi  \\
&= \JJ \mathrm{pr}_1 \kcirc \sum_{\pi \in F} P(\pi) \llangle v_\pi^1, v_\pi^2 \rrangle \\
&= \sum_{\pi \in F} P(\pi) ( \JJ \mathrm{pr}_1 \kcirc  \llangle v_\pi^1, v_\pi^2 \rrangle ) \\
&= \sum_{\pi \in F} P(\pi) v_\pi^1 \\
&= \sum_{\pi \in F} P(\mathrm{pr}_1(\pi)) v_\pi^1 \\
&\ol{\tleq_{A_1}} \mathtt{pr}_1 M ,
\end{align*}
where $\mathrm{pr}_1(\pi) \in \Paths( \mathrm{pr}_1 M, V_\pi^1)$ is the path that reduces $\mathtt{pr}_1 M$ to $\mathtt{pr}_1 (V_\pi^1, V_\pi^2)$ as specified by $\pi$ and then finally performs the
reduction $\mathrm{pr}_1 (V_\pi^1, V_\pi^2) \probto 1 V_{\pi}^1$.
\end{proof}

\begin{lemma}
\label{lem:logical-unfold}
If $m \ol{\tleq_{\mu X. A}} M$ then $\sunfold{} \kcirc m \ol{\tleq_{ A[\mu X. A/X] }} \mathtt{unfold}\ M.$
\end{lemma}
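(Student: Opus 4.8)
The plan is to follow the same template used for the other closure lemmas in this subsection (e.g.\ Lemmas~\ref{lem:logical-injections} and~\ref{lem:logical-projections}). By definition, $m \ol{\tleq_{\mu X.A}} M$ means $m \in \mathcal S(\tleq_{\mu X.A}; M)$, the Scott-closure of $\mathcal S_0(\tleq_{\mu X.A}; M)$. Since Kleisli post-composition $\sunfold \kcirc (-) \colon \KL(1, \lrb{\mu X.A}) \to \KL(1, \lrb{A[\mu X.A/X]})$ is Scott-continuous and the set $(- \ol{\tleq_{A[\mu X.A/X]}} \mathtt{unfold}\ M)$ is Scott-closed (Theorem~\ref{thm:formal-relations}~(C4)), Lemma~\ref{lem:topological-goodness} reduces the claim to showing $\sunfold \kcirc m' \ol{\tleq_{A[\mu X.A/X]}} \mathtt{unfold}\ M$ for an arbitrary generator $m' \in \mathcal S_0(\tleq_{\mu X.A}; M)$.

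First I would write such a generator as $m' = \sum_{\pi \in F} P(\pi) v_\pi$, where $F \subseteq \TPaths(M)$ is finite and $v_\pi \tleq_{\mu X.A} V_\pi$ for each $\pi \in F$. The key structural observation is a canonical-forms argument: each $V_\pi$ is a closed value of recursive type $\mu X.A$, so the value grammar of Figure~\ref{fig:syntax-grammars} forces $V_\pi = \mathtt{fold}\ W_\pi$ for some value $W_\pi \colon A[\mu X.A/X]$. Property~(A4) of Theorem~\ref{thm:formal-relations} then turns $v_\pi \tleq_{\mu X.A} \mathtt{fold}\ W_\pi$ into $\sunfold \kcirc v_\pi \tleq_{A[\mu X.A/X]} W_\pi$; moreover $\sunfold \kcirc v_\pi$ is again in $\TD$ because $\sunfold$ is an isomorphism of $\TD$ (Definition~\ref{def:fold/unfold}), so this membership is well-typed.

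Next, using linearity of $\sunfold \kcirc (-)$ (Lemma~\ref{lemma:respectkegel}), I would compute $\sunfold \kcirc m' = \sum_{\pi \in F} P(\pi)(\sunfold \kcirc v_\pi)$. To exhibit this as an element of $\mathcal S_0(\tleq_{A[\mu X.A/X]}; \mathtt{unfold}\ M)$, I would, for each $\pi \in \Paths(M, \mathtt{fold}\ W_\pi)$, build a terminal path $\mathrm{unfold}(\pi) \in \Paths(\mathtt{unfold}\ M, W_\pi)$ by running $\pi$ inside the evaluation context $\mathtt{unfold}\,[\cdot]$ (legal by the grammar for $E$ and the context rule of Figure~\ref{fig:operational}) and then appending the single reduction $\mathtt{unfold}\ \mathtt{fold}\ W_\pi \probto{1} W_\pi$. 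Since this final step has probability $1$, we have $P(\mathrm{unfold}(\pi)) = P(\pi)$ and $V_{\mathrm{unfold}(\pi)} = W_\pi$, so $\sum_{\pi \in F} P(\pi)(\sunfold \kcirc v_\pi) = \sum_{\pi \in F} P(\mathrm{unfold}(\pi))(\sunfold \kcirc v_\pi)$ has exactly the defining shape of $\mathcal S_0(\tleq_{A[\mu X.A/X]}; \mathtt{unfold}\ M)$, with each summand satisfying $\sunfold \kcirc v_\pi \tleq_{A[\mu X.A/X]} V_{\mathrm{unfold}(\pi)}$; hence the sum lies in $\mathcal S_0 \subseteq \mathcal S(\tleq_{A[\mu X.A/X]}; \mathtt{unfold}\ M)$, which is what we needed.

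I expect the work here to be bookkeeping rather than conceptual: the main obstacle is the path-manipulation step, namely verifying that placing $\pi$ under $\mathtt{unfold}\,[\cdot]$ yields a genuine reduction path whose probability weight is preserved, and that appending the $\beta$-reduction $\mathtt{unfold}\ \mathtt{fold}\ W_\pi \probto{1} W_\pi$ produces a terminal path ending in the value $W_\pi$ with weight $P(\pi)$. Everything else --- the canonical-forms step, the appeal to~(A4), the reduction via Lemma~\ref{lem:topological-goodness}, and linearity --- is immediate from results already established.
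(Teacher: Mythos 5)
Your proposal is correct and follows essentially the same route as the paper's proof: reduce to generators via Lemma~\ref{lem:topological-goodness}, observe that each $V_\pi$ must have the form $\mathtt{fold}\ W_\pi$, apply Theorem~\ref{thm:formal-relations}~(A4) to get $\sunfold \kcirc v_\pi \tleq_{A[\mu X.A/X]} W_\pi$, and rewrite the convex sum over the paths $\mathrm{unfold}(\pi)$ obtained by running $\pi$ under $\mathtt{unfold}\,[\cdot]$ and appending the probability-$1$ reduction $\mathtt{unfold}\ \mathtt{fold}\ W_\pi \probto{1} W_\pi$. The only differences are cosmetic: you cite linearity and Scott-continuity of post-composition explicitly where the paper leaves them implicit.
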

\begin{proof}
By Lemma \ref{lem:topological-goodness}, to complete the proof it suffices to show that
\[ \sunfold \kcirc m' \in \mathcal S(\tleq_{ A[\mu X. A/X]};  \mathtt{unfold}\ M) \]
for any $m'\in \mathcal S_{0}(\tleq_{\mu X. A}; M)$.
Towards this end, let
\[ m' = \sum_{\pi\in F} P(\pi)v_{\pi} \in \mathcal S_{0}(\tleq_{\mu X. A}; M) \]
for some finite $F \subseteq \TPaths(M)$ and where $v_\pi \tleq_{\mu X. A} V_\pi = \mathtt{fold}\ V_\pi'$ for each $\pi \in F$.
Then we have
\begin{align*}
\sunfold\kcirc m' &= \sum_{\pi\in F} P(\pi) ( \sunfold \kcirc v_{\pi} ) \\
&=  \sum_{\pi\in F} P(\mathrm{unfold}(\pi)) ( \sunfold \kcirc v_{\pi} ) \\
& \in \mathcal S_0(\tleq_{ A[\mu X. A/X]};  \mathtt{unfold}\ M) ,
\end{align*}
where $\mathrm{unfold}(\pi) \in \Paths(\mathtt{unfold}\ M, V_\pi')$ is the path that reduces $\mathtt{unfold}\ M$ to $\mathtt{unfold}\ \mathtt{fold}\ V_\pi'$ as specified by $\pi$ and then finally performs the
reduction $\mathtt{unfold}\ \mathtt{fold}\ V_\pi' \probto{1} V_\pi'.$ This last sum satisfies the membership relation, because we know that $v_\pi \tleq_{\mu X. A} V_\pi = \mathtt{fold}\ V_\pi'$ and then
by Theorem \ref{thm:formal-relations} (A4) we see that $\sunfold \kcirc v_\pi \tleq_{ A[\mu X. A/X] } V_\pi',$ as required.
\end{proof}

\begin{lemma}
\label{lem:logical-fold}
If $m \ol{\tleq_{A[\mu X. A / X]}} M$ then $\sfold{} \kcirc m \ol{\tleq_{\mu X. A}} \mathtt{fold}\ M.$
\end{lemma}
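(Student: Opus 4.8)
The plan is to mirror the proof of Lemma~\ref{lem:logical-unfold}, exploiting that $\sfold$ is an isomorphism in $\TD$ whose inverse is $\sunfold$; in fact the argument will be \emph{simpler} than the unfold case, since no extra reduction step is needed. First I would unfold the definition of the closure relation: the goal $\sfold \kcirc m \ol{\tleq_{\mu X.A}} \mathtt{fold}\ M$ is, by definition, the membership $\sfold \kcirc m \in \mathcal S(\tleq_{\mu X.A}; \mathtt{fold}\ M)$. Since postcomposition $(\sfold \kcirc -)$ is a Scott-continuous function on $\KL(1,-)$ and the target set is Scott-closed, Lemma~\ref{lem:topological-goodness} reduces the problem to verifying that $\sfold \kcirc m' \in \mathcal S(\tleq_{\mu X.A}; \mathtt{fold}\ M)$ for every generator $m' \in \mathcal S_0(\tleq_{A[\mu X.A/X]}; M)$, using the hypothesis $m \in \mathcal S(\tleq_{A[\mu X.A/X]}; M)$.

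Next I would take a typical generator $m' = \sum_{\pi \in F} P(\pi) v_\pi$ with $F \subseteq \TPaths(M)$ finite and $v_\pi \tleq_{A[\mu X.A/X]} V_\pi$ for each $\pi$. By linearity of $(\sfold \kcirc -)$ (Lemma~\ref{lemma:respectkegel}) this rewrites as $\sfold \kcirc m' = \sum_{\pi \in F} P(\pi)(\sfold \kcirc v_\pi)$. For each $\pi$ I would associate the path $\mathrm{fold}(\pi) \in \Paths(\mathtt{fold}\ M, \mathtt{fold}\ V_\pi)$ obtained by running $\pi$ inside the evaluation context $\mathtt{fold}\ [\cdot]$; crucially $\mathtt{fold}\ V_\pi$ is already a value, so no additional reduction is required and $P(\mathrm{fold}(\pi)) = P(\pi)$. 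Hence $\mathrm{fold}(\pi) \in \TPaths(\mathtt{fold}\ M)$ with endpoint $\mathtt{fold}\ V_\pi$, and the sum can be rewritten as $\sum_{\pi \in F} P(\mathrm{fold}(\pi))(\sfold \kcirc v_\pi)$.

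It then remains to check the membership side condition, namely that $\sfold \kcirc v_\pi \tleq_{\mu X.A} \mathtt{fold}\ V_\pi$. Here the key observation is that $\sfold \kcirc v_\pi$ lies in $\TD(1,\lrb{\mu X.A})$: indeed $v_\pi \in \TD$ and $\sfold$ is a morphism of $\TD$ (Definition~\ref{def:fold/unfold}), so $\TD$-composition keeps it total and deterministic, and the binary relation $\tleq_{\mu X.A}$ genuinely applies to it. By Theorem~\ref{thm:formal-relations}~(A4), the assertion $\sfold \kcirc v_\pi \tleq_{\mu X.A} \mathtt{fold}\ V_\pi$ is equivalent to $\sunfold \kcirc \sfold \kcirc v_\pi \tleq_{A[\mu X.A/X]} V_\pi$; since $\sunfold$ and $\sfold$ are mutually inverse this collapses to $v_\pi \tleq_{A[\mu X.A/X]} V_\pi$, which is exactly the hypothesis on $v_\pi$. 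Consequently $\sum_{\pi \in F} P(\mathrm{fold}(\pi))(\sfold \kcirc v_\pi) \in \mathcal S_0(\tleq_{\mu X.A}; \mathtt{fold}\ M) \subseteq \mathcal S(\tleq_{\mu X.A}; \mathtt{fold}\ M)$, which finishes the argument.

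The step I expect to require the most care is this last one: justifying the application of (A4), which hinges simultaneously on $\sfold \kcirc v_\pi$ remaining a \emph{total deterministic} map (so that it belongs to the domain of the binary relation $\tleq_{\mu X.A}$ and not merely to its closure $\ol{\tleq_{\mu X.A}}$) and on the cancellation $\sunfold \circ \sfold = \id$. Everything else is a direct transcription of the $\mathtt{unfold}$ argument, made shorter by the fact that $\mathtt{fold}\ V_\pi$ is a value and hence no terminal $\beta$-style reduction step needs to be appended.
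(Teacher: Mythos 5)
Your proof is correct and follows essentially the same route as the paper's: reduce to generators of $\mathcal S_0$ via Lemma \ref{lem:topological-goodness}, push $\sfold$ through the convex sum by linearity, reindex the paths as $\mathrm{fold}(\pi)$, and invoke Theorem \ref{thm:formal-relations} (A4) for the membership condition. You merely spell out the cancellation $\sunfold \kcirc \sfold = \kid$ and the totality of $\sfold \kcirc v_\pi$, which the paper's application of (A4) leaves implicit.
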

\begin{proof}
The function
\[ (\sfold{} \kcirc -) \colon \KL(1, \lrb{A[\mu X. A / X]}) \to \KL(1, \lrb{\mu X. A}) \]
is Scott-continuous and therefore by Lemma \ref{lem:topological-goodness}, to complete the proof it suffices to show that
\[ \sfold{} \kcirc m' \in \mathcal S(\tleq_{\mu X. A}; \mathtt{fold}\ M) \]
for each $m'\in \mathcal S_{0}(\tleq_{A[\mu X. A / X]}; M)$.
Towards this end, assume that
\[ m' = \sum_{\pi\in F}P(\pi) v_{\pi} \in \mathcal S_{0}(\tleq_{A[\mu X. A / X]}; M), \] 
where $F \subseteq \TPaths(M)$ is finite and for each $\pi\in F$ we have $v_{\pi} \tleq_{A[\mu X. A / X]} V_{\pi}$.
Therefore, by Theorem \ref{thm:formal-relations} (A4) we conclude that $\sfold{} \kcirc v_\pi \tleq_{\mu X. A} \mathtt{fold}\ V_\pi,$ for each $\pi \in F.$
Now we finish the proof with the following derivation:
\begin{align*}
\sfold{} \kcirc m' &= \sfold{} \kcirc   \sum_{\pi\in F}P(\pi) v_{\pi}  \\
&= \sum_{\pi\in F} P(\pi) ( \sfold{}\kcirc v_{\pi} ) \\
&= \sum_{\pi\in F} P(\mathrm{fold}(\pi)) ( \sfold{}\kcirc v_{\pi} ) \\
&\in \mathcal S_0(\tleq_{\mu X. A}; \mathtt{fold}\ M) \subseteq \mathcal S(\tleq_{\mu X. A}; \mathtt{fold}\ M) ,
\end{align*}
where $\mathrm{fold}(\pi) \in \Paths(\mathtt{fold}\ M, \mathtt{fold}\ V_\pi)$ is the path that reduces $\mathtt{fold}\ M$ to $\mathtt{fold}\ V_\pi$ as specified by $\pi.$
\end{proof}

\begin{lemma}
\label{lem:logical-aplication}
If $m \ol{\tleq_{A \to B}} M$ and $n \ol{\tleq_A} N,$ then $m[n] \ol{\tleq_B} MN.$
\end{lemma}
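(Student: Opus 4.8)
The plan is to mimic the structure of the earlier closure lemmas (for instance Lemmas~\ref{lem:logica-case} and~\ref{lem:logical-pairs}): reduce the statement to generators of $\mathcal S_0$ using Scott-continuity, distribute the application operation over the convex sums using linearity, invoke property (A3) pointwise, and finally reassemble everything via the reduction paths of $MN$. The preliminary fact I would record is that the application operation $(-)[-] \colon \KL(1, \lrb{A \to B}) \times \KL(1, \lrb A) \to \KL(1, \lrb B)$ of Notation~\ref{not:application} is Scott-continuous and linear in each argument separately. Continuity is immediate from $f[x] = \epsilon \kcirc (f \ktimes x) \kcirc \JJ\langle \id_1, \id_1 \rangle$ together with Scott-continuity of $\ktimes$, $\kcirc$ and $\epsilon$. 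Bilinearity follows by combining the first two items of Lemma~\ref{lemma:respectkegel} (Kleisli composition preserves $+_r$ on either side) with the propositions of Appendix~\ref{app:timesplusconvex} ($\ktimes$ preserves $+_r$ in each argument), giving $(f_1 +_r f_2)[x] = f_1[x] +_r f_2[x]$, $f[x_1 +_r x_2] = f[x_1] +_r f[x_2]$, and preservation of the bottom element.

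With this in hand, recall that $m \ol{\tleq_{A\to B}} M$ and $n \ol{\tleq_A} N$ mean $m \in \mathcal S(\tleq_{A\to B}; M)$ and $n \in \mathcal S(\tleq_A; N)$. By Lemma~\ref{lem:topological-goodness} applied to the (jointly) Scott-continuous map $(-)[-]$, it suffices to prove $m'[n'] \ol{\tleq_B} MN$ for generators $m' = \sum_{\pi \in F} P(\pi) v_\pi \in \mathcal S_0(\tleq_{A\to B}; M)$ and $n' = \sum_{\rho \in G} P(\rho) w_\rho \in \mathcal S_0(\tleq_A; N)$, where $F \subseteq \TPaths(M)$ and $G \subseteq \TPaths(N)$ are finite, $v_\pi \tleq_{A\to B} V_\pi$ and $w_\rho \tleq_A W_\rho$. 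Since $V_\pi$ is a value of function type it has the form $V_\pi = \lambda x. M_\pi$, and $w_\rho \in \TD(1,\lrb A)$ with $W_\rho$ a value. By bilinearity,
\[ m'[n'] = \sum_{\pi \in F} \sum_{\rho \in G} P(\pi) P(\rho)\, v_\pi[w_\rho] , \]
and for each pair $(\pi,\rho)$ property (A3) of Theorem~\ref{thm:formal-relations}, applied to $v_\pi \tleq_{A\to B} \lambda x. M_\pi$ and $w_\rho \tleq_A W_\rho$, yields $v_\pi[w_\rho] \ol{\tleq_B} (\lambda x. M_\pi) W_\rho$.

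Finally I would reassemble these using Lemma~\ref{lem:logical-convex-sum}. For each $(\pi, \rho) \in F \times G$ the call-by-value evaluation contexts $[\cdot]N$ and $(\lambda x. M_\pi)[\cdot]$ produce a reduction path $\mathrm{app}(\pi, \rho) \in \Paths(MN, (\lambda x. M_\pi) W_\rho)$ that first runs $M$ to $\lambda x. M_\pi$ as specified by $\pi$ and then runs $N$ to $W_\rho$ as specified by $\rho$, so that $P(\mathrm{app}(\pi,\rho)) = P(\pi) P(\rho)$, and distinct pairs $(\pi,\rho)$ give distinct paths. Applying Lemma~\ref{lem:logical-convex-sum} with index set $F \times G$, terms $(\lambda x. M_\pi) W_\rho$, morphisms $v_\pi[w_\rho]$ and paths $\mathrm{app}(\pi,\rho)$ then gives exactly $\sum_{\pi,\rho} P(\pi)P(\rho)\, v_\pi[w_\rho] = m'[n'] \ol{\tleq_B} MN$. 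The main obstacle I anticipate is precisely this last step: the intermediate terms $(\lambda x. M_\pi) W_\rho$ are $\beta$-redexes rather than values, so one cannot hope to land directly in $\mathcal S_0(\tleq_B; MN)$ and must route through Lemma~\ref{lem:logical-convex-sum} (which accommodates convex combinations of arbitrary terms); the delicate points are checking that the path weights multiply as $P(\pi)P(\rho)$ and that the paths $\mathrm{app}(\pi,\rho)$ are pairwise distinct, as required by that lemma.
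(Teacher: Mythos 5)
Your proof is correct and follows essentially the same route as the paper's: reduce to generators of $\mathcal S_0$ via Lemma~\ref{lem:topological-goodness} and bilinearity of application, apply property (A3) pairwise, and reassemble with Lemma~\ref{lem:logical-convex-sum} along the concatenated paths $\mathrm{app}(\pi,\rho)$. You even flag the same key subtlety the paper does, namely that the intermediate terms $V_\pi W_\rho$ are not values, which is exactly why Lemma~\ref{lem:logical-convex-sum} (rather than membership in $\mathcal S_0(\tleq_B; MN)$ directly) is needed.
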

\begin{proof}
Consider the function $g \colon \KL(1, \lrb{A \to B}) \times \KL(1, \lrb{A}) \to \KL(1,\lrb B)$ defined by $g(x,y) = x[y]$ (see Notation \ref{not:application}). This function is Scott continuous and linear in both arguments.
By Lemma \ref{lem:topological-goodness}, to complete the proof it suffices to show that $m'[n'] \ol{\tleq_B} MN$ for any $m' \in \mathcal S_0(\tleq_{A \to B}; M)$ and $n' \in \mathcal S_0(\tleq_A; N).$ Towards that end, let
\begin{align*}
m' &= \sum_{\pi \in F} P(\pi) v_{\pi} \in \mathcal S_0(\tleq_{A \to B}; M) \\
n' &= \sum_{\pi' \in F'} P(\pi') v_{\pi'} \in \mathcal S_0(\tleq_A ; N)
\end{align*}
with $v_\pi \tleq_{A \to B} V_\pi$ and $v_{\pi'} \tleq_A V_{\pi'}$. Then by Theorem \ref{thm:formal-relations} (A3) we have that  $v_\pi[v_{\pi'}] \ol{\tleq_B} V_\pi V_{\pi'}$ and
\begin{align*}
m'[n'] &= \sum_{\pi \in F} \sum_{\pi' \in F'} \left( P(\pi) \cdot P(\pi') \right) v_\pi [v_{\pi'}] & & \\
&= \sum_{(\pi,\pi') \in F \times F' } P(\mathrm{app}(\pi,\pi')) v_\pi [v_{\pi'}] & & \\
&\ol{\tleq_B} MN && \text{(Lemma \ref{lem:logical-convex-sum})}
\end{align*}
where $\mathrm{app}(\pi, \pi') \in \Paths(MN, V_\pi V_{\pi'})$ is the path where we first reduce $MN$ to $V_\pi N$ in the same way as in $\pi$
and then we reduce $V_\pi N$ to $V_\pi V_{\pi'}$ in the same way as in $\pi'.$
Note: in the above sum $V_\pi V_{\pi'}$ is not a value, so Lemma \ref{lem:logical-convex-sum} is crucial.
\end{proof}

\subsection{Fundamental Lemma and Strong Adequacy}

We may now prove the Fundamental Lemma which then easily implies our adequacy result.

\begin{lemma}[Fundamental]
\label{lem:fundamental}
Let $x_1 : A_1, \ldots, x_n : A_n \vdash M : B$ be a term. Assume further we are given a collection of morphisms $v_i$ and values $V_i$, such that $v_i \tleq_{A_i} V_i$ for $i \in \{1,\ldots,n\}.$
Then:
\[ \lrb M \kcirc \llangle\ \vec v\ \rrangle \ol{\tleq_B} M[ \vec V / \vec x] . \]
\end{lemma}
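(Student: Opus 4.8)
The plan is to prove the statement by induction on the derivation of the typing judgement $x_1:A_1,\dots,x_n:A_n\vdash M:B$, with one case for each term-forming rule of Figure~\ref{fig:term-syntax}. In every case I would first rewrite the composite $\lrb M\kcirc\llangle\vec v\rrangle$ using the clause of Figure~\ref{fig:term-semantics} that defines $\lrb M$, pushing the substitution $\llangle\vec v\rrangle$ through the categorical combinators so as to express it in terms of the interpretations of the immediate subterms precomposed with $\llangle\vec v\rrangle$; the induction hypothesis then supplies the required $\ol{\tleq}$-memberships for those subterms, and a single closure lemma from Subsection~\ref{sub:closure-properties} discharges the goal. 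A useful auxiliary fact I would record at the outset is the \emph{monotonicity bound} $\lrb M\kcirc\llangle\vec v\rrangle\le\lrb{M[\vec V/\vec x]}$, which follows from $v_i\le\lrb{V_i}$ (Theorem~\ref{thm:formal-relations}~(C1)), monotonicity of Kleisli composition, and iterating the value-substitution Lemma~\ref{lem:term-substitution}; it is needed to check the side conditions of the clauses (A3) below.

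For the first-order cases the matching is direct. For a variable $x_j$ the composite collapses to $v_j\in\TD$, and since $v_j\tleq_{A_j}V_j$, Theorem~\ref{thm:formal-relations}~(C5) gives $v_j\ol{\tleq_{A_j}}V_j$. For pairing I would use Lemma~\ref{lem:logical-pairs}, for projections Lemma~\ref{lem:logical-projections}, for injections Lemma~\ref{lem:logical-injections}, for $\mathtt{case}$ Lemma~\ref{lem:logica-case} (whose hypotheses on the branches $n_k$ are instances of the induction hypothesis combined with Lemma~\ref{lem:term-substitution}), for $\mathtt{fold}$/$\mathtt{unfold}$ Lemmas~\ref{lem:logical-fold} and~\ref{lem:logical-unfold}, for probabilistic choice Lemma~\ref{lem:logical-probabilistic-choice} (using that precomposition is linear, Lemma~\ref{lemma:respectkegel}), and for application Lemma~\ref{lem:logical-aplication} with $m=\lrb{M_1}\kcirc\llangle\vec v\rrangle$ and $n=\lrb{M_2}\kcirc\llangle\vec v\rrangle$. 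In each instance the only genuine work is the bookkeeping identity expressing $\lrb M\kcirc\llangle\vec v\rrangle$ in the form demanded by the relevant lemma, e.g. $\lrb{M_1M_2}\kcirc\llangle\vec v\rrangle=(\lrb{M_1}\kcirc\llangle\vec v\rrangle)[\lrb{M_2}\kcirc\llangle\vec v\rrangle]$, which are routine consequences of the definitions of $\ktimes$, the double strength, and the counit $\epsilon$.

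The abstraction case is the main obstacle and is where the argument is least mechanical. Here $\lrb{\lambda x^{A}.M'}=\JJ\lambda(\lrb{M'})$ is a value, so $f\eqdef\lrb{\lambda x.M'}\kcirc\llangle\vec v\rrangle$ lies in $\TD$, and by Theorem~\ref{thm:formal-relations}~(C5) it suffices to prove $f\tleq_{A\to B}(\lambda x.M')[\vec V/\vec x]$ rather than the closure. By clause (A3) this splits into two obligations: the inequality $f\le\lrb{(\lambda x.M')[\vec V/\vec x]}$, which is exactly the monotonicity bound recorded above; and the requirement that $f[v]\ol{\tleq_B}(\lambda x.M'[\vec V/\vec x])V$ for every $v\tleq_A V$. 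For the latter I would first establish the semantic identity $f[v]=\lrb{M'}\kcirc\llangle\vec v,v\rrangle$ (a parameterised-adjunction computation in the spirit of Lemma~\ref{lem:parameterised-adjunction}), then apply the induction hypothesis to the body $M'$ in the extended context $x_1:A_1,\dots,x_n:A_n,x:A$ with the approximations $v_i\tleq_{A_i}V_i$ and $v\tleq_A V$, obtaining $\lrb{M'}\kcirc\llangle\vec v,v\rrangle\ol{\tleq_B}M'[\vec V/\vec x][V/x]$.

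Finally I would close the gap between $M'[\vec V/\vec x][V/x]$ and the redex $(\lambda x.M'[\vec V/\vec x])V$ by a \emph{closure-under-reduction} observation: if $N\probto{1}N'$ is a single step of probability $1$ then, by type safety, $N$ reduces only to $N'$, so prepending this step gives a weight-preserving bijection between $\TPaths(N')$ and $\TPaths(N)$ and hence $\mathcal S(\tleq_B;N')=\mathcal S(\tleq_B;N)$; thus $\ol{\tleq_B}$-membership is invariant under such deterministic steps. Since $(\lambda x.M'[\vec V/\vec x])V\probto{1}M'[\vec V/\vec x][V/x]$ is exactly such a step, this transfers the membership just obtained to the redex and completes the case. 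The expected difficulty is concentrated here: assembling the (C5)-reduction, the adjunction identity for $f[v]$, the extended-context induction hypothesis, and the anti-reduction observation in the correct order, whereas all the remaining cases are immediate appeals to the closure lemmas of Subsection~\ref{sub:closure-properties}.
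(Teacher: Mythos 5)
Your proposal is correct and follows essentially the same route as the paper's proof: induction on the typing derivation, with the closure lemmas of Subsection~\ref{sub:closure-properties} discharging the variable and first-order cases, and the abstraction case handled via Theorem~\ref{thm:formal-relations}~(C5) and (A3), the inequality obtained from (C1) together with Lemma~\ref{lem:term-substitution}, and the extended-context induction hypothesis applied to the body. The only immaterial difference is the final anti-reduction step, where the paper simply invokes Lemma~\ref{lem:logical-convex-sum} with the singleton path $(\lambda x.M'[\vec V/\vec x])V \probto{1} M'[\vec V/\vec x][V/x]$, whereas you re-derive the needed closure of $\ol{\tleq_B}$ under deterministic backward steps by the same path-prepending argument that underlies that lemma.
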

\begin{proof}
By induction on the derivation of the term $M$.

For the case of lambda abstractions, we reason as follows. Let us assume that the term of the induction hypothesis is
\[ x_1 : A_1, \ldots, x_n : A_n, y : A \vdash M : B. \]
Let us write $l \eqdef \lrb{\lambda y. M} \kcirc \llangle\ \vec v\ \rrangle$ and $R \eqdef \lambda y. M[ \vec V / \vec x]$.
Observe that $l \in \TD$ and therefore by Theorem \ref{thm:formal-relations} (C5), we may equivalently show that
\[ l  \tleq_{A \to B} R . \]
By Theorem \ref{thm:formal-relations} (A3), this is in turn equivalent to showing that
\[ l \leq \lrb{R} \text{ and } \forall (w \tleq_{A} W).\  l[w] \ol{\tleq_B} RW.  \]
The inequality is satisfied, because
\begin{align*}
l &= \lrb{\lambda y. M} \kcirc \llangle\ \vec v\ \rrangle & & \\
& \leq \lrb{\lambda y. M} \kcirc \llangle\ \vec{\lrb V}\ \rrangle & & (\text{Theorem \ref{thm:formal-relations} (C1)} ) \\
& = \lrb R . & & (\text{Lemma \ref{lem:term-substitution}} )
\end{align*}
For the other requirement, assuming that $w \tleq_{A} W$, we reason as follows
\begin{align*}
l[w]  &= (\lrb{\lambda y. M} \kcirc \llangle\ \vec v\ \rrangle)[w] & & (\text{Definition})\\
&= \epsilon \kcirc (\lrb{\lambda y. M} \ktimes \kid) \kcirc \llangle \vec v, w \rrangle & & \\
&= \epsilon \kcirc (\JJ \lambda(\lrb M) \ktimes \kid) \kcirc \llangle \vec v, w \rrangle & & (\text{Definition}) \\
&= \lambda^{-1}(\lambda(\lrb M)) \kcirc \llangle \vec v, w \rrangle & & \text{(Property of adjunction \eqref{eq:currying-simple})}\\
&= \lrb M \kcirc \llangle \vec v, w \rrangle & & \\
& \ol{\tleq_B} M[\vec V/ \vec x, W / y]  . & & \text{(Induction Hypothesis)}
\end{align*}
Finally, observe that $RW = (\lambda y. M[ \vec V / \vec x])W \probto{1} M[\vec V/ \vec x, W / y] ,$ i.e. $RW$ beta-reduces to $M[\vec V/ \vec x, W / y]$. Therefore by Lemma \ref{lem:logical-convex-sum} it follows that 
\[ l[w] \ol{\tleq_B} RW , \]
as required.

The case for variables follows immediately by expanding definitions and Theorem \ref{thm:formal-relations} (C5).

All other cases follow by straightforward induction using closure Lemmas \ref{lem:logical-probabilistic-choice} -- \ref{lem:logical-aplication}.
\end{proof}

Adequacy now follows as a corollary of this lemma.

\begin{theorem}[Strong Adequacy]
For any closed term $\cdot \vdash M : A$, we have
\[ \lrb M = \sum_{V \in \Val(M)} P(M \probto{}_* V) \lrb V . \]
\end{theorem}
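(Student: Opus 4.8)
The plan is to prove the two inequalities separately and combine them. The inequality $\lrb M \geq \sum_{V \in \Val(M)} P(M \probto{}_* V)\lrb V$ is already established in Corollary~\ref{cor:soundness-inequality}, so the real content is the reverse inequality $\lrb M \leq T$, where $T \defeq \sum_{V \in \Val(M)} P(M \probto{}_* V)\lrb V$; this is exactly what the logical-relations machinery was built to deliver. First I would instantiate the Fundamental Lemma (Lemma~\ref{lem:fundamental}) at the empty context: for $\cdot \vdash M : A$ the empty tuple $\llangle\ \rrangle$ is the identity $\kid_1 \colon 1 \kto 1$, so the lemma yields $\lrb M\ \ol{\tleq_A}\ M$. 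By Theorem~\ref{thm:formal-relations}~(B), this says precisely that $\lrb M \in \mathcal S(\tleq_A; M)$, the Scott-closure in $\KL(1,\lrb A)$ of the set $\mathcal S_0(\tleq_A; M)$ of finite sums $\sum_{\pi \in F} P(\pi) v_\pi$ with $v_\pi \tleq_A V_\pi$.

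Next I would show that $\mathcal S_0(\tleq_A; M) \subseteq \down T$. The set $\down T$ is Scott-closed (being the principal down-set of a single element of a dcpo), so once this containment is established it automatically extends to the Scott-closure, giving $\lrb M \in \mathcal S(\tleq_A; M) \subseteq \down T$, i.e.\ $\lrb M \leq T$. To prove the containment, take a generator $\sum_{\pi \in F} P(\pi) v_\pi \in \mathcal S_0(\tleq_A; M)$. By Theorem~\ref{thm:formal-relations}~(C1) each $v_\pi \leq \lrb{V_\pi}$, so monotonicity of the convex sum in each argument (Definition~\ref{def:convex-sums}) gives
\[ \sum_{\pi \in F} P(\pi) v_\pi \leq \sum_{\pi \in F} P(\pi) \lrb{V_\pi} . \]
It then remains to bound this finite sum by $T$.

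The one genuinely fiddly step is this last bound, which is pure bookkeeping about reduction paths. I would group the paths in $F$ by their terminal value, discarding zero-weight summands, so that $\sum_{\pi \in F} P(\pi)\lrb{V_\pi} = \sum_{j=1}^k r_j \lrb{W_j}$, where $W_1, \ldots, W_k$ are the distinct endpoints occurring with positive weight and $r_j \defeq \sum_{\pi \in F,\, V_\pi = W_j} P(\pi)$. Since $\Paths(M,W_j) = \bigcup_n \Paths_{\leq n}(M,W_j)$ is an increasing union of finite sets, one has $\sum_{\pi \in \Paths(M,W_j)} P(\pi) = \sup_n P(M \probto{}_{\leq n} W_j) = P(M \probto{}_* W_j)$; hence $r_j \leq P(M \probto{}_* W_j)$ and, as $r_j > 0$, each $W_j \in \Val(M)$. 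Monotonicity of the convex sum in its coefficients (again Definition~\ref{def:convex-sums}, with all intermediate sums admissible because $\sum_j P(M \probto{}_* W_j) \leq \Halt(M) \leq 1$) then yields $\sum_j r_j \lrb{W_j} \leq \sum_j P(M \probto{}_* W_j)\lrb{W_j}$, and the right-hand side is a finite sub-sum of the defining family of the countable convex sum $T$, hence $\leq T$.

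Combining $\lrb M \leq T$ with Corollary~\ref{cor:soundness-inequality} gives the desired equality. The main obstacle is not conceptual but notational: matching the path-indexed finite sums produced by the logical relation to the value-indexed (countable) convex sum $T$, and justifying the passage from $\mathcal S_0(\tleq_A; M)$ to its Scott-closure via the Scott-closedness of $\down T$. All of the difficulty genuinely resides upstream, in the existence of the formal approximation relations (Theorem~\ref{thm:formal-relations}) and the Fundamental Lemma.
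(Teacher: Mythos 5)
Your proposal is correct and follows essentially the same route as the paper's proof: soundness (Corollary \ref{cor:soundness-inequality}) for one inequality, the Fundamental Lemma plus Theorem \ref{thm:formal-relations}~(B) to place $\lrb M$ in $\mathcal S(\tleq_A; M)$, and the containment $\mathcal S_0(\tleq_A; M) \subseteq \mathord{\down} u$ via (C1) and grouping paths by their terminal values, extended to the Scott-closure. Your treatment of the path-to-value bookkeeping is slightly more explicit than the paper's, but the argument is the same.
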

\begin{proof}
Let 
\[ u \eqdef \sum_{V \in \Val(M)} P(M \probto{}_* V) \lrb V . \]
From Corollary \ref{cor:soundness-inequality}, we know that $\lrb M \geq u.$ To finish the proof, we have to show the converse inequality.
Next, observe that $\mathcal S_0(\tleq_A ; M) \subseteq \down u$, which follows from Theorem \ref{thm:formal-relations} (C1).
To see this, we reason as follows. Taking an arbitrary element of $\mathcal S_0(\tleq_A ; M)$ as in Theorem \ref{thm:formal-relations} (B):
\begin{align*}
\sum_{\pi \in F} P(\pi) v_\pi &\leq \sum_{\pi \in F} P(\pi) \lrb{V_\pi} & & (\text{Theorem \ref{thm:formal-relations} (C1)})  \\
&= \sum_{V \in \cup\{V_\pi | \pi \in F\}} \left( \sum_{\substack{\pi \in F \\ V_\pi = V}} P(\pi) \right) \lrb{V} & & \\
&\leq \sum_{V \in \cup\{V_\pi | \pi \in F\}} \left( \sum_{\pi \in \Paths(M,V) } P(\pi) \right) \lrb{V} & & \\
&= \sum_{V \in \cup\{V_\pi | \pi \in F\}} P(M \probto{}_* V) \lrb{V} & & \\
&\leq \sum_{V \in \Val(M)} P(M \probto{}_* V) \lrb{V} . & & \\
\end{align*}

The set $\down u $ is Scott-closed and therefore $\mathcal S(\tleq_A ; M) \subseteq \down u $. By Lemma \ref{lem:fundamental}, we know
that $\lrb M \ol{\tleq_A} M.$ By definition of $\ol{\tleq_A}$ it follows $\lrb M \in \mathcal S(\tleq_A ; M)$ and therefore $\lrb M \leq u,$ thus finishing the proof.
\end{proof}

\end{document}